\DeclareMathAlphabet\mathbfcal{OMS}{cmsy}{b}{n}
\tikzset{
modal/.style={>=stealth',shorten >=1pt,shorten <=1pt,auto,node distance=1.5cm,semithick},
world/.style={circle,draw,minimum size=0.5cm,fill=gray!15},
point/.style={circle,draw,inner sep=0.5mm,fill=black},
reflexive above/.style={->,loop,looseness=7,in=120,out=60},
reflexive below/.style={->,loop,looseness=7,in=240,out=300},
reflexive left/.style={->,loop,looseness=7,in=150,out=210},
reflexive right/.style={->,loop,looseness=7,in=30,out=330}
}
\theoremstyle{plain}
\newtheorem{theorem}[thm]{Theorem}
\newtheorem{lemma}[thm]{Lemma}
\newtheorem{proposition}[thm]{Proposition}
\newtheorem{corollary}[thm]{Corollary}
\newtheorem{observation}[thm]{Observation}
\theoremstyle{definition}
\newtheorem{definition}[thm]{Definition}
\newtheorem{example}[thm]{Example}
\newtheorem{remark}[thm]{Remark}
\newcommand{\remarkstart}{\color{black}}
\newtheoremstyle{myexample}
{3pt}
{3pt}
{\rm}
{}
{\bf}
{.}
{.5em}
{}
\theoremstyle{myexample}
\definecolor{shadecolor}{rgb}{0.93,0.93,0.93}
\definecolor{darkerblue}{rgb}{0.0,0.0,0.6}
\definecolor{darkblue}{rgb}{0.0,0.0,0.9}
\definecolor{darkerred}{rgb}{0.6,0.0,0.0}
\definecolor{darkred}{rgb}{0.9,0.0,0.0}
\definecolor{darkergreen}{rgb}{0.0,0.4,0.0}
\definecolor{darkgreen}{rgb}{0.0,0.6,0.0}
\definecolor{darkercyan}{rgb}{0.0,0.4,0.6}
\definecolor{darkcyan}{rgb}{0.0,0.6,0.9}
\renewenvironment{example}
{
	\begin{shaded*}	
		\vspace{-1.5ex}
		\begin{exa}
		}{
		\end{exa}
		\vspace{-1.5ex}
	\end{shaded*}
}
\renewenvironment{remark}
{
	\begin{shaded*}	
		\vspace{-1.5ex}
		\begin{rem}
		}{
		\end{rem}
		\vspace{-1.5ex}
	\end{shaded*}
}
\crefname{rem}{Remark}{Remarks}
\crefname{exa}{Example}{Examples}
\newcommand{\footnotetextbefore}[1]{\addtocounter{footnote}{1}\footnotetext{#1}\addtocounter{footnote}{-1}
}
\theoremstyle{plain}
\newenvironment{customlem}[1]
  {\innercustomlem}
  {\endinnercustomlem}
\newenvironment{customthm}[1]
  {\innercustomthm}
  {\endinnercustomthm}
  \newenvironment{customprop}[1]
  {\innercustomprop}
  {\endinnercustomprop}
\newenvironment{customobs}[1]
  {\innercustomobs}
  {\endinnercustomobs}
\theoremstyle{definition}
\newcommand{\skipit}[1]{}
\renewcommand{\vec}[1]{\boldsymbol{#1}}
\newcommand{\vecconst}[1]{\mathtt{\mathbf{#1}}}
\newcommand{\signature}{\Sigma}
\newcommand{\instance}{\mathcal{I}}
\newcommand{\inst}{\instance}
\newcommand{\database}{\mathcal{D}}
\newcommand{\db}{\database}
\newcommand{\adom}[1]{\fcn{dom}(#1)}
\newcommand{\origcol}{o}
\newcommand{\cols}{\mathbb{L}}
\newcommand{\arity}[1]{\fcn{ar}(#1)}
\newcommand{\length}[1]{|#1|}
\newcommand{\disun}{\oplus}
\newcommand{\newconst}[1]{\pred{New}_{\const{#1}}} 
\newcommand{\newanon}{\pred{New}_{\const{*}}} 
\newcommand{\cadd}[4]{#4\pred{\texttt{-}Add}^{#3}_{#1#2}} 
\newcommand{\radd}[4]{#4\pred{\texttt{-}Add}^{#3}_{#1#2}}
\newcommand{\unradd}[2]{#2\pred{\texttt{-}Add}_{#1}}
\newcommand{\recol}[2]{#2\pred{\texttt{-}ReCol}_{#1}}
\newcommand{\ercw}[1]{\pw{#1}}
\newcommand{\pw}[1]{\fcn{pw}(#1)}
\newcommand{\Ercwidth}{\Partitionwidth}
\newcommand{\ercwidth}{\partitionwidth}
\newcommand{\hypercliquewidth}{\partitionwidth}
\newcommand{\Partitionwidth}{Par\-ti\-tion\-width}
\newcommand{\partitionwidth}{par\-ti\-tion\-width}
\newcommand{\triggers}[2]{#1 \leadsto #2}
\newcommand{\ntriggers}[2]{#1 \not\leadsto #2}
\newcommand{\cut}{\triangleright}
\newcommand{\hism}{h}
\newcommand{\treewidth}[1]{\fcn{tw}(#1)}
\newcommand{\head}{\fcn{head}}
\newcommand{\body}{\fcn{body}}
\newcommand{\fcn}[1]{\mathsf{#1}}
\newcommand{\ent}{\fcn{ent}}
\newcommand{\col}{\fcn{col}}
\newcommand{\ruleset}{\calr}
\newcommand{\bts}{\mathbf{fts}}
\newcommand{\fes}{\mathbf{fes}}
\newcommand{\fus}{\mathbf{fus}}
\newcommand{\fcs}{\mathbf{fcs}}
\newcommand{\fps}{\mathbf{fps}} 
\newcommand{\hcs}{\fps}
\newcommand{\fhcs}{\hcs}
\newcommand{\urs}{\mathbf{urs}}
\newcommand{\reds}{\mathbf{reds}}
\newcommand{\fods}{\mathbf{fods}}
\newcommand{\chase}{\mathbf{Ch}}
\renewcommand{\sa}{\mathbf{Ch}_{\infty}}
\newcommand{\fchain}[1]{\chase(#1)}
\newcommand{\ksat}[2]{\chase_{#1}(#2)}
\newcommand{\sat}[1]{\ksat{\infty}{#1}}
\newcommand{\rewrs}[2]{\fcn{rew}_{#1}(#2)}
\newcommand{\calr}{\mathcal{R}}
\newcommand{\kb}{\mathbfcal{K}}
\newcommand{\sig}{\Sigma}
\newcommand{\pred}[1]{\mathtt{#1}}
\newcommand{\const}[1]{\mathtt{#1}}
\newcommand{\thephi}{\varphi}
\newcommand{\fr}{\mathit{fr}}
\newcommand{\tree}{T}
\newcommand{\ve}{\vec{e}}
\newcommand{\vx}{\vec{x}}
\newcommand{\vy}{\vec{y}}
\newcommand{\vz}{\vec{z}}
\newcommand{\vt}{\vec{t}}
\newcommand{\va}{\vec{a}}
\newcommand{\vb}{\vec{b}}
\newcommand{\vc}{\vec{c}}
\newcommand{\rpred}{\pred{R}}
\newcommand{\ppred}{\pred{P}}
\newcommand{\epred}{\pred{E}}
\newcommand{\query}{q}
\newcommand{\concat}{\mathbin{{+}\hspace{-0.5ex}{+}}}
\newcommand{\ifandonlyif}{\textit{iff} } 
\newcommand{\iffi}{\textit{iff} } 
\definecolor{piotr}{RGB}{250, 100, 100}
\definecolor{thomas}{RGB}{50, 190, 50}
\definecolor{tim}{RGB}{0, 0, 250}
\definecolor{sebastian}{RGB}{153, 50, 204}
\newcommand{\smallldots}{\ldots}
\newcommand{\defend}{\hfill\ensuremath{\Diamond}}
\newcommand{\shortto}{\hstretch{0.5}{\to}}
\renewcommand{\phi}{\varphi}
\newcommand{\probbox}[1]{\centerline{\framebox{\parbox{0.8\columnwidth}{#1}}}}
\newcommand{\narrowcdots}{\raisebox{0.3ex}{...}}
\begin{document}

\title[Querying FO Theories via Countermodels of Finite Width]{Decidability of Querying First-Order Theories\texorpdfstring{\\}{} via Countermodels of Finite Width}

\author[T.~Feller]{Thomas Feller\lmcsorcid{0000-0001-8420-6118}}
\author[T.~S.~Lyon]{Tim S. Lyon\lmcsorcid{0000-0003-3214-0828}}
\author[P.~Ostropolski-Nalewaja]{Piotr Ostropolski-Nalewaja\lmcsorcid{0000-0003-3214-0828}}
\author[S.~Rudolph]{Sebastian Rudolph\lmcsorcid{0000-0003-3214-0828}}

\address{Technische Universität Dresden, Germany}
\email{\{thomas.feller, timothy\_stephen.lyon, piotr.ostropolski-nalewaja, sebastian.rudolph\}\linebreak\mbox{\hspace{3.0cm} @tu-dresden.de}}


\begin{abstract}
We propose a generic framework for establishing the decidability of a wide range of logical entailment problems (briefly called \emph{querying}), based on the existence of countermodels that are structurally simple, gauged by certain types of \emph{width} measures (with treewidth and cliquewidth as popular examples). 
 As an important special case of our framework, we identify logics exhibiting width-finite \emph{finitely universal model sets}, warranting decidable entailment for a wide range of homomorphism-closed queries, subsuming a diverse set of practically relevant query languages. 
 As a particularly powerful width measure, we propose to employ Blumensath's \emph{partitionwidth}, which subsumes various other commonly considered width measures and exhibits highly 
 favorable computational and structural properties.
 Focusing on the formalism of existential rules as a popular showcase, we explain how \emph{finite-partitionwidth sets} of rules sub\-sume other known abstract decidable classes but -- leveraging existing notions of stratification -- also cover a wide range of new rulesets. We expose natural limitations for fitting the class of finite-unifi\-cation sets into our picture and suggest several options for remedy.  
\end{abstract}

\maketitle

\section{Introduction}

The task of checking logical entailment is at the core of logic in computer science, with wide applications in many related fields including symbolic AI, databases, and veri\-fi\-ca\-tion, but also impacting neighboring disciplines like mathematics and philosophy.
Automating any kind of logical reasoning 
requires a profound understanding of the underlying entailment problem's computational properties, and in particular, its de\-ci\-da\-bi\-li\-ty status. The fact that entailment in many natural logical formalisms is undecidable has spawned  extensive research activities toward the identification of expressive yet decidable fragments and generally toward a better understanding of generic principles underlying the divide between decidability and undecidability \cite{BorgerGG1997,Vardi96,Gradel01,PrattHartmann23}. This article contributes to this endeavor.

\medskip

The concept of logical entailment appears to be the essence and primordial motivation of the study of logic. It embodies the judgment, typically noted down as $\Phi \models \Psi$, that a certain statement $\Psi$ is an inevitable consequence of another statement $\Phi$, established through formal structural analysis. Starting out as a primarily philosophical endeavor  millenia ago \cite{PriorAnalytics}, the study of logic(\hspace{-1pt}al entailment) underwent substantial mathematical formalization and rigorization in the late 19th and early 20th century \cite{Boole1854,Frege1879,Tarski1943}. With the advent of computing devices and computer science, the algorithmization of logical reasoning and the computational characteristics of entailment problems became central topics. Today, logical entailment is a notion widely employed throughout computer science, including but not restricted to the following areas:      

\begin{itemize}
\item \textbf{Formal Methods}: This discipline is concerned with describing dynamic systems and ascertaining their properties. If the dynamic system -- typically specified as a state-transition system -- is fully specified (e.g., by means of a process algebra), the task of deciding if it satisfies some property (such as safety or lifeness) is a variant of model checking (and explicitly referred to as such).  
\item \textbf{Databases}: The primary task considered in the database (DB) area is that of query answering, which, in its plain form, can be logically described as finding variable assignments (the \emph{query answers}) for a logical formula (the \emph{query}) which make that formula true over a given finite structure (the \emph{database}). 
If the logic used as query language permits to directly refer to database elements, the query answering problem can be reduced to validity of logical sentences (\emph{Boolean queries}) -- and hence to classical logical model checking.

More advanced database settings, however, deal with the phenomenon of incomplete information, where the database does not carry the full knowledge about the state of affairs and rather is to be seen as a logical description which allows for several realizations (that is, models). Such ambiguity may arise from absence of certain information (e.g., when following the open world assumption) or from providing higher-level logical constraints alongside factual data (as in \emph{deductive databases}).  
Determining which query answers are \emph{certain answers}, or which boolean queries are satisfied in all models then amounts to an entailment problem.             

Another way in which entailment arises in the area of databases is in static analysis of queries and views. The popular problem of query containment asks if every answer for $Q_1$ is an answer for $Q_2$ over every database.    
\item \textbf{Classical Knowledge Representation}: Classical logic-based knowledge representation (KR) is a subdiscipline of artificial intelligence that attempts to represent world knowledge as finite logical theories  -- referred to as \emph{knowledge bases} -- of some logical formalism -- sometimes called \emph{ontology language}, with the purpose of establishing a clear semantics and harnessing automated inferencing techniques to provide reasoning support. Rooted in mathematical logic, a fundamental task considered in classical KR is that of \emph{axiom entailment}, asking if some logical sentence (referred to as an axiom) of the formalism is a consequence of the considered knowledge base. In most cases, the considered logical language supports negation on the sentence level, whereby the axiom entailment problem can be reduced to that of \emph{knowledge base satisfiability}. On the other hand, knowledge base satisfiability is usually reducible to axiom entailment by asking if the knowledge base entails a contradictory sentence -- provided such a sentence exists in the considered formalism. Prominent KR formalisms are description logics and rule-based languages.     
\item \textbf{Contemporary Knowledge Representation}: Over the recent two or three decades, the DB and KR fields have converged significantly. One one hand, the DB side has developed an increased interest into taking available background knowledge (and the implicit knowledge implied by it) into account when querying data -- which is ever more important in view of the rise of many open, less curated data sources e.g. on the Web. On the other hand, the KR community became aware of the necessity of distinguishing between the formalism used for specifying the background knowledge (the ontology language) and that for expressing information needs (the query language). The enlarged intersection between DB and KR fields that might be referred to as \emph{contemporary KR} is primarily concerned with the task of \emph{ontology-based query answering} (OBQA), that is posing queries to knowledge bases under the certain answer semantics. At the core of this task is, again, an entailment problem. 
\end{itemize}

\medskip

More formally, in this article, we will consider the following setting: Assuming two classes $\mathfrak{F}$ and $\mathfrak{Q}$ of logical sentences, the decision problem of interest is specified as follows:

\bigskip

\probbox{
	\textbf{Problem:} $\textsc{Entailment}(\mathfrak{F},\mathfrak{Q})$\\
	\textbf{Input:} $\Phi \in \mathfrak{F}$, $\Psi \in \mathfrak{Q}$.\\
	\textbf{Output:} \textsc{yes}, if $\Phi \models \Psi$, \textsc{no} otherwise.
}

\bigskip

\noindent
We will refer to $\mathfrak{F}$ as a \emph{specification language}, to $\mathfrak{Q}$ as a \emph{query language}, and we sometimes call the above entailment problem ``query entailment'' or 
 ``querying'', as is common in database theory and knowledge representation. We assume $\mathfrak{F}$ and $\mathfrak{Q}$ both rely on a common model theory based upon relational structures of arbitrary cardinality.\footnote{In particular, this article will not be concerned with finite-model semantics and the notion of finite entailment.} 
 More specifically, in this article, we focus on the setting where $\mathfrak{F}$ is a fragment of first-order logic (FO) and $\mathfrak{Q}$ is a fragment of universal second-order logic ($\forall$SO). There are good reasons for focusing on this particular -- yet still very 
 general -- shape of entailments:

\begin{itemize}
\item Given the choice of $\mathfrak{Q}$, the considered problem occurs to be a special case of entailment in second-order logic.
However, entailment problems of this shape can be reduced to first-order entailment in a rather straighforward way, which  ensures semidecidability 
and allows one to restrict one's attention to countable models (see~\Cref{sec:easy}).
 
\item The reason for posing the problem as an entailment $\Phi \models \Psi$, rather than the (un)satisfiability of $\Phi \wedge \neg \Psi$, is that in many scenarios encountered in databases, knowledge representation, and other practical settings, the specification language and the query language differ significantly so that the reduction to (un)satisfiability does not generally produce a sentence that would readily fall into a known decidable fragment \cite{
CalvaneseGL98,CalvaneseEO09,RudolphG10,OrtizS12,BaranyGO13}.

\item Many contemporary query languages (starting from simple types of path queries used in graph databases \cite{FlorescuLS98} and reaching up to highly expressive query formalisms like disjunctive datalog \cite{EiterGM97}) go well beyond FO, but can be straightforwardly expressed in $\forall$SO (see~\Cref{sec:homclosedqueries}).

\item The two other ``archetypical'' decision problems in logic can be obtained as special cases of $\textsc{Entailment}(\mathfrak{F},\mathfrak{Q})$: picking $\mathfrak{Q}=\{\mathrm{false}\}$ yields (un)satisfiability checking in $\mathfrak{F}$, whereas choosing $\mathfrak{F}=\{\mathrm{true}\}$ amounts to the validity problem in $\mathfrak{Q}$.
\end{itemize}

The described problem in its unconstrained form, $\textsc{Entailment}(\mathrm{FO},\forall\mathrm{SO})$, is merely semidecidable, that is, there exists an algorithm that successfully terminates in case $\Phi \models \Psi$. Thus, in order to arrive at a decidable setting, one needs to further constrain $\mathfrak{F}$ and $\mathfrak{Q}$ in a way that ensures co-semidecidability, i.e. the existence of a procedure successfully terminating whenever $\Phi \not\models \Psi$. 
We propose to achieve this by exploiting the existence of \emph{countermodels} (that is, models of $\Phi$ that are not models of $\Psi$ and hence witness the non-entailment) which are of a particular, simple shape.
To gauge ``structural difficulty'', a variety of functions, usually referred to as \emph{width measures} have been proposed and have proven beneficial for a wide range of diverse purposes (graph-theoretic, model-theoretic, algorithmic, and others). 

\begin{example} As an easy example, consider the folklore observation that the satisfiability problem of a FO fragment $\mathfrak{F}$ is decidable whenever it exhibits the \emph{finite-model property} (fmp), meaning that every satisfiable sentence $\Phi \in \mathfrak{F}$ has a finite model: In that case, unsatisfiability of $\Phi$ can be detected by enumerating all its consequences and stopping if $\mathit{false}$ is produced, while satisfiability of $\Phi$ can be spotted by enumerating all finite structures, checking modelhood, and stopping if one finite model of $\Phi$ is encountered -- then, both semidecision algorithms run in parallel yield a decision procedure.

Returning to our framework, note that the statement \textsl{``$\mathfrak{F}$ has the fmp''} can be paraphrased into \textsl{``$\textsc{Entailment}(\mathfrak{F},\{\mathrm{false}\})$ has the finite-domainsize-countermodel property''}, and we see that the function just returning the domain size of a structure can be seen as a width measure useful for our purpose. 
\end{example}

When it comes to characterizing logics that are guaranteed to have strucutrally simple models, the \emph{treewidth} measure \cite{ROBERTSON198449} has been well-established as another useful tool, e.g., for logics of the guarded family \cite{AndrekaNB98,Gradel99}. More recently, treewidth and \emph{cliquewidth} of universal models have been investigated in the context of existential rules \cite{BagLecMugSal11,CaliGK13,ICDT2023}.\footnote{Specifics will be discussed in detail in later sections.}  

\medskip

These considerations make it seem worthwhile to investigate -- on a more abstract level -- under which conditions decidability of $\textsc{Entailment}(\mathfrak{F},\mathfrak{Q})$ follows from its \textsl{``finite-width-countermodel property''} for some kind of width. This is what we set out to do in this paper. In short, our contributions are the following:

\begin{itemize}
\item In \Cref{sec:easy} and \Cref{sec:genericdecidability}, we establish a generic framework for showing decidability of $\textsc{Entailment}(\mathfrak{F},\mathfrak{Q})$ based on the existence of countermodels of finite width, and discuss how it can be used to obtain and generalize miscellaneous decidability results from the literature (where the existing proofs sometimes even employ an ad hoc version of our generic argument).   

\item In \Cref{sec:homclosedqueries}, we single out the lucrative special case where the specification language warrants finitely universal model sets of finite width, yielding decidability of entailment for a vast class of homomorphism-closed queries of high expressivity. In our view, this special case strikes a very favorable balance between the expressivity of the specification language and that of the query language. 

\item \Cref{sec:partitionwidth} exposes (a suitably reformulated and simplified version of) Blumensath's \emph{partitionwidth} \cite{Blumensath03,Blumensath06} as a very powerful width measure, superseding all others previously considered in this context, and exhibiting the extremely convenient property of being robust under diverse model-theoretic transformations.

\item From \Cref{sec:rules} on, we apply our framework in the area of existential rules%
, by defining \emph{finite-partitionwidth sets} of rules (\Cref{sec:finitewsets}). We show that this class of rulesets allows for the incorporation of various types of datalog layers (in the sense of stratification), yielding a flexible toolset for the analysis and creation of rulesets with favorable querying properties (\Cref{sec:strat}). We also discuss the somewhat complicated case of rulesets whose decidable querying is based on first-order rewritability (\Cref{sec:FO-rewritability}).    
\end{itemize}



Our work draws from various subfields of computer science and mathematics. To reduce the cognitive load, many of the concepts used will be defined in the place they are needed rather than in the preliminaries; we will make clear if these notions rely on earlier work. The results presented are novel unless clearly stated otherwise.
To improve readability, we defer some technical details and most proofs to the appendix.
Several of the provided examples refer to previously established first-order fragments, also including formalisms from the description logic (DL) family. As these observations follow from general properties of these logics rather than their technical details, we refrain from introducing each and every formalism in meticulous detail. In the appendix we do, however provide a concise overview of the syntactic fragments of first- and second-order logic used (Appendix~\ref{app:standardFO}) and a slightly more comprehensive layout of syntax and semantics of DLs (Appendix~\ref{app:DLintro}).   
For a full treatise of these logics, we refer the interested reader to the standard literature provided inside these appendices.


\section{Preliminaries}\label{sec:preliminaries}

\newcommand{\shuffle}[1]{\concat_{#1}}

\medskip \noindent {\bf Tuples and shufflings.}
    We denote finite sequences $a_1, \ldots, a_n$ of objects, also referred to as \emph{($n$-)tuples}, by corresponding bold font symbols $\va$ and write $\length{\va}$ to refer to a tuple's length $n$.  
	For tuples $\va = a_1,\ldots,a_n$ and $\vb = b_1,\ldots,b_m$ as well as $I \subseteq \{1,\ldots,n+m\}$ with $|I|=n$, we define the \emph{$I$-shuffling} of $\va$ and $\vb$, denoted $\va \shuffle{I} \vb$,
	as the interleaving combination of $\va$ and $\vb$ where $I$ indicates the positions in the resulting $(n+m)$-tuple where the entries of $\va$ go. Formally, $\va \shuffle{I} \vb$ is the tuple
	$\vc = c_1,\ldots,c_{n+m}$ where
	$$
	c_i = \begin{cases}\\[-3.7ex]
	a_{|\{1,\ldots,i\} \cap I|} & \text{ if } i \in I,\\[-0.5ex]
	b_{|\{1,\ldots,i\} \setminus I|} & \text{ if } i \not\in I.\\
	\end{cases}
	$$
For instance, $a_1a_2 \shuffle{\{2,5\}} b_1b_2b_3 = b_1a_1b_2b_3a_2$. We lift these operations to sets of tuples as usual, that is, if $A$ and $B$ are two sets of tuples, then $A \shuffle{I} B = \{ \va \shuffle{I} \vb \mid \va\in A \text{ and } \vb\in B\}$. 

\medskip \noindent {\bf Syntax and formulae.}
We let $\mathbf{T}$ denote a set of {\em terms}, defined as the union of three countably infinite, 
mutually disjoint sets of \emph{constants}~$\mathbf{C}$, \emph{nulls}~$\mathbf{N}$, and \emph{variables}~$\mathbf{V}$. We use $\const{a}$, $\const{b}$, $\const{c}$, $\ldots$ (occasionally annotated) to denote constants, and use $x$, $y$, $z$, $\ldots$ (occasionally annotated) to denote both nulls and variables.
A \emph{signature}~$\sig$ is a finite set of \emph{predicate symbols} (called \emph{predicates}), which are capitalized ($\epred$, $\rpred$, $\ldots$) with the exception of the special, universal unary predicate $\top$, assumed to include every term. 
 Throughout the paper, we assume a fixed finite signature $\sig$, unless stated otherwise. For each predicate~\mbox{$\rpred \in \sig$}, we denote its 
\emph{arity} with~$\arity{\rpred}$. For a finite set of predicates $\Sigma$, we denote by $\arity{\Sigma}$ the maximum over $\arity{\rpred}$ for all $\rpred\in\Sigma$. We speak of \emph{binary signatures} whenever $\arity{\Sigma} \leq 2$. An \emph{atom} over~$\sig$ is an expression  of the form~$\rpred(\vt)$, where $\rpred \in \sig$ and~$\vt$ is an~$\arity{\rpred}$-tuple of terms. A \emph{ground atom} is an atom~$\rpred(\vt)$, where~$\vt$ consists only of constants.
An \emph{instance}~$\instance$ over~$\sig$ is a (possibly countably infinite)
set of atoms over constants and nulls, whereas a~$\emph{database}$~$\database$ is
a finite set of ground atoms. The \emph{active domain}~$\adom{\instance}$ 
of an instance~$\instance$ is the set of terms (i.e., constants and nulls) appearing in the atoms of $\instance$. 
Clearly, databases and instances can be seen as model-theoretic structures,%
\footnote{The ensuing model theory enforces the unique name assumption and allows for constants to be uninter\-preted. These design choices warrant a simpler and homogenized presentation without affecting the results.} 
so we use 
$\models$ to indicate satisfaction of logical formulae as usual. 
 We assume the reader is familiar with the syntax and semantics of first- and second-order logic. We will use upper case greek letters ($\Phi$, $\Psi$,$\,\ldots$) to denote sentences (i.e., formulae without free variables) and lower case ones ($\varphi$, $\psi$,$\,\ldots$) to denote open formulae. We write $\varphi(\vx)$ to indicate that the variable tuple $\vx$ corresponds to the free variables of $\varphi$. By slight abuse of notation, we write $\instance \models \varphi(\vt)$ instead of $\instance [\vx \mapsto \vt] \models \varphi(\vx)$ for tuples $\vt$ of terms with $|\vt|=|\vx|$. We refer to classes of sentences as \emph{fragments} and denote them by gothic letters ($\mathfrak{F}$, $\mathfrak{Q}$, $\mathfrak{L}$, $\ldots$) or, when known and fixed, by their usual abbreviations (FO, SO, MSO,\linebreak GSO, $\ldots$). An overview of the fragments discussed in this article is provided in the appendix. 

\renewcommand{\smallldots}{...\,}
\medskip \noindent {\bf Homomorphisms and substitutions.} A \emph{substitution} is a partial function over~$\mathbf{T}$.
A~\emph{homomorphism} from a set of atoms~$\mathcal{X}$ to a set of 
atoms~$\mathcal{Y}$ is a substitution~$h$ from the terms of~$\mathcal{X}$ to the terms of~$\mathcal{Y}$ that satisfies:
(i)~$\rpred(h(t_1), \smallldots, h(t_n)) \in \mathcal{Y}$, for all~$\rpred(t_1, \smallldots, t_n) \in \mathcal{X}$%
, 
and 
(ii)~$h(\const{a}) = \const{a}$, for each~$\const{a} \in \mathbf{C}$%
. 
A~homomorphism $h$ is an \emph{isomorphism} \iffi it is bijective and $h^{-1}$ is also a homomorphism. $\mathcal{X}$ and $\mathcal{Y}$ are \emph{isomorphic} (written $\mathcal{X} \cong \mathcal{Y}$) \iffi an isomorphism from $\mathcal{X}$ to $\mathcal{Y}$ exists.

\renewcommand{\smallldots}{\ldots}

\medskip \noindent {\bf (Unions of) conjunctive queries and their satisfaction.}
A {\em conjunctive query} (CQ) 
 is a  formula $\query(\vy) = \exists \vx .\phi(\vx, \vy)$ with $\phi(\vx, \vy)$ being a non-empty conjunction (sometimes construed as a set) of atoms over the variables from $\vx,\vy$ and constants.
 The variables from $\vy$ are called {\em free}. A \emph{Boolean} CQ (or BCQ) is a CQ with no free variables.
 A \emph{union of conjunctive queries} (UCQ) $\psi(\vy)$ is a disjunction of CQs with free variables $\vy$. 
 We will treat UCQs as sets of CQs. 
Satisfaction of a BCQ $\query = \exists \vx .\phi(\vx)$ in an instance $\instance$, written $\instance \models q$, coincides with the existence of a homomorphism from $\phi(\vx)$ to $\instance$. As per standard FO semantics, $\instance$ satisfies a union of BCQs, if it satisfies at least one of its disjuncts.
%
Given a FO sentence $\Phi$, we say that it is \emph{finitely controllable} if for every BCQ $q$ with $\Phi \not\models q$ there exists a finite model of $\Phi$ that does not satisfy $q$.
An FO fragment is called finitely controllable if all of its sentences are. 

\section{Semidecidability via Reduction to FO}\label{sec:easy}

As noted in the introduction, our general approach to establish a procedure that decides whether the entailment $\Phi \models \Psi$ holds is to couple two complementary semi-decision procedures: the first terminates successfully whenever $\Phi \models \Psi$ and the second does so whenever $\Phi \not\models \Psi$. In this section, we discuss the former, which is the easier one. Recall that we are interested in entailments of the form $\Phi \models \Psi$, where $\Phi$ is a FO sentence and $\Psi$ is a $\forall$SO sentence, both over some finite signature $\Sigma$. That is, $\Psi$ has the shape $\forall\Sigma'.\Psi^*$ where $\Psi^*$ is a FO sentence over the signature $\Sigma\cup\Sigma'$ 
(with $\Sigma'$ being disjoint from $\Sigma$ and consisting of relation names that are ``repurposed'' as second-order variables in $\Psi$).  

\begin{observation}
$\Phi \models \Psi$ \iffi $\Phi \models \Psi^*$.
\end{observation}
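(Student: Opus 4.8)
The plan is to unfold the semantics of the universal second-order quantifier and exploit the fact that $\Phi$, being a sentence over $\Sigma$ alone, is completely insensitive to how the fresh relation symbols collected in $\Sigma'$ are interpreted. Recall that a $\Sigma$-structure $\mathcal{M}$ satisfies $\Psi = \forall\Sigma'.\Psi^*$ precisely when, for \emph{every} interpretation of the predicates of $\Sigma'$ over the domain of $\mathcal{M}$, the resulting $(\Sigma\cup\Sigma')$-expansion satisfies the first-order sentence $\Psi^*$. The one auxiliary fact I would isolate first is a \emph{reduct observation}: since $\Sigma$ and $\Sigma'$ are disjoint, any $(\Sigma\cup\Sigma')$-structure $\mathcal{N}$ is a model of $\Phi$ if and only if its $\Sigma$-reduct is, because $\Phi$ mentions no symbol of $\Sigma'$. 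Both directions of the equivalence then reduce to matching up ``varying the interpretation of $\Sigma'$'' on the left with ``the $\forall\Sigma'$ quantifier'' on the right.

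For the direction from right to left, I would assume $\Phi \models \Psi^*$ and take an arbitrary $\Sigma$-model $\mathcal{M}$ of $\Phi$; to establish $\mathcal{M} \models \Psi$, I fix any interpretation of $\Sigma'$ on the domain of $\mathcal{M}$, obtaining a $(\Sigma\cup\Sigma')$-expansion $\mathcal{N}$. By the reduct observation $\mathcal{N} \models \Phi$, hence $\mathcal{N} \models \Psi^*$ by the assumption; as the chosen interpretation of $\Sigma'$ was arbitrary, $\mathcal{M} \models \forall\Sigma'.\Psi^*$, i.e.\ $\mathcal{M} \models \Psi$, so $\Phi \models \Psi$.

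For the converse, I would assume $\Phi \models \Psi$ and take an arbitrary $(\Sigma\cup\Sigma')$-model $\mathcal{N}$ of $\Phi$. Its $\Sigma$-reduct $\mathcal{M}$ satisfies $\Phi$ by the reduct observation, so $\mathcal{M} \models \Psi$ by assumption, meaning every interpretation of $\Sigma'$ over the domain makes $\Psi^*$ true. Choosing in particular the interpretation of $\Sigma'$ that $\mathcal{N}$ itself provides yields $\mathcal{N} \models \Psi^*$, and since $\mathcal{N}$ was arbitrary we conclude $\Phi \models \Psi^*$.

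I do not anticipate a genuine obstacle here: the statement is a routine consequence of the Tarskian semantics of $\forall\mathrm{SO}$, where second-order quantification over a relation symbol ranges over all its possible interpretations on the given domain. The only point demanding minor care is the bookkeeping around signatures — namely, reading $\Phi \models \Psi^*$ as first-order entailment over the joint signature $\Sigma\cup\Sigma'$, and justifying that satisfaction of $\Phi$ is fully determined by the $\Sigma$-reduct, so that freely varying the $\Sigma'$-part is exactly what the universal second-order quantifier on the right-hand side demands.
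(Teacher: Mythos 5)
Your proof is correct. Both directions are sound: the ``reduct observation'' is exactly the right auxiliary fact (it holds because $\Sigma$ and $\Sigma'$ are disjoint, so $\Phi$ mentions no symbol of $\Sigma'$), and the two entailment directions follow from it by matching expansions of $\Sigma$-models with the range of the $\forall\Sigma'$ quantifier. You also correctly flag the one point needing care, namely that $\Phi \models \Psi^*$ is read as first-order entailment over the joint signature $\Sigma \cup \Sigma'$.

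Your route differs in presentation from the paper's. The paper argues via satisfiability rather than directly via entailment: it recasts $\Phi \models \forall\Sigma'.\Psi^*$ as unsatisfiability of $\Phi \wedge \neg\forall\Sigma'.\Psi^*$, pushes the negation inward to get $\Phi \wedge \exists\Sigma'.\neg\Psi^*$, prenexes the second-order quantifier past $\Phi$ (using the same disjointness fact you isolate) to obtain $\exists\Sigma'.(\Phi \wedge \neg\Psi^*)$, and then invokes second-order Skolemization to drop the existential second-order prefix without affecting satisfiability, landing at unsatisfiability of $\Phi \wedge \neg\Psi^*$, i.e.\ $\Phi \models \Psi^*$. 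Your direct semantic unfolding makes explicit, at the level of models, precisely what the paper's Skolemization step packages into one phrase: a $(\Sigma\cup\Sigma')$-structure satisfying the matrix \emph{is} a witness for the existential second-order quantifier, and conversely. What your version buys is self-containedness and transparency about where disjointness of signatures enters in each direction; what the paper's version buys is compactness and a formula-manipulation style (entailment-to-unsatisfiability, quantifier prenexing) that it reuses when discussing semidecidability right after the observation. Both are legitimate proofs of the same routine fact.
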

\begin{proof}
Recalling $\Psi = \forall\Sigma'.\Psi^*$, we note that $\Phi \models \forall\Sigma'.\Psi^*$ holds iff $\Phi \wedge  \neg \forall\Sigma'.\Psi^*$ is unsatisfiable. Now consider
$\Phi \wedge  \neg \forall\Sigma'.\Psi^* \ \ \equiv \ \ \Phi \wedge  \exists\Sigma'.\neg\Psi^* \ \ \equiv \ \ \exists\Sigma'.(\Phi \wedge  \neg\Psi^*),$
and note that $\exists\Sigma'.(\Phi \wedge  \neg\Psi^*)$ is unsatisfiable whenever $\Phi \wedge  \neg\Psi^*$ is (through second-order Skolemization). 
The latter coincides with $\Phi \models \Psi^*$, so we have shown the claimed fact.
\end{proof}

Thanks to this insight, we immediately obtain semidecidability of $\textsc{Entailment}(\mathrm{FO},\!\forall\mathrm{SO})$: Exploiting the completeness of FO \cite{God29FOLcomp}, we can enumerate all FO consequences of $\Phi$ and terminate whenever $\Psi^*$ is produced. 

Said relation to FO also helps with co-semidecidability: as every model of $\Phi \wedge \neg\Psi^*$ can immediately serve as a witness of $\Phi \not\models \Psi$, applying the downward part of Löwenheim-Skolem \cite{LSTheorem} confirms that it suffices to focus one's search for countermodels on instances with a countable domain.


\section{Co-Semidecidability via Width Restrictions}\label{sec:genericdecidability}

In this section, we provide a generic argument by means of which co-semidecidability (and consequently decidability) of \textsc{Entailment} can be established in situations where ``structurally well-behaved'' countermodels exist.  
As yardsticks for ``structural well-behavedness'', we will employ various measures that were originally defined (only) for finite graphs, yet, can be generalized to countable instances.  

\begin{definition}[width measures and their properties]\label{def:MSOfriendly}
A \emph{width measure} is a function $\fcn{w}$ mapping every countable instance over some finite signature to a value from $\mathbb{N} \cup \{\infty\}$. An instance $\instance$ is called $\fcn{w}$-finite if $\fcn{w}(\instance)\in \mathbb{N}$.
A width measure $\fcn{w}$ \emph{generalizes} a width measure $\widetilde{\fcn{w}}$ \iffi
every $\widetilde{\fcn{w}}$-finite instance is $\fcn{w}$-finite.    
A fragment $\mathfrak{F}$ has the \emph{finite-$\fcn{w}$-model property} \iffi every satisfiable $\Phi \in \mathfrak{F}$ has a $\fcn{w}$-finite model. 
Given a fragment $\mathfrak{L}$, a width measure $\fcn{w}$ is called \emph{$\mathfrak{L}$-friendly} \iffi  
there exists an algorithm that, taking a number $n \in \mathbb{N}$ as input, enumerates all $\Xi \in \mathfrak{L}$ 
that have a model $\instance$ with $\fcn{w}(\instance)\leq n$.\defend
\end{definition}

Observe that $\mathfrak{L}$-friendliness implies $\mathfrak{L}'$-friendliness for every $\mathfrak{L}'\subseteq\mathfrak{L}$. 
Next, we will briefly review known width notions and corresponding friendliness results. A summary can be found in \Cref{tab:friendly} which also includes the notion of partitionwidth, which will be defined and thoroughly discussed in \Cref{sec:partitionwidth}. 

\begin{table}
	\begin{center}
		\begin{tabular}{lcccc}
			width measure  & SO-friendly & GSO-friendly & MSO-friendly & FO-friendly \\\hline
			expansion & yes & yes & yes & yes \\ 
			treewidth & no & yes & yes & yes \\
			cliquewidth & no & no & yes & yes \\
			partitionwidth & no & no & yes & yes \\\hline
		\end{tabular}
		\caption{Width notions discussed in this article and their friendliness toward certain logics.\label{tab:friendly}}
	\end{center}
\end{table}

\medskip \noindent {\bf Expansion (domain size).}
As a very simple example, consider the \emph{expansion} function\footnote{The name is chosen for homogeneity, so it fits the pre-existing notion of \emph{finite expansion sets} discussed in \Cref{sec:finitewsets}. There is no relationship to the notion of expansion in graph theory.} $\fcn{expansion}: \inst \mapsto |\adom{\inst}|$, mapping every countable instance to the size of its domain. Clearly, the finite-$\fcn{expansion}$-model property is just the well-known finite model property. Moreover, $\fcn{expansion}$ is an SO-friendly width measure: there are only finitely many instances with $n$ elements (up to isomorphism) which can be computed and checked.

\medskip \noindent {\bf Treewidth.}
A somewhat less trivial example of a width measure is the popular notion of \emph{treewidth} \cite{ROBERTSON198449}.

\begin{definition}[treewidth]
A \emph{tree decomposition} of an instance $\inst$ is a (potentially infinite) tree $\tree = (V,E)$, where:
\begin{itemize}

\item $V \subseteq 2^{\adom{\inst}}$, i.e., each node $X \in V$ is a set of terms of $\inst$, 
\item $\bigcup_{X \in V} X = \adom{\inst}$,

\item $\rpred(t_{1}, ...\,,t_{n}) \in \inst$ implies $\{t_{1}, ...\,,t_{n}\} \subseteq X$ for some $X \in V$,

\item for each term $t$ in $\inst$, the subgraph of $T$ consisting of the nodes $X$ with $t \in X$ is connected.
\end{itemize}

The \emph{treewidth} of a tree decomposition $\tree = (V,E)$ is set to $\mathrm{max}_{X \in V} |X| - 1 $ if such a finite maximum exists and $\infty$ otherwise. The \emph{treewidth} of an instance $\inst$, denoted $\treewidth{\inst}$, is the minimal treewidth among all its tree decompositions. \defend
\end{definition}

Treewidth is a GSO-friendly width measure \cite[Thm. 9.18]{CourEngBook} and generalizes $\fcn{expansion}$.

\pagebreak

\medskip \noindent {\bf Cliquewidth.}
Yet another example of a width measure is the notion of \emph{cliquewidth} \cite{ICDT2023}, denoted $\fcn{cw}$, whose technical description is somewhat involved and constitutes a special case of the to-be-introduced notion of partitionwidth, so we defer the formal definition to \Cref{sec:hcw-vs-tw-cw}. The version of cliquewidth employed by us works for arbitrary countable structures properly generalizes earlier notions of cliquewidth for finite directed edge-labelled graphs \cite{CourEngBook} and countable un\-label\-led un\-di\-rec\-ted graphs \cite{Cou04}, as well as 
Grohe and Turán's cliquewidth notion for finite instances of arbitrary arity \cite{GroheT04}. 
Cliquewidth is MSO-friendly \cite[Thm. 11]{ICDT2023} and generalizes $\fcn{expansion}$, whereas neither of $\fcn{cw}$ and $\fcn{tw}$ generalizes the other (see also \Cref{fig:relationships} in \Cref{sec:hcw-vs-tw-cw}). 

\bigskip

The next two definitions capture the two core ingredients of our framework. The first ingredient is the guaranteed existence of width-finite countermodels.

\begin{definition}[width-based controllability]
Let $\fcn{w}$ be a width measure and let $\mathfrak{Q}$ be a fragment.
A fragment $\mathfrak{F}$ is called \emph{$\fcn{w}$-controllable for $\mathfrak{Q}$}  (briefly noted as $\mathrm{Cont}_\fcn{w}(\mathfrak{F},\mathfrak{Q})$) \iffi for every $\Psi \in \mathfrak{Q}$ and  $\Phi \in \mathfrak{F}$ with $\Phi \not \models \Psi$, there exists a $\fcn{w}$-finite model $\instance^* \models \Phi$ with $\instance^* \not\models \Psi$.\defend
\end{definition}


It follows from the above definitions that  $\mathrm{Cont}_\fcn{w}(\mathfrak{F},\mathfrak{Q})$ implies $\mathrm{Cont}_{\widetilde{\fcn{w}}}(\mathfrak{F}'\!,\mathfrak{Q}')$ whenever 
$\widetilde{\fcn{w}}$~generalizes~$\fcn{w}$, $\mathfrak{F}'\! \subseteq \mathfrak{F}$, and $\mathfrak{Q}'\! \subseteq \mathfrak{Q}$.   
The other ingredient deals with the existence of an equivalent translation between fragments.

\begin{definition}[effective expressibility]
Given fragments $\mathfrak{L}$ and~$\mathfrak{L}'$, we call $\mathfrak{L}$ \emph{effectively $\mathfrak{L}'$-expressible} \iffi there exists some computable function $\fcn{eqtrans} : \mathfrak{L} \to \mathfrak{L}'$ that satisfies $\Psi \equiv \fcn{eqtrans}(\Psi)$. 
\defend
\end{definition}

We next formulate this article's core decidability result.


\begin{theorem}\label{thm:decidablequerying}
Let $\fcn{w}$ be a width measure,
$\mathfrak{F}$ an FO fragment, and 
$\mathfrak{Q}$ a $\forall$SO fragment 
such that $\mathfrak{F}$ is $\fcn{w}$-controllable for $\mathfrak{Q}$.
If there exists some fragment $\mathfrak{L}$ such that $\fcn{w}$ is $\mathfrak{L}$-friendly
and the fragment $\{\Phi \wedge \neg \Psi \mid \Phi \in \mathfrak{F}, \Psi \in \mathfrak{Q}\}$ is effectively 
$\mathfrak{L}$-expressible, then $\textsc{Entailment}(\mathfrak{F},\mathfrak{Q})$ is decidable.
\end{theorem}

\begin{proof}
Detecting $\Phi \models \Psi$ is semidecidable, as discussed earlier.
Hence it remains to establish a semidecision procedure for $\Phi \not\models \Psi$. 
Thanks to the assumption that $\Phi$ is from a fragment that is $\fcn{w}$-controllable for $\mathfrak{Q}$ with $\Psi \in \mathfrak{Q}$, it follows by definition that there exists a $\fcn{w}$-finite model $\instance^* \models \Phi$ with $\instance^* \not\models \Psi$.
Note that such a ``countermodel'' $\instance^*$ can be characterized by the $\mathfrak{L}$-sentence $\fcn{eqtrans}(\Phi \wedge \neg\Psi)$.  
 Therefore, the following procedure will terminate, witnessing non-entailment: Enumerate all natural numbers in their natural order and for each number $n$ run a parallel thread with the enumeration algorithm from \Cref{def:MSOfriendly} with input $n$ (the algorithm is guaranteed to exist due to $\mathfrak{L}$-friendliness of $\fcn{w}$). Terminate as soon as one thread produces $\fcn{eqtrans}(\Phi \wedge \neg\Psi)$.\qedhere
\end{proof}

\Cref{fig:algo} displays the corresponding decision algorithm. We note that the parallel execution of countably infinitely many ``threads'' in (2) can be realized by a technique commonly known as \emph{dovetailing}: execute one computation step for the first thread, then execute one step for threads 1 and 2, then one step for each of threads 1,2, and 3, etc. This yields a fair execution, meaning that every computation step of every thread will be executed after a finite amount of time. In particular, this execution strategy ensures that the parallel branch (2) of the algorithm terminates whenever a width-finite model of $\fcn{eqtrans}(\Phi \wedge \neg\Psi)$ exists. One should note that, beyond mere decidability, the generic framework does not give rise to any informative upper complexity bounds.


\newcommand{\indt}{$\left.\right.$\,}
\begin{table}
\probbox{\ \\[-0ex]
	\indt\textbf{Algorithm:} $\textsc{EntailmentCheck}(\fcn{w},\fcn{eqtrans})$\\
	\indt\textbf{Requires:} $\mathrm{Cont}_\fcn{w}(\mathfrak{F},\mathfrak{Q})$, $\mathfrak{L}$-friendly $\fcn{w}$, $\fcn{eqtrans}$\\
	\indt\textbf{Input:} $\Phi \in \mathfrak{F} \subseteq \mathrm{FO}$, \ \  $\forall\Sigma'.\Psi^* \in \mathfrak{Q} \subseteq \forall\mathrm{SO}$.\\
	\indt\textbf{Output:} \textsc{yes}, if $\Phi \models \Psi$, \textsc{no} otherwise.\\[1ex]
\indt Run the following in parallel:\\
\indt (1) enumerate all $\Theta \in \mathrm{FO}$ with $\Phi \models \Theta$\\
\indt $\left.\right.$\ \ \ \ \ \ if $\Theta = \Psi^*$, terminate with \textsc{YES},\\
\indt $\left.\right.$\ \ \ \ \ \ next $\Theta$.\\
\indt (2) for all $n \in \mathbb{N}$, run the following in parallel:\\
\indt $\left.\right.$\ \ \ \ \ enumerate all $\Xi \in \mathfrak{L}$ having model $\instance$ with \indt $\fcn{w}(\instance)\leq n$,\\ 
\indt $\left.\right.$\ \ \ \ \ \ if $\Xi = \fcn{eqtrans}(\Phi \wedge \neg\Psi)$, terminate with \textsc{NO},\\
\indt $\left.\right.$\ \ \ \ \ \ next $\Xi$.\\[-1.5ex]
}
\caption{Entailment decision algorithm discussed in \Cref{thm:decidablequerying}. 
\label{fig:algo}}
\end{table}

\footnotetextbefore{Obviously, decidability entails the existence of a total recursive upper-bound function obtained from exhaustively evaluating all size-bounded inputs.}
\begin{example}\label{ex:ALCHOIQb}
	The proof of the previously long-standing open problem of decidability of UCQ entailment in the description logic $\mathcal{ALCHOIQ}b$ \cite{RudolphG10} can be seen as an application of \Cref{thm:decidablequerying} with $\mathfrak{F} = \mathcal{ALCHOIQ}b$ and $\mathfrak{Q} = \mathrm{UCQ}$, where $\mathfrak{L} = \mathrm{FO}$ and $\fcn{w} = \fcn{tw}$, i.e., treewidth is the used width notion. The most intricate part of this result amounts to showing that $\mathcal{ALCHOIQ}b$ is $\fcn{tw}$-controllable for $\mathrm{UCQ}$, via an elaborate countermodel construction\remarkstart, producing so-called \emph{quasi-forest models}, which are known to be $\fcn{tw}$-finite. \color{black} To this day, no informative upper complexity bound for $\textsc{Entailment}(\mathcal{ALCHOIQ}b,\mathrm{UCQ})$ is known.\footnote{Obviously, decidability entails the existence of a total recursive upper-bound function obtained from exhaustively evaluating all size-bounded inputs.}
\end{example}

\begin{remark}
The attentive reader might wonder whether the proposed setting is unnecessarily complicated: Wouldn't it be possible to use the established reduction from the second-order entailment $\Phi \models \Psi$ with $\Psi = \forall\Sigma'.\Psi^*$ to the first-order entailment $\Phi \models \Psi^*$ for the co-semidecision part, too? This would allow one to always pick $\mathfrak{L} = \mathrm{FO}$ and $\fcn{eqtrans} : \Phi \wedge \neg \Psi \mapsto \Phi \wedge \neg \Psi^*$. The answer is no:
while $\Phi \not\models \Psi$ coincides with $\Phi \not\models \Psi^*$, the former may have a width-finite countermodel while the latter does not.

For example, let $\Phi$ be the sentence $\exists z.\forall x.\exists y.\pred{S}(x,y)$, expressing that there is at least one element and that every element has an $\pred{S}$-successor, and 
let $\Psi$ be a query checking for the existence of a directed $\pred{S}$-cycle: $\Psi = \forall \pred{P}.\Psi^*$ where
$\Psi^*$ is 
$$
\begin{array}{@{}c@{}}
\big(\forall xy.(\pred{S}(x,y){\to} \pred{P}(x,y)) \wedge \forall xyz.(\pred{P}(x,y) {\wedge} \pred{P}(y,z) {\to} \pred{P}(x,z))\big)
\  \to  \  \exists x.\pred{P}(x,x).
\end{array}
$$
Then we can see that $\Phi \not\models \Psi$ is witnessed by the countermodel $\instance = \{\pred{S}(n,n+1) \mid n \in \mathbb{N}\}$ which has treewidth of $1$, whereas any countermodel witnessing 
$\Phi \not\models \Psi^*$ has infinite treewidth. To see this, note that $\instance \cup \{\pred{P}(n,m) \mid n < m \in \mathbb{N}\}$ is a countermodel that has infinite treewidth and it is minimal in the sense that it admits an injective homomorphism into any other countermodel, whence treewidth-infinity carries over.
\end{remark}

\begin{remark}
The attentive and well-informed reader might question our restriction of $\mathfrak{F}$ to fragments of FO. Indeed, all the width notions $\fcn{w}$ considered in this paper share the property that they are not only MSO-friendly, but in fact, for any given $n \in \mathbb{N}$, it is even decidable if an MSO sentence $\Phi$ has a model $\mathcal{I}$ with $\fcn{w}(\mathcal{I})\leq n$ or not. So could we not exploit this property to design a decision procedure for $\textsc{Entailment}(\mathfrak{F},\mathfrak{Q})$ without reverting to completeness of FO as laid out in \Cref{sec:easy}? In that case we could significantly extend our setting by letting $\mathfrak{F}$ be some fragment of MSO (rather than just FO) and $\mathfrak{Q}$ be a selection of MSO sentences preceded by a prefix of (higher-ary) universal second-order quantifiers. Unfortunately, however, this is generally not possible. For this approach to work, we would need to know an a-priori upper bound on a potential countermodel's width, that is, a computable function $\fcn{maxwidth}: \mathfrak{F} \times \mathfrak{Q} \to \mathbb{N}$ satisfying that, for every $\Phi \in \mathfrak{F}$, $\Psi \in \mathfrak{Q}$ with $\Phi \not\models\Psi$, there exists a corresponding countermodel $\mathcal{I}$ with $\fcn{w}(\mathcal{I})\leq \fcn{maxwidth}(\Phi,\Psi)$. While in many cases such a function is easy to establish (e.g., for GF), this does not hold in general. In some cases, such a function does provably not exist, even if $\mathfrak{F}$ is $\fcn{w}$-controllable for $\mathfrak{Q}$. As a basic example, let $\fcn{w}$ be $\fcn{expansion}$ (i.e., the domain size), let $\mathfrak{Q} = \{\mathit{false}\}$ and let $\mathfrak{F}$ consist of all sentences of the form $\Phi \wedge \Phi'$ where $\Phi$ is an arbitrary FO sentence not using the binary predicate $\pred{S}$, while $\Phi'$ is the fixed MSO sentence (using some MSO-expressible ``macros'' to enhance readability)
$$
\begin{array}{r}
\exists x,y.\Big( \big(\neg \exists v.\pred{S}(v,x)\big) \wedge \big(\neg \exists v.\pred{S}(y,v)\big) \wedge  \big(\forall z {\not=} y. \exists ! z'. \pred{S}(z,z')\big) \wedge \big(\forall z {\not=} x. \exists ! z'. \pred{S}(z',z)\big)\quad \\
\wedge\ \big(\forall v. \forall X {\ni} x. (\forall z,z'. X(z) \wedge \pred{S}(z,z') \,{\to}\, X(z')) \, \to \, X(v) \big)\quad \\
\wedge\ \big(\forall v. \forall X {\ni} v. (\forall z,z'. X(z) \wedge \pred{S}(z,z') \,{\to}\, X(z')) \, \to \, X(y) \big)\Big), \\
\end{array}
$$
which forces every model's domain to be a finite total order with successor relation~$\pred{S}$. Due to this, every countermodel for $\Phi \wedge \Phi' \models \mathit{false}$ must have finite domain size and consequently we have established that $\mathfrak{F}$ is $\fcn{expansion}$-controllable for $\{\mathit{false}\}$. Now, toward a contradiction, suppose a computable function $\fcn{maxwidth}$ as above existed for our case. Then we could use it to decide finite unsatisfiability of any $\pred{S}$-free FO sentence $\Phi$ by deciding the entailment $\Phi \wedge \Phi' \models \mathit{false}$ using the strategy sketched above. Yet, finite satisfiability of FO sentences is known to be undecidable by Trakhtenbrot's Theorem~\cite{trakhtenbrot50}.
\end{remark}

Please note that the \textsc{EntailmentCheck} algorithm from \Cref{fig:algo} could also be run on a ``best effort basis'' with very little context information: for every $\Phi$, $\Psi$, and $\mathfrak{L}$-translation of $\Psi$, one could execute it using the most general known width notion $\fcn{w}$ available (with partitionwidth being the recommended choice). Whenever the algorithm terminates, it will provide the correct answer. It will only fail to terminate in case $\Phi \not\models \Psi$ with every countermodel having infinite width. 

We demonstrate the generality of our width-related notion of control\-la\-bility by giving a few examples of scenarios covered by it. $\mathfrak{F}$ is $\fcn{w}$-controllable for $\mathfrak{Q}$ in the following cases:

\begin{itemize}
\item $\mathfrak{F}$ has the finite-$\fcn{w}$ model property and $\mathfrak{Q}$ contains only unsatisfiable sentences. This coincides with the task of (un)satisfiability checking in $\mathfrak{F}$ and (for any width notion discussed here) subsumes all fragments known to have the finite-model property, which, among others, is the case for 
all description logics subsumed by $\mathcal{SROI}b_s$,
the majority of the prefix classes of FO with a decidable satisfiability problem \cite{BorgerGG1997}, 
the two-variable fragment of FO (FO$^2$~\cite{Mortimer75,GradelKV97}),  
the guarded fragment (GF~\cite{AndrekaNB98,Gradel99}), 
the unary negation fragment (UNF~\cite{SegoufinC13}), 
the guarded negation fragment (GNF~\cite{GuardedNegation})
and the triguarded fragment (TGF~\cite{RudolphS18,KieronskiR21}). 
An interesting case in this respect is the counting two-variable fragment of FO ($\mathcal{C}^2$~\cite{GradelOR97}). It is known to not have the finite-model property and can be easily seen to not have the finite-$\fcn{tw}$ model property either. We conjecture, however, that it has the finite-$\fcn{cw}$-model property. 
\item $\fcn{w}$ generalizes $\fcn{expansion}$,  $\mathfrak{Q}$ contains only conjunctive queries and $\mathfrak{F}$ is finitely controllable. This applies to description logics subsumed by $\mathcal{ALCHOI}b$, and to GF, UNF,\linebreak and GNF~\cite{BaranyGO13}. Contrariwise, CQ entailment for FO$^2$\!, $\mathcal{C}^2$\!, and TGF is known to be undecidable \cite{Rosati07}, so these logics must fail to be $\fcn{w}$-controllable for CQs and any FO-friendly $\fcn{w}$.       
\end{itemize}
In the next section, we discuss an important further case, by means of which we can establish controllability for query languages way beyond (U)CQs.   

\section{The Case of Homomorphism-Closed Queries}\label{sec:homclosedqueries}

While the presented framework provides a uniform way of showing decidability of entailment problems, establishing the preconditions of \Cref{thm:decidablequerying}, and in particular controllability, can be arbitrarily difficult and may require ad hoc proof techniques in many cases.
It is therefore desirable to single out sufficient conditions for controllability that cover comprehensive specification and query languages. It turns out that fragments which are ``homomorphically benign'' in a certain sense strike a good balance in that respect.

\begin{definition}[(finite) universality, homomor\-phism-closedness]
Let $\mathscr{I}$ be a set of models of a sentence $\Phi$.
$\mathscr{I}$ is \emph{universal}, \iffi for every model $\mathcal{J} \models \Phi$ there is some $\instance \in \mathscr{I}$ that maps homomorphically into $\mathcal{J}$.
$\mathscr{I}$ is \emph{finitely universal}, \iffi for every model $\mathcal{J} \models \Phi$ there is some $\instance \in \mathscr{I}$ such that every finite subset of $\instance$ maps homomorphically into $\mathcal{J}$.  
%
In case $\mathscr{I} = \{\instance\}$, we call $\instance$ a \emph{(finitely) universal model} of $\Phi$.

A sentence $\Psi$ is called \emph{homomorphism-closed} \iffi whenever a model $\instance \models \Psi$ admits a homomorphism into some instance $\mathcal{J}$, then $\mathcal{J}$ is also a model of $\Psi$. \defend
\end{definition}
Clearly, every universal model (set) is also finitely universal.
The following Theorem is folklore for universal models and well-known for universal model sets \cite{BourhisMMP16}.
The generalization using finite universality is novel and a bit more involved.
\begin{theorem}[finite universality and homclosedness]\label{thm:universal}
Let $\Phi$ be an FO sentence having some finitely universal model set $\mathscr{I}$ and let $\Psi$ be a homomorphism-closed $\forall$SO sentence. Then, $\Phi \not \models \Psi$ \iffi there exists a model $\instance \in \mathscr{I}$ with $\instance \not\models \Psi$.
\end{theorem}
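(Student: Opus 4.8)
The plan is to prove the two directions separately: the right-to-left implication is immediate, while the left-to-right implication rests on a compactness argument that is the real crux. For the \emph{if} direction, suppose there is some $\instance\in\mathscr{I}$ with $\instance\not\models\Psi$. Since $\mathscr{I}$ is by definition a set of \emph{models} of $\Phi$, we have $\instance\models\Phi$, so $\instance$ already witnesses $\Phi\not\models\Psi$. For the \emph{only if} direction, assume $\Phi\not\models\Psi$ and fix a model $\mathcal{J}\models\Phi$ with $\mathcal{J}\not\models\Psi$. Writing $\Psi=\forall\Sigma'.\Psi^*$, I would fix an interpretation $\vec{S}$ of the predicates of $\Sigma'$ over $\adom{\mathcal{J}}$ with $(\mathcal{J},\vec{S})\models\neg\Psi^*$; such $\vec{S}$ exists precisely because $\mathcal{J}\not\models\Psi$. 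Finite universality of $\mathscr{I}$ then supplies some $\instance\in\mathscr{I}$ such that every finite subset of $\instance$ maps homomorphically into $\mathcal{J}$. The goal is to show $\instance\not\models\Psi$.

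The tempting next step — assembling the homomorphisms from all finite subsets of $\instance$ into a single homomorphism $\instance\to\mathcal{J}$ — is exactly where the main obstacle lies, since for infinite targets this lifting can genuinely fail (for example, every finite fragment of a two-way infinite $\pred{S}$-path maps into a one-way infinite $\pred{S}$-path, yet the whole line does not). I would therefore run the compactness argument over the \emph{combined} signature $\Sigma\cup\Sigma'$, so that the falsifying second-order witness $\vec{S}$ is kept alive. Concretely, consider the theory consisting of the elementary diagram of $(\mathcal{J},\vec{S})$ together with the positive atomic $\Sigma$-diagram of $\instance$, written using a fresh constant $d_b$ for each $b\in\adom{\instance}$ (and equations $d_{\const{a}}=\const{a}$ for constants, so the resulting map will fix constants). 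Any finite subset of this theory mentions only finitely many atoms of $\instance$, i.e. a finite subset $F\subseteq\instance$; interpreting the corresponding $d_b$ via a homomorphism $F\to\mathcal{J}$ provided by finite universality shows that this finite subset is satisfied in $(\mathcal{J},\vec{S})$. Hence the theory is finitely satisfiable, and by compactness it has a model, which yields an elementary extension $(\mathcal{M},\vec{S}_{\mathcal{M}})\succeq(\mathcal{J},\vec{S})$ together with a $\Sigma$-homomorphism $h:\instance\to\mathcal{M}$ (sending each $b$ to the interpretation of $d_b$).

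To finish, note that $\neg\Psi^*$ is first-order and hence preserved along the elementary extension, so $(\mathcal{M},\vec{S}_{\mathcal{M}})\models\neg\Psi^*$, giving $\mathcal{M}\not\models\Psi$. Applying the contrapositive of homomorphism-closedness to $h:\instance\to\mathcal{M}$ then yields $\instance\not\models\Psi$, which is the required countermodel inside $\mathscr{I}$. If one insists that all instances be countable, a downward Löwenheim–Skolem step applied to the compactness model makes $\mathcal{M}$ countable without affecting anything. The delicate point throughout is that one must \emph{not} attempt to map $\instance$ into $\mathcal{J}$ itself — which may be impossible — but instead preserve the falsifying $\Sigma'$-expansion by passing to a suitable elementary extension of $\mathcal{J}$, into which $\instance$ genuinely does admit a homomorphism; homomorphism-closedness then transports non-satisfaction back to $\instance$.
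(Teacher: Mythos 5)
Your proof is correct, and it reaches the conclusion by a genuinely different route than the paper. The paper's proof factors through a standalone lemma (\Cref{lem:interesting}): every instance satisfying a homomorphism-closed $\forall$SO sentence has a \emph{finite} subinstance already satisfying it, proved by compactness after freezing the instance's nulls into fresh constants and trading $\neg\Psi$ for the first-order $\neg\Psi^*$. The theorem is then immediate by contradiction: if $\instance \models \Psi$, some finite $\instance' \subseteq \instance$ satisfies $\Psi$; finite universality maps $\instance'$ into $\mathcal{J}$, and homomorphism-closedness would force $\mathcal{J} \models \Psi$. You apply compactness on the other side of the homomorphism: keeping the falsifying expansion $(\mathcal{J},\vec{S})$ explicit, you amalgamate its elementary diagram with the positive diagram of $\instance$, so that every finite fragment of the resulting theory is realized inside $(\mathcal{J},\vec{S})$ via the finite-subset homomorphisms, and compactness yields an elementary extension $(\mathcal{M},\vec{S}_{\mathcal{M}}) \succeq (\mathcal{J},\vec{S})$ together with a constant-fixing homomorphism $\instance \to \mathcal{M}$; since $\neg\Psi^*$ is first-order and hence preserved, $\mathcal{M} \not\models \Psi$, and the contrapositive of homomorphism-closedness pulls this back to $\instance$. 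In effect, the paper's compactness shrinks $\instance$ to a finite submodel, while yours enlarges $\mathcal{J}$ to receive all of $\instance$ --- dual uses of the same tool, and both correctly isolate the crux that the finite-subset homomorphisms cannot in general be glued into a map into $\mathcal{J}$ itself (your two-way/one-way infinite $\pred{S}$-path example is exactly the right obstruction, and the reason finite universality is harder to exploit than universality). What the paper's decomposition buys is reuse: \Cref{lem:interesting} also drives the interpolation result (\Cref{lem:interpolation}). What yours buys is a direct, contradiction-free construction and a self-contained general principle (finite-subset-wise mappability into $\mathcal{J}$ implies mappability into an elementary extension of $\mathcal{J}$); note also that your closing L\"owenheim--Skolem remark is not merely cosmetic but actually needed for full rigor, since the paper defines instances --- and hence the scope of homomorphism-closedness --- to be countable, whereas the model produced by compactness need not be.
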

The proof of this theorem (see appendix) hinges on the insight that every model of some homomorphism-closed $\forall$SO sentence has a fi\-nite submodel.
This same insight can also be utilized to show the following statement (also, see the appendix for the proof), which can be understood as a sort of interpolation result and might be interesting in itself.
\begin{lemma}\label{lem:interpolation}
Let $\Phi$ be a FO sentence and $\Psi$ a homomorphism-closed $\forall$SO sentence.
Then, $\Phi \models \Psi$ holds \iffi there exists a union of BCQs $Q = \bigvee_{i=1}^n q_{i}$ satisfying $\Phi \models Q$ and $Q \models \Psi$.
\end{lemma}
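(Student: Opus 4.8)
The plan is to prove the two directions of the biconditional separately, with the nontrivial direction relying on \Cref{thm:universal} and the key ``finite submodel'' insight mentioned in the text. For the easy direction, suppose such a union of BCQs $Q = \bigvee_{i=1}^n q_i$ exists with $\Phi \models Q$ and $Q \models \Psi$. Then by transitivity of entailment $\Phi \models \Psi$ follows immediately, since any model of $\Phi$ satisfies $Q$ and hence $\Psi$. This direction uses nothing special about homomorphism-closedness.

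For the hard direction, I would assume $\Phi \models \Psi$ and construct the interpolating UCQ. The natural idea is to read off, from each relevant countermodel-candidate, a finite ``witness'' CQ. First I would dispose of the degenerate case where $\Phi$ is unsatisfiable, in which one can take $Q$ to be any unsatisfiable BCQ (e.g.\ a single-disjunct query forcing a contradiction), trivially satisfying both $\Phi \models Q$ and $Q \models \Psi$. Assuming $\Phi$ is satisfiable, I would invoke the fact that every FO sentence has a (finitely) universal model set---or, more carefully, work with the semantic characterization: since $\Phi \models \Psi$ and $\Psi$ is homomorphism-closed, \Cref{thm:universal} tells us that every member of a finitely universal model set $\mathscr{I}$ of $\Phi$ satisfies $\Psi$. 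The crucial ingredient is the insight highlighted before \Cref{lem:interpolation}: because $\Psi$ is a homomorphism-closed $\forall$SO sentence, any model $\instance$ of $\Psi$ already contains a \emph{finite} submodel $\instance_0 \subseteq \instance$ with $\instance_0 \models \Psi$. Reading such a finite submodel $\instance_0$ as a Boolean conjunctive query $q_{\instance_0}$ (existentially quantifying its nulls and retaining its constants), homomorphism-closedness of $\Psi$ guarantees $q_{\instance_0} \models \Psi$, since any model of $q_{\instance_0}$ receives a homomorphic image of $\instance_0$ and hence, by homomorphism-closedness, satisfies $\Psi$.

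The remaining task is to assemble these finite witnesses into a \emph{single finite} union $Q$ with $\Phi \models Q$. Here I would argue by a compactness/König-style argument: each model $\mathcal{J} \models \Phi$ satisfies $\Psi$ (as $\Phi \models \Psi$), hence contains a finite $\Psi$-submodel yielding a BCQ $q$ with $\mathcal{J} \models q$ and $q \models \Psi$; thus $\Phi \models \bigvee \{q \mid q \models \Psi,\ q \text{ a BCQ}\}$, where the disjunction ranges over the (countably many, up to equivalence) such queries. By the compactness theorem for FO---applied to $\Phi$ together with the negations of all such $q$, which must be jointly unsatisfiable---a \emph{finite} subfamily $q_1,\dots,q_n$ already suffices, giving $\Phi \models \bigvee_{i=1}^n q_i = Q$ with each $q_i \models \Psi$ and hence $Q \models \Psi$.

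The main obstacle I anticipate is the finiteness of the union: establishing $\Phi \models Q$ for an \emph{infinite} disjunction is straightforward from the per-model witnesses, but collapsing to a finite $Q$ requires care. One must phrase the infinite disjunction as an unsatisfiability statement amenable to compactness; concretely, $\Phi \wedge \bigwedge_i \neg q_i$ having no model is exactly the claim that $\Phi$ forces one of the $q_i$, and finiteness then follows since a single model of $\Phi$ cannot refute all witnesses simultaneously. A subtle point worth verifying is that the negations $\neg q_i$ are first-order (they are, being universally quantified) so that FO compactness genuinely applies, and that the reduction of $\Psi$ to a first-order consequence (via \Cref{thm:universal} and the finite-submodel property) does not smuggle in second-order reasoning in the wrong place. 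Once these are checked, $Q = \bigvee_{i=1}^n q_i$ is the desired interpolant.
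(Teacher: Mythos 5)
Your proof takes essentially the same route as the paper's: the ``if'' direction by transitivity of entailment, and the ``only if'' direction by combining the finite-submodel property of homomorphism-closed $\forall$SO sentences (the paper's \Cref{lem:interesting}) with the observation that a finite submodel satisfying $\Psi$, read as a BCQ, entails $\Psi$ by homomorphism-closedness, followed by FO compactness applied to $\{\Phi\} \cup \{\neg q \mid q \text{ a BCQ},\ q \models \Psi\}$ to extract a finite union --- this is exactly the paper's argument, which you in fact spell out in more detail than the paper does. Two small corrections: first, your disposal of the degenerate case is broken, since under the paper's definitions a BCQ is a \emph{non-empty} conjunction of atoms (no negation, no inequalities) and hence always satisfiable, so there is no ``unsatisfiable BCQ'' to pick; this detour is also unnecessary, because the compactness argument already covers unsatisfiable $\Phi$ (the extracted finite subfamily is then empty, just as the paper's $\mathfrak{C}'$ may be empty). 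Second, the appeal to \Cref{thm:universal} is superfluous: the argument never needs universal model sets, only \Cref{lem:interesting} applied directly to an arbitrary model of $\Phi$.
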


\Cref{thm:universal} ensures that for homomorphism-closed queries, the search for countermodels can be confined to any finitely universal model set of $\Phi$. That means, if some finitely universal model set can be guaranteed to consist only of structurally well-behaved models, we can ensure controllability.\footnote{We note that finite universality is demonstrably better than universality: there are known cases where $\Phi$ has a finitely universal model of finite width yet any universal model has infinite width \cite{BMR2023}.} 
\begin{theorem}\label{thm:univmodelset}
Let $\fcn{w}$ be a width measure.
Let $\mathfrak{Q}$ be a $\forall$SO fragment of homomorphism-closed sentences and let $\mathfrak{F}$ be a FO fragment of which every sentence exhibits some finitely universal model set containing only $\fcn{w}$-finite structures. 
Then, $\mathfrak{F}$ is $\fcn{w}$-controllable for $\mathfrak{Q}$.
\end{theorem}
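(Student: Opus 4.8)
The plan is to derive the statement as an almost immediate consequence of \Cref{thm:universal}, since the notion of $\fcn{w}$-controllability, once unfolded, coincides with the conclusion of that theorem---supplemented only by a width bound that we can read directly off the hypothesis on $\mathfrak{F}$.

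Concretely, I would fix arbitrary $\Phi \in \mathfrak{F}$ and $\Psi \in \mathfrak{Q}$ with $\Phi \not\models \Psi$, aiming to exhibit a single instance $\instance^*$ with $\instance^* \models \Phi$, $\instance^* \not\models \Psi$, and $\fcn{w}(\instance^*) \in \mathbb{N}$. By the hypothesis on $\mathfrak{F}$, the sentence $\Phi$ admits a finitely universal model set $\mathscr{I}$ all of whose members are $\fcn{w}$-finite. Since every sentence of $\mathfrak{Q}$ is a homomorphism-closed $\forall$SO sentence, $\Psi$ qualifies as the query required by \Cref{thm:universal}, so the hypotheses of that theorem hold for the pair $(\Phi,\Psi)$ with the witnessing set $\mathscr{I}$. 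Its forward direction then turns $\Phi \not\models \Psi$ into an instance $\instance \in \mathscr{I}$ satisfying $\instance \not\models \Psi$. The two remaining requirements come for free: membership in a \emph{model set} of $\Phi$ yields $\instance \models \Phi$, and membership in $\mathscr{I}$ forces $\fcn{w}(\instance) \in \mathbb{N}$ because $\mathscr{I}$ contains only $\fcn{w}$-finite structures. Taking this $\instance$ as $\instance^*$ thus establishes $\mathrm{Cont}_\fcn{w}(\mathfrak{F},\mathfrak{Q})$.

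I do not expect any genuine obstacle in this step: the substantive content has already been discharged in \Cref{thm:universal}, whose proof in turn relies on the stated fact that every model of a homomorphism-closed $\forall$SO sentence has a finite submodel. The only points demanding care are bookkeeping in nature. One must note that $\mathscr{I}$ depends on $\Phi$ alone and hence may be reused across all choices of $\Psi \in \mathfrak{Q}$, with the selected member of $\mathscr{I}$ allowed to vary with $\Psi$, as the definition of controllability permits $\instance^*$ to depend on both arguments. One must also observe that the definition of a model set already bundles in $\instance \models \Phi$, so that satisfaction of $\Phi$ requires no separate verification.
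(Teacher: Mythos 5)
Your proposal is correct and follows exactly the route the paper intends: the paper gives no separate proof of this theorem, treating it as immediate from \Cref{thm:universal} (as the preceding discussion indicates), and your argument spells out precisely that derivation — apply the ``only if'' direction of \Cref{thm:universal} to get $\instance \in \mathscr{I}$ with $\instance \not\models \Psi$, then read off $\instance \models \Phi$ from the definition of a (finitely) universal model set and $\fcn{w}(\instance) \in \mathbb{N}$ from the hypothesis on $\mathscr{I}$.
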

\begin{proof}
Consider arbitrary $\Psi \in \mathfrak{Q}$ and  $\Phi \in \mathfrak{F}$ with $\Phi \not \models \Psi$. By assumption, there exists a finitely universal model set $\mathscr{I}$ of $\Phi$ containing only $\fcn{w}$-finite structures and by \Cref{thm:universal}, there must be some $\instance^* \in \mathscr{I}$ with $\instance^* \not\models \Psi$. Thus $\mathfrak{F}$ is $\fcn{w}$-controllable for $\mathfrak{Q}$.
\end{proof}
Combining \Cref{thm:univmodelset} and \Cref{thm:decidablequerying}, we directly obtain the subsequent corollary.
\begin{corollary}\label{cor:dechomquerying}
	Let $\fcn{w}$ be a width measure,
	$\mathfrak{Q}$ a $\forall$SO fragment of homomorphism-closed sen\-ten\-ces, and
	$\mathfrak{F}$ an FO fragment whose every sentence has a finitely universal model set of only $\fcn{w}$-finite instances.
	If there is a fragment $\mathfrak{L}$ such that $\fcn{w}$ is $\mathfrak{L}$-friendly
	and the fragment $\{\Phi \wedge \neg \Psi \mid \Phi {\,\in\,} \mathfrak{F}, \Psi {\,\in\,} \mathfrak{Q}\}$ is effectively 
	$\mathfrak{L}$-expressible, then $\textsc{Entailment}(\mathfrak{F},\mathfrak{Q})$ is decidable.
\end{corollary}
With this corollary in place, we will next review the expressivity boundaries of the query languages for which decidability can be warranted, depending on the underlying width measures' friendliness properties. 
Note that there is a trade-off: less general width measures allow for less expressive choices for the specification language $\mathfrak{F}$ but -- due to being generally friendlier -- admit more expressive query languages $\mathfrak{Q}$.

\medskip \noindent {\bf MSO-friendly width measures.}
In this case (which comprises all width measures discussed in this article), the most expressive query language is the semantically defined class $\mathfrak{Q}$ of all \emph{h}omomorphism-closed \emph{u}niversal \emph{s}econd-\emph{o}rder / \emph{m}onadic \emph{s}econd-\emph{o}rder \emph{q}ueries -- henceforth abbreviated as HUSOMSOQs (pronounced ``hue-somm-socks''). This begs the question how comprehensive and expressive this class actually is and, in particular, which popular syntactically defined query languages it encompasses. An effective syntax characterizing the entire class is still elusive, but it subsumes two very comprehensive and expressive logical fragments (and through these many other query languages):

\begin{itemize}
\item
{\bf Nested monadically defined queries.}
This well-investigated fragment (abbreviated NEMODEQs) constitutes a very expressive yet computationally manageable formalism, admitting polytime translations into MSO and datalog (and hence homomorphism-closed $\forall$SO) \cite{RudKro13}.
We will refrain from recalling the somewhat involved syntax here and instead highlight that NEMODEQs subsume (unions of) Boolean CQs,
\emph{monadic datalog queries} \cite{Cosmadakis88}, \emph{(unions of) conjunctive 2-way regular path queries}~((U)C2RPQs)~\cite{FlorescuLS98}, and nested versions thereof (e.g. \emph{regular queries}~\cite{ReutterRV17} as well as others~\cite{BourhisKR14}).

\item
{\bf Monadic disjunctive datalog queries.}
 As monadic disjunctive datalog (MDDlog) queries (see \Cref{app:logicoverview}) are ``natively'' contained both in $\forall$SO and in MSO, we can pick the identity function for $\fcn{eqtrans}$ 
    to witness effective MSO-expressibility. Homomorphism-closedness follows from the containedness in disjunctive datalog. MDDlog queries can verify complex properties such as an instance's non-membership in arbitrary finite-domain CSPs (e.g., non-3-colorability), but are also known to capture many ontology-mediated queries based on description logics \cite{BienvenuCLW14,FeierKL19}.
\end{itemize}
In terms of expressivity, the classes of MDDlog queries and NEMODEQs are incomparable \remarkstart{}(unlike NEMODEQ, MDDlog cannot recognize graphs with directed cycles; unlike MDDlog, NEMODEQ cannot recognize non-3-colorable graphs)\color{black}. The semantic intersection of the two classes contains UCQs and monadic datalog queries.

For comprehensive investigations into FO fragments that allow for decidable HUSOMSOQ entailment on the grounds of MSO-friendly width measures, we point the reader to \Cref{ex:srfamily} and \Cref{ex:Rbio} later in the article.

\medskip \noindent {\bf GSO-friendly width measures.}
In this case, where treewidth is the prominent corresponding width notion, we can establish decidability for all homomorphism-closed queries that are simultaneously expressible in $\forall$SO and GSO. A syntactic characterization of this entire class seems not within reach, despite
some interesting results regarding the characterization of homomorphism-closed GSO in terms of infinite-domain CSPs \cite[Cor. 21]{abs-2010-05677}.   
The arguably most expressive known effectively GSO-expressible homomorphism-closed query languages in $\forall$SO are \emph{frontier-guarded disjunctive datalog} (FGDDlog) on one hand, and on the other hand \emph{nested frontier-guarded flag\,\&\,check queries}, sometimes shortened to \emph{nested guarded queries} and usually abbreviated by GQ$^+$ \cite{BourhisKR15}. 

FGDDlog is syntactically contained in $\forall$SO and its expressibility in GSO is a straightforward consequence of GSO's alternative characterization via the guarded semantics. The homomorphism-closedness of FGDDlog queries follows from their inclusion in disjunctive datalog.
FGDDlog contains all MDDlog queries (see above) and their subclasses. 

GQ$^+$ is expressible in datalog (hence homomorphism-closed and in $\forall$SO) \cite{BourhisKR15} and has also been shown to be expressible in GSO \cite[Prop. 47]{abs-2010-05677}. GQ$^+$ subsumes the class of NEMODEQs (see above) and all its subclasses.    

\footnotetextbefore{It should be noted that this corollary does not cover the case $\mathfrak{F} = \mathcal{ALCHOIQ}b$ from \Cref{ex:ALCHOIQb}, since the employed construction to convert an arbitrary countermodel $\instance$ into a finite-treewidth countermodel $\mathcal{J}$ does not ensure that $\mathcal{J}$ homomorphically maps into $\instance$. In fact, the decidability of the entailment of queries beyond UCQs from $\mathcal{ALCHOIQ}b$ is wide open.}
\begin{example}
	For numerous logical formalisms, the existence of treewidth-finite (finitely) universal model sets can be established through well-known model-transformations: For GF, UNF, and GNF 
	as well as a variety of very expressive description logics \cite{baader_horrocks_lutz_sattler_2017,Rudolph11}, specific techniques commonly referred to as \emph{unravelling} transform an arbitrary model $\instance$ into a tree\-width-fi\-nite model $\mathcal{J}$ that homomorphically maps into $\instance$ \cite{GradelHO02,BenediktBB16,BednarczykR19,BednarczykR23,CalvaneseEO09,BourhisKR14}. 
	Thus, due to GSO-friendliness of  
	treewidth, \Cref{cor:dechomquerying} ensures in one sweep decidability of entailment of all effectively GSO-ex\-pres\-sible homomorphism-closed $\forall$SO queries for all these formalisms. 
	
	Consequently, if $\mathfrak{F}$ is any of GF, UNF, GNF, $\mathcal{ALCHOI}b$, $\mathcal{ALCHIQ}b$, $\mathcal{ALCHOQ}b$, and $\mathfrak{Q}$ is some effectively GSO-expressible fragment of homomorphism-closed $\forall$SO queries, then $\mathfrak{F}$ is treewidth-controllable for $\mathfrak{Q}$ and thus $\textsc{Entailment}(\mathfrak{F},\mathfrak{Q})$ is decidable.\footnote{It should be noted that this corollary does not cover the case $\mathfrak{F} = \mathcal{ALCHOIQ}b$ from \Cref{ex:ALCHOIQb}, since the employed construction to convert an arbitrary countermodel $\instance$ into a finite-treewidth countermodel $\mathcal{J}$ does not ensure that $\mathcal{J}$ homomorphically maps into $\instance$. In fact, the decidability of the entailment of queries beyond UCQs from $\mathcal{ALCHOIQ}b$ is wide open.}
	
	While for simple instances of $\mathfrak{Q}$ like CQs or queries with regular path expressions, decidability of $\textsc{Entailment}(\mathfrak{F},\mathfrak{Q})$ has been known for all these instances of $\mathfrak{F}$, our result not only provides a generic common ground for these prior results, but also generalizes them to the above-mentioned more expressive query languages of FGDDlog and GQ$^+$. 
\end{example}


\medskip \noindent {\bf SO-friendly width measures.}
In this case, the semantic definition of the most expressive eligible query language simplifies to that of all homomorphism-closed $\forall$SO sentences. A very expressive syntactically defined query language that falls in this category is \emph{disjunctive datalog} (DDlog) \cite{EiterGM97}, which includes all of datalog (Dlog). 
To the best of our knowledge, no syntactic characterization of homomorphism-closed $\forall$SO has been established (whereas one has been proposed for homomorphism-closed SO as a whole \cite[Cor. 61]{BodirskyFKR21}).
So far, we are not aware of any query in homomorphism-closed $\forall$SO that is provably not expressible in DDlog, but we still find it unlikely that DDlog captures all of homomorphism-closed $\forall$SO.

\section{\Partitionwidth{}: Definition and Properties}\label{sec:partitionwidth}

The notion of \emph{\partitionwidth{}} by Blumensath \cite{Blumensath03,Blumensath06} is a parameterized measure originally developed for model-theoretic structures of arbitrary cardinality and with countable signatures. Its purpose is to classify such structures according to the ``simplicity'' of their MSO theory. For the sake of understandability, we will refrain from introducing the notion in its full generality and confine ourselves to a restricted and notation-adjusted version applicable to countable instances over finite signatures, which is sufficient for our purposes. 

Before going into the full formal details in \Cref{sec:cwdef}, let us provide an intuitive description of the underlying high-level ideas.
The notion of some instance's partitionwidth is based on the idea that we iteratively construct or assemble said instance 
from its singleton elements (e.g., terms). To this end, we utilize \emph{colors} of various arities (mostly denoted with lowercase greek latters $\kappa$, $\mu$, $\lambda$) which we dynamically assign to element tuples of the appropriate length. Initially, each singleton element $e$ is assigned the original unary color $\origcol_1$, the pair $(e,e)$ the original binary color $\origcol_2$, and so forth. During the assembly process, color assignments can be extended and manipulated in various ways. However, at any time, any tuple can only carry at most one color. Similar to the mechanism of assigning colors to tuples, it is also possible to assign them predicates from the underlying signature. As opposed to colors, such predicate assignments are immutable and several distinct predicates may be assigned to the same tuple of elements.

In some more detail, the assembly of a colored instance is carried out through the repeated execution of construction operations each of which is one of the following:
\begin{itemize}
	\item A nullary operation, denoted $\newconst{c}$ for some constant name $\const{c}$, creating a one-element instance containing the term $\const{c}$, with the original color assignments.
	\item A nullary operation, denoted $\newanon$,  creating a one-element instance with an ``unnamed'' (i.e., non-constant) element, with the original color assignments.
	\item A binary operation, denoted by the infix operator $\oplus_{\mathit{Ins}}$, where $\mathit{Ins}$ is a finite set of \emph{instructions}.
	The operation takes two colored instances as input, creates their \emph{disjoint union}, and executes color and predicate modifications according to the instructions contained in $\mathit{Ins}$. The different types of instructions and their effects are as follows:
	\begin{itemize}
		\item color-adding: the instruction $\cadd{\kappa}{\mu}{I}{\lambda}$ assigns the color $\lambda$ to any tuple obtained by $I$-shuffling a $\kappa$-colored tuple from the first input instance with a $\mu$-colored tuple from the second,   
		\item re-coloring: the instruction $\recol{\kappa}{\lambda}^0$ changes the color of all $\kappa$-colored tuples from the first input instance to $\lambda$ (in the same way, the instruction $\recol{\kappa}{\lambda}^1$ re-colors tuples originating from the second input instance),		
		\item predicate-adding: the instruction $\radd{\kappa}{\mu}{I}{\rpred}$ assigns the predicate $\rpred$ to any tuple obtained by $I$-shuffling a $\kappa$-colored tuple of the first instance with a $\mu$-colored tuple of the second,
		\item predicate-materializing: the instruction $\unradd{\kappa}{\rpred}^0$ assigns the predicate $\rpred$ to every $\kappa$-colored tuple from the first input instance (likewise, the instruction $\unradd{\kappa}{\rpred}^1$ materializes predicates out of colored tuples originating from the second input instance).         
	\end{itemize}
	\end{itemize}


After the described assembly process over colored instances has finished, we obtain the resulting instance we wanted to build by ``forgetting'' all colors and only keeping the predicates. 
Then, an instance's \ercwidth{} is determined by the minimal set of colors required to create it in the way described above.
In the finite case, an instance's ``assembly plan'' can be conveniently described by an algebraic expression built from the operators introduced above. Such an expression corresponds to a syntax tree. 


\begin{example}\label{ex:pw1}
Consider the  finite instance 
$$\instance_\mathrm{ex}=\{
\pred{E}(\const{a},\const{b}),\pred{E}(\const{a},\const{c}),\pred{E}(\const{a},x),\pred{E}(\const{b},\const{a}),\pred{E}(\const{b},\const{b}),\pred{E}(\const{b},\const{c}),\pred{E}(\const{b},x) \},$$%
over $\Sigma = \{\pred{E}\}$ with just one binary predicate $\pred{E}$, whose active domain consists of the three constants $\const{a}$, $\const{b}$, and $\const{c}$, as well as a null $x$. The instance can be graphically depicted as follows (here and in the following, we omit displaying the universal unary predicate~$\top$ that holds for every term).\\[-8ex] 

\begin{center}
\begin{tikzpicture}[level/.style={sibling distance=60mm/#1, level distance=15mm}]

\node[] (b) [yshift=-10cm,xshift=-0.5cm,label={[label distance=0pt]45:$ $}] {$\const{b}$};
\node[] (a) [left=of b, yshift=0mm,xshift=-15mm,label={[label distance=0pt]-45:$ $}] {$\const{a}$};
\node[] (x) [below=of b, yshift=-4mm]  
{$x$};
\node[] (c) [below=of a, yshift=-4mm] 
{$\const{c}$};

\draw [->](b) to [out=90,in=0,looseness=8] node[midway,label={[label distance=-2pt]0:$\pred{E}$}] {} (b);
\draw [->](b) to [bend left=20] node[midway,label={[label distance=-5pt]90:$\pred{E}$}] {} (a);
\draw [->](a) to [bend left=20] node[midway,label={[label distance=-5pt]90:$\pred{E}$}] {} (b);

\draw[->,color=black] (b) -- (x) node [midway,label={[label distance=-4pt]0:$\pred{E}$}] {};
\draw[->,color=black] (a) -- (x) node [midway,yshift=-7pt,label={[label distance=13pt]0:$\pred{E}$}] {};
\draw[->,color=black] (b) -- (c) node [midway,yshift=-7pt,label={[label distance=13pt]180:$\pred{E}$}] {};
\draw[->,color=black] (a) -- (c) node [midway,label={[label distance=-4pt]180:$\pred{E}$}] {};
\end{tikzpicture}
\end{center}

Assume we want to construct $\instance_\mathrm{ex}$ (up to isomorphism) via an assembly process of the kind described above. There are several ways of doing so, but in any case, one has to start by creating one-term-instances by means of appropriate $\pred{New}$ commands. So we obtain the following four colored instances (we use dotted lines for color assignments):\\[-1ex]

\noindent\begin{tabular}{@{\hspace{4.5ex}}c@{\hspace{3.5ex}}|@{\hspace{3.5ex}}c@{\hspace{3.5ex}}|@{\hspace{3.5ex}}c@{\hspace{3.5ex}}|@{\hspace{4ex}}c}
$\newconst{a}$ & $\newconst{b}$ & $\newconst{c}$ & $\newanon$ \\[1ex]
\begin{tabular}{@{}l@{}}
\begin{tikzpicture}	
	\node[] (x2) [label={[label distance=0pt,color=darkblue]0:$\!\!\narrowcdots\,o_1$}] {$\const{a}$};
	\draw [->,color=darkerblue,thick,dotted](x2) to [out=180,in=270,looseness=8] node[midway,label={[label distance=0pt]180:$o_2$}] {} (x2);
\end{tikzpicture}\\[-3ex]
\end{tabular}
&
\begin{tabular}{@{}l@{}}
\begin{tikzpicture}
	\node[] (x2) [label={[label distance=0pt,color=darkblue]0:$\!\!\narrowcdots\,o_1$}] {\phantom{$x$}\hspace{-6pt}\smash{$\const{b}$}};
	\draw [->,color=darkerblue,thick,dotted](x2) to [out=180,in=270,looseness=8] node[midway,label={[label distance=0pt]180:$o_2$}] {} (x2);
\end{tikzpicture}\\[-3ex]
\end{tabular}
&
\begin{tabular}{@{}l@{}}
\begin{tikzpicture}
	\node[] (x2) [label={[label distance=0pt,color=darkblue]0:$\!\!\narrowcdots\,o_1$}] {$\const{c}$};
	\draw [->,color=darkerblue,thick,dotted](x2) to [out=180,in=270,looseness=8] node[midway,label={[label distance=0pt]180:$o_2$}] {} (x2);
\end{tikzpicture}\\[-3ex]
\end{tabular}
&
\begin{tabular}{@{}l@{}}
\begin{tikzpicture}
	\node[] (x2) [label={[label distance=0pt,color=darkblue]0:$\!\!\narrowcdots\,o_1$}] {$x$};
	\draw [->,color=darkerblue,thick,dotted](x2) to [out=180,in=270,looseness=8] node[midway,label={[label distance=0pt]180:$o_2$}] {} (x2);
\end{tikzpicture}\\[-3ex]
\end{tabular}
\end{tabular}\\[-0ex]

Each such initial instance is endowed with the original colors $o_1$ (unary) and $o_2$ (binary). 
The name of the null introduced by $\newanon$ is irrelevant (recall that we only aim at constructing the above instance up to isomorphism), we just chose $x$ for convenience. It is, however, required that every ``call'' of $\newanon$ introduces a null that is ``fresh'', that is, not otherwise used in other independently constructed assembly components. 


Proceeding with the assembly process, we decide to combine the colored instance produced by $\newconst{a}$ with that produced by $\newconst{b}$ by taking their disjoint union and concurrently executing the following three instructions:
\begin{itemize}
\item $\cadd{\origcol_1}{\origcol_1}{\{1\}}{\lambda}$ -- assigning the binary color $\lambda$ to all pairs of terms whose first component is a $o_1$-colored term of the $\newconst{a}$ instance and whose second component  is a $o_1$-colored term of the $\newconst{b}$ instance; the only pair meeting these conditions in our case is $(\const{a},\const{b})$.
\item $\radd{\origcol_1}{\origcol_1}{\{2\}}{\epred}$ -- assigning the binary predicate $\pred{E}$ to all pairs of terms whose first component is a $o_1$-colored term of the $\newconst{b}$ instance and whose second component  is a $o_1$-colored term of the $\newconst{a}$-instance; this way, the atom $\pred{E}(\const{b},\const{a})$ is introduced.
\item $\unradd{\origcol_{2}}{\epred}^{1}$ -- assigning the binary predicate $\pred{E}$ to all $o_2$-colored pairs of terms of the $\newconst{b}$-instance; this way, the atom $\pred{E}(\const{b},\const{b})$ is introduced.
\end{itemize}
%
The result of this composite operation is the colored instance depicted right below which is described by the expression  $\newconst{a} \oplus_{\{ 	\cadd{\origcol_1}{\origcol_1}{\{1\}}{\lambda},
	\unradd{\origcol_{2}}{\epred}^{1}, 
	\radd{\origcol_1}{\origcol_1}{\{2\}}{\epred} \}} \newconst{b}$.\\[-0ex] 

\begin{center}
\begin{tikzpicture}[level/.style={sibling distance=60mm/#1, level distance=15mm}]
		
\node[] (b) [label={[label distance=0pt,color=darkblue]0:$\!\!\narrowcdots\,o_1$}] {\phantom{$x$}\hspace{-6pt}\smash{$\const{b}$}};

\node[] (a) [left=of b, yshift=0mm,xshift=-15mm,label={[label distance=0pt,color=darkblue]180:$\,o_1\narrowcdots\!\!$}] {$\const{a}$};

\draw [->,color=darkerblue,thick,dotted](b) to [out=135,in=45,looseness=6] node[pos=0.2,label={[label distance=-5pt]180:$o_2$}] {} (b);		

\draw [->,color=darkerblue,thick,dotted](a) to [out=135,in=45,looseness=6] node[pos=0.2,label={[label distance=-5pt]180:$o_2$}] {} (a);		

\draw [->](b) to [out=-45,in=-135,looseness=8] node[midway,label={[label distance=+8pt]3:$\pred{E}$}] {} (b);

\draw [->](b) to [bend left=20] node[midway,label={[label distance=-5pt]90:$\pred{E}$}] {} (a);

\draw [->,color=darkerred,thick,dotted](a) to [bend left=20] node[midway,label={[label distance=-5pt]90:$\lambda$}] {} (b);
\end{tikzpicture}\\[-10ex]
\end{center}

\vspace{-1.5ex}


Next, we choose to combine the colored instances described by $\newconst{c}$ and $\newanon$ by, again, taking their disjoint union and then concurrently executing the following two instructions:
\begin{itemize}
\item $\recol{\origcol_{1}}{\kappa}^0$ -- changing the color of any $o_1$-colored term from the $\newconst{c}$-instance to $\kappa$; the unary color of $\const{c}$ is now $\kappa$. 
\item $\recol{\origcol_{1}}{\kappa}^1$ -- likewise changing the color of any $o_1$-colored term from the $\newanon$-instance to $\kappa$; the unary color of $x$ is now $\kappa$ as well.	
\end{itemize}
The result of this composite operation is the colored instance depicted right below which is described by the expression
$\newconst{c} \oplus_{\{\recol{\origcol_{1}}{\kappa}^0, \recol{\origcol_{1}}{\kappa}^1\}} \newanon$.

\begin{center}
	\begin{tikzpicture}[level/.style={sibling distance=60mm/#1, level distance=15mm}]
		
		\node[] (b) [label={[label distance=0pt,color=darkergreen]0:$\!\!\narrowcdots\,\kappa$}] {\phantom{$x$}\hspace{-6pt}\smash{$x$}};
		
		\node[] (a) [left=of b, yshift=0mm,xshift=-15mm,label={[label distance=0pt,color=darkergreen]180:$\kappa\,\narrowcdots\!\!$}] {$\const{c}$};
		
		\draw [->,color=darkerblue,thick,dotted](b) to [out=135,in=45,looseness=6] node[pos=0.2,label={[label distance=-5pt]180:$o_2$}] {} (b);		
		
		\draw [->,color=darkerblue,thick,dotted](a) to [out=135,in=45,looseness=6] node[pos=0.2,label={[label distance=-5pt]180:$o_2$}] {} (a);		
		
		
		
	\end{tikzpicture}
\end{center}

Finally, we combine the two just created colored structures, represented by the expressions 
$\newconst{a} \oplus_{\{ 	\cadd{\origcol_1}{\origcol_1}{\{1\}}{\lambda},
	\unradd{\origcol_{2}}{\epred}^{1}, 
	\radd{\origcol_1}{\origcol_1}{\{2\}}{\epred} \}} \newconst{b}$ 
and
$\newconst{c} \oplus_{\{\recol{\origcol_{1}}{\kappa}^0, \recol{\origcol_{1}}{\kappa}^1\}} \newanon$, by taking their disjoint union and concurrently executing the following two instructions:
\begin{itemize}
\item $\unradd{\lambda}{\epred}^{0}$ -- assigning the binary predicate $\pred{E}$ to all $\lambda$-colored pairs of terms from the former instance; this way, the atom $\pred{E}(\const{a},\const{b})$ is introduced.
\item $\radd{\origcol_1}{\kappa}{\{1\}}{\epred}$ -- assigning the binary predicate $\pred{E}$ to all pairs of terms whose first component is an $o_1$-colored term of the former instance and whose second component  is a $\kappa$-colored term of the latter instance; this way, the atoms $\pred{E}(\const{a},\const{c})$, $\pred{E}(\const{a},x)$, $\pred{E}(\const{b},\const{c})$, and $\pred{E}(\const{b},x)$ are introduced in one go.
\end{itemize}
%
The final result of this operation is a colored structure, described by the expression
$$
(\newconst{a} 
\oplus_{\{ 
			\cadd{\origcol_1}{\origcol_1}{\{1\}}{\lambda},
			\unradd{\origcol_{2}}{\epred}^{1}, 
			\radd{\origcol_1}{\origcol_1}{\{2\}}{\epred}
	    \}} 
\newconst{b}) 
\oplus_{\{
			\unradd{\lambda}{\epred}^{0},
			\radd{\origcol_1}{\kappa}{\{1\}}{\epred}
		\}} 
(\newconst{c} 
\oplus_{\{
		\recol{\origcol_{1}}{\kappa}^0, 
		\recol{\origcol_{1}}{\kappa}^1
		\}} 
\newanon)
$$%
which -- when disregarding the coloring -- is isomorphic to the initially given instance~$\mathcal{I}_\mathrm{ex}$. 
The fact that $4$ colors needed for the construction implies that $\mathcal{I}_\mathrm{ex}$'s partitionwidth is at most $4$.
Arguably, the structure of the ``assembly expression'' above can be made more transparent by displaying its syntax tree below. Also, thinking of assembly expressions as binary trees will be crucial for the formal developments in the following sections.   

\bigskip 
\hspace{5ex}\scalebox{0.9}{\begin{tikzpicture}
	[
	    level 1/.style={sibling distance=60mm, level distance=16mm},
	    level 2/.style={sibling distance=30mm, level distance=16mm},
	    level 3/.style={sibling distance=30mm, level distance=10mm},
	    level 4/.style={sibling distance=20mm, level distance=10mm},
	    level 5/.style={sibling distance=15mm, level distance=10mm}
	]

	\node [] (a){$\phantom{_|}\oplus\phantom{_|}$}
	child {node [] (n0) {$\phantom{_|}\oplus\phantom{_|}$}
		child {node [label={[label distance=-4pt]0:$ $}] (n00) {$\ \ \newconst{a}$}
			edge from parent [-] node [] {}
		} 
		child {node [label={[label distance=-4pt]0:$ $}] (n01) {$\newconst{b}$}
			edge from parent [-] node [] {}
		} 
		edge from parent [-] node [] {}
	} 
	child {node [label={[label distance=-4pt]0:$ $}] (n0) {$\phantom{_|}\oplus\phantom{_|}$}
		child {node [label={[label distance=-4pt]0:$ $}] (n00) {$\ \ \newconst{c}$}
			edge from parent [-] node [] {}
		} 
		child {node [label={[label distance=-4pt]0:$ $}] (n01) {$\newanon$}
			edge from parent [-] node [] {}
		} 
		edge from parent [-] node [] {}
	}; 

\node[] (z3) [below=of a, xshift=-0.75cm,yshift=-0.7cm,label={
	{ $_{\{\cadd{\origcol_1}{\origcol_1}{\{1\}}{\lambda},
		\unradd{\origcol_{2}}{\epred}^{1}, 
		\radd{\origcol_1}{\origcol_1}{\{2\}}{\epred}\}}$}}] { };

\node[] (z4) [below=of a, xshift=4.75cm,yshift=-0.7cm,label={
	{ $_{\{
		\recol{\origcol_{1}}{\kappa}^0, 
		\recol{\origcol_{1}}{\kappa}^1
		\}}$}}] { };
	
\node[] (z5) [below=of a, xshift=1.45cm,yshift=0.9cm,label={
	{ $_{\{
			\unradd{\lambda}{\epred}^{0},
			\radd{\origcol_1}{\kappa}{\{1\}}{\epred}
			\}}$}}] { };
\end{tikzpicture}}
\\[-2ex]
\end{example}

For finite instances, as in the example just discussed, the underlying ideas are very much in line with the notion of inductive definitions and should be easy to comprehend.
Opposed to this, for infinite instances, the much more elusive idea of an ``infinite assembly plan'' has to be implemented by resorting to assembly expressions that correspond to infinite, ``unfounded'' syntax trees. Such non-standard expressions describe construction processes that have an end but no proper beginning. According to our experience, this concept is cognitively hard to grasp and it takes a while to develop good and reliable intuitions about it. We will precede the formal treatment with an example to help developing the appropriate mindset, but certain aspects of the example will have to remain sketchy until the rigid formal development in the subsequent section.

\begin{example}\label{ex:pw2}
We consider the signature $\Sigma=\{\pred{R}\}$, which just contains the ternary predicate $\pred{R}$, and the infinite, constant-free instance 
$$
\inst_\mathrm{tern} = \{ \rpred(-1,n,n{+}1) \mid n\in \mathbb{N} \},
$$%
where, for the sake of simplicity, we use the integers $-1$, $0$, $1$, $2$, $\ldots$ as nulls from $\mathbf{N}$.
That is, $\inst_\mathrm{tern}$ contains the atoms $\rpred(-1,0,1)$, $\rpred(-1,1,2)$, $\rpred(-1,2,3)$, and so forth, as shown in the below drawing.\\[-6ex]
\begin{center}
\qquad\begin{tikzpicture}[
	Dotted/.style={
		dash pattern=on 0.1\pgflinewidth off #1\pgflinewidth,line cap=round,
		shorten >=#1\pgflinewidth/2,shorten <=#1\pgflinewidth/2},
	Dotted/.default=3]

\node[] [] (h1) [label={[label distance=-2pt]-90:$ $}, yshift=-0pt] {$\hspace{-1ex}0\hspace{1ex}$};
\node[] [] (h2) [right=of h1,label={[label distance=-2pt]-90:$ $}]  {$\hspace{-1ex}1\hspace{1ex}$};
\node[] [] (h3) [right=of h2,label={[label distance=-2pt]-90:$ $}]  {$\hspace{-1ex}2\hspace{1ex}$};
\node[] [] (h4) [right=of h3,label={[label distance=-2pt]-90:$ $}]  {$\hspace{-1ex}3\hspace{1ex}$};
\node[] [] (h5) [right=of h4,label={[label distance=-2pt]-90:$ $}]  {$\hspace{-1ex}4\hspace{1ex}$};
\node[] [] (h6) [right=of h5,xshift=-1cm] {$\cdots$};
\node[] [] (h8) [left=of h1,yshift=-0.60pt,label={[label distance=-2pt]-90:$ $}] {$-1$};


\node[] (h9) [right=of h8,xshift=-.05cm] { };
\node[] (h10) [left=of h1,xshift=-.05cm] { };

\begin{scope}[local bounding box=only paths]
\draw[fill=gray!45, fill opacity=.3] (h8) to[out=45,in=135] (h5);
\draw[fill=gray!45, fill opacity=.3] (h8) to[out=45,in=135] (h4);
\draw[fill=gray!45, fill opacity=.3] (h8) to[out=45,in=135] (h3);
\draw[fill=gray!45, fill opacity=.3] (h8) to[out=45,in=135] (h2);
\draw[fill=shadecolor] (h8) to[out=45,in=135] (h1); 
\end{scope}

\draw[-,transform canvas={yshift=6.8pt,xshift=-13pt}] (h1) -- ($(h1)!6.10cm!(h3)$) node [] {};

\node[] [] (R) [above=of h1,yshift=-1.0cm,xshift=.1cm]{$\pred{R}$};
\node[] [] (R) [above=of h2,yshift=-.8cm,xshift=-.2cm]{$\pred{R}$};
\node[] [] (R) [above=of h3,yshift=-.6cm,xshift=-.5cm]{$\pred{R}$};
\node[] [] (R) [above=of h4,yshift=-.4cm,xshift=-.8cm]{$\pred{R}$};
\end{tikzpicture}
\end{center}
\vspace{-2ex}

Due to the already mentioned peculiarity that assembling  $\inst_\mathrm{tern}$ does not have a proper starting point, we will argue ``backward'' by starting from $\inst_\mathrm{tern}$ and discussing how it could have been created through a combination of colored instances, then asking the same question for it's constituents, etc. 

To this end, we first note that $\inst_\mathrm{tern}$ could have arisen from the two colored instances
\medskip

\newcommand{\overncdot}[1]{
	\begin{array}{c}
    \rotatebox{90}{$...$}
	\\[-5pt]
	#1	 
    \end{array}
}

\vspace{-4pt}
\begin{center}
\begin{tabular}{c@{\hspace{5ex}}|@{\hspace{5ex}}c@{\hspace{4ex}}}
\begin{tabular}{@{}l@{}}
	\begin{tikzpicture}
		\node[] (x2) [label={[label distance=-4pt,color=darkblue]-90:$\overncdot{o_1}$}] {$\smash{-}1$};
	\end{tikzpicture}\\[-2ex]
\end{tabular}
&
\begin{tabular}{@{}l@{}}
	\begin{tikzpicture}
	\node[] (f1) [label={[label distance=-4pt,color=darkercyan]-90:$\overncdot{\mu}$}] 
	{$0$};
	\node[] (f2) [right=of f1,label={[label distance=-4pt,color=darkergreen]-90:$\overncdot{\kappa}$}] {$1$};
	\node[] (f3) [right=of f2,label={[label distance=-4pt,color=darkergreen]-90:$\overncdot{\kappa}$}] {$2$};
	\node[] (f4) [right=of f3,label={[label distance=-4pt,color=darkergreen]-90:$\overncdot{\kappa}$}] {$3$};
	\node[] (f5) [right=of f4,label={[label distance=-4pt,color=darkergreen]-90:$\overncdot{\kappa}$}] {$4$};
	\node[] (f6) [right=of f5,xshift=-0.8cm] {$\cdots$};
	
	\draw[->,color=darkerred,thick,dotted] (f1) -- (f2) node [midway,label={[label distance=-4pt]-90:$\lambda$}] {};
	\draw[->,color=darkerred,thick,dotted] (f2) -- (f3) node [midway,label={[label distance=-4pt]-90:$\lambda$}] {};
	\draw[->,color=darkerred,thick,dotted] (f3) -- (f4) node [midway,label={[label distance=-4pt]-90:$\lambda$}] {};
	\draw[->,color=darkerred,thick,dotted] (f4) -- (f5) node [midway,label={[label distance=-4pt]-90:$\lambda$}] {};
	\end{tikzpicture}\\[-2ex]
\end{tabular}
\end{tabular}
\end{center}
\vspace{-1pt}
\noindent by taking their disjoint union and executing the instruction $\radd{\origcol_1}{\lambda}{\{1\}}{\rpred}$ -- assigning the ternary predicate $\pred{R}$ to all triples  of terms whose first component is an $o_1$-colored term of the left instance and whose second and third component taken together as a pair carry the color $\lambda$ in the right instance; this way, all the infinitely many atoms of $\inst_\mathrm{tern}$ are introduced at once.

While the left of the two input instances can be directly created through a $\newanon$ call (in this case, the only original color needed is the unary color $o_1$), the right input instance is itself infinite. For now we will denote the yet-to-be determined assembly expression for the right instance with $E$, whence the assembly expression of $\inst_\mathrm{tern}$ becomes $\newanon \smash{\oplus_{\{\radd{\,\origcol_1}{\smash{\lambda}}{\smash{\{\hspace{-0.5pt}1\hspace{-0.5pt}\}}}{\rpred}\}}}E$.
In our quest to reverse-engineer the $E$-instance further, we find that it can be garnered from the two colored instances 

\medskip

\vspace{-4pt}
\begin{center}
	\begin{tabular}{@{\hspace{5.5ex}}c@{\hspace{5ex}}|@{\hspace{5ex}}c@{\hspace{0ex}}}
		\begin{tabular}{@{}l@{}}
			\begin{tikzpicture}
				\node[] (x2) [label={[label distance=-4pt,color=darkblue]-90:$\overncdot{o_1}$}] {$0$};
			\end{tikzpicture}\\[-2ex]
		\end{tabular}
		&
		\begin{tabular}{@{}l@{}}
			\begin{tikzpicture}
				\node[] (f1) [label={[label distance=-4pt,color=darkercyan]-90:$\overncdot{\mu}$}] 
				{$1$};
				\node[] (f2) [right=of f1,label={[label distance=-4pt,color=darkergreen]-90:$\overncdot{\kappa}$}] {$2$};
				\node[] (f3) [right=of f2,label={[label distance=-4pt,color=darkergreen]-90:$\overncdot{\kappa}$}] {$3$};
				\node[] (f4) [right=of f3,label={[label distance=-4pt,color=darkergreen]-90:$\overncdot{\kappa}$}] {$4$};
				\node[] (f5) [right=of f4,xshift=-0.8cm] {$\cdots$};
				
				\draw[->,color=darkerred,thick,dotted] (f1) -- (f2) node [midway,label={[label distance=-4pt]-90:$\lambda$}] {};
				\draw[->,color=darkerred,thick,dotted] (f2) -- (f3) node [midway,label={[label distance=-4pt]-90:$\lambda$}] {};
				\draw[->,color=darkerred,thick,dotted] (f3) -- (f4) node [midway,label={[label distance=-4pt]-90:$\lambda$}] {};
			\end{tikzpicture}\\[-2ex]
		\end{tabular}
	\end{tabular}
\end{center}
\vspace{2pt}
\noindent by, once more, taking their disjoint union and then concurrently executing the following three instructions:
\begin{itemize}
    \item $\cadd{\origcol_1}{\mu}{\{1\}}{\lambda}$ -- assigning the binary color $\lambda$ to all pairs of terms whose first component is a $o_1$-colored term of the left instance and whose second component  is a $\mu$-colored term of the right instance; the only pair meeting these conditions in our case is $(0,1)$.
	\item $\recol{\mu}{\kappa}^1$ -- changing the color of any $\mu$-colored term from the right instance to $\kappa$.
	\item $\recol{\origcol_{1}}{\mu}^0$ --  changing the color of any $o_1$-colored term from the left instance to $\mu$.
\end{itemize}

\noindent We find that -- once more -- the left instance can be created via $\newanon$, while the right instance is isomorphic to the $E$-instance we just disassembled. As the concrete names of the nulls used are irrelevant, we find that the assembly expression $E$ creates the right instance as well. It is easy to see that the disassembly step can be created over and over again ad infinitum, where every null from $\{0,1,\ldots\}$ will at some stage be obtained via a $\newanon$ call. We also note that, given the observation above, the expression $E$ must satisfy the ``recursive equation'' $$E = \newanon \,\smash{\oplus_{\{\cadd{\origcol_1}{\mu}{\smash{\{\hspace{-1pt}1\hspace{-1pt}\}}}{\lambda},\recol{\mu}{\kappa}^1,\recol{\origcol_{1}}{\mu}^0\}}}\,E.$$
While it is clear that this requirement cannot be realized by any finite string, one may choose to extend the setting to include ``unfounded'' expressions of infinite depth. Such expressions are difficult to grasp as linear sequences of symbols (at best, one might envision $E$ as the result of expanding $E$ with the right hand side of the above recursive equation infinitely often). But they may be easier to understand by means of the corresponding syntax trees. Below we depict the initial part of an (infinite) syntax tree representing $\inst_\mathrm{tern}$, stating that the subtree rooted in the root's right child coincides with the subtree rooted in the root's right-right grandchild, giving rise to a periodic infinite downward branch. 

\hspace{15ex}\begin{tikzpicture}
	[
	level 1/.style={sibling distance=40mm, level distance=11mm},
	level 2/.style={sibling distance=40mm, level distance=11mm},
	level 3/.style={sibling distance=40mm, level distance=11mm},
	level 4/.style={sibling distance=40mm, level distance=11mm},
	level 5/.style={sibling distance=40mm, level distance=11mm}
	]

	\node [] (a){$\oplus\phantom{_|}\phantom{_|}$}
	child {node [] (n0) {$\newanon$}
		edge from parent [-] node [] {}
	} 
	child {node [label={[label distance=-4pt]0:$ $}] (n0) {$\oplus\phantom{_|}\phantom{_|}$}
		child {node [label={[label distance=-4pt]0:$ $}] (n00) {$\newanon$}
			edge from parent [-] node [] {}
		} 
	child {node [label={[label distance=-4pt]0:$ $}] (m0) {$\oplus\phantom{_|}\phantom{_|}$}
	child {node [label={[label distance=-4pt]0:$ $}] (m00) {$ $}
		edge from parent [-,path fading=South] node [] {}
	} 
	child {node [label={[label distance=-4pt]0:$ $}] (n01) {$ $}
		edge from parent [-,path fading=South] node [] {}
	} 
		edge from parent [-] node [] {}
	}}; 

	\node[] (z5) [below=of a, xshift=0.8cm,yshift=0.93cm,label={
		{ $_{\{ \radd{\origcol_1}{\lambda}{\{1\}}{\rpred} \}}$}}] { };

	\node[] (z5) [below=of a, xshift=4.2cm,yshift=-0.16cm,label={
	{ $_{\{   	
			\cadd{\origcol_1}{\mu}{\smash{\{\hspace{-1pt}1\hspace{-1pt}\}}}{\lambda},
			\recol{\mu}{\kappa}^1,
			\recol{\origcol_{1}}{\mu}^0
			 \}}$}}] { };
	
	\node[] (z5) [below=of a, xshift=6.2cm,yshift=-1.25cm,label={
	{ $_{\{   	
			\cadd{\origcol_1}{\mu}{\smash{\{\hspace{-1pt}1\hspace{-1pt}\}}}{\lambda},
			\recol{\mu}{\kappa}^1,
			\recol{\origcol_{1}}{\mu}^0
			\}}$}}] { };

\end{tikzpicture}

We find that, despite being infinite, the assembly expression only makes use of the four colors $o_1$, $\mu$, $\kappa$, and $\lambda$. This allows us to conclude that the partitionwidth of   
$\inst_\mathrm{tern}$ is at most $4$.
\end{example}

The following section will provide full formal details on partitionwidth. In particular, we will formalize the hitherto rather sketchy idea of a potentially infinite ``assembly-plan-encoding syntax trees'' by defining such a tree as a countably infinite instance of a particular shape. 

\subsection{Formal Definition of \Partitionwidth{} of Countable Instances}\label{sec:cwdef}

\newcommand{\szero}{\pred{Succ}_0}
\newcommand{\sone}{\pred{Succ}_1}
\newcommand{\ibtree}{\mathcal{T}_\mathrm{bin}}
\newcommand{\colconsig}{(\cols, \mathrm{Cnst}, \Sigma)}
\newcommand{\decorum}{\mathrm{Dec}\colconsig}
\newcommand{\predupdate}[1]{\pred{#1}}
\newcommand{\instructionset}{\mathrm{Instr}(\cols, \Sigma)}
\newcommand{\instruction}{\mathrm{Instr}}
\newcommand{\instr}{\instruction}

\begin{definition}[Update Instructions]\label{def:instructions}
	Let $\Sigma$ be a finite signature and let $\cols$ be a finite set of \emph{colors}, where every $\lambda\in\cols$ is assigned an arity $\arity{\lambda}\in\mathbb{N}_{+}$.
	The set $\instructionset$ of \emph{update instructions over $(\cols, \Sigma)$} consists of the following expressions:
	\begin{itemize}
		\item \!$\cadd{\kappa}{\mu}{I}{\lambda}$ for $\kappa, \mu, \lambda\in \cols$ with  $\arity{\kappa}+\arity{\mu} = \arity{\lambda}$, $I{\,\subsetneq\,} \{ 1, ..., \arity{\lambda} \}$, and $\arity{\kappa} = |I|$,
		\item \!$\recol{\kappa}{\lambda}^i$ for $i\in \{0,1\}$ and $\lambda, \kappa\in\cols$ with $\arity{\lambda}=\arity{\kappa}$,		
		\item \!$\radd{\kappa}{\mu}{I}{\rpred}$ for $\kappa, \mu\in \cols$ and $\rpred\in \Sigma$ with $\arity{\kappa}+\arity{\mu} = \arity{\rpred}$, $I {\,\subsetneq\,} \{ 1, ..., \arity{\rpred} \}$, and $\arity{\kappa} = |I|$, 
		\item \!$\unradd{\kappa}{\rpred}^i$ for $i\in \{0,1\}$, $\kappa\in\cols$ and $\rpred\in \Sigma$ with $\arity{\rpred} = \arity{\kappa}$.

	\end{itemize}
	Two instructions of the form $\cadd{\kappa}{\mu}{I}{\lambda}$ or $\recol{\kappa}{\lambda}^i$ will be called \emph{incompatible}, if they disagree on $\lambda$ while all other parameters coincide.\defend
\end{definition}
Note that since $\cols$ and $\Sigma$ are all assumed to be finite, the set $\instructionset$ is finite as well.
\pagebreak

\begin{example}
Exemplars of the instructions just formally introduced have already been seen in action informally in \Cref{ex:pw1} and \Cref{ex:pw2}, which should facilitate understanding their purpose. To exemplify the intended effect of instructions of the type $\cadd{\kappa}{\mu}{I}{\lambda}$ (and, in an analogue manner, $\radd{\kappa}{\mu}{I}{\rpred}$), consider the following case. When executing the instruction \smash{$\cadd{\kappa}{\mu}{\{2,4\}}{\lambda}$} on two colored instances, where
\begin{itemize}
\item the first colored instance has domain $\{\const{a},\const{b},\const{c}\}$ wherein the pairs $(\const{a},\const{a})$, $(\const{a},\const{b})$, and $(\const{a},\const{c})$ carry the binary color $\kappa$ and no tuple carries the quaternary color $\lambda$, and
\item the second colored instance has domain $\{\const{d},\const{e},\const{f}\}$ wherein the pairs $(\const{d},\const{e})$ and $(\const{e},\const{f})$ carry the binary color $\mu$ and no tuple carries the quaternary color $\lambda$,  
\end{itemize}
the result is an instance with domain $\{\const{a},\const{b},\const{c},\const{d},\const{e},\const{f}\}$ carrying all colors and predicates of the input instances and, additionally, the color $\lambda$ is assigned to the following quadruples:
$$
(\const{d},\const{a},\const{e},\const{a}),\ \ 
(\const{e},\const{a},\const{f},\const{a}),\ \ 
(\const{d},\const{a},\const{e},\const{b}),\ \ 
(\const{e},\const{a},\const{f},\const{b}),\ \ 
(\const{d},\const{a},\const{e},\const{c}),\ \ 
(\const{e},\const{a},\const{f},\const{c}), 
$$%
which are obtained from combining every $\kappa$-pair from the first colored instance with every $\mu$-pair from the second colored instance by putting the first and second entry of the $\kappa$-pair in positions $2$ and $4$, respectively, while putting  the first and second entry of the $\mu$-pair in the remaining positions $1$ and $3$, respectively. Note that the ``internal position-order'' of each contributing pair is preserved.
\end{example}


As indicated above, the idea of ``instance assembly expressions'' -- meant to serve as blueprints for the instance construction -- plays a central role in our development. As such ``assembly expressions'' may be infinite, it is helpful to encode them by their corresponding syntax trees. Such syntax trees will themselves take the form of instances of a very particular tree-like shape. The next definition specifies how an infinite binary tree, whose nodes might possibly carry some ``markings'' or ``adornments'', will be expressed as an instance.      

\begin{definition}[(Adorned) Infinite Binary Tree]
	The \emph{infinite binary tree} is the instance
	$$\ibtree = 
    \big\{\szero(s,s0) \ \big| \ s {\,\in\,} \{0,1\}^*\big\} \cup 
	\big\{ \sone(s,s1) \ \big| \ s {\,\in\,} \{0,1\}^*\big\}
	$$%
	with two binary predicates $\szero$ and $\sone$.
	We use finite sequen\-ces composed of the symbols $0$ and $1$ to identify the infinitely many nulls of $\ibtree$. The \emph{root} of $\ibtree$ is the null identified by the empty sequence, denoted $\varepsilon$.

    An \emph{adorned infinite binary tree} (short: adorned tree) is obtained by extending $\ibtree$ with arbitrarily many atoms of the form $\pred{P}(s)$ where $\pred{P}$ is a unary predicate from a finite signature and $s \in \{0,1\}^*$. Let $\mathbb{T}$ denote the set of all adorned trees. 
    \defend
%
\end{definition}

\begin{example}
The below drawing illustrates the concept of adorned infinite binary trees just defined.
It shows (the initial part of) some tree $\mathcal{T} {\,\in\,} \mathbb{T}$ adorned with unary predi\-cates $\pred{P}_1$, $\pred{P}_2$, and $\pred{P}_3$. $\pred{Succ}_{0}$ and $\pred{Succ}_{1}$ are represented as black (left) and gray (right) downward edges, respectively.

\vspace{-2ex}


\begin{center}
\begin{tikzpicture}
	[
	    level 1/.style={sibling distance=60mm, level distance=8mm},
	    level 2/.style={sibling distance=30mm, level distance=8mm},
	    level 3/.style={sibling distance=15mm, level distance=8mm},
	    level 4/.style={sibling distance=10mm, level distance=8mm},
	    level 5/.style={sibling distance=15mm, level distance=8mm}
	]
	
	\node (a){$\varepsilon$}
	child {node (n0) {$0$}
		child {node (n00) [label={[label distance=-4pt]-2.5:${ \ : \ \pred{P}_3}$}] {$00$}
          	child {node (ntt) {$000$}
				child {node [] (n00000) { }
					edge from parent [-,color=black,line width=1pt,path fading=South] node [] {}
				}
				child {node [opacity=0.2,path fading=South,color=gray!60] (n00001) { }
					edge from parent [-,color=gray!60,line width=1pt,path fading=South] node [] {}
				}
				edge from parent [-,color=black,line width=1pt] node [] {}
			}
			child {node (nts) {$001$}
				child {node [] (n00000) { }
					edge from parent [-,color=black,line width=1pt,path fading=South] node [] {}
				}
				child {node [opacity=0.2,path fading=South,color=gray!60] (n00001) { }
					edge from parent [-,color=gray!60,line width=1pt,path fading=South] node [] {}
				}
				edge from parent [-,color=gray!60,line width=1pt] node [] {}
		}} 
		child {node [label={[label distance=-4pt]-2.5:${ \ : \ \pred{P}_1,\pred{P}_2}$}] (n01) {${01}$}
          	child {node (ntt) {$010$}
	child {node [] (n00000) { }
		edge from parent [-,color=black,line width=1pt,path fading=South] node [] {}
	}
	child {node [opacity=0.2,path fading=South,color=gray!60] (n00001) { }
		edge from parent [-,color=gray!60,line width=1pt,path fading=South] node [] {}
	}
	edge from parent [-,color=black,line width=1pt] node [] {}
}
child {node (nts) {$011$}
	child {node [] (n00000) { }
		edge from parent [-,color=black,line width=1pt,path fading=South] node [] {}
	}
	child {node [opacity=0.2,path fading=South,color=gray!60] (n00001) { }
		edge from parent [-,color=gray!60,line width=1pt,path fading=South] node [] {}
	}
	edge from parent [-,color=gray!60,line width=1pt] node [] {}
	}
			edge from parent [-,color=gray!60,line width=1pt] node [] {}
} 
		edge from parent [-,line width=1pt] node [] {}
	} 
	child {node (n0) [label={[label distance=-4pt]-2.5:${ \ : \ \pred{P}_2,\pred{P}_3}$}] {$1$}
	child {node (n00) {$10$}
		child {node (ntt) {$100$}
			child {node [] (n00000) { }
				edge from parent [-,color=black,line width=1pt,path fading=South] node [] {}
			}
			child {node [opacity=0.2,path fading=South,color=gray!60] (n00001) { }
				edge from parent [-,color=gray!60,line width=1pt,path fading=South] node [] {}
			}
			edge from parent [-,color=black,line width=1pt] node [] {}
		}
		child {node (nts) {$101$}
			child {node [] (n00000) { }
				edge from parent [-,color=black,line width=1pt,path fading=South] node [] {}
			}
			child {node [opacity=0.2,path fading=South,color=gray!60] (n00001) { }
				edge from parent [-,color=gray!60,line width=1pt,path fading=South] node [] {}
			}
			edge from parent [-,color=gray!60,line width=1pt] node [] {}
	}			edge from parent [-,color=black,line width=1pt] node [] {}
} 
child {node (n01) {$11$}
	child {node (ntt) {$110$}
		child {node [] (n00000) { }
			edge from parent [-,color=black,line width=1pt,path fading=South] node [] {}
		}
		child {node [opacity=0.2,path fading=South,color=gray!60] (n00001) { }
			edge from parent [-,color=gray!60,line width=1pt,path fading=South] node [] {}
		}
		edge from parent [-,color=black,line width=1pt] node [] {}
	}
	child {node (nts) [label={[label distance=-4pt]-2.5:${ \ : \ \pred{P}_1}$}] {$111$}
		child {node [] (n00000) { }
			edge from parent [-,color=black,line width=1pt,path fading=South] node [] {}
		}
		child {node [opacity=0.2,path fading=South,color=gray!60] (n00001) { }
			edge from parent [-,color=gray!60,line width=1pt,path fading=South] node [] {}
		}
		edge from parent [-,color=gray!60,line width=1pt] node [] {}
	}
	edge from parent [-,color=gray!60,line width=1pt] node [] {}
} 
edge from parent [-,color=gray!60,line width=1pt] node [] {}
};
	\end{tikzpicture}
\end{center}
\vspace{-3ex}
\end{example}


After specifying how adorned infinite binary trees are encoded as instances, the next step consists in providing the proper adornments for our purpose (referred to as \emph{decorators}) and specify local and global constraints regulating where which decorators can be put inside the tree-instance. Intuitively, the purpose of these deliberations is to ensure that there is a clear one-to-one correspondence between properly adorned tree-instances (referred to as \emph{well-decorated trees}) and assembly syntax trees as originally envisioned.

\begin{definition}[Decorators and Decorated Trees]\label{def:well-dec}

	Let $\Sigma$ and $\cols$ be as in \Cref{def:instructions} and let $\mathrm{Cnst}\subseteq \mathbf{C}$ be a finite set of constants. $\cols$ may contain special colors from the set $\{\origcol_1, \ldots, \origcol_{\arity{\Sigma}}\}$ with $\arity{\origcol_i} = i$, i.e., $\origcol_i$ is a distinguished $i$-ary color for $1\le i\le \arity{\Sigma}$.
Then, the set of \emph{decorators} 
$\decorum$
is
$
\{ \newanon \} \cup \{ \newconst{c} \mid \const{c} \in \mathrm{Cnst}\} \cup \instructionset
$.

	A \emph{$\colconsig$-decorated infinite binary tree} (or \emph{decorated tree} for short) is an adorned tree where the unary predicates are from the set $\decorum$.  A decorated tree $\mathcal{T}$ is called a \emph{well-decorated tree} if 
	\begin{itemize}
		\item for every null $s\in\{ 0,1 \}^*$, if $\mathcal{T}$ contains a fact of the form $\newconst{c}(s)$ or $\newanon(s)$ (in which case $s$ will be called a \emph{pseudoleaf}), then $\mathcal{T}$ contains no other fact of the form $\pred{Dec}(s)$ with $\pred{Dec}\in\decorum$ (that is, every pseudoleaf carries exactly one ``creation command'');
		\item for any pseudoleaf $s$, any $s' \in \{0,1\}^+$, and any any $\pred{Dec}\in\decorum$, we have $\pred{Dec}(ss')\not\in \mathcal{T}$ (that is, descendants of pseudoleafs are undecorated);  
		\item for every $\const{c}\in\mathrm{Const}$, $\mathcal{T}$ contains at most one fact of the form $\newconst{c}(s)$ (that is, constant-creation commands are globally unique throughout $\mathcal{T}$);
        \item for every null $s\in\{ 0,1 \}^*$, if $\pred{Dec}(s),\pred{Dec'}(s) \in \mathcal{T}$ with $\pred{Dec},\pred{Dec'} \in \instructionset$ then $\pred{Dec}$ and $\pred{Dec'}$ are compatible (that is,  
        no node is decorated by incompatible instructions).
\defend
	\end{itemize}
\end{definition}

\begin{example}\label{ex:fintree}
Below we display (an initial part of) the well-decorated tree associated with \Cref{ex:pw1}.
Note that it is an infinite structure but the nodes below the faded-out edges do not carry any decorators. 

\vspace{-2.5ex}
\hspace{12ex}\scalebox{0.8}{
\begin{tikzpicture}
	[
		level 1/.style={sibling distance=60mm, level distance=12mm},
level 2/.style={sibling distance=30mm, level distance=15mm},
level 3/.style={sibling distance=14mm, level distance=10mm},
	]

	\node [label={[label distance=-4pt]0:\raisebox{8pt}{$ \, : \ \radd{\origcol_1}{\lambda}{\{1\}}{\epred}  $}}] (a){$\ \varepsilon^{\phantom{!}}_{\phantom{|}}$}
	child {node [label={[label distance=-4pt]0:\raisebox{-12pt}{ $ \, : \  \unradd{\origcol_{2}}{\epred}^{0}, \unradd{\origcol_{2}}{\epred}^{1}, $}}] (n0) {$\ 0^{\phantom{|}}$}
		child {node [label={[label distance=-4pt]0:$ \ : \ \newconst{a}$}] (n00) {$00$}
			child {node [] (n00000) { }
				edge from parent [-,color=black,line width=1pt,path fading=South] node [] {}
			}
			child {node [opacity=0.2,path fading=South,color=gray!60] (n00001) { }
				edge from parent [-,color=gray!60,line width=1pt,path fading=South] node [] {}
			}
			edge from parent [-,color=black,line width=1pt] node [] {}
		} 
		child {node [label={[label distance=-4pt]0:$ \ : \ \newconst{b}$}] (n01) {$01$}
			child {node [opacity=0.2] (n010) { }
				edge from parent [-,color=black,line width=1pt,path fading=South] node [] {}
			} 
			child {node [opacity=0.2] (n011) { }
				edge from parent [-,color=gray!60,line width=1pt,path fading=South] node [] {}
			} 
			edge from parent [-,color=gray!60,line width=1pt] node [] {}
		} 
		edge from parent [-,line width=1pt] node [] {}
	} 
	child {node [label={[label distance=-4pt]0:\raisebox{-12pt}{$ \ : \ \recol{\origcol_{1}}{\lambda}^0,$}}] (n0) {$\ 1^{\phantom{|}}$}
		child {node [label={[label distance=-4pt]0:$ \ : \ \newconst{c}$}] (n00) {$10$}
			child {node [] (n000) { }
				edge from parent [-,color=black,line width=1pt,path fading=South] node [] {}
			} 
			child {node [opacity=0.2,path fading=South,color=gray!60] (n001) { }
				edge from parent [-,color=gray!60,line width=1pt,path fading=South] node [] {}
			} 
			edge from parent [-,color=black,line width=1pt] node [] {}
		} 
		child {node [label={[label distance=-4pt]0:$ \ : \ \newanon$}] (n01) {$11$}
			child {node [opacity=0.2] (n110) { }
				edge from parent [-,color=black,line width=1pt,path fading=South] node [] {}
			} 
			child {node [opacity=0.2] (n111) { }
				edge from parent [-,color=gray!60,line width=1pt,path fading=South] node [] {}
			} 
			edge from parent [-,color=gray!60,line width=1pt] node [] {}
		} 
		edge from parent [-,color=gray!60,line width=1pt] node [] {}
	}; 


\node[] (z3) [below=of a, xshift=-0.12cm,yshift=-0.75cm,label={
	{ $\radd{\origcol_1}{\origcol_1}{\{1\}}{\epred}, \radd{\origcol_1}{\origcol_1}{\{2\}}{\epred}$}}] { };

\node[] (z4) [below=of a, xshift=5.05cm,yshift=-0.75cm,label={
	{ $\recol{\origcol_{1}}{\lambda}^1$}}] { };

\end{tikzpicture}}
\vspace{-3ex}
\end{example}


\begin{example}\label{ex:terntree}

The well-decorated tree associated with \Cref{ex:pw2} is the instance 
\\[-4ex]

$$
\ibtree \cup \{\radd{\origcol_1}{\lambda}{\{1\}}{\rpred}\hspace{-1pt}(\varepsilon)\} \cup
\{\cadd{\origcol_1}{\mu}{\smash{\{\hspace{-1pt}1\hspace{-1pt}\}}}{\lambda}\hspace{-1pt}(1^n)\hspace{-1pt},
\recol{\mu}{\kappa}^1\hspace{-1pt}(1^n)\hspace{-1pt},
\recol{\origcol_{1}}{\mu}^0\hspace{-1pt}(1^n), \newanon(1^{n-1}0) \mid n {\in} \mathbb{N}^+\}
$$
\vspace{-1ex}

\noindent The initial part of the tree is depicted below. 
The subtrees rooted in $1$ and $11$ are isomorphic.

\vspace{-2.5ex}
\hspace{12ex}\scalebox{0.8}{
	\begin{tikzpicture}
		[
		level 1/.style={sibling distance=60mm, level distance=12mm},
		level 2/.style={sibling distance=30mm, level distance=15mm},
		level 3/.style={sibling distance=14mm, level distance=10mm},
		]
		
		\node (a){$_{\phantom{|}}\varepsilon_{\phantom{|}}$}
		child {node (n0) {$^{\phantom{|}}0^{\phantom{|}}$}
			child {node (n00) [label={[label distance=-4pt]-2.5:$ $}] {$00$}
				child {node (ntt) {$ $}
					edge from parent [-,color=black,line width=1pt,path fading=South] node [] {}
				}
				child {node (nts) {$ $}
					edge from parent [-,color=gray!60,line width=1pt,path fading=South] node [] {}
			}} 
		child {node [label={[label distance=-4pt]-2.5:$ $}] (n01) {${01}$}
			child {node (ntt) {$ $}
				edge from parent [-,color=black,line width=1pt,path fading=South] node [] {}
			}
			child {node (nts) {$ $}
				edge from parent [-,color=gray!60,line width=1pt,path fading=South] node [] {}
			}
			edge from parent [-,color=gray!60,line width=1pt] node [] {}
		} 
		edge from parent [-,line width=1pt] node [] {}
	} 
	child {node (n0) [label={[label distance=-4pt]-2.5:$ $}] {$^{\phantom{|}}1^{\phantom{|}}$}
		child {node (n00) {$10$}
			child {node (ntt) {$ $}
				edge from parent [-,color=black,line width=1pt,path fading=South] node [] {}
			}
			child {node (nts) {$ $}
				edge from parent [-,color=gray!60,line width=1pt,path fading=South] node [] {}
			}			edge from parent [-,color=black,line width=1pt] node [] {}
		} 
	child {node (n01) {$11$}
		child {node (ntt) {$\tiny\vdots\!\! $}
			edge from parent [-,color=black,line width=1pt] node [] {}
		}
		child {node (nts) [label={[label distance=-4pt]-2.5:$ $}] {$\tiny\!\!\vdots$}
			edge from parent [-,color=gray!60,line width=1pt] node [] {}
		}
		edge from parent [-,color=gray!60,line width=1pt] node [] {}
	} 
	edge from parent [-,color=gray!60,line width=1pt] node [] {}
};

\node[] (z3) [below=of a, xshift=1.25cm,yshift=0.97cm,label={
	\mbox{$\ :\ \radd{\origcol_1}{\lambda}{\{1\}}{\rpred}$}}] { };
	
\node[] (lf1) [below=of a, xshift=5.2cm,yshift=-0.35cm,label={
	\mbox{$\ :\ \cadd{\origcol_1}{\mu}{\smash{\{\hspace{-1pt}1\hspace{-1pt}\}}}{\lambda},\recol{\mu}{\kappa}^1,$}}] { };

\node[] (lf1a) [below=of lf1, xshift=-0.10cm,yshift=0.77cm,label={
	\mbox{$\recol{\origcol_{1}}{\mu}^0 $}}] { };

\node[] (lf11) [below=of a, xshift=6.8cm,yshift=-1.82cm,label={
	\mbox{$\ :\ \cadd{\origcol_1}{\mu}{\smash{\{\hspace{-1pt}1\hspace{-1pt}\}}}{\lambda},\recol{\mu}{\kappa}^1,$}}] { };

\node[] (lf11a) [below=of lf11, xshift=-0.10cm,yshift=0.77cm,label={
	\mbox{$\recol{\origcol_{1}}{\mu}^0 $}}] { };

\node[] (lf0) [below=of a, xshift=-2.12cm,yshift=-0.25cm,label={
	\mbox{$\ :\ \newanon$}}] { };

\node[] (lf10) [below=of a, xshift=2.53cm,yshift=-1.72cm,label={
	\mbox{$\ :\ \newanon$}}] { };

	\end{tikzpicture}}
\vspace{-2ex}

\end{example}


After having precisely characterized the type of adorned tree-instances that serve as ``instance construction blueprints'', what remains to be done is to formally define the connection between the ``blueprint instances'' (that is, well-decorated trees) and the instances they represent. This is the purpose of the following definition. It states that, on an intuitive level, when presented with a well-decorated tree $\mathcal{T}$, each tree node $s$ will be associated with a colored instance whose active domain, denoted $\fcn{ent}^\mathcal{T}\!(s)$, are $s$'s pseudoleaf descendants (renamed to constants, if applicable). The function $\fcn{col}^{\mathcal{T}}_s$ gives a full account of the $s$-instance's coloring by associating every color of some arity $k$ with the set of $k$-tuples from $\fcn{ent}^\mathcal{T}\!(s)$ that carry that color. Finally, $\inst^{\mathcal{T}}_s$ provides all information regarding the $s$-instance's atoms. For pseudoleafs, their associated colored instance is given explicitly, for the other nodes $s$, the associated colored instance is obtained as a combination of the instances associated to $s$'s children nodes $s0$ and $s1$. The specifics of this combination are governed by the instruction-decorators attached to $s$, in line with the intuitions provided in the previous section.\footnote{By design, descendants of pseudoleafs are irrelevant for the described assembly process and can be entirely disregarded as spurious nodes. Technically, $\fcn{ent}^\mathcal{T}$ assigns them an empty active domain and therefore, they are associated with empty colored instances.} This way, the well-decorated tree $\mathcal{T}$ can indeed be understood as a ``bottom-up assembly workflow'', where every node comes with a colored instance representing an intermediate state of the assembly process, which will be combined with other colored instances further up in the tree. The final result thereof is the instance $\inst^{\mathcal{T}}_\varepsilon$ that is attached to $\mathcal{T}$'s root node $\varepsilon$.    

\begin{definition}\label{def:entities-tree-to-instance}
Let $\mathcal{T}$ be a $\colconsig$-well-decorated tree.
We define the function $\fcn{ent}^\mathcal{T}\!:\{0,1\}^* \to 2^{\mathrm{Cnst} \cup \{0,1\}^*}$ mapping each null $s \in \{0,1\}^*$ to its \emph{entities} (a set of nulls and constants) as follows:
$$\fcn{ent}^\mathcal{T}\!(s) = \{ ss' \mid \newanon(ss') \in \mathcal{T}\} \cup \{ \const{c} \mid \newconst{c}(ss') \in \mathcal{T}, \const{c}\in \mathrm{Cnst}\}.$$%
Every tree node $s$ of $\mathcal{T}$ is assigned a \emph{color} map
$$\fcn{col}^{\mathcal{T}}_s\! \colon \cols \to 2^{(\fcn{ent}^{\mathcal{T}}\!(s))^{\ast}},$$%
indicating which tuples of $s$'s entities are accordingly colored: The color maps are defined such that $\fcn{col}^{\mathcal{T}}_{s}\!(\lambda)$ equals
\begin{itemize}
\item	
$\{ (\const{c}, \ldots, \const{c}) \}  \text{\ iff } \newconst{c}(s)\in \mathcal{T},\ \const{c}\in \mathrm{Cnst}, \text{ and } \lambda=\origcol_{|(\const{c}, \ldots, \const{c})|}$,
\item	
	$\{ (s, \ldots, s) \}  \text{\ iff } \newanon(s)\in \mathcal{T},\  s\in \{0,1\}^*, \text{ and } \lambda=\origcol_{|(s, \ldots, s)|}$,
\item and in all other cases
    \begin{align*}
	& \hspace{-10ex}\bigcup_{\hspace{8.5ex}\cadd{\kappa}{\mu}{I}{\lambda}(s)\in\mathcal{T}} \hspace{-8.5ex}\col_{s0}^{\mathcal{T}}(\kappa)\concat_{I}\col_{s1}^{\mathcal{T}}(\mu)
	\  \! \cup \! \hspace{-8ex}\bigcup_{\hspace{8.5ex}\recol{\kappa}{\lambda}^{i}(s)\in\mathcal{T}}\hspace{-8.5ex}\col_{si}^{\mathcal{T}}(\kappa)
 \ \cup \!\!\! \bigcup_{\ \ i\in \{0,1\}\!\!}
	\small
	\begin{cases}\footnotesize
		\emptyset & \text{if } \recol{\lambda}{\kappa}^{i}(s)\in\mathcal{T} \text{ for any } \kappa\in\cols \\
		\col_{si}^{\mathcal{T}}(\lambda) & \text{otherwise}.
	\end{cases}
\end{align*}
\end{itemize}
%
%
Every node $s$ is assigned a set of $\Sigma$-atoms over its entities as indicated by the sets $\mathrm{Atoms}_s$. These sets are defined such that $\mathrm{Atoms}_s$ equals:
\begin{itemize}
	\item \!$\{ \top(\const{c}) \}$ if $\newconst{c}(s)\in\mathcal{T}$ and $\const{c}\in\mathrm{Const}$,
	\item \!$\{ \top(s) \}$ if $\newanon(s)\in\mathcal{T}$,
	\item and otherwise
		\begin{align*}
			 \hspace{-10ex}\bigcup_{\hspace{8.5ex}\radd{\kappa}{\mu}{I}{\rpred}(s)\in\mathcal{T}} \hspace{-8.5ex}\big\{  \rpred(\ve) \mid  \ve  \in \col_{s0}^{\mathcal{T}}(\kappa)\concat_{I}\col_{s1}^{\mathcal{T}}(\mu) \big\} \ 
			 \cup 
			  \hspace{-6.5ex}\bigcup_{\hspace{8ex}\unradd{\kappa}{\rpred}^i(s)\in\mathcal{T}}\hspace{-8ex} \big\{ \rpred(\ve) \mid  \ve\in\col_{si}^{\mathcal{T}}(\kappa) \big\}.
		\end{align*}
\end{itemize}
Let 
$\inst^{\mathcal{T}}_s = \bigcup_{s'\in\{ 0,1 \}^*}\mathrm{Atoms}_{ss'}$ denote the instance associated to the node $s$ of $\mathcal{T}$. Finally we define the \emph{instance $\inst^{\mathcal{T}}$ represented by $\mathcal{T}$} as $\inst^{\mathcal{T}}_{\varepsilon}$ where $\varepsilon$ is the root of $\mathcal{T}$.\defend
\end{definition}

\begin{remark}
The well-versed reader might sense a certain oddity about the above definition. While $\fcn{ent}^\mathcal{T}\!(s)$ is clearly specified for every $s$, the coloring $\fcn{col}^{\mathcal{T}}_s$ and atom set~$\inst^{\mathcal{T}}_{s}$ associated to $s$ are ``defined'' in terms of the colorings and atomsets of $s$'s  child nodes $s0$ and $s1$, which raises justified concerns about the well-foundedness of their definition, in view of the infinity of $\mathcal{T}$. To resolve the oddity, we will now briefly argue why the definition is not problematic.
We observe that both $\fcn{col}^{\mathcal{T}}_s$ and  $\inst^{\mathcal{T}}_{s}$ are fully determined by providing for every tuple $\bold{t}$ over $\fcn{ent}^\mathcal{T}(s)$ the information which colors and predicates that tuple carries.
Now for any such tuple $\bold{t}$, each of its entries $t$ corresponds to one specific pseudoleaf of $\mathcal{T}$, where $t$ is being ``introduced''. By the way the assignments of colorings and atom sets are specified, the colors and predicates carried by $\bold{t}$ are fully determined by any finite initial segment (aka prefix) of $\mathcal{T}$ that contains all the pseudoleafs introducing the entries of $\bold{t}$. Therefore, for every tuple $\bold{t}$ and node $s$, the colors and predicates that $\bold{t}$ carries in the colored $s$-instance are uniquely determined. Consequently, $\fcn{col}^{\mathcal{T}}_s$ and  $\inst^{\mathcal{T}}_{s}$ as a whole are uniquely determined for every $s$.  
Summing up: while the tree $\mathcal{T}$ we operate on is infinite as a whole, any tuple-specific coloring and predicate assignment is uniquely determined through a (bottom-up) inductive definition on some finite prefix of $\mathcal{T}$.

Consequently, an equivalent alternative definition of $\inst^\mathcal{T}$ is obtained by looking at the infinite sequence $\mathcal{T}_0,\ \mathcal{T}_1,\ \mathcal{T}_2,\ldots$ of finite binary decorated trees obtained from $\mathcal{T}$ by truncating it at depth $0,\ 1,\ 2,\ldots$. One can then use a very minor adaptation of \Cref{def:entities-tree-to-instance} to define the corresponding instances $\inst^{\mathcal{T}_0},\ \inst^{\mathcal{T}_1},\  \inst^{\mathcal{T}_2},\ldots$ in a well-founded bottom-up manner. Finally, one lets $\inst^\mathcal{T} = \bigcup_{i\in \mathcal{N}}\inst^{\mathcal{T}_i}$. In fact, this is the strategy Courcelle uses to define clique-width for countable graphs \cite{Cou04}.

%
%
\end{remark}

\begin{example}
	Let $\mathcal{T}$ be as given in \Cref{ex:fintree}. We then find the following values for the node-wise entities, colorings, and atom sets. 
	
	\begin{itemize}
		\item $\begin{array}[t]{@{}l@{}l@{}} 
			    \fcn{ent}^\mathcal{T}\! = \ & \big\{
				\varepsilon {\,\mapsto\,\hspace{-1pt}} \{\const{a},\const{b},\const{c},\hspace{-1pt}11\}\hspace{-0.5pt},
				0 {\,\mapsto\,\hspace{-1pt}} \{\const{a},\const{b}\}\hspace{-0.5pt},
				1 {\hspace{-0.5pt}\,\mapsto\,\hspace{-1pt}} \{\const{c},\hspace{-1pt}11\}\hspace{-0.5pt},
				00 {\,\mapsto\,\hspace{-1pt}} \{\const{a}\}\hspace{-0.5pt},
				01 {\hspace{-0.5pt}\,\mapsto\,\hspace{-1pt}} \{\const{b}\}\hspace{-0.5pt},
				10 {\,\mapsto\,\hspace{-1pt}} \{\const{c}\}\hspace{-0.5pt},
				11 {\hspace{-0.5pt}\,\mapsto\,\hspace{-1pt}} \{11\},\\[1pt]
				& \ \  s {\,\mapsto\,\hspace{-1pt}} \emptyset \mbox{\ \ whenever \ } |s| > 2 \big\} \end{array}$ \\[-0.2ex] 
\item\hspace{-3ex} $\begin{array}[t]{@{\bullet\ \,}l@{\ =\ \big\{ o_1 \mapsto\,}l@{\   \lambda \mapsto\, }l@{  \ o_2 \mapsto\, }l}
 \fcn{col}^{\mathcal{T}}_{\varepsilon} &
  \{\const{a},\const{b}\}, &
  \{\const{c},11\},
  &
  \{(\const{a},\const{a}),(\const{b},\const{b}),(\const{c},\const{c}),(11,11)\} 
  \big\}\\[2pt]
\fcn{col}^{\mathcal{T}}_{0} &
   \{\const{a},\const{b}\}, &
   \emptyset, &
   \{(\const{a},\const{a}),(\const{b},\const{b})\} 
   \big\}\\[2pt]
 \fcn{col}^{\mathcal{T}}_{1} &
\emptyset, &
\{\const{c},11\}, &
\{(\const{c},\const{c}),(11,11)\}\big\}\\[2pt]
\fcn{col}^{\mathcal{T}}_{00} &
\{\const{a}\}, &
\emptyset, &
\{(\const{a},\const{a})\}\big\}\\[2pt]
\fcn{col}^{\mathcal{T}}_{01} &
\{\const{b}\}, &
\emptyset, &
\{(\const{b},\const{b})\}\big\}\\[2pt]
\fcn{col}^{\mathcal{T}}_{10} &
\{\const{c}\}, &
\emptyset, &
\{(\const{c},\const{c})\}\big\}\\[2pt]
\fcn{col}^{\mathcal{T}}_{11} &
\{11\}, &
\emptyset, &
\{(11,11)\}\big\}\\[2pt]
\fcn{col}^{\mathcal{T}}_{s} &
\emptyset, &
\emptyset, &
\emptyset\big\}\quad \mbox{ for all $s$ with $|s|>2$,} \\
\end{array}$\\[1ex]
\item $\inst^{\mathcal{T}}_{s} = \mathrm{Atoms}_{s} = \emptyset$ for all $s \not= \varepsilon$,\\[-2.3ex] 
\item $\inst^{\mathcal{T}}_{0} = \mathrm{Atoms}_{0} = \{\epred(\const{a},\const{a}),\epred(\const{b},\const{b}),\epred(\const{a},\const{b}),\epred(\const{b},\const{a})\}$,\\[-2.3ex]
\item \hspace{5.4ex} $\mathrm{Atoms}_{\varepsilon} = \{\epred(\const{a},\const{c}),\epred(\const{a},11),\epred(\const{b},\const{c}),\epred(\const{b},11)\}$,\\[-2.3ex]
\item $\inst^{\mathcal{T}} = \inst^{\mathcal{T}}_{\varepsilon} = \{\epred(\const{a},\const{a}),\epred(\const{b},\const{b}),\epred(\const{a},\const{b}),\epred(\const{b},\const{a}), \epred(\const{a},\const{c}),\epred(\const{a},11),\epred(\const{b},\const{c}),\epred(\const{b},11)\}$
	\end{itemize}
	It is now easy to check that these values are in line with \Cref{def:entities-tree-to-instance}.
	The isomorphism $\{\const{a} \mapsto \const{a}, \const{b} \mapsto \const{b}, \const{c} \mapsto \const{c}, 11 \mapsto x\}$ witnesses that $\inst^\mathcal{T}$ is indeed isomorphic to $\inst_\mathrm{ex}$. 
\end{example}


\begin{example}
Let $\mathcal{T}$ be as given in \Cref{ex:terntree}. We then find the following values for the node-wise entities, colorings, and atom sets (letting $i$ range over $0, 1, 2, \ldots$, while $s$ ranges over $\{0,1\}^+$, and $s'$ over $\{0,1\}^*$.

\begin{itemize}
\item $\fcn{ent}^\mathcal{T}(\varepsilon) = \{0,10,110,1110,\ldots\} = \{1^j0 \mid 0 \leq j \}$,\\[-2ex] 
\item $\fcn{ent}^\mathcal{T}(s) = \{ss' \mid s' \in \{0,1\}^*\} \cap \fcn{ent}^\mathcal{T}(\varepsilon)$,\\[-1ex]
\item $\fcn{col}^{\mathcal{T}}_{\mathrlap{\varepsilon}\phantom{1^{i+1}}} = \{o_1 \mapsto \mathrlap{\{0\},}\qquad\quad\! \mu \mapsto \mathrlap{\{10\},}\qquad\quad\ \   
\kappa \mapsto \mathrlap{\{1^j0 \mid j\geq 2\},}\qquad\qquad\qquad\ \,\ \    \lambda \mapsto \{(1^j0,1^{j+1}0) \mid j\geq 1\}\}$,\\[-2ex]
\item $\fcn{col}^{\mathcal{T}}_{\mathrlap{1^{i+1}}\phantom{1^{i+1}}} = \{o_1 \mapsto \mathrlap{\emptyset,}\qquad\quad\! \mu \mapsto \mathrlap{\{1^{i+1}0\},}\qquad\quad\ \  
\kappa \mapsto \mathrlap{\{1^{j}0 \mid j>i{+}1\},}\qquad\qquad\qquad\ \,\ \   \lambda \mapsto \{(1^j0,1^{j+1}0) \mid j> i\}\}$,\\[-2ex]
\item $\fcn{col}^{\mathcal{T}}_{\mathrlap{1^i0}\phantom{1^{i+1}}} = \{o_1 \mapsto \mathrlap{\{1^i0\},}\qquad\quad\! \mu \mapsto \mathrlap{\emptyset,}\qquad\quad\ \   
\kappa \mapsto \mathrlap{\emptyset,}\qquad\qquad\qquad\ \,\ \   \lambda \mapsto \emptyset\}$,\\[-2ex]
\item $\fcn{col}^{\mathcal{T}}_{\mathrlap{s'0s}\phantom{1^{i+1}}} = \{o_1 \mapsto \mathrlap{\emptyset,}\qquad\quad\! \mu \mapsto \mathrlap{\emptyset,}\qquad\quad\ \   
\kappa \mapsto \mathrlap{\emptyset,}\qquad\qquad\qquad\ \,\ \   \lambda \mapsto \emptyset\}$,\\[-1ex]
\item $\inst^{\mathcal{T}} = \inst^{\mathcal{T}}_{\varepsilon} = \mathrm{Atoms}_{\varepsilon} = \{\rpred(0,1^i0,1^{i+1}0 \mid i\geq 1)\}$ and $\inst^{\mathcal{T}}_{s} = \mathrm{Atoms}_{s} = \emptyset$ for all $s \not= \varepsilon$. 
\end{itemize}
It can now be readily checked that these values are in line with \Cref{def:entities-tree-to-instance}.
Also, the isomorphism $\fcn{iso}: \{0,10,110,1110,\ldots\} \to \{-1,0,1,2,\ldots\}$ with $1^i0 \mapsto i{-}1$ for all $i\geq 0$ witnesses that $\inst^\mathcal{T}$ is indeed isomorphic to $\inst_\mathrm{tern}$. 
\end{example}

Having introduced the machinery for representing countable instances by means of well-decorated trees (which are themselves instances of a specific convenient shape), we are finally ready to formally define the notion of \partitionwidth{}.

\begin{definition}
Given an instance $\instance$ over a finite set $\mathrm{Cnst}$ of constants and a countable set of nulls as well as a finite signature $\Sigma$, the \emph{\ercwidth{}} of $\instance$, written $\ercw{\instance}$, is defined to be the smallest natural number $n$ such that $\instance$ is isomorphic to an instance represented by some $\colconsig$-well-decorated tree such that $|\cols|  =n$.
If no $\colconsig$-well-decorated tree with such a number exists, we let \mbox{$\ercw{\instance}= \infty$}.\defend
\end{definition}

\subsection{Partitionwidth and Monadic Second-Order Logic}


Next, we recall known features of partitionwidth related to MSO and utilize them to show several properties which are useful for our purposes. A crucial role in this regard is played by the well-known notion of \emph{MSO-interpretations} 
(also referred to as MSO-transductions or MSO-definable functions in the literature \cite{ArnborgLS88,Courcelle91a,Engelfriet90,Courcelle94,CourEngBook}). 

\begin{definition}[MSO-interpretations]
Let $\Sigma$ and $\Sigma'$ be signatures, and $\mathrm{Cnst}$ and $\mathrm{Cnst}'$ be finite sets of constants. An {\em MSO-interpretation} is a sequence $\mathfrak{I}$ 
of MSO-formulae over $(\Sigma, \mathrm{Cnst})$, containing for every $\rpred \in \Sigma' \cup \{\top\}$ one formula $\phi_\rpred$ with free variables $x_1,\ldots,x_{\arity{\rpred}}$; for every $\const{c} \in \mathrm{Cnst}'$ a formula $\phi_\const{c}$ with free variable $x_1$; and one formula $\phi_=$ with free variables $x_1,x_2$.   
Given an instance $\mathcal{B}$ over $(\Sigma, \mathrm{Cnst})$, let 
$D = \{t \in \adom{\mathcal{B}} \mid \mathcal{B} \models \varphi_\top(t)\}$ and let 
$\fcn{f}: D \to \mathbf{N}\cup \mathrm{Cnst}'$ be an arbitrary but fixed function satisfying 
$\fcn{f}(t)=\fcn{f}(t')$ \iffi $\mathcal{B}\models \varphi_=(t,t')$,
and also $\fcn{f}(t)=\const{c}$ \iffi $\mathcal{B}\models \varphi_\const{c}(t)$.
If no such a function exists, $\mathfrak{I}(\mathcal{B})$ is undefined, otherwise let
$$
\mathfrak{I}(\mathcal{B})=\textstyle\bigcup_{\rpred \in \Sigma' \cup \{\top\}}\{\rpred(\fcn{f}(\vt)) \mid \mathcal{B} \models {\varphi}_{\rpred}(\vt),\ \vt \in D^{\arity{\rpred}} \}.
$$

An instance $\mathcal{A}$ is \emph{MSO-interpretable} in an instance $\mathcal{B}$ (written $\mathcal{A} \leq_{\text{MSO}} \mathcal{B}$) if there exists an MSO-interpretation $\mathfrak{I}$ such that $\mathcal{A} \cong \mathfrak{I}(\mathcal{B})$. It is \emph{(MSO-)tree-interpretable} if $\mathcal{A} \leq_{\text{MSO}} \mathcal{T}$ for some adorned tree $\mathcal{T}$.\defend
\end{definition}

According to standard results about MSO-interpretations, for every interpretation $\mathfrak{I}$ from $(\Sigma, \mathrm{Cnst})$ to $(\Sigma', \mathrm{Cnst}')$ and every MSO formula $\Xi$ over $(\Sigma', \mathrm{Cnst}')$ one can effectively construct a formula $\Xi^{\mathfrak{I}}$ over $(\Sigma, \mathrm{Cnst})$ such that for any two instances $\mathcal{A}, \mathcal{B}$ satisfying 
$\mathcal{A} \cong \mathfrak{I}(\mathcal{B})$,
the following holds:
$$\mathcal{A} \models \Xi \iff \mathcal{B} \models \Xi^{\mathfrak{I}}.$$%
%
%
Moreover, as MSO-interpretations can be composed, the MSO-interpretability relation ($\leq_{\text{MSO}}$) is transitive:
whenever $\mathcal{A} \leq_{\text{MSO}} \mathcal{B}$ and $\mathcal{B} \leq_{\text{MSO}} \mathcal{C}$ hold, then also $\mathcal{A} \leq_{\text{MSO}} \mathcal{C}$.
Blumensath showed that, for every set of decorators, the corresponding mapping $\mathcal{T} \mapsto \inst^{\mathcal{T}}$ described in \Cref{def:entities-tree-to-instance}
can be realized by an appropriate MSO-interpretation \cite[Prop. 17]{Blumensath06}.
\begin{lemma}\label{lem:computable-mso-tree-interpretations}
For every set of decorators $\decorum$, one can compute an MSO-interpretation $\mathfrak{I}$ such that for every $\colconsig$-well-decorated tree $\mathcal{T}$, we get $\inst^{\mathcal{T}} \cong \mathfrak{I}(\mathcal{T})$.
\end{lemma}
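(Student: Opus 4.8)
The plan is to build the interpretation $\mathfrak{I}$ explicitly and to reduce the entire construction to a single key ingredient: MSO formulae capturing the color maps $\col^{\mathcal{T}}_{s}$ from \Cref{def:entities-tree-to-instance}. First I would fix the source signature of $\mathfrak{I}$ to be that of adorned trees --- the binary symbols $\szero,\sone$ together with one unary predicate per decorator in $\decorum$ --- and the target signature to be $(\Sigma,\mathrm{Cnst})$. I set the domain formula $\phi_{\top}(x)$ to assert that $x$ is a \emph{pseudoleaf}, namely $\ast(x)\vee\bigvee_{\const{c}\in\mathrm{Cnst}}\const{c}(x)$; this simultaneously serves as the formula defining the target predicate $\top$, matching the fact that every entity receives a $\top$-atom. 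Since distinct pseudoleaves denote distinct entities (for constants this uses their uniqueness from \Cref{def:well-dec}), I take $\phi_{=}(x_1,x_2)$ to be plain equality $x_1=x_2$, and for each $\const{c}\in\mathrm{Cnst}$ I let $\phi_{\const{c}}(x)$ be the atom $\const{c}(x)$. Under the resulting selector $\fcn{f}$, an $\ast$-pseudoleaf $s$ corresponds to the entity-null $s$ and a $\const{c}$-pseudoleaf to the constant $\const{c}$, so $\mathfrak{I}(\mathcal{T})$ and $\inst^{\mathcal{T}}$ share the same active domain up to renaming of nulls.

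The crux is the formula $\phi_{\rpred}$ for each $\rpred\in\Sigma$, which must express exactly when an atom $\rpred(\ve)$ is produced at some node. For this I would establish the key claim: for every $\lambda\in\cols$ there is an MSO formula $\chi_{\lambda}(\vx,y)$, with $|\vx|=\arity{\lambda}$, such that for every well-decorated $\mathcal{T}$, every node $y$, and every tuple $\vx$ of entities, $\mathcal{T}\models\chi_{\lambda}(\vx,y)$ iff $\vx\in\col^{\mathcal{T}}_{y}(\lambda)$. Granting this, $\phi_{\rpred}(\vx)$ is the disjunction, over all $\radd{\kappa}{\mu}{I}{\rpred}$ and all $\unradd{\kappa}{\rpred}^{i}$ in $\instructionset$, of
\[
\exists y.\big(\radd{\kappa}{\mu}{I}{\rpred}(y)\wedge\chi_{\kappa}(\vx_I,y0)\wedge\chi_{\mu}(\vx_{\bar I},y1)\big)\ \vee\ \exists y.\big(\unradd{\kappa}{\rpred}^{i}(y)\wedge\chi_{\kappa}(\vx,yi)\big),
\]
where $y0,y1$ denote the $\szero$- and $\sone$-successors of $y$ (definable in MSO) and $\vx_I,\vx_{\bar I}$ are the subtuples of $\vx$ at the positions in $I$ and its complement, reflecting the shuffle $\concat_{I}$. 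This mirrors the two cases (predicate-adding, predicate-materializing) of $\mathrm{Atoms}_{y}$ verbatim.

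It remains to construct $\chi_{\lambda}$, which is where the real work lies. The guiding observation is that, although $\col^{\mathcal{T}}_{y}$ is defined by a recursion running over the whole infinite subtree below $y$, membership of a \emph{fixed} tuple $\vx$ is witnessed by a \emph{finite} assembly: tracing downward from $y$, at each node the current subtuple either lies entirely in one child --- where it is inherited or recolored --- or splits across both children, which can only happen at a color-adding node. Since $\arity{\lambda}$ is bounded by a constant (as $\cols$ is finite), the induced ``merge tree'' has boundedly many internal nodes, all nameable by first-order variables, and its leaves are pseudoleaves carrying diagonal tuples $(e,\dots,e)$ with base color $\origcol_{\arity{\lambda}}$. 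Between consecutive merge nodes the color evolves \emph{deterministically} along a path via the $\recol{\cdot}{\cdot}^{i}$ decorators, where determinism is guaranteed by the exclusion of incompatible instructions in \Cref{def:well-dec}. I would capture such a path segment by an MSO ``automaton'' formula that guesses, using one monadic set variable $P_{c}$ per color $c\in\cols$, the color carried at each node of the path and checks local consistency with the recoloring decorators. Combining the bounded first-order quantification over merge nodes with these MSO path formulae, ranging over the finitely many admissible merge shapes (determined by the shuffles $I$ occurring in color-adding instructions), yields $\chi_{\lambda}$; every piece is computed uniformly from $\decorum$, giving effectiveness.

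The main obstacle is precisely this construction of $\chi_{\lambda}$: the color maps are defined by an unbounded recursion over an infinite tree, act on higher-arity tuples, and involve shufflings, yet MSO is monadic and cannot quantify over sets of tuples directly. The two facts that make it go through are that the relevant assembly is finite of bounded size (so merge nodes require only boundedly many first-order variables) and that recoloring along a path is a deterministic finite-state process (so it is captured by monadic set quantification alone). Once $\chi_{\lambda}$ is in hand, verifying $\mathfrak{I}(\mathcal{T})\cong\inst^{\mathcal{T}}$ is a routine induction: the key claim gives $\rpred(\ve)\in\inst^{\mathcal{T}}\Leftrightarrow\mathcal{T}\models\phi_{\rpred}(\ve)$, while the domain, equality, and constant formulae identify the two active domains up to null-renaming.
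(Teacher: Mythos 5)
Your proposal is correct, but it takes a genuinely different route from the paper: the paper does not prove \cref{lem:computable-mso-tree-interpretations} itself at all, but imports it from Blumensath's work on \partitionwidth{} \cite{Blumensath03,Blumensath06}, where the mapping $\mathcal{T} \mapsto \inst^{\mathcal{T}}$ is shown MSO-definable inside a more general algebraic framework. You instead construct the interpretation $\mathfrak{I}$ directly, and the architecture is sound: pseudoleaves as the domain, plain equality and constant atoms for $\phi_=$ and $\phi_{\const{c}}$, and the reduction of each $\phi_\rpred$ to color-membership formulae $\chi_\lambda(\vx,y)$ that mirror the definition of $\mathrm{Atoms}_y$. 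The two observations carrying $\chi_\lambda$ --- that for a fixed tuple the color derivation collapses to a finite merge tree with at most $\arity{\lambda}-1$ branching nodes (hence nameable by first-order variables), and that between merges the color evolves as a finite-state process along a path (hence trackable by one monadic set variable per color, with determinism supplied by the compatibility condition of \cref{def:well-dec}) --- are exactly what makes monadic quantification suffice despite the tuple-valued, unboundedly deep recursion defining $\col^{\mathcal{T}}_s$. It is worth noting that your path-tracking formula is precisely the higher-arity generalization of a construction the paper does spell out, namely the formulae $\varphi^{\mathrm{colIn}}_{k}$ with set variables $X_\ell$ in the appendix proof of \cref{prop:cw-implies-pw}; there colors are unary, so no merge-tree bookkeeping is needed. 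What your approach buys is a self-contained and visibly effective construction, uniform in $\decorum$; what the paper's citation buys is brevity and access to Blumensath's more general setting.

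Two small repairs before this could stand as a full proof. First, the leaves of your merge tree carry diagonal tuples $(e,\ldots,e)$ whose base color is $\origcol_{k}$, where $k$ is the number of positions of $\vx$ occupied by the entity $e$ at that leaf --- not $\origcol_{\arity{\lambda}}$ as written, which is only correct when the entire tuple is diagonal. Second, and relatedly, $\chi_\lambda$ must case-split over the finitely many equality patterns among $x_1,\ldots,x_{\arity{\lambda}}$, since the position sets $I$ at merge nodes and the arities of the colors tracked on each path segment depend on that pattern; this is routine bookkeeping, but it is where the shuffle $\concat_I$ from the instructions gets matched against the actual distribution of entities over the two subtrees, so it should be stated explicitly.
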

\remarkstart
In view of this result, establishing MSO-friendliness is then rather straightforward.
\color{black}
\begin{theorem}\label{thm:cliquewidth-k-MSO-decidable}
	\Ercwidth{} is an MSO-friendly width measure.
\end{theorem}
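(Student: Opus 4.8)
The plan is to reduce the MSO-friendliness of \ercwidth{} to the decidability of the MSO theory of the infinite binary tree (Rabin's theorem, a.k.a.\ S2S), using \Cref{lem:computable-mso-tree-interpretations} to pass between an instance and a generating tree. Concretely, for a fixed $n$ I will show that the property ``$\Xi$ has a model $\inst$ with $\ercw{\inst}\le n$'' is \emph{decidable}; dovetailing this decision procedure over an enumeration of all MSO sentences then yields exactly the required enumeration algorithm (a decidable set is in particular recursively enumerable). Throughout, the signature $\Sigma$ and the finite constant set $\mathrm{Cnst}$ are the fixed ones over which the fragment $\mathfrak{L}=\mathrm{MSO}$ is taken.

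First I would argue that, for fixed $\Sigma$ and $\mathrm{Cnst}$, only finitely many decorator sets $\decorum$ with $|\cols|\le n$ need to be considered, up to renaming of colors. The key observation is that colors of arity exceeding $\arity{\Sigma}$ are useless: arity never decreases along the assembly operations (color-adding satisfies $\arity{\lambda}=\arity{\kappa}+\arity{\mu}$ and hence strictly increases arity, while re-coloring preserves it), whereas every predicate-producing instruction (predicate-adding $\radd{\kappa}{\mu}{I}{\rpred}$ or predicate-materializing $\unradd{\kappa}{\rpred}^i$) requires colors whose arity equals, or sums to, the arity of the added predicate, which is at most $\arity{\Sigma}$. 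Thus deleting from any well-decorated tree all colors of arity $>\arity{\Sigma}$ together with the instructions mentioning them leaves $\inst^{\mathcal{T}}$ unchanged, so we may assume all colors have arity in $\{1,\dots,\arity{\Sigma}\}$. With $\cols$, $\mathrm{Cnst}$, and $\Sigma$ finite, $\instructionset$ and hence $\decorum$ is finite, and there are only finitely many such decorator sets to iterate over.

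Second, I fix one such decorator set. By \Cref{lem:computable-mso-tree-interpretations} we can compute an MSO-interpretation $\mathfrak{I}$ with $\inst^{\mathcal{T}}\cong\mathfrak{I}(\mathcal{T})$ for every well-decorated tree $\mathcal{T}$, and by the standard backwards-translation for MSO-interpretations we can compute, from $\Xi$, a formula $\Xi^{\mathfrak{I}}$ satisfying $\mathcal{T}\models\Xi^{\mathfrak{I}}$ iff $\inst^{\mathcal{T}}\models\Xi$ on every well-decorated $\mathcal{T}$ (on which $\mathfrak{I}$ is defined). The defining clauses of well-decoratedness in \Cref{def:well-dec} are a finite conjunction of local constraints on the unary decorator predicates (pseudoleaf exclusivity, descendants of pseudoleaves being undecorated, uniqueness of each constant, and pairwise compatibility of instruction-decorators), hence expressible by an MSO sentence $\psi_{\mathrm{wd}}$ over adorned trees. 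Consequently, ``$\Xi$ has a model of \ercwidth{} $\le n$ representable over this decorator set'' holds iff some adorned tree satisfies $\psi_{\mathrm{wd}}\wedge\Xi^{\mathfrak{I}}$, which, since an adorned tree is just $\ibtree$ together with a choice of the finitely many unary decorator predicates, is equivalent to the bare tree $\ibtree$ satisfying the MSO sentence obtained by existentially (monadically) quantifying all those decorator predicates in $\psi_{\mathrm{wd}}\wedge\Xi^{\mathfrak{I}}$.

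The crux is then that satisfiability of this MSO sentence in $\ibtree$ is decidable by Rabin's theorem on the MSO theory of the infinite binary tree. Ranging over the finitely many decorator sets and taking the disjunction of the outcomes decides whether $\Xi$ admits \emph{any} model of \ercwidth{} $\le n$, and filtering an enumeration of all MSO sentences through this test establishes MSO-friendliness. I expect the main obstacle to be setting up the reduction to tree-satisfiability exactly right: proving the equivalence ``$\ercw{\inst}\le n$ iff $\inst\cong\inst^{\mathcal{T}}$ for an MSO-definably well-decorated tree over a bounded decorator set'', including the arity bound that secures finiteness of the search; once this is in place, decidability is delivered directly by Rabin's theorem together with the interpretation machinery of \Cref{lem:computable-mso-tree-interpretations}.
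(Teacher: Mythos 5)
Your proposal is correct and takes essentially the same approach as the paper's proof: for fixed $n$, translate $\Xi$ through the interpretation of \Cref{lem:computable-mso-tree-interpretations}, conjoin an MSO sentence axiomatizing well-decoratedness, existentially quantify the unary decorator predicates, and decide the resulting sentence over the infinite binary tree via Rabin's theorem, ranging over the finitely many color sets with $|\cols| \leq n$. Your explicit justification that colors of arity greater than $\arity{\Sigma}$ can be discarded (since arity never decreases along the assembly operations while predicate-producing instructions only use colors of arity at most $\arity{\Sigma}$) is a welcome addition, as the paper's proof restricts to color arities between $1$ and $\arity{\Sigma}$ without spelling out this argument.
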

\begin{proof}
Let $n$ be a fixed natural number. Take a MSO sentence $\Xi$ over some signature $\Sigma$ and a set of constants $\mathrm{Cnst}$. We will show how to decide if there exists a model $\inst$ of $\Xi$ such that $\ercw{\inst} \leq n$. This implies the claimed result as we can enumerate all MSO sentences and check model existence.

Up to renaming of colors, there are only finitely many distinct color sets $\cols$ with $|\cols| \leq n$, where each color is assigned an arity between $1$ and $\arity{\Sigma}$. 
By definition, $\Xi$ has a model of partitionwidth $\leq n$ \iffi for one of these $\cols$ there exists a $\colconsig$-well-decorated tree $\mathcal{T}$ satisfying $\inst^\mathcal{T} \models \Xi$. With the corresponding MSO-interpretation $\mathfrak{I}$ from \cref{lem:computable-mso-tree-interpretations}, this holds \iffi $\mathcal{T} \models \Xi^\mathfrak{I}$.
It is easy to see that there exists a MSO-sentence $\Theta$ which is satisfied by a given decorated tree \iffi it is well-decorated (cf. \Cref{def:well-dec}).
Consequently, $\Xi$ is satisfiable in a model of partitionwidth $\leq n$ \iffi $\Xi^\mathfrak{I}\wedge \Theta$ is satisfiable in any $\colconsig$-decorated tree (for an appropriate $\cols$).
Let $\Xi' = \exists \decorum. (\Xi^\mathfrak{I}\wedge \Theta)$, i.e. it is the sentence obtained by reinterpreting all unary predicates of $\Xi^{\mathfrak{I}}\wedge \Theta$ as MSO set variables and quantifying over them existentially. Clearly, $\Xi'$ is a MSO sentence over the signature $\{\szero,\sone\}$ which is valid in $\ibtree$ \iffi some well-decoration of $\ibtree$ makes $\Xi^{\mathfrak{I}}$ true. Thus, we have reduced our problem to checking the validity of finitely many MSO sentences (one for each $\cols$) in $\ibtree$, which is decidable by Rabin's Tree Theorem \cite[Thm 1.1]{Rabin69}.
\end{proof}


By \Cref{lem:computable-mso-tree-interpretations}, any $\fcn{pw}$-finite $\instance$ is tree-interpretable.
Remarkably, Blumensath showed that the converse also holds \cite[Thm. 41]{Blumensath06}.

\begin{theorem}\label{thm:pw-coincides-mso-interpretability-trees} For every instance $\inst$ the following holds:
$\instance$ is $\fcn{pw}$-finite \iffi there is some  $\mathcal{T} \in \mathbb{T}$ such that $\instance \leq_{\text{MSO}} \mathcal{T}$.   
\end{theorem}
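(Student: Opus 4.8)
The forward direction ($\ercw{\instance} \in \mathbb{N}$ implies tree-interpretability) is essentially already handed to us: if $\ercw{\instance} \in \mathbb{N}$, then by definition $\instance$ is isomorphic to $\inst^{\mathcal{T}}$ for some $\colconsig$-well-decorated tree $\mathcal{T}$, and \Cref{lem:computable-mso-tree-interpretations} supplies an MSO-interpretation $\mathfrak{I}$ with $\inst^{\mathcal{T}} \cong \mathfrak{I}(\mathcal{T})$. Since a well-decorated tree is in particular an adorned tree (its unary predicates range over the finite decorator set, which we may absorb into the signature of adornments), we get $\instance \cong \mathfrak{I}(\mathcal{T})$ with $\mathcal{T} \in \mathbb{T}$, i.e.\ $\instance \leq_{\text{MSO}} \mathcal{T}$. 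This direction I would dispatch in one or two sentences.

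The converse is the substantive half and is where I expect the real work. Here we assume $\instance \leq_{\text{MSO}} \mathcal{T}$ for some adorned tree $\mathcal{T}$, and must produce a finite color set $\cols$ and a $\colconsig$-well-decorated tree $\mathcal{T}'$ with $\inst^{\mathcal{T}'} \cong \instance$. The natural strategy is to argue that, up to MSO-interpretability, adorned trees themselves have finite \partitionwidth, and then to transport this along the interpretation. Concretely, I would first show that the infinite binary tree $\ibtree$ (and hence any adornment of it, since adornments only add finitely many unary predicates, which cost only boundedly many extra colors) admits a well-decoration assembling it with a constant number of colors: one assembles $\ibtree$ by the obvious recursive scheme, using a bounded palette of unary colors to mark the current root(s) at which the $\szero$/$\sone$ edges must be attached. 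This establishes $\ercw{\mathcal{T}} \in \mathbb{N}$ for every $\mathcal{T} \in \mathbb{T}$.

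The crux is then a \emph{closure-under-MSO-interpretations} lemma for finite partitionwidth: if $\instance \leq_{\text{MSO}} \mathcal{B}$ and $\ercw{\mathcal{B}} \in \mathbb{N}$, then $\ercw{\instance} \in \mathbb{N}$. I would prove this by composing interpretations. By the forward direction applied to $\mathcal{B}$, there is a well-decorated tree $\mathcal{S}$ with $\mathcal{B} \cong \inst^{\mathcal{S}}$, hence $\mathcal{B} \leq_{\text{MSO}} \mathcal{S}$ via \Cref{lem:computable-mso-tree-interpretations}; combining with $\instance \leq_{\text{MSO}} \mathcal{B}$ and transitivity of $\leq_{\text{MSO}}$ yields $\instance \leq_{\text{MSO}} \mathcal{S}$, so it suffices to handle interpretability directly in an adorned tree. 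Thus the problem reduces to: \emph{whenever $\instance \leq_{\text{MSO}} \mathcal{T}$ for an adorned tree $\mathcal{T}$, then $\ercw{\instance} \in \mathbb{N}$.} This is exactly Blumensath's characterization, and I would invoke it as the known structural result it is—reconstructing the witnessing well-decorated tree $\mathcal{T}'$ from the interpreting MSO-formulae, with colors tracking the finitely many MSO-types (equivalently, the states of the tree automata equivalent to the interpretation formulae $\varphi_\rpred, \varphi_=, \varphi_\const{c}$) realized at each node. The main obstacle is precisely this reconstruction: one must verify that the bounded automaton state-space translates into a bounded color set and that the shuffling/recoloring instructions can simulate, in the bottom-up assembly, how the interpretation's relations connect elements sitting in different subtrees. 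This is the technically delicate part, and since it is Blumensath's theorem \cite{Blumensath03,Blumensath06}, I would cite it rather than redo the automaton-theoretic bookkeeping in full.
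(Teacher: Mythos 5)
Your proposal is correct and follows essentially the same route as the paper: the forward direction is dispatched via \Cref{lem:computable-mso-tree-interpretations} exactly as the paper does, and the converse is not proved but delegated to Blumensath's theorem \cite{Blumensath03,Blumensath06}, which is precisely the paper's own treatment (it states the converse as a recollection of Blumensath's result, with no proof in the appendix). The only remark worth making is that your intermediate scaffolding---finite partitionwidth of adorned trees plus a closure-under-MSO-interpretations lemma---is redundant, since, as you yourself observe, it collapses back to the very statement being cited; indeed the paper derives that closure property as a \emph{corollary} of this theorem rather than using it as an ingredient.
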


This coincidence of tree-interpretability and finiteness of par\-ti\-tion\-width shows that the latter is uniquely positioned among the width measures. We will exploit it in various ways; 
%
%
most importantly, we can use it to show that finiteness of partitionwidth is preserved under MSO-in\-ter\-pre\-tations.

\begin{theorem}
If $\mathcal{A} \leq_{\text{MSO}} \mathcal{B}$ and $\mathcal{B}$ is $\fcn{pw}$-finite, then $\mathcal{A}$ is $\fcn{pw}$-finite.
\end{theorem}
\begin{proof}
Given a $\fcn{pw}$-finite $\mathcal{B}$, \Cref{thm:pw-coincides-mso-interpretability-trees} implies  $\mathcal{B} \leq_{\text{MSO}} \mathcal{T}$ for some $\mathcal{T}\in\mathbb{T}$.
Then, $\mathcal{A} \leq_{\text{MSO}} \mathcal{B}$ and transitivity of $\leq_{\text{MSO}}$ yield $\mathcal{A} \leq_{\text{MSO}} \mathcal{T}$ and thus, again by \Cref{thm:pw-coincides-mso-interpretability-trees}, it follows that $\mathcal{A}$ is $\fcn{pw}$-finite.
\end{proof}

\subsection{Robustness under Controlled Modifications}\label{sec:preservation}

We will now turn our attention to certain modifications of instances that preserve width finiteness properties. They all have in common that the instance's domain stays the same. We start with an easy, yet useful observation (the proof of which can be found in the appendix).

\begin{definition}
Let $\signature$ be a signature, $\pred{U} \in \signature$ a unary predicate, $\inst$ an instance over $\signature$, and $T \subseteq \adom{\inst}$ a set of domain ele\-ments.
We define the \emph{unary replacement of $\inst$} w.r.t. $\pred{U}$ and~$T$ as the instance $\inst^{\pred{U}:=T} = \inst \setminus \{\pred{U}(t) \in \inst\} \cup \{\pred{U}(t) \mid t \in T \}$.\defend
\end{definition} 

\begin{observation}\label{prop:unary-replacements}
Finiteness of treewidth, cliquewidth, and \hypercliquewidth{} are all preserved under unary replacements. 
\end{observation}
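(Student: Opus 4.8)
The plan is to treat the three measures separately, but to exploit one common feature of a unary replacement: it only changes the extension of the single unary predicate $\pred{U}$ and never enlarges the active domain, since the added atoms range over $T \subseteq \adom{\inst}$ while some atoms are deleted, so $\adom{\inst^{\pred{U}:=T}} \subseteq \adom{\inst}$. For treewidth this already suffices. I would take a tree decomposition $\tree$ of $\inst$ of finite width and argue it (almost) witnesses a bound for $\inst^{\pred{U}:=T}$: the non-$\pred{U}$ atoms of $\inst^{\pred{U}:=T}$ are exactly those of $\inst$ and remain covered by the same bags; each new atom $\pred{U}(t)$ has $t \in T \subseteq \adom{\inst}$, which is a singleton trivially covered because $t$ already occurs in some bag. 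Deleting from every bag the (at most) elements that drop out of the active domain preserves all four tree-decomposition conditions and cannot increase any bag, so $\tw{\inst^{\pred{U}:=T}} \leq \tw{\inst}$ and finiteness is preserved.

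For \partitionwidth{} the genuine obstacle surfaces: $T$ is an arbitrary, possibly infinite subset of the domain, hence in general not MSO-definable, so one cannot simply postcompose the tree-interpretation witnessing $\pw{\inst}\in\mathbb{N}$ with a further MSO-interpretation that ``rewires'' $\pred{U}$. My idea is to push the choice of $T$ into the adornment of the underlying tree, which by definition may carry arbitrary unary markings. Assuming $\pw{\inst}\in\mathbb{N}$, \Cref{thm:pw-coincides-mso-interpretability-trees} supplies an adorned tree $\mathcal{T}\in\mathbb{T}$ and an MSO-interpretation $\mathfrak{I}$ with $\inst\cong\mathfrak{I}(\mathcal{T})$; let $\fcn{f}$ be the associated map from the tree nodes onto $\adom{\inst}$. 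I would enrich $\mathcal{T}$ to an adorned tree $\mathcal{T}^{+}\in\mathbb{T}$ by adding a fresh unary predicate $\pred{M}$ holding of exactly those nodes $s$ with $\fcn{f}(s)\in T$, and define $\mathfrak{I}^{+}$ to agree with $\mathfrak{I}$ except that the defining formula for $\pred{U}$ is replaced by $\pred{M}(x_1)$. To keep the active domain correct, I would also intersect the defining formula for $\top$ with the MSO-expressible predicate ``$x_1$ occurs in some non-$\pred{U}$ atom or $\pred{M}(x_1)$'', thereby discarding elements that only appeared in deleted $\pred{U}$-atoms. Then $\mathfrak{I}^{+}(\mathcal{T}^{+})\cong\inst^{\pred{U}:=T}$, so $\inst^{\pred{U}:=T}\leq_{\text{MSO}}\mathcal{T}^{+}$, and \Cref{thm:pw-coincides-mso-interpretability-trees} again yields $\pw{\inst^{\pred{U}:=T}}\in\mathbb{N}$.

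For cliquewidth I cannot invoke tree-interpretability (partitionwidth is strictly more general), so I would argue directly on a clique-width expression for $\inst$, using the same ``one bit per element'' principle in a different guise. Since each element is introduced individually by a vertex-creation operation, I would record at creation time whether that element lies in $T$ by choosing a flagged or unflagged initial color; concretely I would work over the doubled color set $\cols\times\{0,1\}$, mirror every relabeling and every relation-adding operation of the original expression on both copies, omit the original $\pred{U}$-adding operations, and finally materialize $\pred{U}$ on all colors of the form $(\cdot,1)$. This reproduces $\inst^{\pred{U}:=T}$ for an arbitrary $T$ while at most doubling the number of colors, so finite cliquewidth is preserved. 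I expect the main obstacle throughout to be exactly the non-definability of $T$; the crux of the observation is that a unary replacement is ``free'' for tree-based width measures because the single bit of $T$-membership of each element can be recorded at the element's creation (for $\fcn{cw}$) or in the arbitrarily adornable tree (for $\pw{\cdot}$), rather than having to be recomputed uniformly. The active-domain shrinkage and the handling of constants are routine once this marking device is in place.
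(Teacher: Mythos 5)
Your treewidth case is essentially the paper's own proof: the paper likewise reuses the given tree decomposition unchanged, observing that only unary atoms over already-present elements are touched (it even asserts $\adom{\inst} = \adom{\inst^{\pred{U}:=T}}$ outright, which holds under the paper's convention that $\top$ holds of every term, so your bag-pruning and $\top$-bookkeeping are harmless but unnecessary). For \partitionwidth{} you take a genuinely different and valid route. You invoke \Cref{thm:pw-coincides-mso-interpretability-trees} to obtain an adorned tree with $\inst \leq_{\text{MSO}} \mathcal{T}$, exploit that adornments are \emph{arbitrary} (not MSO-definable) unary markings to record $T$-membership via a fresh predicate $\pred{M}$, and replace the defining formula of $\pred{U}$ by $\pred{M}(x_1)$. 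The paper instead stays inside the decoration formalism: it first brings the witnessing well-decorated tree into normal form (\Cref{cor:tonf}), where every unary atom $\pred{U}(e)$ is produced by a predicate-materializing instruction $\unradd{\origcol_1}{\pred{U}}^{i}$ sitting at the parent of $e$'s creation node, and then deletes or inserts exactly these instructions according to membership in $T$. Both proofs implement the same key idea -- one bit per element, recorded at that element's creation site, which is precisely what defeats the non-definability of $T$ -- but yours is shorter and leans on Blumensath's characterization (which the paper uses elsewhere anyway), while the paper's is a self-contained tree surgery that additionally keeps the color set, and hence the width bound, unchanged.

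For cliquewidth the paper again avoids new colors: it splices out every $\fcn{Add}_{\pred{U},\kappa}$ node, and for each element that should carry $\pred{U}$ it pushes that element's creation decorator one level down and installs $\fcn{Add}_{\pred{U},\kappa}$ directly above it, so the addition affects only the freshly created singleton. Your color-doubling alternative is sound in spirit but has one step that fails as stated: ``mirror every relation-adding operation on both copies'' does not suffice for predicates of arity at least $2$. An original $\fcn{Recolor}_{\kappa\shortto\mu}$ indeed just splits into two flagged variants, but an original $\fcn{Add}_{\pred{R},\kappa_1,\ldots,\kappa_n}$ must be replaced by the composition of \emph{all} $2^{n}$ flagged variants $\fcn{Add}_{\pred{R},(\kappa_1,b_1),\ldots,(\kappa_n,b_n)}$ with $b_i \in \{0,1\}$; otherwise any atom whose arguments carry mixed flags (say, an edge from a $T$-element to a non-$T$-element) is lost. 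Since the signature and its arities are finite, this is a finite blow-up of operations with the color set still $\cols \times \{0,1\}$ (plus your auxiliary color), so finiteness of cliquewidth is preserved once the fix is made; for countable instances the whole transformation should then be phrased, as the paper does in its own constructions, as inserting finite chains of unary-operation nodes into the possibly infinite well-decorated tree.
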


Finite \hypercliquewidth{} exhibits further convenient properties, which will turn out to be quite useful later on: it is preserved when extending an instance by defined relations, where definitions can be given through first-order formulas or regular path expressions. This is a rather straightforward consequence from the fact that all such definitorial extensions are particular cases of MSO-interpretations (see the appendix for the detailed argument).

\begin{definition}\label{def:extensions}
Let $\signature$ be a signature, and $\inst$ be an instance over $\signature$. 
We define \emph{two-way regular path expressions} (2RPEs) over $\signature$ to be regular expressions over $\signature_{\smash{\mathrm{bin}}}^\leftrightarrow$, the set containing the symbols $\ppred^\rightarrow$ and $\ppred^\leftarrow$ for each binary $\ppred\in \signature$.
We associate with any such 2RPE $E$
the binary relation $E^\instance$ over $\adom{\inst}$:
\begin{align*}
	(\pred{R}^\rightarrow)^\instance = &\ \{(t_1,t_2) \mid \pred{R}(t_1,t_2) \in \instance\} \\[-0.5ex]
	(\pred{R}^\leftarrow)^\instance = &\  \{(t_2,t_1) \mid \pred{R}(t_1,t_2) \in \instance\} \\[-0.5ex]
	(E_1E_2)^\instance = &\  \{(t_1,t_3) \mid (t_1,t_2)\in (E_1)^\instance, (t_2,t_3)\in (E_2)^\instance\}  \\[-0.5ex]
	(E_1 \cup E_2)^\instance = &\  (E_1)^\instance \cup (E_2)^\instance \\[-0.5ex]
	(E^*)^\instance = &\  \{(t,t) \mid t \in \adom{\instance}\} \cup (E)^\instance \cup (EE)^\instance \cup \ldots
\end{align*}
Given a binary predicate $\ppred \in \signature$, we define the \emph{2RPE-extension of $\inst$} w.r.t. $\ppred$ and $E$ as the instance $\inst^{\ppred:=\ppred\cup E} = \instance \cup \{\ppred(t,t') \mid (t,t')\in E^\instance \}$.  
Moreover, given a $|\vx|$-ary predicate $\rpred \in \signature$ and a first-or\-der formula $\varphi(\vx)$ with free variables $\vx$, we define the \emph{FO-extension of $\inst$} w.r.t. $\rpred$ and $\varphi$ as the instance $\inst^{\rpred:=\rpred\cup\varphi} = \inst \cup \{\rpred(\vt) \mid \inst \models \varphi(\vt)\}$.  \defend
\end{definition} 

\begin{theorem}\label{thm:closureproperties}
If $\inst$ has finite \hypercliquewidth{}, then so does any 2RPE-extension and any FO-extension of $\inst$.
\end{theorem}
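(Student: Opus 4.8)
The plan is to reduce both claims to the theorem established immediately above---that $\mathcal{A} \leq_{\text{MSO}} \mathcal{B}$ together with $\pw{\mathcal{B}} \in \mathbb{N}$ implies $\pw{\mathcal{A}} \in \mathbb{N}$, i.e. that finite \hypercliquewidth{} is preserved under MSO-interpretations. Concretely, for each kind of extension I would exhibit an MSO-interpretation witnessing $\inst^{\ldots} \leq_{\text{MSO}} \inst$; since $\pw{\inst} \in \mathbb{N}$ by assumption, the preservation result then immediately delivers $\pw{\inst^{\ldots}} \in \mathbb{N}$. The decisive simplification is that both extensions leave the active domain untouched and add no new predicates, so in each case the interpretation $\mathfrak{I}$ (from $(\Sigma,\mathrm{Cnst})$ to itself) can be built around the ``identity'' interpretation: take $\varphi_\top(x_1) = \top(x_1)$, so that $D = \adom{\inst}$; take $\varphi_=(x_1,x_2) = (x_1 {=} x_2)$, so that the witnessing map $\fcn{f}$ may be chosen to be the identity; and take $\varphi_\const{c}(x_1) = (x_1 {=} \const{c})$ for each $\const{c} \in \mathrm{Cnst}$. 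It then only remains to specify the predicate formulae $\varphi_{\pred{S}}$.

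For the FO-extension $\inst^{\rpred:=\rpred\cup\varphi}$, I would keep $\varphi_{\pred{S}}(\vx) = \pred{S}(\vx)$ for every predicate $\pred{S} \neq \rpred$ and set $\varphi_{\rpred}(\vx) = \rpred(\vx) \vee \varphi(\vx)$. Since first-order logic is a syntactic fragment of MSO, $\varphi_\rpred$ is a legitimate MSO formula, so $\mathfrak{I}$ is a genuine MSO-interpretation; unfolding the definition of $\mathfrak{I}(\inst)$ shows that it produces exactly $\inst \cup \{\rpred(\vt) \mid \inst \models \varphi(\vt)\}$, as required.

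For the 2RPE-extension $\inst^{\ppred:=\ppred\cup E}$, the only additional ingredient is an MSO formula $\psi_E(x_1,x_2)$ defining the binary relation $E^\inst$; one then sets $\varphi_\ppred(x_1,x_2) = \ppred(x_1,x_2) \vee \psi_E(x_1,x_2)$ and leaves all other predicate formulae as in the identity interpretation. I would construct $\psi_E$ by induction on the structure of $E$: for the atoms put $\psi_{\ppred^\rightarrow}(x_1,x_2) = \ppred(x_1,x_2)$ and $\psi_{\ppred^\leftarrow}(x_1,x_2) = \ppred(x_2,x_1)$; for union take the disjunction $\psi_{E_1} \vee \psi_{E_2}$; for concatenation take $\exists z.(\psi_{E_1}(x_1,z) \wedge \psi_{E_2}(z,x_2))$. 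The one genuinely non-first-order case---and the main obstacle of the argument---is the Kleene star, where I would express the reflexive-transitive closure of the already-defined relation $\psi_E$ by the standard monadic encoding
$$\psi_{E^*}(x_1,x_2) = \forall X.\big[\big(x_1 \in X \wedge \forall u \forall v.((u \in X \wedge \psi_E(u,v)) \to v \in X)\big) \to x_2 \in X\big],$$
which is exactly the point where the monadic set quantification of MSO (as opposed to plain FO) becomes indispensable. A routine induction on $E$ then confirms that $\inst \models \psi_E(t,t')$ iff $(t,t') \in E^\inst$, so that $\mathfrak{I}(\inst)$ is precisely the 2RPE-extension, completing the proof.
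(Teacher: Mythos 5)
Your proposal is correct and follows exactly the route the paper takes: the paper justifies this theorem by remarking that FO-extensions and 2RPE-extensions are particular cases of MSO-interpretations, and then appeals to the immediately preceding theorem that $\mathcal{A} \leq_{\text{MSO}} \mathcal{B}$ and $\pw{\mathcal{B}} \in \mathbb{N}$ imply $\pw{\mathcal{A}} \in \mathbb{N}$. Your write-up merely fills in the details the paper leaves implicit (the identity-based interpretation, the disjunctive predicate formula, and the monadic transitive-closure encoding for the Kleene star), all of which are sound.
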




\subsection{Relationship to Other Width Notions}\label{sec:hcw-vs-tw-cw}

In this section we are going to clarify how the notion of finite \ercwidth{} relates to those of finite treewidth and finite cliquewidth. 
\Cref{fig:relationships} summarizes these insights.

\begin{figure}[t]
\newcommand{\nlarge}{\small}
\newcommand{\roundy}{15pt}
\centerline{
\scalebox{1.00}{
\begin{tikzpicture}
	\draw[rounded corners=\roundy]
	(0,0.4) rectangle ++(12.7,4.6);
	\draw[rounded corners=\roundy]
    (0,0.4) rectangle ++(8.2,4.6);
	\draw[rounded corners=\roundy]
    (0,0.4) rectangle ++(4.2,4.6);
	\draw[rounded corners=\roundy]
    (0,0.4) rectangle ++(12.7,1.45);
	\draw[rounded corners=\roundy]
    (0,0.4) rectangle ++(8.2,3.0);
	\draw[rounded corners=\roundy]
    (0,0.4) rectangle ++(1.0,3.0);
    \node at (0.49,1.7) {\rotatebox{270}{\nlarge finite \ \ domain}};
    \node at (2.1,4.75) {\nlarge \hspace{8ex} finite treewidth};
    \node at (2.2,4.0) {\nlarge $\inst_\mathrm{tern} = \{ \rpred(-1,n,n{+}1) \}$};
    \node at (2.5,2.5) {\nlarge $\{ \rpred(n,n{+}1,n{+}2)\}$};
    \node at (2.75,1.1) {\nlarge $\{ \ppred(n,n{+}1) \}$};
    \node at (6,4.7) {\nlarge \hspace{5ex} finite \hypercliquewidth{}};
    \node at (6.15,4.0) {\nlarge $\{ \rpred(-1,n,n{+}m) \}$};
    \node at (6,3.1) {\nlarge \hspace{7ex} finite cliquewidth};
    \node at (6.1,2.5) {\nlarge $\{ \rpred(n,n{+}1,n{+}m) \}$};
    \node at (6.2,1.1) {\nlarge $\{ \ppred(n,n{+}m) \}$};
    \node at (10.45,1.6) {\nlarge \hspace{8.5ex} binary signature};
    \node at (10.4,1.1) {\nlarge $    	
    	   \begin{array}{l}
    		\\
    		\{ \ppred(\langle n,\!m \rangle, \langle n{+}1,\!m \rangle)  \}\ \cup\\[-1pt]
    		\{ \ppred'(\langle n,\!m \rangle, \langle n,\!m{+}1 \rangle) \}
    	    \end{array}
            $};
    \node at (10.45,4.75) {\nlarge \hspace{15ex} countable};
    \node at (10.4,4.0) {\nlarge $
    	\begin{array}{l}
    		\\
    	\ \{ \ppred(\langle n,\!-1 \rangle, \langle n{+}1,\!-1 \rangle)  \}\ \cup\\
    	\ \{ \rpred(\langle n,\!-1 \rangle,\langle m,\!-1 \rangle, \langle n,\!m \rangle ) \}
        \end{array}
    	$};
\end{tikzpicture}}}


\caption{Set diagram displaying the inclusions between classes of countable instances satisfying diverse types of width-finiteness, conditioned on signature arity. Example instances witness the strictness of the inclusions ($n$ and $m$ range over $\mathbb{N}$).\label{fig:relationships}}
\vspace{-1ex}
\end{figure}

Without providing full formal details, we recall that the notion of \emph{cliquewidth} in its version for countable instances \cite{ICDT2023} follows a very similar idea as \ercwidth{}, but only allows for unary colors and uses a different set of assembly operations: nullary operations $\pred{c}_\kappa$ and $\pred{*}_\kappa$, creating singleton instances of $\kappa$-colored constants or nulls, respectively;
a binary operator $\oplus$ creating the (plain) disjoint union; unary operators $\fcn{Recolor}_{\kappa\shortto\mu}$ assigning all $\kappa$-colored terms the new color $\mu$; and another type of unary operators $\fcn{Add}_{\pred{R},\kappa_1,...,\kappa_n}$, creating atoms $\rpred(t_1,\ldots,t_n)$ whenever each term $t_i$ carries the corresponding color $\kappa_i$ for all $i$. As an example, the instance from \Cref{ex:pw2} can be created through the expression
$$ \fcn{Add}_{\pred{E},\kappa,\mu}\big(\fcn{Add}_{\pred{E},\kappa,\kappa}(\pred{a}_\kappa \oplus \pred{b}_\kappa)  \oplus \fcn{Recolor}_{\kappa\shortto\mu}( \pred{c}_\kappa \oplus \pred{*}_\kappa )\big). $$%
Earlier investigations showed that, while finite treewidth implies finite cliquewidth for binary signatures, the two notions are incomparable for signatures of higher arity \cite{ICDT2023}: notably, the example $\inst_\mathrm{tern}$ from \Cref{ex:pw2} exhibits a treewidth of 2 but infinite cliquewidth.
As demonstrated by \Cref{ex:pw2}, however, $\inst_\mathrm{tern}$ does have finite \ercwidth{}. In fact, the class of instances with finite \ercwidth{} subsumes both the class of treewidth-finite and the class of cliquewidth-finite instances, independent of the signature's arity. With the help of \Cref{thm:pw-coincides-mso-interpretability-trees}, this result can be derived from the fact that these classes consist of tree-interpretable instances (the detailed argument can be found in the appendix).

\begin{theorem}\label{thm:tree-width-implies-clique-width}
Let $\inst$ be a countable instance over a finite signature of arbitrary arity. Then the following hold:
\begin{itemize}
    \item If $\inst$ has finite treewidth, then $\inst$ has finite \ercwidth{}.
    \item If $\inst$ has finite cliquewidth, then $\inst$ has finite \ercwidth{}.
\end{itemize}
\end{theorem}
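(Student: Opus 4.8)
The plan is to reduce both statements to \Cref{thm:pw-coincides-mso-interpretability-trees}, which tells us that $\ercw{\inst}\in\mathbb{N}$ holds precisely when $\inst$ is (MSO-)tree-interpretable, i.e. $\inst \leq_{\text{MSO}} \mathcal{T}$ for some adorned tree $\mathcal{T}\in\mathbb{T}$. It therefore suffices to show that every finite-treewidth instance and every finite-cliquewidth instance is tree-interpretable. Both facts are instances of the classical Courcelle-style observation that a structure of bounded tree- or cliquewidth can be recovered by an MSO-interpretation from a suitably labelled tree; the only care needed is that the labelled tree be realisable as an adornment of $\ibtree$ and that the interpretation respect the definitions set up in this section.

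For the treewidth case, I would fix a tree decomposition of $\inst$ of width $k=\tw{\inst}$ and normalise it into a \emph{nice} tree decomposition: a rooted tree of branching degree at most two (leaf, introduce, forget, and binary join nodes) in which every node introduces at most one domain element. Since degrees are bounded by two, this decomposition is realised as an adorned tree $\mathcal{T}$ by placing its nodes on $\ibtree$ (respecting $\szero$/$\sone$, leaving unused vertices undecorated) and attaching to each node finitely many unary predicates recording (i) the local isomorphism type of its bag --- in particular, which $\Sigma$-atoms hold among the $\leq k{+}1$ bag positions --- and (ii) for each child, how those positions are inherited. As the signature is finite and bags are bounded, only finitely many labels occur, so $\mathcal{T}$ is a legitimate adorned tree.

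I would then define the MSO-interpretation $\mathfrak{I}$ recovering $\inst$ from $\mathcal{T}$. Because each element is introduced at a unique node, $\varphi_\top$ selects exactly the introduce nodes (one per element) and $\varphi_=$ is node-equality, so that the associated $\fcn{f}$ becomes the bijection element $\mapsto$ introduce-node. The crucial gadget is an MSO-definable relation expressing ``the element introduced at node $t$ occupies local position $p$ in the bag of node $b$'', obtained by quantifying over the connected region of bags containing that element and tracing its position along the inheritance labels. Each relation formula $\varphi_\rpred(x_1,\ldots,x_n)$ then asserts the existence of a canonical bag node (say the topmost one containing all $n$ elements) whose label records $\rpred$ at positions $p_1,\ldots,p_n$, together with $n$ applications of the tracing gadget certifying that $x_i$'s element sits at position $p_i$ there. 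This yields $\inst\cong\mathfrak{I}(\mathcal{T})$, hence $\inst\leq_{\text{MSO}}\mathcal{T}$, and so $\ercw{\inst}\in\mathbb{N}$.

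For the cliquewidth case the argument is analogous but lighter: a finite-cliquewidth instance comes with a (possibly infinite) syntax tree over the assembly operations $\pred{c}_\kappa$, $\pred{*}_\kappa$, $\oplus$, $\fcn{Recolor}_{\kappa\shortto\mu}$, and $\fcn{Add}_{\rpred,\kappa_1,\ldots,\kappa_n}$, whose nodes have at most two children and thus is realised as an adorned tree $\mathcal{T}'$ on $\ibtree$ (the operation symbols becoming unary adornments). Exactly as in \Cref{lem:computable-mso-tree-interpretations} for well-decorated trees, the term-to-instance evaluation is realised by an MSO-interpretation --- the current colour of each element and the atoms created by the $\fcn{Add}$-nodes are MSO-definable along the tree --- giving $\inst \leq_{\text{MSO}} \mathcal{T}'$ and hence, again by \Cref{thm:pw-coincides-mso-interpretability-trees}, $\ercw{\inst}\in\mathbb{N}$. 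I expect the main obstacle to be the treewidth direction, specifically making the element-tracing relation and the higher-arity formulas $\varphi_\rpred$ precise: unlike in the binary-signature setting, one must ensure that all $n$ arguments of an atom are simultaneously present in a single bag and that their separate introduce-node representatives are correctly reconciled through the decomposition. This is exactly where the nice-decomposition normalisation and the bounded-position bookkeeping carry the real weight of the proof.
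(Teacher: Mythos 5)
Your proposal is correct and follows the same overall skeleton as the paper: both arguments reduce the two claims to \Cref{thm:pw-coincides-mso-interpretability-trees} by exhibiting an MSO-interpretation of $\inst$ in an adorned tree. Your cliquewidth half coincides with the paper's proof essentially verbatim (the paper spells out the interpretation explicitly -- domain, equality, and constant formulas plus a colour-tracing formula $\varphi^{\mathrm{colIn}}_k$ built with MSO set variables -- whereas you defer to the analogy with \Cref{lem:computable-mso-tree-interpretations}, but the content is the same). The treewidth half is where you genuinely diverge: the paper rewrites $\inst$ as a hypergraph, takes a (possibly infinite) term over the hypergraph-algebra operations evaluating to it, and cites Courcelle's result \cite{Cou89} that the evaluation map $\fcn{val}$ is MSO-definable inside the term tree, thereby getting tree-interpretability as a black box; you instead re-derive this from scratch via a nice tree decomposition realised on $\ibtree$, with bounded bag-type labels and an MSO tracing gadget identifying each element with its unique introduce node. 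Your route is more elementary and self-contained, but it shoulders exactly the obligations you flag yourself: one must check that the nice-decomposition normalisation (binary branching, one introduction per element, which follows from connectedness of each element's bag-subtree in a rooted countable tree) and the position-tracing formulas go through for infinite decompositions, and one should also add the formulas $\varphi_{\const{c}}$ locating the introduce nodes of constants, which your sketch omits but which the interpretation format of this paper requires. The paper's citation-based route avoids all of this bookkeeping at the price of relying on the cited theorem applying to infinite terms over countable structures; neither choice affects correctness.
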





Both inclusions of \Cref{thm:tree-width-implies-clique-width} are strict; there even are instances of finite \hypercliquewidth{} whose treewidth and cliquewidth are both infinite. The instance 
$$\inst_\mathrm{tern*} = \{ \rpred(-1,n,n{+}m) \mid n,m\in \mathbb{N}\}$$%
has infinite treewidth since any two natural numbers co-occur in some atom, enforcing that any tree-decomposition must contain some node $X$ with $\mathbb{N} \subseteq  X$.  
It also has infinite cliquewidth, which can be argued along the same lines as for $\inst_\mathrm{tern}$ \cite{ICDT2023}.
Yet, $\inst_\mathrm{tern*}$ has finite \ercwidth{}, as it is isomorphically represented by the decorated tree corresponding to the (infinite) assembly expression
$
\const{*} \oplus_{\{
\radd{o_1}{\lambda}{\{1\}}{\rpred}
\}} E
$
where
$$
E = \big( \const{*} \oplus_{\{
\recol{o_1}{\kappa}^0, 
\recol{o_2}{\lambda}^0,
\cadd{o_1}{\kappa}{\{1\}}{\lambda}
\}} E \big).
$$

It should, however be noted that, finite \ercwidth{} and finite cliquewidth coincide for binary signatures (see the appendix for the proof).

\begin{proposition}\label{th:hcw=cw-bin}
	Any countable instance over a binary signature has finite cliquewidth \iffi it has finite \ercwidth{}.
\end{proposition}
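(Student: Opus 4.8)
The statement is an equivalence, and one of its two directions comes for free: finite cliquewidth implies finite \hypercliquewidth{} by \Cref{thm:tree-width-implies-clique-width}, which holds for signatures of arbitrary arity and hence in particular for binary ones. All the work therefore lies in the converse implication, namely that over a binary signature finite \hypercliquewidth{} already forces finite cliquewidth. My plan is to route this through tree-interpretability rather than attempting a direct simulation of the $\oplus_{\mathit{Ins}}$-assembly by cliquewidth operations; such a direct simulation is awkward precisely because binary colors can pair elements selectively, whereas the cliquewidth operator $\fcn{Add}_{\pred{R},\kappa_1,\kappa_2}$ can only realize the full ``product'' of two unary color classes.

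First I would invoke \Cref{thm:pw-coincides-mso-interpretability-trees}: since $\pw{\inst} \in \mathbb{N}$, there is an adorned tree $\mathcal{T} \in \mathbb{T}$ with $\inst \leq_{\text{MSO}} \mathcal{T}$. Thus it suffices to show that, over a binary signature, any instance that is MSO-tree-interpretable has finite cliquewidth. This is the binary-signature analogue of \Cref{thm:pw-coincides-mso-interpretability-trees}, and it is exactly the classical correspondence between cliquewidth and MSO-transductions of trees, established by Courcelle and Engelfriet for finite edge-labelled graphs \cite{CourEngBook} and lifted to the countable setting by Courcelle \cite{Cou04}; the version of cliquewidth used here is the one from \cite{ICDT2023}, explicitly designed to generalize these notions. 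Note that the ``easy half'' of that correspondence, finite cliquewidth $\Rightarrow$ tree-interpretable, is obtained just as for \hypercliquewidth{} in \Cref{lem:computable-mso-tree-interpretations} (the cliquewidth assembly is itself realized by an MSO-interpretation of its syntax tree), and in fact re-proves the binary case of \Cref{thm:tree-width-implies-clique-width}; what we genuinely need is the hard half.

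I expect the hard half, tree-interpretable $\Rightarrow$ finite cliquewidth for countable binary structures, to be the main obstacle. The construction I would carry out is the standard one via the composition method: fix the interpretation $\mathfrak{I}$ witnessing $\inst \cong \mathfrak{I}(\mathcal{T})$, let $r$ bound the quantifier rank of the formulas in $\mathfrak{I}$, color each element of $\inst$ (i.e. each node of $\mathcal{T}$ selected by $\varphi_\top$) by the finitely many data determined by the rank-$r$ MSO type of $\mathcal{T}$ relative to that node, and assemble $\inst$ bottom-up along $\mathcal{T}$ using $\oplus$, $\fcn{Recolor}$, and $\fcn{Add}$. Because the signature is binary, every atom of $\inst$ is unary or binary, so whether a pair $(t,t')$ satisfies $\varphi_\rpred$ depends only on the pair of types carried by $t$ and $t'$ together with their relative position in $\mathcal{T}$; this is precisely the information the full-product $\fcn{Add}_{\pred{R},\kappa_1,\kappa_2}$ operator can install, and the number of types is finite by Feferman--Vaught, bounding the number of colors. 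The delicate points are (i) handling the infinite, unfounded branches of $\mathcal{T}$ so that the assembly remains a legitimate (possibly infinite) cliquewidth decomposition in the sense of \cite{ICDT2023}, and (ii) threading the MSO ``context'' types upward so that each edge spanning two subtrees is added exactly once. I would emphasize that the restriction to arity $\le 2$ is essential and cannot be removed: for arity $\ge 3$ a single ternary atom links three elements lying in three distinct subtrees, which no sequence of binary $\fcn{Add}$ operations can create, exactly as witnessed by $\inst_\mathrm{tern}$ in \Cref{fig:infinite-line-example}; this is why \hypercliquewidth{} (with its higher-arity colors and shuffling) is strictly more powerful there while the two measures collapse for binary signatures.
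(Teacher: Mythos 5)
Your split of the work matches the paper's: the forward implication is immediate from \Cref{thm:tree-width-implies-clique-width} (concretely, \Cref{prop:cw-implies-pw}), and everything hinges on showing that finite \partitionwidth{} forces finite cliquewidth over a binary signature. For that hard direction, however, your proposal has a concrete gap. The lemma you route everything through -- every countable MSO-tree-interpretable instance over a binary signature has finite cliquewidth -- is not citable in the generality you need: \cite{CourEngBook} establishes the transduction characterization for \emph{finite} graphs only, and \cite{Cou04} treats countable \emph{unlabelled, undirected} graphs without constants, with a compactness property as its main result; neither covers the cliquewidth notion of \cite{ICDT2023} (directed, labelled, with constants, assembled along non-well-founded decorated trees) that \Cref{th:hcw=cw-bin} is about. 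So this lemma has essentially the same depth as the proposition itself, and your composition-method sketch postpones exactly the points where that depth sits: (i) because the decomposition trees are unfounded, ``bottom-up assembly'' and ``threading types upward'' are not available as induction principles -- correctness of the constructed decorated tree must be verified non-inductively, e.g.\ by induction along the finite path from an element's origin leaf to the node under consideration, which is how the paper argues in \Cref{lem:hcwtocw-bin-cols}; and (ii) elements of $\mathfrak{I}(\mathcal{T})$ are $\varphi_{=}$-equivalence classes, so atom creation must be phrased via representatives, which changes the rank of the types you would use as colors. None of this looks fatal -- I believe the plan could be completed -- but as written it is a reduction to an unproven lemma of comparable difficulty, not a proof.

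For comparison, the paper's proof avoids MSO types entirely and is a direct syntactic simulation. It first puts the \partitionwidth{} tree $\mathcal{T}$ into a normal form (\Cref{cor:tonf}) in which every atom is installed at the closest common ancestor of its entities' origins, witnessed by unary colors on both sides. It then translates $\mathcal{T}$ node-by-node into a cliquewidth tree $\mathcal{T}_{\mathrm{cw}}$ (\Cref{def:hcwtocw-bin}) over the color set $\cols_1 \disun (\cols_1 \times \{0,1\}) \disun \{\mu_{\mathrm{cb}}\}$, where $\cols_1$ is the set of unary colors of $\cols$: the tags $0,1$ record on which side of a disjoint union an element lies (replacing the side and shuffle information of $\radd{\kappa_1}{\kappa_2}{I}{\rpred}$), and the spare color $\mu_{\mathrm{cb}}$ serializes recolor-cycles, which sequentially applied cliquewidth $\fcn{Recolor}$ operations would otherwise garble. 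This exploits, in purely syntactic form, exactly the observation you make -- over a binary signature only unary colors can ever participate in creating atoms -- and it yields an explicit isomorphism together with the concrete bound $3|\cols_1|+1$. Your route, if completed, would in effect re-prove the hard direction of \Cref{thm:pw-coincides-mso-interpretability-trees} with cliquewidth as the target, buying elegance and reuse of the composition theorem; the paper's simulation is more laborious to write down but needs no model-theoretic machinery and gives a quantitative bound.
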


At the end of this section, let us outline how the just obtained insights regarding partitionwidth can be harnessed to show decidability of very expressive querying in some of the most expressive contemporary description logics.


\begin{example}\label{ex:srfamily}
The description logics $\mathcal{SRIQ}b_s$, $\mathcal{SROI}b_s$, and $\mathcal{SROQ}b_s$ are extensions of the mentioned $\mathcal{ALCHIQ}b$, $\mathcal{ALCHOI}b$, and $\mathcal{ALCHOQ}b$, respectively. One additional feature of $\mathcal{SR}*$ knowledge bases is an \emph{RBox}, which contains \emph{role-chain axioms} like transitivity statements for binary relations such as  $\pred{Friend}\circ\pred{Friend} \sqsubseteq \pred{Friend}$ but also more complex relationships like $\pred{Friend}\circ\pred{Enemy} \sqsubseteq \pred{Enemy}$. Such axioms prevent the existence of treewidth-finite (finitely) universal model sets, but not of partitionwidth-finite ones, as can be shown via known model-theoretic insights: Taking some model, it can be unravelled into a ``pre-model'' of finite treewidth (and thus partitionwidth), as one would do in the RBox-free case. This structure can then be extended to a model proper, coping with the still violated RBox axioms by applying RPE-extensions \cite{Kazakov08,CalvaneseEO09}, which -- as we now know -- preserve finiteness of partitionwidth. It is easy to see that the model thus obtained maps homomorphically into the original one. Hence, we obtain decidability of HUSOMSOQs for all these description logics. For query languages up to UC2RPQs, this had already been known (even with tight complexity bounds \cite{CalvaneseEO09,BednarczykR19}), but for more expressive query languages, decidability had hitherto been open.
\end{example}

\section{Application to Existential Rules}\label{sec:rules}
In the remainder of the paper, we show how our decidability framework can be fruitfully applied to the very popular logical formalism of \emph{existential rules} -- also referred to as \emph{tuple-generating dependencies} (TGDs)~\cite{AbiteboulHV95}, \emph{conceptual graph rules}~\cite{SalMug96}, Datalog$^\pm$~\cite{Gottlob09}, and \emph{$\forall \exists$-rules}~\cite{BagLecMugSal11} -- which has become a de-facto standard of ontological querying over databases. As entailment of even the most simple queries is undecidable for unconstrained existential rules \cite{ChandraLM81}, there has been ample research into fragments that restore decidable querying. This includes investigations based on width notions, although existing work almost entirely focuses on (single) universal models of finite treewidth.\footnote{Two exceptions are our prior work proposing \emph{cliquewidth-finiteness} as alternative to tree\-width-finite\-ness for universal models \cite{ICDT2023}, and a work investigating \emph{finitely} universal models of finite treewidth~\cite{BMR2023}.} Thus, the upcoming part of the paper will significantly advance this line of research. Before we introduce relevant notions of existential rules, the reader may want to have a look at the following example.

\begin{example}\label{ex:Rbio}
We present the ruleset $\ruleset_\mathrm{bio}$ about biological relationships describes female/male individuals%
, related by mother-, father-, parent- and ancestorhood, as well as compatibility%
. A ternary relation ($\pred{CommonDescendant}$) indicates two individuals' ``earliest'' joint descendant, and helps to describe hereditary patterns of some genetic risk
. As usual, we omit universal quantifiers and use $\leftrightarrow$ as a shorthand for two rules.
{\small
	\begin{align}
		\pred{Individual}(x) \to & \ \exists yz.\pred{Mother}(x,\!y) {\,\wedge\,} \pred{Father}(x,\!z) \tag{1}\\[-1pt]
		\pred{Mother}(x,\!y) \leftrightarrow & \ \pred{Parent}(x,\!y) {\,\wedge\,} \pred{Female}(y) \ \  \hspace{5ex} \pred{Father}(x,\!y) \leftrightarrow \pred{Parent}(x,\!y) {\,\wedge\,} \pred{Male}(y) \tag{2}\\[-1pt]
		\pred{Female}(x) \to  & \ \pred{Individual}(x) \hspace{21.3ex}  \pred{Male}(x) \to  \pred{Individual}(x) \tag{3}\\[-1pt]
		\pred{Compatible}(x,\!y) \leftrightarrow & \ \pred{Female}(x) \wedge \pred{Male}(y)   \tag{4}\\[-1pt]
		\pred{Parent}(x,\!y) \to & \ \pred{Ancestor}(x,\!y) \hspace{5.8ex} \pred{Ancestor}(x,\!y) {\,\wedge\,} \pred{Ancestor}(y,\!z) \to \pred{Ancestor}(x,\!z) \tag{5}\\[-1pt]
		& \!\hspace{-21ex} \pred{Mother}(x,\!y) {\,\wedge\,} \pred{Ancestor}(y,\!y') {\,\wedge\,} \pred{Father}(x,\!z) {\,\wedge\,} \pred{Ancestor}(z,\!z') \to \pred{CommonDescendant}(x,\!y'\!,\!z') \tag{6}\\[-1pt]
		& \!\hspace{18ex}  \pred{Risk}(x) {\,\wedge\,} \pred{Mother}(x,y) \to \pred{Risk}(y) \tag{7}\\[-1pt]
		& \hspace{-6ex} \pred{CommonDescendant}(x,\!y,\!z) {\,\wedge\,} \pred{Risk}(y) {\,\wedge\,} \pred{Risk}(z) \to \pred{Risk}(x) \tag{8}
	\end{align}}
\vspace{-4ex}
\end{example}

\pagebreak

The rest of the section is devoted to introducing the necessary preliminaries on existential rules, the new contributions commence in \Cref{sec:finitewsets}.

\medskip \noindent {\bf Existential rules and knowledge bases.}
An \emph{(existential) rule} is a first-order sentence of the  
form~$\rho = \forall{\vx\vy}. \big( \phi(\vx, \vy)
\rightarrow \exists{\vz}.\psi(\vy, \vz) \big)$,
where~$\vx$, $\vy$, and $\vz$ are mutually disjoint tuples of variables, and both 
the \emph{body}~$\phi(\vx, \vy)$ and the \emph{head}~$\psi(\vy, \vz)$ of~$\rho$  (denoted with~$\body(\rho)$ and~$\head(\rho)$, respectively) are conjunctions (possibly empty, sometimes construed as sets of their conjuncts) of atoms containing 
the indicated variables.
The \emph{frontier} $\fr(\rho)$ of $\rho$ is the set of variables $\vy$ shared between body and head. For better readability, we often omit the universal quantifiers prefixing existential rules. 
We say that a rule~$\rho$ is 
\emph{full} \iffi $\rho$ does not contain an existential quantifier%
. We call a finite set of existential rules $\ruleset$ a {\em ruleset}. Satisfaction of a rule $\rho$  (a ruleset $\ruleset$) by an instance $\instance$ according to standard FO semantics is written $\instance \models \rho$ ($\instance \models \ruleset$, respectively). Given a database $\database$ and a ruleset $\ruleset$, we define the pair $(\db,\ruleset)$ to be a \emph{knowledge base}. Then, an instance $\instance$ is a \emph{model} of the knowledge base (i.e., of $\database$ and $\ruleset$), written $\instance\models (\database,\ruleset)$, \iffi $\database \subseteq \instance$ and $\instance \models \ruleset$. 

\medskip \noindent {\bf Trigger, rule application, and Skolem chase.}
An existential rule $\rho = \phi(\vx,\vy) \rightarrow \exists \vz. \psi(\vy,\vz)$ is \emph{applicable} to an instance $\instance$ \ifandonlyif there is a homomorphism $\hism$ mapping $\phi(\vec{x},\vec{y})$ to $\instance$. We then call $(\rho, \hism)$ a \emph{trigger} of $\instance$.
The \emph{application} of a trigger $(\rho, \hism)$ in $\instance$ yields the instance $\fchain{\inst,\rho,\hism} = \instance \cup \bar\hism(\psi(\vec{y},\vec{z}))$, where $\bar\hism$ is the extension of $\hism$ that maps every variable $z$ from $\vz$ to a null denoted \smash{$z_{\rho,\hism(\vy)}$.} 
Note that $\bar\hism(\psi(\vec{y},\vec{z}))$ is then unique for a given rule and the mapping of its frontier variables, with the effect that applying $\rho$ twice with homomorphisms that agree on the frontier variables does not produce more atoms, that is,
$\fchain{\fchain{\inst,\rho,\hism},\rho,\hism'} = \fchain{\inst,\rho,\hism}$ whenever $\hism(\vy) = \hism'(\vy)$. Moreover, applications of different rules or the twofold application of the same rule with a different frontier-mapping to some instance $\instance$ are independent from each other and the order of their application is irrelevant. This allows us to define the parallel one-step application of all applicable rules as 
$$\ksat{1}{\inst,\ruleset} = 
\hspace{-9ex}\bigcup_{\hspace{9ex}\rho\in \ruleset,\,(\rho, \hism) \text{\,trigger of\,} \instance}\hspace{-9ex} 
\fchain{\inst,\rho,\hism}.$$     

To define the \emph{(breadth-first) Skolem chase sequence} we let 
$\ksat{0}{\inst,\ruleset} = \inst$ as well as $\ksat{i+1}{\inst,\ruleset} = \ksat{1}{\ksat{i}{\inst,\ruleset}, \ruleset}$, and obtain the \emph{Skolem chase} $$\sat{\inst,\ruleset} = \bigcup_{i\in \mathbb{N}} \ksat{i}{\inst,\ruleset},$$which is a universal model of $(\inst,\ruleset)$ \cite{fkmp05,marnette09}. We note that the Skolem chase of a countable instance is countable, as  is the number of overall rule applications performed to obtain it. 
In accordance with standard FO-semantics, an instance $\inst$ and ruleset $\ruleset$ \emph{entail} a BCQ $\query {\,=\,} \exists \vx .\phi(\vx)$, written $(\inst,\! \ruleset) {\,\models\,} \query$  \iffi $\phi(\vx)$ maps homomorphically into each model of $\inst$ and $\ruleset$. This coincides with  the existence of a homomorphism from $\phi(\vx)$ into any universal model of $\inst$ and $\ruleset$ (e.g., the Skolem chase $\sat{\inst, \ruleset}$).

\medskip \noindent {\bf Rewritings and finite-unification sets.}
Given a ruleset $\ruleset$ and a CQ $\query(\vy)$, we say that a UCQ $\psi(\vy)$ is a {\em rewriting} of $\query(\vy)$ under the ruleset $\ruleset$ \ifandonlyif for any database $\database$ and any tuple of its elements $\vecconst{a}$ the following holds:
$$\sat{\database,\ruleset} \models \query(\mbox{$\vecconst{a}$}) \ \ifandonlyif \ \database \models \psi(\vecconst{a}).$$%
A ruleset $\ruleset$ is a {\em finite-unification set} ($\fus$) \iffi for every CQ, there exists a UCQ rewriting~\cite{BagLecMugSal11}, in which case it is also always possible to compute it \cite{fusalgo}. This property is also referred to as \emph{first-order rewritability}.
If a ruleset $\ruleset$ is $\fus$, then for any given CQ $\query(\vy)$, we fix one of its rewritings under $\ruleset$ and denote it with $\rewrs{\ruleset}{\query(\vy)}$.

%

\section{Abstracting from Databases: Finite w-Sets}\label{sec:finitewsets}

When investigating existential rules, one is commonly interested in semantic properties of rulesets that hold independently of the associated database. This motivates the following definition.

\begin{definition}[finite-$\fcn{w}$ set]\label{def:finitewset}
	A ruleset $\ruleset$ is called a \emph{finite-$\fcn{w}$ set} \iffi for every database $\database$, there exists a $\fcn{w}$-finite universal model $\instance^*$ of $(\db,\ruleset)$.\defend
\end{definition}

Various previously defined classes of rulesets with decidable querying can be seen as instances of this framework: \emph{finite-expansion sets} ($\fes$) are obtained by using the $\fcn{expansion}$ function returning an instance's domain size as a width measure; \emph{finite-treewidth sets} ($\bts$) are obtained when using treewidth;\footnote{In the literature, finite-treewidth sets are also often referred to as ``bounded treewidth sets'' and denoted \textbf{bts}. Beyond being misleading in this case, the usage of the \textbf{bts} notion is also ambiguous; it may also refer to structural properties of a specific type of monotonic chase.} our more recently proposed \emph{finite-cliquewidth sets}~($\fcs$) are the outcome of employing cliquewidth \cite{ICDT2023}. Due to the established relationships between the width measures, we immediately obtain the following result regarding the novel class of \emph{finite-\ercwidth{} sets} ($\fhcs$).

\begin{corollary}
$\fhcs$ subsumes all of $\fes$, $\bts$, and $\fcs$.
\end{corollary}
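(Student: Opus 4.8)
The plan is to reduce the corollary to a single structural observation about the \emph{generalizes} relation of \Cref{def:MSOfriendly} and then transport it to the level of finite-$\fcn{w}$ sets. First I would isolate the following generic monotonicity principle: if a width measure $\fcn{w}$ generalizes a width measure $\widetilde{\fcn{w}}$, then every finite-$\widetilde{\fcn{w}}$ set is a finite-$\fcn{w}$ set. To establish it, fix a database $\database$; finite-$\widetilde{\fcn{w}}$-set-hood of $\ruleset$ supplies a universal model $\instance^*$ of $(\db,\ruleset)$ with $\widetilde{\fcn{w}}(\instance^*)\in\mathbb{N}$. Since $\fcn{w}$ generalizes $\widetilde{\fcn{w}}$ means $\fcn{w}^{-1}(\infty)\subseteq\widetilde{\fcn{w}}^{-1}(\infty)$, reading this contrapositively yields $\widetilde{\fcn{w}}(\instance^*)\in\mathbb{N}\Rightarrow\fcn{w}(\instance^*)\in\mathbb{N}$. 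Hence the very same $\instance^*$ witnesses that $\ruleset$ is a finite-$\fcn{w}$ set, with no re-construction of models required.

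Given this principle, it remains only to check that partitionwidth generalizes each of $\fcn{expansion}$, $\fcn{tw}$, and $\fcn{cw}$. For treewidth and cliquewidth this is precisely the content of \Cref{thm:tree-width-implies-clique-width}: finite treewidth (resp. finite cliquewidth) implies finite partitionwidth, and reading these implications contrapositively gives $\fcn{pw}^{-1}(\infty)\subseteq\fcn{tw}^{-1}(\infty)$ and $\fcn{pw}^{-1}(\infty)\subseteq\fcn{cw}^{-1}(\infty)$, i.e.\ partitionwidth generalizes both. For expansion I would invoke the earlier remark that both $\fcn{tw}$ and $\fcn{cw}$ generalize $\fcn{expansion}$, together with transitivity of the generalizes relation (immediate, as it is just inclusion among $\infty$-preimages): from $\fcn{pw}$ generalizing $\fcn{tw}$ and $\fcn{tw}$ generalizing $\fcn{expansion}$ it follows that $\fcn{pw}$ generalizes $\fcn{expansion}$. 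Applying the monotonicity principle three times then yields $\fes\subseteq\fhcs$, $\bts\subseteq\fhcs$, and $\fcs\subseteq\fhcs$, which is the claim.

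The main obstacle is not located in this corollary at all -- it is essentially bookkeeping once the generalizes relation is in hand -- but rather lodged entirely in \Cref{thm:tree-width-implies-clique-width}, whose proof (routed through tree-interpretability and the characterization of \Cref{thm:pw-coincides-mso-interpretability-trees}) carries the real technical weight. The one subtlety worth making explicit in the write-up is that finite-$\fcn{w}$-set-hood only asserts the \emph{existence} of one finite-width universal model per database; the argument must therefore keep that witnessing model fixed and merely change the measure applied to it, which is exactly what the monotonicity principle does.
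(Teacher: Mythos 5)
Your proof is correct and follows essentially the same route as the paper, which treats the corollary as an immediate consequence of \Cref{thm:tree-width-implies-clique-width} together with the facts that $\fcn{tw}$ and $\fcn{cw}$ generalize $\fcn{expansion}$, applied to the witnessing universal model of each database. Your explicit "monotonicity principle" is just a clean formalization of the step the paper leaves implicit ("due to the established relationships between the width measures"), and your handling of the generalizes relation (including its transitivity and the contrapositive reading of \Cref{thm:tree-width-implies-clique-width}) is accurate.
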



By virtue of entirely subsuming $\bts$ (\cref{thm:tree-width-implies-clique-width}), $\fhcs$ contains all fragments of the guarded family, ranging from the most basic (guarded rules) to the most elaborate ones (such as glut-frontier-guarded rules) \cite{BagLecMugSal11,CaliGK13,KrotzschR11}, as well as the various $\fes$ rule classes based on acyclicity \cite{GHKKMMW2013}. 
By virtue of subsuming $\fcs$, $\fhcs$ inherits the corresponding decidable subclasses~\cite{ICDT2023}; in particular, any $\fus$ ruleset consisting of single-headed rules over a binary signature is $\hcs$.
While for all of these, decidability of querying was previously only established for query formalisms of significantly lower expressivity (though sometimes beyond CQs and with precise complexities \cite{ijcai2017-110}), \Cref{cor:dechomquerying} implies that decidability also holds for more expressive queries like NEMODEQs and MDDlog queries. Yet, the novel $\fhcs$ notion also encompasses a great variety of rulesets which aren't contained in any of the previously established classes.

\begin{example}
As an example, consider again the ruleset $\ruleset_\mathrm{bio}$ from \Cref{ex:Rbio}. $\ruleset_\mathrm{bio}$ does not fall under any of the previously known decidable classes: even with the simple database $\database_\const{a}$ just containing the fact $\pred{Individual}(\const{a})$, any universal model of $(\database_\const{a},\ruleset_\mathrm{bio})$ will have an infinite domain (due to indefinite ``value invention'' through the rules (1)--(4)), infinite treewidth (due to the transitivity of the $\pred{Ancestor}$ relation along infinite $\pred{Parent}$ chains), and infinite cliquewidth (due to the ternary $\pred{CommonDescendent}$ relation, which cannot be created by a unary coloring scheme).
$\ruleset_\mathrm{bio}$ is also not $\fus$: the UCQ rewriting of the CQ $\pred{Ancestor}(x,y)$ under $\ruleset_\mathrm{bio}$ would have to contain an infinite number of disjuncts: $\exists z.\pred{Ancestor}(x,z) \wedge \pred{Ancestor}(z,y)$ as well as $\exists z,z'.\pred{Ancestor}(x,z) \wedge \pred{Ancestor}(z,z')\wedge \pred{Ancestor}(z',y)$ and so forth.
Yet,
the results obtained in the next sections will enable us to show that $\ruleset_\mathrm{bio}$ is indeed $\fhcs$ and thus allows for decidable expressive querying, as will be discussed in \Cref{ex:Rbio-good}.
\end{example}

It should be noted that, similar to the general case discussed in the first part of the paper, no general complexity bounds exist for any of the width-based existential rules classes: already the special case of plain CQ entailment checking over $\fes$ rulesets is known to exhibit arbitrarily high data complexities \cite{BourgauxCKRT21}. 
Moreover, it is undecidable to determine if a given ruleset is in $\fes$, $\bts$, $\fcs$, or $\fhcs$, as can be shown via slight modifications of a known argument for $\bts$  \cite{BagLecMugSal11} \remarkstart employing a reduction from CQ entailment for unrestricted existential rules\color{black}.


\section{Extending $\fhcs$ through Stratification}\label{sec:strat}

For analyzing rulesets toward showing closure properties and decidable CQ entailment, the idea of stratification based on rule dependencies has proved helpful. 
We use a notion of rule dependency going back to Baget~\cite{Bag04} and Baget et al.~\cite{BagLecMugSal11}.   

\begin{definition}[rule dependencies, cuts] Given rules $\rho$ and $\rho'$, we say that $\rho'$ \emph{depends} on $\rho$, written $\triggers{\rho}{\rho'}$, \iffi there exists an instance $\instance$ such that 
\begin{itemize}
    \item $\rho$ is applicable to $\instance$ via a homomorphism $h$, 
    \item $\rho'$ is applicable to $\chase(\instance,\rho,h)$ via a homomorphism $h'$, and
    \item $\rho'$ is not applicable to $\instance$ via $h'$.
\end{itemize}
If a rule $\rho'$ does not depend on a rule $\rho$, we write $\ntriggers{\rho}{\rho'}$. We define an \emph{$n$-cut} to be a partition $\{\ruleset_{1}, \ldots, \ruleset_{n}\}$ of a ruleset $\ruleset$ such that for every rule $\rho \in \ruleset_{i}$ and $\rho' \in \ruleset_{j}$, $\ntriggers{\rho'}{\rho}$, for $1 \leq i < j \leq n$. We use the notation $\ruleset = \ruleset_{1} \cut \cdots \cut \ruleset_{n}$ to denote that $\{\ruleset_{1}, \ldots, \ruleset_{n}\}$ is an $n$-cut of $\ruleset$.
\remarkstart For classes of rulesets $\mathbf{class}$ and $\mathbf{class}'$  we let $\mathbf{class} \cut \mathbf{class}'$ denote the class containing all rulesets representable as $\ruleset \cut \ruleset'$ where $\ruleset$ is in $\mathbf{class}$ while  $\ruleset'$ is in $\mathbf{class}'$.\color{black}\defend
\end{definition}

Intuitively, $\triggers{\rho}{\rho'}$ means that applying $\rho$ to some instance may create a new trigger for $\rho'$ and thus make $\rho'$ applicable via a new homomorphism.

%
It was shown that \mbox{$\fes \cut \fes = \fes$}, \mbox{$\fes \cut \bts = \bts$}, \mbox{$\fus \cut \fus = \fus$}, and also that $\bts \cut \fus$ exhibits decidable CQ entailment \cite{BagLecMugSal11}. %
%
The following easy lemma comes in handy when working with cuts (see the appendix for the proof). 

\begin{lemma}\label{lem:n-cut-layered-chase}
For any ruleset $\ruleset = \ruleset_{1} \cut \cdots \cut \ruleset_{n}$ and any database $\db$, the following holds:
$$
\sa(\db,\ruleset) = \sa(\ldots\sa(\sa(\db,\ruleset_{1}),\ruleset_{2}) \ldots, \ruleset_{n}).
$$
\end{lemma}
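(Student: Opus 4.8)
The plan is to induct on $n$, reducing everything to the two-layer case. Throughout I would exploit that the Skolem chase is \emph{deterministic}: the null $z_{\rho,h(\vy)}$ introduced by a trigger depends only on the rule and the image of its frontier, not on the ambient ruleset. Hence all instances below can be compared as genuine sets of atoms (not merely up to isomorphism), and $\sa(\db,\mathcal{S}) = \bigcup_i \ksat{i}{\db,\mathcal{S}}$ is literally the \emph{least} set containing $\db$ that is closed under Skolem application of the rules in $\mathcal{S}$, i.e. the least fixpoint of the monotone operator $X \mapsto X \cup \ksat{1}{X,\mathcal{S}}$. For the induction, writing $\ruleset' = \ruleset_1 \cup \cdots \cup \ruleset_{n-1}$, I would first check that $\ruleset = \ruleset' \cut \ruleset_n$ is a $2$-cut (for $\rho \in \ruleset_i$ with $i<n$ and $\rho'\in\ruleset_n$ the hypothesis gives $\ntriggers{\rho'}{\rho}$) and that $\ruleset_1 \cut \cdots \cut \ruleset_{n-1}$ is an $(n{-}1)$-cut of $\ruleset'$. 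The two-layer case then yields $\sa(\db,\ruleset)=\sa(\sa(\db,\ruleset'),\ruleset_n)$, and the induction hypothesis rewrites $\sa(\db,\ruleset')$ into the desired nested expression.

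It remains to prove: if $\ruleset = \mathcal{S}_1 \cut \mathcal{S}_2$ then $\sa(\db,\ruleset)=\sa(\sa(\db,\mathcal{S}_1),\mathcal{S}_2)$. Put $A = \sa(\db,\ruleset)$, $B_1 = \sa(\db,\mathcal{S}_1)$, and $B = \sa(B_1,\mathcal{S}_2)$. The inclusion $B \subseteq A$ is the easy direction and does \emph{not} use the cut condition: since $\mathcal{S}_1 \subseteq \ruleset$, monotonicity of the chase in its ruleset gives $B_1 \subseteq A$; since $A$ is a fixpoint of $\ksat{1}{\cdot,\ruleset}$ it is in particular closed under Skolem $\mathcal{S}_2$-application; and as $B$ is the least $\mathcal{S}_2$-closed superset of $B_1$, it is contained in the $\mathcal{S}_2$-closed superset $A$.

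For the reverse inclusion $A \subseteq B$ I would show that $B$ contains $\db$ and is closed under Skolem application of \emph{all} of $\ruleset$; since $A$ is the least such set, $A \subseteq B$ follows. Closure under $\mathcal{S}_2$ is immediate, as $B$ is by definition the $\mathcal{S}_2$-chase of $B_1$. The whole weight sits in closure under $\mathcal{S}_1$, for which I would prove the key claim that \emph{every $\mathcal{S}_1$-trigger of $B$ is already a trigger of $B_1$}. Granting this, if $(\rho,h)$ is an $\mathcal{S}_1$-trigger of $B$ it is a trigger of $B_1$, and since $B_1 = \sa(\db,\mathcal{S}_1)$ is $\mathcal{S}_1$-closed we get $\bar h(\head(\rho)) \subseteq B_1 \subseteq B$, as required. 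To prove the key claim I would serialise the $\mathcal{S}_2$-phase: $B = \bigcup_m \ksat{m}{B_1,\mathcal{S}_2}$, and since within a round the individual trigger applications are order-irrelevant (as recorded in the preliminaries), $B$ arises from $B_1$ by a sequence of single $\mathcal{S}_2$-applications. Given an $\mathcal{S}_1$-trigger $(\rho,h)$ of $B$, finiteness of $\body(\rho)$ places $h(\body(\rho))$ in some $\ksat{m}{B_1,\mathcal{S}_2}$; taking the first single application, say of $(\rho',h')$ to an intermediate instance $\instance$, after which $h(\body(\rho))$ is fully present, yields a situation where $(\rho,h)$ is a trigger of $\chase(\instance,\rho',h')$ but not of $\instance$. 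This witnesses $\triggers{\rho'}{\rho}$ with $\rho'\in\mathcal{S}_2$ and $\rho\in\mathcal{S}_1$, contradicting the cut condition $\ntriggers{\rho'}{\rho}$; hence no such round exists, $h(\body(\rho)) \subseteq B_1$, and $(\rho,h)$ is already a trigger of $B_1$.

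The step I expect to be the main obstacle is this last one: matching the freshly created trigger to the precise wording of the dependency definition, whose first clause demands that $\rho$ be inapplicable to the witnessing instance \emph{via any} homomorphism, not merely via the particular $h$. To bridge this I would pass to a minimal witnessing sub-instance, essentially $h'(\body(\rho')) \cup \big(h(\body(\rho)) \cap \instance\big)$, on which applying $(\rho',h')$ completes $h(\body(\rho))$, and argue that after this localisation $\rho$ is genuinely inapplicable before the $\mathcal{S}_2$-step. This minimal-witness construction is standard for the rule-dependency relation of Baget et al., but it is the delicate point that must be handled with care to make the contradiction exact.
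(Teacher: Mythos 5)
Your overall strategy coincides with the paper's: induction on $n$ reducing everything to the two-layer case (with the same decomposition $\ruleset' = \ruleset_1 \cup \cdots \cup \ruleset_{n-1}$), the inclusion $\sa(\sa(\db,\ruleset_1),\ruleset_2) \subseteq \sa(\db,\ruleset)$ handled by monotonicity, and the converse inclusion carrying all the weight. The difference is one of rigor: where you serialise the $\mathcal{S}_2$-phase, locate the first single application completing $h(\body(\rho))$, and derive a contradiction with the cut condition, the paper compresses the entire hard direction into a single assertion -- if the trigger introducing the atom uses a rule of $\ruleset_1$, then ``since no rule of $\ruleset_1$ depends on a rule in $\ruleset_2$'' the atom already lies in $\sa(\db,\ruleset_1)$ -- i.e., the paper asserts exactly what you set out to prove. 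So your proposal supplies content that the paper's proof leaves implicit.

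However, the step you flag as delicate is not merely delicate: under the paper's literal definition of dependency it cannot be carried out, because the lemma itself fails under that reading. The definition demands a witness instance to which $\rho$ is inapplicable \emph{via any homomorphism}. Consider $\rho = \pred{P}(x) \to \pred{Q}(x)$ in $\ruleset_1$ and $\rho' = \pred{P}(x) \to \exists z.\big(\pred{R}(x,z) \wedge \pred{P}(z)\big)$ in $\ruleset_2$. The two bodies coincide, so no instance whatsoever can make $\rho'$ applicable while $\rho$ is inapplicable via every homomorphism; hence $\ntriggers{\rho'}{\rho}$ holds and $\{\ruleset_1,\ruleset_2\}$ is a $2$-cut under the literal reading. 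Yet for $\db = \{\pred{P}(\const{a})\}$, the chase $\sa(\db,\ruleset_1 \cup \ruleset_2)$ contains $\pred{Q}$ on every chase null, whereas $\sa(\sa(\db,\ruleset_1),\ruleset_2)$ contains only $\pred{Q}(\const{a})$. Your minimal-witness localisation collides with precisely this phenomenon: in this example the localised instance is $h'(\body(\rho')) = \{\pred{P}(\const{a})\}$, into which $\body(\rho)$ maps, and no further shrinking can help. The resolution is that the intended dependency notion (that of Baget et al., whom the paper cites) does not require global inapplicability of $\rho$ before the step, but only that the homomorphism witnessing applicability afterwards not map $\body(\rho)$ into the pre-application instance -- which is exactly the situation your serialisation produces, with no localisation needed. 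So under the intended definition your proof is complete as written and the final ``delicate'' step is superfluous; under the paper's literal definition no proof exists, and the paper's own one-sentence justification is afflicted by the same defect.
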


Toward a more principled way to establish $\fhcs$ness for complex rulesets, we now define categories of rulesets, whose chase causes the instance modifications described in  \cref{sec:preservation}. 
Then, in view of \Cref{prop:unary-replacements} and \Cref{thm:closureproperties}, it is immediate that the $\fhcs$ class can integrate ``layers'' of these introduced types of rulesets.  

\begin{definition}
A ruleset $\ruleset$ is called a
\begin{itemize}
\item 
\emph{unary replacement set} ($\urs$) iff  for all instances $\inst$, $\sa(\instance,\ruleset)$ can be obtained from $\instance$ via unary replacements; 
\item 
\emph{2RPE-definition set} ($\reds$) iff  for all instances $\inst$, $\sa(\instance,\ruleset)$ can be obtained from $\instance$ via 2RPE-extensions;
\item 
\emph{FO-definition set} ($\fods$) iff  for all instances $\inst$, $\sa(\instance,\ruleset)$ can be obtained from $\instance$ via FO-extensions. 
\defend
\end{itemize}
\end{definition}

\begin{corollary}\label{prop:nicerelations}
The following relationships hold:
$\fhcs \cut \urs = \fhcs$;
$\fhcs \cut \reds = \fhcs$;
$\fhcs \cut \fods = \fhcs$. 
\end{corollary}

Any set of full rules where all rule heads contain only unary predicates is  $\urs$.
The $\reds$ class includes transitivity rules of the form $\ppred(x,y) \wedge \ppred(y,z) \to \ppred(x,z)$, which have been known as notoriously difficult to include into decidable rule classes \cite{BagetBMR15,GanzingerMV99,KieronskiR21,SzwastT04}.
More complex cases of $\reds$ can be derived from regular grammars and have in fact been described and used under the name \emph{(regular) RBoxes} in expressive description logics \cite{SROIQ}.
An easy example for $\fods$ are concept products \cite{BourhisMP17,RudolphKH08} like $\pred{A}(x) \wedge \pred{B}(y) \to \ppred(x,y)$, but $\fods$ goes far beyond that: most notably, it contains any full ruleset that is $\fus$.\footnote{In the literature, the class of full $\fus$ rulesets is also referred to as \emph{bounded datalog}, but we will refrain from that naming to avoid confusion with datalog queries, which are second-order sentences.}

\begin{example}\label{ex:Rbio-good}
At this point, let us return to our example ruleset $\ruleset_\mathrm{bio}$ from \Cref{ex:Rbio}: Rules (1)–(4) are a union of frontier-one (1) and full $\fus$ (2)-(4) and thus $\hcs$, by the forthcoming \cref{prop:frone-fusddl}. The two rules under (5) constitute a $\reds$ ruleset stratified on top of (1)-(4). Rule (6) adds another stratum which is $\fods$, while rules (7) and (8) add a $\urs$ stratum. Thus, we obtain a $\hcs \cut \reds \cut \fods \cut \urs$ ruleset which is $\hcs$ by \cref{prop:nicerelations} and thus allows for decidable HUSOMSOQ entailment.  
\end{example}

\section{The Case of First-Order Rewritability}\label{sec:FO-rewritability}

In this section, we will observe that the prospects of accommodating $\fus$ into the universal-model-based finite-width decidability framework are limited. In fact, already a simple concrete fragment of $\fus$ exhibits undecidable HUSOMSOQ entailment. Within these factual boundaries, there is still a lot to be achieved depending on one's preferences:
\begin{itemize}
\item  
As just discussed, many existential-quantifier-free rulesets -- and in particular \emph{any} that is $\fus$ and full -- can be stratified on top of $\fhcs$ rulesets
in a way that preserves $\fhcs$ness (and thus decidability of HUSOMSOQ entailment). Sometimes even a tighter integration is possible.
\item  
If one is willing to settle for decidability of CQ rather than full-on HUSOMSOQ entailment, one might resort to the very comprehensive class $\fhcs\cut\fus$ that properly subsumes both $\fhcs$ and $\fus$.
\item
On a more speculative note, there is justified hope that all of $\fus$ can be captured by a width-based machinery that does not rely on universal models. For some popular (and otherwise defiant) concrete subclass of $\fus$ we can corroborate this hope. 
\end{itemize}

\subsection{The Price of Expressive Querying}\label{sec:prize}
%
In prior work~\cite{ICDT2023}, we identified a $\fus$ ruleset $\ruleset_\mathrm{grid}$  whose chase creates an infinite grid, which immediately implies undecidability of entailment of monadic datalog queries (which
are subsumed by HUSOMSOQs). By contraposition, $\ruleset_\mathrm{grid}$ cannot be $\fhcs$, nor can it be a finite-$\fcn{w}$ set for \emph{any} MSO-friendly width measure $\fcn{w}$.
As $\ruleset_\mathrm{grid}$ was constructed ad hoc and is not contained in any of the known ``concrete'' $\fus$ classes, we were hopeful that good width properties could be established for most of these.
However, we found that the simple ruleset $\ruleset'_{\smash{\mathrm{grid}}}$ consisting of the rules
$$
\pred{A}(x) \to \exists y. \pred{E}(x,\!y)
\qquad\qquad
\pred{E}(x,\!y) \to \pred{A}(y)
\qquad\qquad
\pred{A}(x) \wedge \pred{A}(y) \to \exists z. \pred{R}(x,\!y,\!z)
$$%
falls in the rather basic $\fus$ class of \emph{sticky rules} \cite{CaliGP12}, while it still shares the mentioned unpleasant properties with $\ruleset_\mathrm{grid}$. That is, it also creates a grid-like structure albeit in a slightly more covert way. Note that $\rpred$ can be conceived as a ``pairing function'': observing that  $\sa(\{\pred{A}(0)\},\ruleset'_{\smash{\mathrm{grid}}})$ is isomorphic to 
$$\inst'_\mathrm{grid} = \{ \pred{A}(n), \pred{E}(n, n{+}1), \rpred(n,m,\langle n,\!m \rangle) \mid n,m \in \mathbb{N} \},$$%
$\rpred$'s third entries represent grid points; horizontal neighbours are accessible via
{$\varphi_\pred{H}(z,\!z')=\rpred(x,\!y,\!z) {\,\wedge\,} \pred{E}(x,\!x') {\,\wedge\,} \rpred(x'\!,\!y,\!z')$}, vertical ones via
$\varphi_\pred{V}(z,\!z')=\rpred(x,\!y,\!z) {\,\wedge\,} \pred{E}(y,\!y') {\,\wedge\,} \rpred(x,\!y'\!,\!z')$. We conclude that entailment of monadic datalog queries over sticky rules is undecidable and that sticky rules cannot be captured by any notion of finite-$\fcn{w}$ set for any MSO-friendly $\fcn{w}$.   

\subsection{A Novel $\fps$ Class}
%
%
We already saw that, as opposed to $\fus$ in general, the existential-quantifier-free fragment of $\fus$ shows better compatibility with $\fhcs$. Even beyond stratification, it allows us to define a novel $\fhcs$ fragment, for which decidability of querying does not follow from any of the previously established approaches (not even when restricting to CQs). We recall that a rule $\rho$ is \emph{frontier-one} if $|\fr(\rho)|=1$. 

\begin{proposition}\label{prop:frone-fusddl}
	Let $\ruleset = \ruleset_{\smash{\mathrm{fr1}}} \cup \ruleset_{\smash{\mathrm{ff}}}$ be a ruleset where $\ruleset_{\smash{\mathrm{fr1}}}$ is frontier-one and $\ruleset_{\smash{\mathrm{ff}}}$ is full and $\fus$. Then $\ruleset$ is $\fhcs$ and thus allows for decidable HUSOMSOQ entailment.  
\end{proposition}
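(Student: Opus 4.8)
The plan is to establish the substantive claim---that $\ruleset$ is $\fhcs$---and then obtain decidable HUSOMSOQ entailment exactly as for every $\fhcs$ ruleset, by combining finite \partitionwidth{} of a universal model with \Cref{thm:univmodelset,thm:decidablequerying} and the MSO-friendliness of \partitionwidth{}. So fix an arbitrary database $\db$ and consider the Skolem chase $\instance^* = \sat{\db,\ruleset}$, which is a universal model of $(\db,\ruleset)$; it remains to show $\pw{\instance^*} \in \mathbb{N}$.

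The naive idea of stratifying $\ruleset_{\smash{\mathrm{fr1}}}$ and $\ruleset_{\smash{\mathrm{bdl}}}$ into a cut fails, since the datalog rules may create triggers for the frontier-one rules and vice versa, so the two layers are genuinely interleaved in the chase. The key device to circumvent this is to \emph{defer all datalog to the very end}. Let $B$ consist of the database atoms together with every atom $\bar h(\head(\rho))$ arising from a frontier-one trigger $(\rho,h)$ of $\instance^*$ with $\rho \in \ruleset_{\smash{\mathrm{fr1}}}$ (this is well defined and order-independent, as each null carries a unique Skolem term). I claim $\instance^* = \sat{B,\ruleset_{\smash{\mathrm{bdl}}}}$. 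The inclusion $\supseteq$ is immediate since $B \subseteq \instance^*$ and $\instance^*$ is closed under $\ruleset_{\smash{\mathrm{bdl}}} \subseteq \ruleset$. For $\subseteq$, every atom of $\instance^*$ is either a database or frontier-one atom (hence already in $B$) or was first produced by a datalog rule; a straightforward induction on the chase stage shows the latter lie in $\sat{B,\ruleset_{\smash{\mathrm{bdl}}}}$, because the body atoms used in their derivation are earlier atoms and $\sat{B,\ruleset_{\smash{\mathrm{bdl}}}}$ is $\ruleset_{\smash{\mathrm{bdl}}}$-closed.

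With this decomposition the two remaining steps are easy. First, $B$ has finite treewidth: because $\ruleset_{\smash{\mathrm{fr1}}}$ is frontier-one, each frontier-one head $\bar h(\head(\rho))$ attaches to the already-built structure only through the single frontier image $h(\fr(\rho))$ (together with the fixed finite set of constants occurring in rule heads). Organizing the nulls into the forest in which the parent of a null is the frontier element of its creating application yields a tree decomposition whose bags are $\adom{\db}$ at the root and one bag $\{h(\fr(\rho))\} \cup \bar h(\vec z)$ per frontier-one application, each augmented by the finitely many head constants; its width is bounded by $\max(|\adom{\db}|,\, 1+c) + d$, with $c$ the maximal number of head existentials and $d$ the number of head constants, hence finite. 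By \Cref{thm:tree-width-implies-clique-width}, $\pw{B} \in \mathbb{N}$. Second, since $\ruleset_{\smash{\mathrm{bdl}}}$ is bounded datalog it is an $\fods$ ruleset (its derived atoms are FO-definable over any instance, the UCQ rewriting lifting from databases to arbitrary instances because datalog derivations are finite), so $\sat{B,\ruleset_{\smash{\mathrm{bdl}}}}$ is obtained from $B$ by finitely many FO-extensions. \Cref{thm:closureproperties} guarantees that these preserve finiteness of \partitionwidth{}, whence $\pw{\instance^*} = \pw{\sat{B,\ruleset_{\smash{\mathrm{bdl}}}}} \in \mathbb{N}$. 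As $\db$ was arbitrary, $\ruleset \in \fhcs$.

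The main obstacle is precisely the interleaving handled by the decomposition $\instance^* = \sat{B,\ruleset_{\smash{\mathrm{bdl}}}}$: one must argue that, although frontier-one and datalog derivations are mutually dependent during the chase, the final model can nonetheless be reconstructed by first collecting the tree-shaped frontier-one skeleton $B$ and only afterwards saturating under the datalog rules. Everything else reduces to the already-available closure of \partitionwidth{} under FO-extensions and the standard tree-decomposition argument for frontier-one rules.
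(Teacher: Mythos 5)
Your proposal is correct and takes essentially the same approach as the paper: your frontier-one skeleton $B$ coincides with the paper's $\chase_{\smash{\mathrm{fr1}}}(\db,\ruleset)$ (the Skolem chase with the datalog-introduced atoms removed), its finite treewidth yields finite \partitionwidth{} via \Cref{thm:tree-width-implies-clique-width}, and the datalog layer is then absorbed as finitely many FO-extensions (obtained from the $\fus$ rewriting), whose preservation of finite \partitionwidth{} (\Cref{thm:closureproperties}) concludes the argument exactly as in the paper. The only difference is organizational: you first establish $\instance^* = \sat{B,\ruleset_{\smash{\mathrm{bdl}}}}$ for the original datalog rules and then invoke that bounded datalog is $\fods$, whereas the paper directly proves the one-step equation $\sat{\db,\ruleset} = \ksat{1}{\chase_{\smash{\mathrm{fr1}}}(\db,\ruleset),\ruleset_{\smash{\mathrm{bdl}}}'}$ for the rewritten single-headed ruleset $\ruleset_{\smash{\mathrm{bdl}}}'$.
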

The detailed proof can be found in the appendix. Note that the interaction between $\ruleset_{\smash{\mathrm{fr1}}}$ and $\ruleset_{\smash{\mathrm{ff}}}$ is not restricted in any way; no stratification is required in this case.
%
\Cref{prop:frone-fusddl} ceases to hold, however, when considering arbitrary unions of a \emph{guarded} and a full $\fus$ rulesets: note that replacing the third rule in $\ruleset'_{\smash{\mathrm{grid}}}$ 
by the two rules 
$$ 
\pred{A}(x) {\,\wedge\,} \pred{A}(y) {\,\to\,} \pred{P}(x,\!y)
\qquad\qquad\qquad 
\pred{P}(x,\!y) {\,\to\,} \exists z. \pred{R}(x,\!y,\!z)
$$%
yields a ruleset of the latter type for which HUSOMSOQ entailment is undecidable for the same reasons as for $\ruleset'_{\smash{\mathrm{grid}}}$.

\subsection{A Broad Class for Decidable CQ Entailment}
%
We close this section by noting that the rule class $\fhcs \cut \fus$ exhibits decidable CQ entailment.

\begin{proposition}\label{thm:fcs-fus-decidable}
	Let $\ruleset = \ruleset_\mathrm{fps} \cut \ruleset_\mathrm{fus}$ where $\ruleset_\mathrm{fps}$ is $\fhcs$ and $\ruleset_\mathrm{fus}$ is $\fus$. Then the query entailment problem $(\db,\ruleset) \models \query$ for databases~$\db$, and (Boolean) CQs $\query$ is decidable.
\end{proposition}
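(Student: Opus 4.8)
The plan is to peel the cut apart, discharging $\ruleset_\mathrm{fus}$ by first-order rewriting and $\ruleset_\mathrm{fps}$ by the finite-\partitionwidth{} framework. First I would reduce to the chase: since the Skolem chase is a universal model, $(\db,\ruleset)\models\query$ holds \iffi $\sa(\db,\ruleset)\models\query$, and because $\ruleset=\ruleset_\mathrm{fps}\cut\ruleset_\mathrm{fus}$, \Cref{lem:n-cut-layered-chase} lets me factor the chase as
$$\sa(\db,\ruleset)=\sa(\sa(\db,\ruleset_\mathrm{fps}),\ruleset_\mathrm{fus}).$$
Writing $\instance^\ast=\sa(\db,\ruleset_\mathrm{fps})$ --- which is a universal model of $(\db,\ruleset_\mathrm{fps})$, though in general \emph{not} one of finite \partitionwidth{} --- the task becomes deciding $\sa(\instance^\ast,\ruleset_\mathrm{fus})\models\query$.

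Next I would invoke first-order rewritability of $\ruleset_\mathrm{fus}$: let $\psi=\rewrs{\ruleset_\mathrm{fus}}{\query}$ be the (computable) union of Boolean CQs rewriting $\query$. Its defining property is stated only for finite databases, so the crucial step is to lift it to the infinite, null-containing instance $\instance^\ast$, i.e.\ to show that $\sa(\instance^\ast,\ruleset_\mathrm{fus})\models\query$ \iffi $\instance^\ast\models\psi$. I would argue this by locality: any homomorphism witnessing $\sa(\instance^\ast,\ruleset_\mathrm{fus})\models\query$ has finite image, and each image atom is produced by a finite chase derivation rooted in finitely many atoms of $\instance^\ast$; gathering these yields a finite $\instance_0\subseteq\instance^\ast$ with $\sa(\instance_0,\ruleset_\mathrm{fus})\models\query$. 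Relabelling the nulls of $\instance_0$ as fresh constants not occurring in $\query$ turns $\instance_0$ into a database and leaves its chase isomorphic, so the database-level rewriting property applies and gives $\instance_0\models\psi$, whence $\instance^\ast\models\psi$ by monotonicity of UCQs; the converse direction uses monotonicity of the chase symmetrically.

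Then I would exploit universality. As $\instance^\ast$ is universal for $(\db,\ruleset_\mathrm{fps})$ and $\psi$ is preserved under homomorphisms, $\instance^\ast\models\psi$ \iffi $(\db,\ruleset_\mathrm{fps})\models\psi$. Chaining all equivalences reduces the original problem to deciding the UCQ entailment $(\db,\ruleset_\mathrm{fps})\models\psi$ over an $\fhcs$ ruleset. For this final step the generic machinery applies directly: $\fhcs$ness supplies, for every database, a universal (hence finitely universal) model set consisting of a single \partitionwidth{}-finite structure, so by \Cref{thm:univmodelset} the specification is $\fcn{pw}$-controllable for the homomorphism-closed query class of UCQs. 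Since $\fcn{pw}$ is MSO-friendly (\Cref{thm:cliquewidth-k-MSO-decidable}) and UCQs are effectively MSO-expressible, \Cref{thm:decidablequerying} (instantiated with $\mathfrak{L}=\mathrm{MSO}$) yields decidability of $(\db,\ruleset_\mathrm{fps})\models\psi$. As $\psi$ is computable from $\query$, the whole procedure is effective.

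The hard part will be the lifting of the rewriting equivalence from finite databases to the infinite instance $\instance^\ast$: this is where the compactness/locality argument is needed, together with careful handling of the null-versus-constant distinction under relabelling (ensuring that both the chase and the satisfaction of $\query$ and $\psi$ are insensitive to it). Everything else is bookkeeping layered on top of \Cref{lem:n-cut-layered-chase}, \Cref{thm:univmodelset}, and \Cref{thm:decidablequerying}.
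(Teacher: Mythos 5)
Your proposal is correct and follows essentially the same route as the paper's own proof: factor the chase via \Cref{lem:n-cut-layered-chase}, replace $\query$ by the UCQ rewriting $\rewrs{\ruleset_\mathrm{fus}}{\query}$, and decide the resulting UCQ entailment over the $\fhcs$ ruleset via \Cref{thm:univmodelset} and \Cref{thm:decidablequerying}. If anything, you are more careful than the paper, which applies the database-level rewriting property directly to the infinite instance $\sa(\db,\ruleset_\mathrm{fps})$ without comment; your compactness/locality lifting of the rewriting equivalence, together with the explicit universality step reducing model checking on the infinite chase to knowledge-base entailment, fills in exactly the steps the paper's proof leaves implicit.
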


\begin{proof} The claim follows directly from the insight that the stratification allows us to reduce  $(\db,\ruleset_\mathrm{fps} \cut \ruleset_\mathrm{fus}) \models \query$ to
$(\db,\ruleset_\mathrm{fps}) \models \rewrs{\ruleset_\mathrm{fus}}{\query}$.
\end{proof}

Observe that $\fhcs \cut \fus$ fully subsumes both $\fhcs$ and $\fus$ without arity restrictions and hence contains the vast majority of the known rules classes for which decidability of CQ entailment has been hitherto established. Unlike for ``pure'' $\fhcs$, however, it remains unclear if this ``hybrid'' class corresponds to a structural property of the underlying universal models.

\subsection{Beyond Universal Models}
%
As just discussed, the $\fus$ class generally does not exhibit universal models of finite width. This, however, does not rule out the possibility that $\fus$ is $\fcn{w}$-controllable for the class of (U)CQs for some appropriate $\fcn{w}$. In fact, it is a long-standing well-known conjecture that all $\fus$ rulesets are finitely controllable \cite{GogaczM13a,GogaczM13}. Should that turn out to be true, $\fus$ would be $\fcn{expansion}$-controllable for UCQs and be captured by our decidability framework. What gives some hope in this respect is the fact that finite controllability was already shown to hold for the prominent concrete $\fus$ subclass of sticky rules \cite{GogaczM13a,GogaczM17}. 


\section{Conclusion}

We are optimistic that the generic methods and results presented in this article can advance the understanding of general model-theoretic foundations of decidability of entailment problems. We deem it quite likely that many more useful width notions can be defined (or discovered in the existing literature), leading to novel avenues for establishing decidability of expressive entailment problems. Thus, we plan to investigate other (existing or newly designed) width notions regarding their friendliness properties. Since less expressive $\mathfrak{L}$ may admit more general $\mathfrak{L}$-friendly $\fcn{w}$, there is a Pareto-optimality boundary to explore. For controllability $\mathrm{Cont}_\fcn{w}(\mathfrak{F},\mathfrak{Q})$, there is a three-way tradeoff between the generality of $\fcn{w}$, the expressivity of $\mathfrak{F}$, and the expressivity of $\mathfrak{Q}$. 

\smallskip

One particularly interesting challenge in that respect would be to establish a very general FO-friendly width notion, ideally one that both 
\begin{itemize}
\item
generalizes all the width notions considered in this article and 
\item
still reflects structural model-theoretic properties that are ``cognitively graspable''.
\end{itemize}
The second -- admittedly rather vague -- condition aims to rule out very technical width notions which can generally be obtained by ``reverse-engineering'' known decision procedures, but which would not advance our model-theoretic understanding of decidability.  

\medskip

We also are confident that our provided ``toolbox'' for establishing a fragment's partition\-width-finiteness in a stratification-based, modular manner will be helpful in obtaining specific decidability results for a wide spectrum of logical formalisms in diverse areas related to computational logic. 
Indeed, several instances of $\textsc{Entailment}$ that are known to be decidable still lack a width-based decidability argument. Notable examples include satisfiability of logics without the finite model property, such as 
GF+TG and TGF+TG (the guarded and triguarded fragment extended by transitive guards \cite{SzwastT04,KieronskiM20}) as well as $\mathcal{C}^2$, but of course also BCQ entailment from $\fus$ knowledge bases. Our goal is to accommodate these and other cases in our framework.

Finally, as we have already demonstrated in this paper, we are confident that the width-based framework can guide the search for novel, very expressive, syntactically defined combinations of specification and query languages with a decidable entailment problem. Of course, for cases thus identified, subsequent investigations would have to determine the precise corresponding complexities.


\section*{Acknowledgements}

This article constitutes a central contribution to the European Research Council (ERC) Consolidator Grant 771779 \emph{A Grand Unified Theory of Decidability in Logic-Based Knowledge Representation} (DeciGUT).
The work presented herein was only possible thanks to the continuous funding of all of the authors over several years.

We are indebted to various colleagues for discussing our work, and for commenting on earlier drafts of this paper. First and foremost, our thanks go to Achim Blumensath, who patiently answered our numerous questions regarding his ingenious work on partitionwidth.
Beyond that, we are grateful for feedback from Bartosz Bednarczyk, Michael Benedikt, David Carral, Bruno Courcelle, Lukas Gerlach, Erich Grädel, Carsten Lutz, Jerzy Marcinkowski, Vincent Peth, Andreas Pieris, Ian Pratt-Hartmann, and certainly many others that we forgot. 

We finally thank the anonymous reviewers of this article and precursor works for their constructive feedback that led to numerous improvements.

\clearpage

\bibliographystyle{alphaurl}
\bibliography{bibliography}


\clearpage

\appendix

\section{Overview of Discussed Logical Fragments}\label{app:logicoverview}
\subsection{Logical Fragments in Standard Syntax}\label{app:standardFO}
\newcommand{\sigP}{\pred{P}}
We give a brief description of the logics we discuss in this paper. 
\emph{First-order logic (FO)} formulae are built from atomic formulae (that is, atoms) using 
existential quantification $\exists$, universal quantification $\forall$, negation $\neg$, conjunction $\wedge$, and disjunction $\vee$ in the usual way.
We use the symbol $\rightarrow$ for implication and $\leftrightarrow$ for equivalence as the usual shortcuts for better readability. 
In this article, we discuss the following FO fragments:
\begin{itemize}
	\item
	\emph{Prefix classes}. 
	For these logics, we assume that the sentences are in prenex normal form and that the quantifier prefix is restricted by a regular expression (e.g., $\exists^*\forall^*$ for the Bernays-Schönfinkel class), following 
	standard nomenclature as used by Börger, Grädel and Gurevich \cite{BorgerGG1997}. 
	\item 
	\emph{The guarded fragment} (GF). 
	All occurrences of $\forall$ have the form 
	$\forall \bold{x}.(\sigP(\bold{y}) {\,\rightarrow\,} \varphi[\bold{y}])$, 
	and occurrences of $\exists$ the form 
	$\exists \bold{x}.(\sigP(\bold{y}) \wedge \varphi[\bold{y}])$
	for $\bold{x} \subseteq \bold{y}$ and predicates $\sigP{}$.
	This shape is called \emph{guarded quantification} and $\sigP(\bold{y})$ the respective \emph{guard}.
	\item
	\emph{The triguarded fragment} (TGF). Every quantification over formulae with three or more free variables must be guarded. Equality is disallowed.
	\item
	\emph{The unary negation fragment} (UNF). The symbol $\forall$ is disallowed 
	and every occurrence of $\neg$ must be as
	$\neg \varphi$ with $\varphi$ containing at most one free variable.
	\item
	\emph{The guarded negation fragment} (GNF). The symbol $\forall$ is disallowed 
	and every occurrence of $\neg$ must be in the form of
	$\sigP(\bold{x}) \wedge \neg \varphi[\bold{x}]$.
	\item
	\emph{The $2$-variable fragment} (FO$^2$). All variables used must be from $\{x,y\}$.
	\item
	\emph{The $2$-variabe fragment of FO with counting quantifiers} ($\mathcal{C}^2$): Same as FO$^2$ but we allow quantifiers $\exists^{\leq n}$, $\exists^{= n}$, $\exists^{\geq n}$ for $n\in \mathbb{N}$, specifying the number of satisfying instantiations of the quantified variable. When measuring formula length, we assume binary encoding of $n$. $\mathcal{C}^2$ can be expressed in FO (however at the cost of using more variables and an exponential blowup in formula length).
\end{itemize}

\noindent All the considered logics are fragments of \emph{second-order logic} (SO), which extends first-order logic by \emph{second-order quantifiers}, of the form $\exists \sigP$ or $\forall \sigP$ for a predicate symbol $\sigP$. 
Of particular interest will be the following SO fragments:
\begin{itemize}
	\item 
	\emph{Guarded second-order logic} (GSO). A second-order quantifier block followed by a GF formula \cite{GradelHO00}. An alternative but equally expressive way of defining GSO is as arbitrary SO sentences with modified semantics: the second-order quantification only ranges over ``guarded relations'', that is, relations containing only tuples all of whose elements co-occur in one tuple of a predicate extension. Note that as a consequence, GSO subsumes FO.
	\item 
    \emph{Monadic second-order logic} (MSO). Second-order quantification is only allowed over unary predicate symbols. In terms of expressivity, MSO subsumes FO and is subsumed by GSO.
	\item 
	\emph{Universal second-order logic} ($\forall$SO). The syntactic fragment of SO where all second-order quantifiers are universal and prenex.
	\item 
    \emph{Disjunctive datalog} (DDlog) \cite{EiterGM97}. 
    A disjunctive datalog sentence over a signature $\Sigma$ is a $\forall$SO sentence of the shape 
    $\forall \Sigma'.\neg\bigwedge_i \forall \vec{x}_i. \big(\varphi_i \to \psi_i \big)$, 
    where $\Sigma'$ consists of fresh predicates, every $\varphi_i$ is a (possibly empty) conjunction of atoms over $\Sigma \cup \Sigma'$, every $\psi_i$ is a (possibly empty) disjunction of atoms over $\Sigma'$, and every $\vec{x}_i$ comprises all the variables occurring in $\varphi_i \to \psi_i$.
	\item 
    \emph{Frontier-guarded disjunctive datalog} (FGDDlog). 
    A disjunctive datalog (DDL) sentence where in every subformula $\forall \vec{x}_i. \big(\varphi_i \to \psi_i\big)$, the conjunction $\varphi_i$ contains one atom over $\Sigma$ which jointly carries all the variables occurring in $\psi_i$. Every FGDDlog sentence can be transformed into an equivalent GSO formula \cite{GradelHO00}.
    \item 
    \emph{Monadic disjunctive datalog} (MDDlog). The fragment of DDlog where all predicates in $\Sigma'$ are unary. This makes it a syntactic fragment of MSO. 
    \item 
    \emph{Datalog} (Dlog). The fragment of DDlog where every $\psi_i$ consists of at most one disjunct.
	\item 
    \emph{Frontier-guarded datalog} (FGDlog). The syntactic intersection of FGDDlog and Dlog. 
    \item 
    \emph{Monadic datalog} (MDlog). The syntactic intersection of MDDlog and Dlog.

\end{itemize}

\subsection{Description Logics}\label{app:DLintro}
\newcommand{\Preds}{\Sigma}
\newcommand{\Consts}{\mathbf{C}}
\newcommand{\ebnfeq}{::=}
\newcommand{\conc}[1]{#1}
\newcommand{\rol}[1]{#1}
\newcommand{\ind}[1]{\mathtt{#1}}
\newcommand{\inda}{\ind{a}}
\newcommand{\indb}{\ind{b}}
\newcommand{\indc}{\ind{c}}
\newcommand{\indd}{\ind{d}}
\newcommand{\rolexpR}{\rol{R}}
\newcommand{\rolexpS}{\rol{S}}
\newcommand{\rolR}{\rol{R}}
\newcommand{\rolnR}{\mathtt{R}}
\newcommand{\rolS}{\rol{S}}
\newcommand{\rolnS}{\mathtt{S}}
\newcommand{\rolT}{\mathtt{t}}
\newcommand{\rolU}{\mathtt{U}}
\newcommand{\rolV}{\rol{V}}
\newcommand{\rolW}{\rol{W}}
\newcommand{\conA}{\mathtt{A}}
\newcommand{\connA}{\mathtt{A}}
\newcommand{\conB}{\mathtt{B}}
\newcommand{\conC}{\conc{C}}
\newcommand{\conD}{\conc{D}}
\newcommand{\conE}{\conc{E}}
\newcommand{\conF}{\conc{F}}
\newcommand{\conS}{\mathtt{M}}
\newcommand{\atleast}[1]{\mathord{\geqslant}#1\,}
\newcommand{\atmost}[1]{\mathord{\leqslant}#1\,}
\newcommand{\Self}{\text{\sf{Self}}}
\newcommand{\ssb}{\sqsubseteq}
\newcommand{\SROIQ}{\mbox{$\mathcal{SROIQ}b_s$}}
\newcommand{\define}[1]{\emph{#1}}
\newcommand{\connames}{\Preds_1}
\newcommand{\rolnames}{\Preds_2}
\newcommand{\indnames}{\mathbf{C}}
\newcommand{\Inv}{\fcn{inv}}
\newcommand{\Dis}{\mathrm{Dis}}
\newcommand{\itemb}{\item}
\newcommand{\Vars}{\mathbf{V}}
\newcommand{\llang}[1]{\mathscr{#1}}

Description logics (DLs) are a very prominent family of formalisms used in Knowledge Representation. While using a different syntax than FOL, all DLs discussed in this article can be understood as FOL fragments via a syntactic translation into standard FOL syntax.\footnote{Alternatively, DLs can be seen as logics in their own right, with a special model-theoretic semantics, or as extended multi-modal logics. All these formulations are equivalent and for our purpose it is most convienient to view DLs as FOL fragments employing a non-standard syntax.} For the sake of conciseness, we will omit certain ``syntactic extras'' from the presentation in case they are not relevant for the logical expressivity of the DLs. For a full treatise on DLs, we refer the interested reader to the standard literature \cite{baader_horrocks_lutz_sattler_2017,Rudolph11}.

Like in other FOL fragments, DLs are built on sets of constants and of predicates from some signature. Constants from $\Consts$ are referred to as \emph{individual names}, predicates from $\Preds$ are subdivided into unary predicates ($\Preds_1$, referred to as \emph{concept names}) and binary predicates ($\Preds_2$, referred to as \emph{role names}), whereas predicates of higher arity are disallowed. $\Preds_2$ is subdivided into \emph{simple role names} $\Preds^\mathrm{s}_2$ and \emph{non-simple role names} $\Preds^\mathrm{ns}_2$, the latter containing as a distinguished predicate the \emph{universal role} $\rolU$ and being strictly ordered by some strict order $\prec$.\footnote{In the original definition of $\mathcal{SROIQ}b_s$, simplicity of roles and $\prec$ are not given a priori, but meant to be implicitly determined by the set of axioms. Our choice to fix them explicitly upfront simplifies the presentation without restricting expressivity.}
Then, the set $ \llang{R}^\mathbf{s}$ of \emph{simple role expressions} is defined by
$$
\rolexpR_1,\rolexpR_2 \ebnfeq \rolnS \mid \rolnS^- \mid \rolexpR_1\cup\rolexpR_2 \mid \rolexpR_1\cap\rolexpR_2 \mid \rolexpR_1\setminus\rolexpR_2,
$$%
with $\rolnS {\,\in\,} \Preds^\mathrm{s}_2$, while the set of (arbitrary) \emph{role expressions} is $ \llang{R} {\,=\,}  \llang{R}^\mathrm{s} \cup \{\rolnR, \rolnR^- \mid \rolnR \in \Preds^\mathrm{ns}_2\}$. 
For convenience, we introduce the function $\Inv$ that ``inverts'' roles, i.e. we set $\Inv(\rolnR)\coloneqq\rolnR^-$ and $\Inv(\rolnR^-) \coloneqq \rolnR$ in order to simplify notation. In the sequel, we will use the symbols $\rolR, \rolS$, possibly with subscripts, to denote role expressions.

Having these sets at hand, we can now turn to the three building blocks of \SROIQ{} knowledge bases: RBox, TBox and ABox.

\medskip\noindent\textbf{RBox. } 
A \define{role inclusion axiom} (RIA, sometimes also referred to as
\define{role chain axiom}) is a statement of the form $\rolR_1 \circ
\ldots \circ \rolR_n \ssb \rolR$ where $\rolR_1,\ldots,\rolR_n,\rolR$ are roles. As a special case thereof (for $n=1$), we obtain \define{direct role inclusions} $\rolR \ssb \rolS$. A finite set of such RIAs is called a \define{role hierarchy}.
%
%
A role hierarchy is \define{regular} if every RIA is of one of the forms
$$\rolR\circ\rolR\ssb \rolR,\quad
\Inv(\rolR) \ssb \rolR, \quad
\rolS_1\circ\ldots\circ\rolS_n\ssb \rolR, \quad
\rolR\circ\rolS_1\circ\ldots\circ\rolS_n\ssb \rolR, \quad
\rolS_1\circ\ldots\circ\rolS_n\circ\rolR\ssb \rolR,$$%
such that $\rolR\in\rolnames$ is a (non-inverse) non-simple role name $\rolnR \in \Preds^\mathrm{ns}_2$, and $\rolS_i\prec\rolR$ for $i=1, \ldots, n$ whenever $\rolS_i$ is non-simple.
%
A \SROIQ{}
\define{RBox} (usually denoted by $\mathbfcal{R}$) is 
a regular role hierarchy.

\medskip\noindent\textbf{TBox. } 
Given a \SROIQ{} RBox $\mathbfcal{R}$, we now inductively define \define{concept expressions} (also simply called \define{concepts}) as follows:
\begin{itemize}
\itemb
the symbols $\top$ and $\bot$ as well as all concept names $\connA\in \connames$ are concept expressions,
\itemb
$\{\mathtt{a_1}, \ldots, \mathtt{a_n}\}$ is a concept expression for any finite set $\{\mathtt{a_1}, \ldots, \mathtt{a_n}\}\subseteq \indnames$ of individual names,
\itemb
if $C$ and $D$ are concept expressions then so are $\neg C$ as well as $C\sqcap D$ and $C\sqcup D$,
\itemb
if $\rolR$ is a role and $C$ is a concept expression, then $\exists \rolR. C$ and $\forall \rolR. C$ are concept expressions,
\itemb
if $\rolR$ is a simple role, $n$ is a natural number and $C$ is a concept expression, then  ${\geqslant}n\rolR. C$ and ${\leqslant}n\rolR. C$ as well as $\exists\rolR.\Self$ are also concept expressions.
\end{itemize}
We will denote the set of all concept expressions thus defined by $ \llang{C}$. Throughout this section, the symbols $\conC$, $\conD$ will be used to denote concept expressions.
A \define{general concept inclusion axiom} (short: GCI) has the form $\conC \ssb \conD$ where $\conC$ and $\conD$ are concepts.
Finally, a \SROIQ{} \define{TBox} (usually denoted by $\mathbfcal{T}$) is a finite set of GCIs.

\medskip\noindent\textbf{ABox. } 
An \define{individual assertion} can have any of the following forms:
$$\conC(\inda), \qquad \rolR(\inda,\indb), \qquad \neg\rolR(\inda,\indb), \qquad \inda\approx \indb, \qquad \inda\not\approx \indb,
$$%
with $\inda,\indb\in\indnames$ individual names, $\conC\in \llang{C}$ a concept expression, and $\rolR\in \llang{R}$ a role.
A \SROIQ{} \define{ABox} (usually denoted by $\mathbfcal{A}$) is a finite set of individual assertions. 

\medskip

Finally, a \SROIQ{} \define{knowledge base} $\kb$ is the union of an RBox $\mathbfcal{R}$ and a TBox $\mathbfcal{T}$ as well as an ABox $\mathbfcal{A}$ for $\mathbfcal{R}$. The elements of $\kb$ are referred to as \define{axioms}. 

\bigskip

In order to establish the connection between DL syntax and FOL, we now define an easy syntactic translation $ \fcn{fol}$. Every $\mathcal{SROIQ}b_s$ knowledge base $\kb$ thus translates via $ \fcn{fol}$ to a FOL sentence $ \fcn{fol}(\kb)$. We define $\fcn{fol}(\kb)=\bigwedge_{\alpha\in \kb} \fcn{fol}(\alpha),$ i.e., we build the conjunction over the separately translated axioms of the knowledge base. How exactly $ \fcn{fol}(\alpha)$ is defined depends on the type of the axiom $\alpha$:
\[
\begin{array}{r@{\ \ }l@{\ \ }l}
	\fcn{fol}(\rolR_1\circ\ldots\circ\rolR_n\ssb\rolR) & = & \forall x_0\ldots x_n ( \bigwedge_{\scriptscriptstyle 1\leq i\leq n} \fcn{fol}_{ \llang{R}}(\rolR_i,x_{i-1},x_i)) \to  \fcn{fol}_{ \llang{R}}(\rolR,x_0,x_n)\\[1mm]
	\fcn{fol}(\conC\ssb\conD) & = & \forall x_0( \fcn{fol}_{ \llang{C}}(\conC,x_{0})\to  \fcn{fol}_{ \llang{C}}(\conD,x_{0}))\\[1mm]
	\fcn{fol}(\conC(\inda)) & = &  \fcn{fol}_{ \llang{C}}(\conC,x_{0})[x_0/\inda]\\[1mm]
	\fcn{fol}(\rolR(\inda,\indb)) & = &  \fcn{fol}_{ \llang{R}}(\rolR,x_{0},x_1)[x_0/\inda][x_1/\indb]\\[1mm]
	\fcn{fol}(\neg\rolR(\inda,\indb)) & = & \neg \fcn{fol}(\rolR(\inda,\indb))\\[1mm]
	\fcn{fol}(\inda\approx \indb) & = & \inda = \indb \\[1mm]
	\fcn{fol}(\inda\not\approx \indb) & = & \neg(\inda = \indb) \\
\end{array}
\]

These definitions make use of auxiliary translation functions $ \fcn{fol}_ \llang{R}:  \llang{R} \times \Vars \times \Vars \to \mathrm{FOL}$ for roles and $ \fcn{fol}_ \llang{C}:  \llang{C} \times \Vars \to \mathrm{FOL}$ for concepts (where we assume $\Vars=\{x_0,x_1,\ldots\}$ for the set of variables):
\[
\begin{array}{rll}
	 \fcn{fol}_{ \llang{R}}(\rolU,x_i,x_j) & = & \mathrm{true} \\[1mm]
	 \fcn{fol}_{ \llang{R}}(\rolnS,x_i,x_j) & = & \rolnS(x_i,x_j) \\[1mm]
	 \fcn{fol}_{ \llang{R}}(\rolnS^-,x_i,x_j) & = & \rolnS(x_j,x_i) \\[1mm]
	 \fcn{fol}_{ \llang{R}}(R_1 \cup R_2,x_i,x_j) & = & \fcn{fol}_{ \llang{R}}(\rolR_1,x_i,x_j) \vee \fcn{fol}_{ \llang{R}}(\rolR_2,x_i,x_j) \\[1mm]
	 \fcn{fol}_{ \llang{R}}(R_1 \cap R_2,x_i,x_j) & = & \fcn{fol}_{ \llang{R}}(\rolR_1,x_i,x_j) \wedge \fcn{fol}_{ \llang{R}}(\rolR_2,x_i,x_j) \\[1mm]
	 \fcn{fol}_{ \llang{R}}(R_1 \setminus R_2,x_i,x_j) & = & \fcn{fol}_{ \llang{R}}(\rolR_1,x_i,x_j) \wedge \neg\fcn{fol}_{ \llang{R}}(\rolR_2,x_i,x_j)
\end{array}
\]
\[
\begin{array}{rll}
	 \fcn{fol}_{ \llang{C}}(\connA,x_i) & = & \connA(x_i)\\[1mm]
	 \fcn{fol}_{ \llang{C}}(\top,x_i) & = & \mathrm{true}\\[1mm]
	 \fcn{fol}_{ \llang{C}}(\bot,x_i) & = & \mathrm{false}\\[1mm]
	 \fcn{fol}_{ \llang{C}}(\{\inda_1,\ldots,\inda_n\},x_i) & = &\bigvee_{\scriptscriptstyle 1\leq j\leq n} x_i=\inda_j\\[1mm]
	 \fcn{fol}_{ \llang{C}}(\neg\conC,x_i) & = & \neg \fcn{fol}_{ \llang{C}}(\conC,x_i)\\[1mm]
	 \fcn{fol}_{ \llang{C}}(\conC\sqcap\conD,x_i) & = &  \fcn{fol}_{ \llang{C}}(\conC,x_i)\wedge  \fcn{fol}_{ \llang{C}}(\conD,x_i)\\[1mm]
	 \fcn{fol}_{ \llang{C}}(\conC\sqcup\conD,x_i) & = &  \fcn{fol}_{ \llang{C}}(\conC,x_i)\vee  \fcn{fol}_{ \llang{C}}(\conD,x_i)\\[1mm]
	 \fcn{fol}_{ \llang{C}}(\exists\rolR.\conC,x_i) & = & \exists x_{i+1}.\big( \fcn{fol}_{ \llang{R}}(\rolR,x_i,x_{i+1})\wedge  \fcn{fol}_{ \llang{C}}(\conC,x_{i+1})\big)\\[1mm]
	 \fcn{fol}_{ \llang{C}}(\forall\rolR.\conC,x_i) & = & \forall x_{i+1}.\big( \fcn{fol}_{ \llang{R}}(\rolR,x_i,x_{i+1}) \to  \fcn{fol}_{ \llang{C}}(\conC,x_{i+1})\big)\\[1mm]
	 \fcn{fol}_{ \llang{C}}(\exists\rolR.\Self,x_i) & = &  \fcn{fol}_{ \llang{R}}(\rolR,x_i,x_i)\\[1mm]
	 \fcn{fol}_{ \llang{C}}(\mathord{\geqslant}n \rolR.\conC,x_i) & = & \exists x_{i+1}\dots x_{i+n}.\big(\bigwedge_{\scriptscriptstyle i+1\leq j < k \leq i+n}(x_j\not=x_k)\\[1mm]
	& & \hspace{20mm}\wedge\bigwedge_{\scriptscriptstyle i+1\leq j\leq i+n}( \fcn{fol}_{ \llang{R}}(\rolR,x_i,x_j)\wedge  \fcn{fol}_{ \llang{C}}(\conC,x_{j})\big)\\[1mm]
	 \fcn{fol}_{ \llang{C}}(\mathord{\leqslant}n \rolR.\conC,x_i) & = & \neg \fcn{fol}_{ \llang{C}}(\mathord{\geqslant}{(n+1)} \rolR.\conC,x_i)\\[1mm]
\end{array}
\]

Obviously, $\fcn{fol}_ \llang{R}$ assigns to a role a FOL formula with (at most) two free variables whereas $ \fcn{fol}_ \llang{C}$ assigns to a concept a FOL formula with (at most) one free variable. 

\medskip

After defining syntax and FOL-translation of $\mathcal{SROIQ}b_s$ we now obtain all the DLs discussed in this article as fragments of $\mathcal{SROIQ}b_s$ by disallowing certain expressive features.

\begin{itemize}
\item $\mathcal{SRIQ}b_s$ is obtained from $\mathcal{SROIQ}b_s$ by disallowing concept expressions of the form $\{\mathtt{a_1}, \ldots, \mathtt{a_n}\}$.
\item $\mathcal{SROI}b_s$ is obtained from $\mathcal{SROIQ}b_s$ by disallowing concept expressions of the form
${\geqslant}n\rolR. C$ and ${\leqslant}n\rolR. C$. 
\item $\mathcal{SROQ}b_s$ is obtained from $\mathcal{SROIQ}b_s$ by disallowing inverse roles, i.e. the constructor $(\cdot)^-$ inside role expressions.	
\end{itemize} 
Finally, the description logics $\mathcal{ALCHOIQ}b$, $\mathcal{ALCHIQ}b$, $\mathcal{ALCHOI}b$, and $\mathcal{ALCHOQ}b$ are obtained from $\mathcal{SRIOQ}b_s$, $\mathcal{SRIQ}b_s$, $\mathcal{SROI}b_s$, and $\mathcal{SROQ}b_s$, respectively, by disallowing the usage of non-simple role names and concept expressions of the form $\exists\rolR.\Self$.

A structural analysis unveils that $\mathcal{ALCHOI}b$ is a fragment of GF. It is also well-known that $\mathcal{ALCHOIQ}b$ and all its fragments can be expressed in $\mathcal{C}^2$ \cite{RudolphKH08b}.

\section{Proofs for Section 5}

\begin{lemma}\label{lem:interesting}
Let $\Psi$ be a homomorphism-closed $\forall$SO sentence.
Then for any instance $\instance$ holds: $\instance \models \Psi$ \iffi there is some finite $\instance' \subseteq \instance$ satisfying $\instance' \models \Psi$.
\end{lemma}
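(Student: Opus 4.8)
The plan is to prove the two directions separately. The right-to-left implication is immediate from homomorphism-closedness, while the left-to-right implication is the substantial part and will go through a compactness argument combined with the \emph{contrapositive} of homomorphism-closedness.

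For the easy direction, suppose some finite $\instance' \subseteq \instance$ satisfies $\instance' \models \Psi$. The inclusion map $\instance' \hookrightarrow \instance$ is a homomorphism (it preserves all atoms, fixes constants, and $\top$ holds of every term), so since $\Psi$ is homomorphism-closed and $\instance'$ maps homomorphically into $\instance$, we obtain $\instance \models \Psi$. Note that finiteness of $\instance'$ plays no role here; this is just monotonicity of homomorphism-closed sentences.

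For the hard direction I would argue by contraposition: assume that $\instance' \not\models \Psi$ for \emph{every} finite $\instance' \subseteq \instance$, and derive $\instance \not\models \Psi$. Write $\Psi = \forall\Sigma'.\Psi^*$ with $\Psi^*$ first-order over $\Sigma \cup \Sigma'$. Unfolding the assumption, each finite $\instance'$ admits a $\Sigma'$-expansion $\chi_{\instance'}$ over $\adom{\instance'}$ with $\chi_{\instance'} \models \neg\Psi^*$. I then set up, over the signature $\Sigma \cup \Sigma'$ enriched with a fresh constant $c_d$ for each $d \in \adom{\instance}$, the first-order theory $T$ consisting of (i) the \emph{positive} atomic $\Sigma$-diagram of $\instance$, that is $R(c_{d_1},\ldots,c_{d_k})$ for each atom $R(d_1,\ldots,d_k) \in \instance$, together with (ii) the single sentence $\neg\Psi^*$. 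Every finite subset of $T$ involves only finitely many atoms of $\instance$, hence lives inside some finite $\instance' \subseteq \instance$; interpreting the relevant $c_d$ as $d$ and the others arbitrarily, the expansion $\chi_{\instance'}$ satisfies that finite subset. By compactness, $T$ has a model $\mathcal{M}$. Reading $\mathcal{M}$ back as an instance $\mathcal{J}$ (its set of true $\Sigma$-atoms; since $\top$ holds of every element, the active domain of $\mathcal{J}$ is all of $\mathcal{M}$, so $\mathcal{J}$ and $\mathcal{M}$ agree on all sentences, and the elements $c_{\mathbf{a}}^{\mathcal{M}}$ for $\mathbf{a}\in\mathbf{C}$ are identified with the respective constants), I observe two facts: first, $d \mapsto c_d^{\mathcal{M}}$ is a homomorphism $\instance \to \mathcal{J}$, because $T$ forces all positive atoms of $\instance$ to hold; second, the $\Sigma\cup\Sigma'$-reduct of $\mathcal{M}$ is a $\Sigma'$-expansion of $\mathcal{J}$ satisfying $\neg\Psi^*$, so $\mathcal{J} \not\models \Psi$. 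Applying the contrapositive of homomorphism-closedness to $\instance \to \mathcal{J}$ and $\mathcal{J} \not\models \Psi$ yields $\instance \not\models \Psi$, which closes the contrapositive.

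The main obstacle, and the reason a naive argument fails, is that $\neg\Psi^*$ is an arbitrary first-order sentence and is therefore \emph{not} preserved under passing to substructures: one cannot simply restrict a global countermodel back down to $\adom{\instance}$. The construction sidesteps this entirely by never propagating $\neg\Psi^*$ downward. Instead it transports only positive information downward via a homomorphism (which is precisely what keeps the finite subsets of $T$ satisfiable, since omitting negative diagram atoms and inequalities costs nothing), and then it is homomorphism-closedness that carries the non-satisfaction of $\Psi$ from the large compactness model $\mathcal{J}$ back onto $\instance$. The only routine bookkeeping left is to verify that the compactness model can legitimately be viewed as an instance in the paper's active-domain/UNA semantics, which the universal predicate $\top$ guarantees.
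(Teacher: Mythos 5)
Your proof is, at bottom, the same argument as the paper's: both rest on the positive diagram of $\instance$ written with fresh constants, first-order compactness, and one application of homomorphism-closedness; the easy direction is literally identical. You simply run the compactness argument contrapositively — the paper shows directly that the ground theory of $\instance$ entails $\Psi$ and extracts, via compactness, a finite subtheory (hence finite sub-instance) already entailing $\Psi$, whereas you assemble a single infinite model of the diagram plus $\neg\Psi^*$ and let the contrapositive of homomorphism-closedness push non-satisfaction back onto $\instance$.

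There is, however, one step in your version that can genuinely fail as written. The paper's homomorphisms must \emph{fix constants} (clause (ii) of the definition), and homomorphism-closedness is defined with respect to exactly these maps. Your theory $T$ contains only positive atoms and $\neg\Psi^*$, so nothing prevents the compactness model $\mathcal{M}$ from collapsing two distinct constants into one element, e.g.\ $c_{\const{a}}^{\mathcal{M}} = c_{\const{b}}^{\mathcal{M}}$ for distinct $\const{a},\const{b} \in \mathbf{C}$ occurring in $\instance$, or a collision with a constant occurring in $\Psi^*$. In that case your ``identification'' of $c_{\const{a}}^{\mathcal{M}}$ with $\const{a}$ is ill-defined (one element cannot carry two constant names under the paper's unique-name semantics), the map $d \mapsto c_d^{\mathcal{M}}$ violates $h(\const{a})=\const{a}$, and homomorphism-closedness cannot be invoked. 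Likewise, your claim that $\top$ holds of every element of $\mathcal{M}$ is not forced by $T$ as you defined it. Both defects are repairable: add to $T$ the sentence $\forall x.\top(x)$ and the inequalities $c \neq c'$ for all pairs of distinct constants among those occurring in $\instance$ and in $\Psi$ (and, as in the paper's construction of $\instance^c$, keep original constants as themselves rather than introducing fresh $c_{\const{a}}$ for them). Finite satisfiability survives verbatim, since your witnesses $\chi_{\instance'}$ are expansions of instances, which satisfy the unique-name assumption and interpret $\top$ universally; with these axioms $\mathcal{M}$ can legitimately be read back as an instance and your homomorphism is well-defined. Note that the paper's direct formulation dodges this conversion entirely: it applies homomorphism-closedness only to instance models of $\instance^c$, and the only object it ever reads back as an instance is the finite subtheory $\instance^c_{\mathrm{fin}}$, which is already a database up to renaming fresh constants to nulls.
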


\begin{proof}
Let $\Psi = \forall \Sigma'.\Psi^*$. 
Note that for any $\instance' \subseteq \instance$, the identity function is a homomorphism from $\instance'$ to $\instance$.

Then, the ``if'' part follows immediately from the fact that $\Psi$ is homomorphism-closed. For the ``only if'' part, assume $\instance \models \Psi$. Let now $\instance^c$ denote the (very likely infinite) set of ground facts obtained from $\instance$ by uniformly replacing all nulls by fresh constant names. Note that $\instance$ has a homomorphism into every model of the sentence set $\instance^c$. Thus, thanks to $\Psi$ being homomorphism-closed by assumption, we can conclude that $\instance^c$ logically entails $\Psi$.
This means that $\instance^c \cup \{\neg\Psi\}$ is unsatisfiable, therefore also $\instance^c \cup \{\neg\Psi^*\}$ must be an unsatisfiable set of FO sentences. By compactness of FOL, this implies that there must be a finite $\instance^c_\mathrm{fin} \subseteq \instance^c$ for which $\instance^c_\mathrm{fin} \cup \{\neg\Psi^*\}$ is already unsatisfiable, implying unsatisfiability of $\instance^c_\mathrm{fin} \cup \{\neg\Psi\}$. Thus we obtain that  
$\instance^c_\mathrm{fin}$ logically entails $\Psi$. Replacing the introduced constant names back by nulls, we obtain a finite instance $\instance_\mathrm{fin} \subseteq \instance$, as desired.
\end{proof}



\begin{customthm}{\ref{thm:universal}}
Let $\Phi$ be an FO sentence having some finitely universal model set $\mathscr{I}$ and let $\Psi$ be a homomorphism-closed $\forall$SO sentence. Then $\Phi \not \models \Psi$ \iffi there exists a model $\instance \in \mathscr{I}$ with $\instance \not\models \Psi$.
\end{customthm}

\begin{proof}
The ``if'' part is an easy consequence of the fact the $\instance$ serves as a countermodel.

For the ``only if'' part, assume $\Phi \not \models \Psi$, that is, there is a countermodel $\mathcal{J} \models \Phi$ with $\mathcal{J} \not\models \Psi$. Then, because of $\mathcal{J} \models \Phi$, there must be some $\instance \in \mathscr{I}$ of which every finite subset maps homomorphically into $\mathcal{J}$. We conclude our argument by showing that $\instance \not\models \Psi$. Toward a contradiction, suppose $\instance$ were a model of $\Psi$. Then \Cref{lem:interesting} would imply the existence of some finite $\instance' \subseteq \instance$ with $\instance' \models \Psi$. Yet then, noting that $\instance'$ maps homomorphically to $\mathcal{J}$ and $\Psi$ is homomorphism-closed, we would conclude $\mathcal{J} \models \Psi$ contradicting the initial assumption.
\end{proof}

\begin{customlem}{\ref{lem:interpolation}}
Let $\Phi$ be a FO sentence and $\Psi$ a homomorphism-closed $\forall$SO sentence.
Then $\Phi \models \Psi$ \iffi there exists a union of BCQs $Q = \bigvee_{i=1}^n q_{i}$ satisfying $\Phi \models Q$ and $Q \models \Psi$.
\end{customlem}

\begin{proof}
The ``if'' direction is immediate.
For the ``only if'' direction, let $\mathfrak{C}$ denote the set of BCQs $q$ satisfying $q \models \Psi$.
Now assume $\Phi \models \Psi$. By \Cref{lem:interesting}, this means that every model of $\Phi$
must be a model of some $q \in \mathfrak{C}$. Yet, this means that $\{\Phi\} \cup \{\neg q \mid q \in \mathfrak{C}\}$ is an unsatisfiable set of FO sentences. Again, by compactness of FO, we can conclude that $\{\Phi\} \cup \{\neg q \mid q \in \mathfrak{C}'\}$ must be unsatisfiable for some finite $\mathfrak{C}' \subseteq \mathfrak{C}$. Yet, this means that $\Phi$ logically entails $\bigvee_{q \in \mathfrak{C}'} q$ which is a union of BCQs, so we have identified the claimed $Q$.
\end{proof}


\section{Detailed Material on Section 6}\label{app:section-3-proofs}

\subsection{MSO-Definition of Well-Decoratedness (Theorem 6.17)}

Recall that a decorated tree $\mathcal{T}$ is called a \emph{well-decorated tree} \iffi
\begin{enumerate}
	\item for every null $s\in\{ 0,1 \}^*$, if $\mathcal{T}$ contains a fact of the form $\newconst{c}(s)$ or $\newanon(s)$ (in which case $s$ will be called a \emph{pseudoleaf}), then $\mathcal{T}$ contains no other fact of the form $\pred{Dec}(s)$ with $\pred{Dec}\in\decorum$,
	\item for every pseudoleaf $s$ and $s' \in \{0,1\}^+$, $\pred{Dec}(ss')\not\in \mathcal{T}$ for any $\pred{Dec}\in\decorum$  (i.e., descendants of pseudoleafs are undecorated),  
	\item for every $\const{c}\in\mathrm{Const}$, $\mathcal{T}$ contains at most one fact of the form $\newconst{c}(s)$ (i.e., constants are unique),
	\item for every null $s\in\{ 0,1 \}^*$, if $\pred{Dec}(s),\pred{Dec'}(s) \in \mathcal{T}$ with $\pred{Dec},\pred{Dec'} \in \instructionset$ then $\pred{Dec}$ and $\pred{Dec'}$ are compatible (i.e.,  
	no node is decorated by incompatible instructions).
\end{enumerate}
Here, fixing $\cols, \Sigma$, and $\mathrm{Cnst}$, we will define a MSO-sentence $\Theta$ such that for any $\colconsig$-decorated tree $\mathcal{T}$, $\mathcal{T}$ is well-decorated \iffi $\mathcal{T}\vDash \Theta$.
\newline
Let us denote $\decorum$ with $\Gamma$. We write $\varphi_{\mathrm{ancestor}}(x,y)$ to abbreviate $$\neg(x = y) \wedge \forall X. \big(x\in X \land \forall z, z'. \big((z\in X \land (\pred{Succ}_0(z, z')\lor \pred{Succ}_{1}(z,z')) )\rightarrow z'\in X\big) \rightarrow y \in X\big).$$We first define the following four sentences, where $\const{c}\in\mathrm{Cnst}, \const{d}\in\mathrm{Cnst}{\cup}\{ \ast \}$, and then use these to define $\Theta$ below:
\begin{align*}
\Theta^{\mathrm{(1)}} =\ &  \bigwedge_{\mathclap{\hspace{6ex}\const{d}\in\mathrm{Cnst}\cup\{ \ast \}}} \forall x.\big( \newconst{d}(x)\rightarrow \bigwedge_{\mathclap{\hspace{7ex}\pred{Dec}\in\Gamma{\setminus}\{ \newconst{d} \}}} \lnot\pred{Dec}(x) \big)\\
\Theta^{\mathrm{(2)}} =\ & \bigwedge_{\mathclap{\hspace{6ex}\const{d}\in\mathrm{Cnst}\cup\{ \ast \}}} \Big(\forall xy. \big(\newconst{d}(x) \wedge \varphi_{\mathrm{ancestor}}(x,y) \big) \rightarrow \bigwedge_{\mathclap{\hspace{2ex}\pred{Dec}\in\Gamma}}\lnot\pred{Dec}(y)\Big)\\
\Theta^{\mathrm{(3)}} =\ & \bigwedge_{\mathclap{\hspace{2ex}\const{c}\in\mathrm{Cnst}}} \Big( \forall xy. \big(\newconst{c}(x) \land \newconst{c}(y)\big) \rightarrow x = y \Big)\\
\Theta^{\mathrm{(4)}} =\ & \forall x. \hspace{-16ex}\bigwedge_{\hspace{14ex}{\recol{\kappa}{\lambda}^{i},\recol{\kappa}{\lambda'}^{i}\in\Gamma, \lambda\neq\lambda'}}\hspace{-16ex}\lnot\big(\recol{\kappa}{\lambda}^{i}(x)\land \recol{\kappa}{\lambda'}^{i}(x)\big) \land \hspace{-16ex}\bigwedge_{\hspace{14ex}{\cadd{\kappa}{\mu}{I}{\lambda}, \cadd{\kappa}{\mu}{I}{\lambda'}\in\Gamma, \lambda\neq \lambda'}}\hspace{-16ex}\lnot\big(\cadd{\kappa}{\mu}{I}{\lambda}(x) \land \cadd{\kappa}{\mu}{I}{\lambda'}(x)\big)
\end{align*}
Finally, we define 
$$ \Theta = \Theta^{\mathrm{(1)}} \land \Theta^{\mathrm{(2)}} \land \Theta^{\mathrm{(3)}} \land \Theta^{\mathrm{(4)}}.$$

\subsection{A Normal Form for Well-Decorated Trees}
\renewcommand{\disun}{\uplus}

In order to simplify the proofs of \cref{sec:preservation} and \cref{sec:hcw-vs-tw-cw} we now expose a normal form for well-decorated trees.
\begin{definition}\label{def:mso-orig}
	Let $\mathcal{T}$ be a $(\cols,\mathrm{Cnst})$-well-decorated tree.
	For $\const{c}\in \mathrm{Cnst}$, let $\const{c}^\mathcal{T}$ denote the unique $s \in \{0,1\}^*$ with $\pred{c}(s)\in \mathcal{T}$. Moreover, we define the injection $\fcn{orig}: \adom{\inst^\mathcal{T}} \to \{0,1\}^*$ 
	by letting $\fcn{orig}(e)=e^\mathcal{T}$ if $e\in \mathrm{Cnst}$, and $\fcn{orig}(e)=e$ otherwise.\defend
\end{definition}
Let $u_1, \ldots, u_n$ be nodes in a well-decorated tree $\mathcal{T}$. We denote by $\fcn{cca}(u_1, \ldots, u_n)$ the \emph{closest common ancestor of $u_1, \ldots, u_n$}, which is the unique node such that (i) $\fcn{cca}(u_1, \ldots, u_n)$ is an ancestor of $u_1, \ldots, u_n$ and (ii) every other common ancestor of $u_1, \ldots, u_n$ is a strict ancestor of $\fcn{cca}(u_1, \ldots, u_n)$.

\begin{definition}
	Let $\Sigma$ be a finite signature, $\mathrm{Const}$ a finite set of constants and $\cols$ a finite set of colors. Let $\mathcal{T}$ be a $\colconsig$-well-decorated tree. We call $\mathcal{T}$ to be in \emph{prenormal form}, if for every fact $\rpred(\ve)\in\inst^{\mathcal{T}}$ the following holds: 
	\begin{enumerate}
		\item If $\ve = (e, \ldots, e)$ for one term $e$ from $\inst^{\mathcal{T}}$, then for $s = \fcn{orig}(e)$ there are $i\in\{ 0,1 \}$, $s'\in\{ 0,1 \}^*$ with $s = s'i$ such that $\unradd{\origcol_{\arity{\rpred}}}{\rpred}^{i}(s')\in\mathcal{T}$.
		\item Otherwise, there are $I\subsetneq \{ 1, \ldots, |\ve| \}$ and $\kappa_1, \kappa_2\in\cols$ with $\arity{\kappa_1} = |I|$, $\arity{\kappa_2} = |\ve|-|I|$, such that
$\radd{\kappa_1}{\kappa_2}{I}{\rpred}(s) \in\mathcal{T}$ for $s = \fcn{cca}(\fcn{orig}(\ve))$, and $\ve \in \col_{s0}^{\mathcal{T}}(\kappa_1) \concat_{I} \col_{s1}^{\mathcal{T}}(\kappa_2)$.
	\end{enumerate}
	A $\colconsig$-well-decorated tree $\mathcal{T}$ in prenormal form is in \emph{normal form} if the following property holds: If $\unradd{\lambda}{\rpred}^{i}(s)\in\mathcal{T}$ for some $\rpred\in\Sigma$, $i\in\{ 0,1 \}$, and $s\in\{ 0,1 \}^*$, then $si$ is a pseudoleaf and $\lambda = \origcol_{\arity{\rpred}}$.\defend
\end{definition}

\begin{lemma}\label{lem:toprenf}
	Let $\Sigma$ be a finite signature, $\mathrm{Const}$ a finite set of constants and $\cols$ a finite set of colors. Let $\mathcal{T}$ be a $\colconsig$-well-decorated tree. Then there is a  $\colconsig$-well-decorated tree $\mathcal{T}'\supseteq \mathcal{T}$ that is in prenormal form and for which $\inst^{\mathcal{T}'} = \inst^{\mathcal{T}}$ holds.
\end{lemma}
\begin{proof} We will show how $\mathcal{T}'$ can be obtained by enriching $\mathcal{T}$ with additional facts.
	Every $\rpred(\ve)\in\inst^{\mathcal{T}}$ falls into one of three cases: Either
	\begin{enumerate}
		\item\label{item:lempwpnf3} $\ve = (e, \ldots, e)$ for some term $e$,
		\item\label{item:lempwpnf1} the above does not hold and there is some $\lambda^*\in\cols$ with $\arity{\lambda^*} = |\ve|$ and $\ve\in\col_{\varepsilon}^{\mathcal{T}}(\lambda^*)$, or
		\item\label{item:lempwpnf2} neither of the two above -- in particular, there is no  $\lambda^*\in\cols$ as described previously. 
	\end{enumerate}
    In Case~\ref{item:lempwpnf3}, let $s_e = \fcn{orig}(e)$, let $s'_e$ be the unique parent node of $s_e$ and let $i_e\in\{ 0,1 \}$ be such that $s_e = s'_ei_e$. Then we add for every fact $\rpred(e, \ldots, e)\in\inst^{\mathcal{T}}$ the atom $\unradd{\origcol_{\arity{\rpred}}}{\rpred}^{i_e}(s'_e)$ to $\mathcal{T}$ to obtain $\mathcal{T}_1$.
	\newline
	In Case~\ref{item:lempwpnf1}, $\ve$ must have been colored for the first time at $s_{\ve} = \fcn{cca}(\fcn{orig}(\ve))$. In particular, $\mathcal{T}$ must contain an atom  $\cadd{\kappa_1}{\kappa_2}{I}{\lambda}(s_{\ve})$ such that $\ve\in\col_{s_{\ve}}^{\mathcal{T}}(\lambda)$ for some $\kappa_1, \kappa_2\in\cols$ with $\arity{\kappa_1}+\arity{\kappa_2} = \arity{\rpred}$, $I\subsetneq \{ 1, \ldots, \arity{\rpred}\}$, and $\ve \in \col_{s0}^{\mathcal{T}}(\kappa_1) \concat_{I} \col_{s1}^{\mathcal{T}}(\kappa_2)$.
	Adding for every fact $\rpred(\ve)\in\inst^{\mathcal{T}}$ falling under this case the corresponding fact $\radd{\kappa_1}{\kappa_2}{I}{\rpred}(s_{\ve})$ to $\mathcal{T}_1$ we obtain $\mathcal{T}'$.
	\newline
	In Case~\ref{item:lempwpnf2}, $\radd{\kappa_1}{\kappa_2}{I}{\rpred}(s_{\ve})\in\mathcal{T}$ must already hold for $s_{\ve} = \fcn{cca}(\fcn{orig}(\ve))$, some $\kappa_1, \kappa_2\in\cols$ with $\arity{\kappa_1}+\arity{\kappa_2} = \arity{\rpred}$, $I\subsetneq \{ 1, \ldots, \arity{\rpred}\}$, and $\ve \in \col_{s0}^{\mathcal{T}}(\kappa_1) \concat_{I} \col_{s1}^{\mathcal{T}}(\kappa_2)$. Hence there is nothing to do here.
	\newline
	
	It is straightforward to see that $\mathcal{T}'$ is still a $\colconsig$-well-decorated tree and that it is in prenormal form.
	$\inst^{\mathcal{T}'}=\inst^{\mathcal{T}}$ clearly holds, as throughout the described construction we only introduced relations in $\mathcal{T}'$ closer to the introduction of the involved entities than or at the node as in $\mathcal{T}'$. 
\end{proof}
\begin{lemma}\label{lem:prenftonf}
	Let $\Sigma$ be a finite signature, $\mathrm{Const}$ a finite set of constants and $\cols$ a finite set of colors. Let $\mathcal{T}$ be a $\colconsig$-well-decorated tree in prenormalform. Then there is a  $\colconsig$-well-decorated tree $\mathcal{T}'\subseteq \mathcal{T}$ that is in normal form. Furthermore, $\inst^{\mathcal{T}'} = \inst^{\mathcal{T}}$.
\end{lemma}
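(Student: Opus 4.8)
The plan is to obtain $\mathcal{T}'$ from $\mathcal{T}$ purely by \emph{deleting} the offending predicate-materializing decorations, never adding anything; this automatically secures both $\mathcal{T}' \subseteq \mathcal{T}$ and the preservation of well-decoratedness. Concretely, I would let $\mathcal{T}'$ be the decorated tree obtained from $\mathcal{T}$ by removing every fact $\unradd{\lambda}{\rpred}^i(s)$ for which $si$ is not a pseudoleaf or $\lambda \neq \origcol_{\arity{\rpred}}$ (call such a decoration \emph{bad}, and the surviving materialization decorations \emph{good}). Since the bad decorations are exactly those violating the extra normal-form clause, $\mathcal{T}'$ satisfies that clause by construction, and the tree structure ($\szero,\sone$) is untouched, so $\mathcal{T}'$ is again a $\colconsig$-decorated tree.

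The first observation is that deleting predicate-materializing decorations does not touch the colour assignment: inspecting \Cref{def:entities-tree-to-instance}, the maps $\fcn{ent}^{\mathcal{T}}$ and $\col^{\mathcal{T}}_s$ depend only on the $\ast$/constant decorations and on the colour instructions $\cadd{\kappa}{\mu}{I}{\lambda}$ and $\recol{\kappa}{\lambda}^i$, none of which we remove. Hence $\fcn{ent}^{\mathcal{T}'} = \fcn{ent}^{\mathcal{T}}$ and $\col^{\mathcal{T}'}_s = \col^{\mathcal{T}}_s$ for every $s$, and in particular the pseudoleaves of $\mathcal{T}'$ coincide with those of $\mathcal{T}$. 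Well-decoratedness of $\mathcal{T}'$ is then immediate, since each of its four clauses is preserved under deletion of decorations.

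It remains to show $\inst^{\mathcal{T}'} = \inst^{\mathcal{T}}$, from which prenormal-form preservation (and hence normal form) follows. The inclusion $\inst^{\mathcal{T}'} \subseteq \inst^{\mathcal{T}}$ is trivial, as removing predicate decorations can only drop atoms while the colour maps stay fixed. The converse is the crux, and is exactly where prenormality of $\mathcal{T}$ is used. Take any $\rpred(\ve) \in \inst^{\mathcal{T}}$. If $\ve = (e,\ldots,e)$ is diagonal, the first prenormal-form clause supplies a decoration $\unradd{\origcol_{\arity{\rpred}}}{\rpred}^{i_s}(s')$ with $s'i_s = \fcn{orig}(e)$; since $\fcn{orig}(e)$ is always a pseudoleaf (it carries an $\ast$- or constant-decoration) and the colour is $\origcol_{\arity{\rpred}}$, this decoration is good, hence survives in $\mathcal{T}'$ and still produces $\rpred(\ve)$. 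If $\ve$ is non-diagonal, the second clause supplies a predicate-\emph{adding} decoration $\radd{\kappa_1}{\kappa_2}{I}{\rpred}(\fcn{cca}(\fcn{orig}(\ve)))$, which we never delete and which, with colours unchanged, still yields $\rpred(\ve)$. In either case $\rpred(\ve) \in \inst^{\mathcal{T}'}$.

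Finally, because $\inst^{\mathcal{T}'} = \inst^{\mathcal{T}}$ and each prenormal-form certificate invoked above (a good materialization, or a predicate-adding decoration) is retained in $\mathcal{T}'$ under identical colour maps, $\mathcal{T}'$ is itself in prenormal form; combined with the construction it is in normal form, as required. I expect the only genuine subtlety to be the $\supseteq$ direction: one must argue that the \emph{canonical} derivation guaranteed by prenormality never routes through a deleted (bad) decoration, which hinges precisely on the fact that the diagonal clause always furnishes a materialization sitting at an origin pseudoleaf and carrying the original colour $\origcol_{\arity{\rpred}}$.
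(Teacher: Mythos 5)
Your proposal is correct and follows essentially the same route as the paper's proof: delete exactly the predicate-materializing decorations $\unradd{\lambda}{\rpred}^{i}(s)$ violating the normal-form clause, observe that well-decoratedness and the normal-form condition hold by construction, and use prenormality to certify that every atom of $\inst^{\mathcal{T}}$ is still produced by a surviving decoration (a good materialization at an origin pseudoleaf for diagonal tuples, a predicate-adding instruction at the closest common ancestor otherwise). In fact your write-up is more explicit than the paper's, which compresses the $\supseteq$ direction into the remark that the removed facts are ``already unnecessary'' in prenormal form.
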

\begin{proof}
	For every $\unradd{\lambda}{\rpred}^{i}(s)\in\mathcal{T}$ do the following:
	if $\lambda\neq\origcol_{\arity{\rpred}}$ or $si$ is no pseudoleaf of $\mathcal{T}$, remove $\unradd{\lambda}{\rpred}^{i}(s)$ from $\mathcal{T}$. After having done this, we obtain $\mathcal{T}'$ which is well-decorated and in normal form.
	$\inst^{\mathcal{T}'} = \inst^{\mathcal{T}}$ holds since by $\mathcal{T}$ being in prenormal form, the removed facts of the type $\unradd{\lambda}{\rpred}^{i}(s)$ are redundant.
\end{proof}
\begin{corollary}\label{cor:tonf}
	Let $\Sigma$ be a finite signature, $\mathrm{Const}$ a finite set of constants and $\cols$ a finite set of colors. Let $\mathcal{T}$ be a $\colconsig$-well-decorated tree. Then there is a  $\colconsig$-well-decorated tree $\mathcal{T}'$ that is in normal form. Furthermore, $\inst^{\mathcal{T}'} = \inst^{\mathcal{T}}$.
\end{corollary}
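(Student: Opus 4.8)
The plan is to obtain the normal-form tree by simply chaining the two preceding lemmas, which already carry out all the work. First I would apply \Cref{lem:toprenf} to the given $\colconsig$-well-decorated tree $\mathcal{T}$, yielding a $\colconsig$-well-decorated tree $\mathcal{T}_1 \supseteq \mathcal{T}$ that is in prenormal form and satisfies $\inst^{\mathcal{T}_1} = \inst^{\mathcal{T}}$. Since $\mathcal{T}_1$ is now guaranteed to be in prenormal form, I would then apply \Cref{lem:prenftonf} to $\mathcal{T}_1$, obtaining a $\colconsig$-well-decorated tree $\mathcal{T}' \subseteq \mathcal{T}_1$ that is in normal form and satisfies $\inst^{\mathcal{T}'} = \inst^{\mathcal{T}_1}$.

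Combining the two represented-instance equalities yields $\inst^{\mathcal{T}'} = \inst^{\mathcal{T}_1} = \inst^{\mathcal{T}}$, so $\mathcal{T}'$ is precisely the desired normal-form tree representing the same instance as $\mathcal{T}$. The one formal point worth verifying is that \Cref{lem:prenftonf} is legitimately applicable to $\mathcal{T}_1$: its hypothesis demands a well-decorated tree \emph{in prenormal form}, which is exactly the output guarantee of \Cref{lem:toprenf}, so the two lemmas compose with no compatibility gap.

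There is essentially no obstacle here — the corollary is a pure composition of the two lemmas, and all the substantive content (the two-phase rewriting of the decoration: first relocating each predicate atom to the instruction at the closest common ancestor of its arguments, or to the pseudoleaf in the diagonal case, and then pruning the now-redundant predicate-materializing instructions) has already been discharged in their proofs. If any step merited a sanity check, it would be that the intermediate instance identities are genuine set equalities rather than mere isomorphisms, so that transitivity applies verbatim; both lemmas state the conclusion as $\inst^{\mathcal{T}'} = \inst^{\mathcal{T}}$ (set equality), and hence the chain of equalities goes through directly without appealing to $\cong$.
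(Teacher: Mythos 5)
Your proof is correct and is exactly the paper's own argument: the paper proves \Cref{cor:tonf} by applying \Cref{lem:toprenf} and \Cref{lem:prenftonf} successively, which is precisely your composition. Your additional sanity checks (that the hypothesis of \Cref{lem:prenftonf} matches the output guarantee of \Cref{lem:toprenf}, and that the instance identities are genuine set equalities so transitivity applies) are sound and merely make explicit what the paper leaves implicit.
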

\begin{proof}
	Apply \cref{lem:toprenf} and \cref{lem:prenftonf} successively.
\end{proof}
\vspace{-0.3ex}

\subsection{Proofs for Section 6.3}

\newcommand{\cwdec}{\mathrm{Dec}(\cols, \mathrm{Cnst}, \Sigma)}
\newcommand{\hcwdec}{\mathrm{Dec}(\cols, \mathrm{Cnst})}
\newcommand{\wdt}{\mathcal{T}}

\begin{customobs}{\ref{prop:unary-replacements}}
Finiteness of treewidth, cliquewidth, and \hypercliquewidth{} are all preserved under unary replacements.
\end{customobs}

We reformulate the above result into the following (more explicit) technical lemma, which we prove afterwards.

\begin{lemma}
Let $\fcn{w}$ be any of the following width measures: treewidth, cliquewidth, \ercwidth{}. Let $\signature$ be a signature and $\pred{U} \in \signature$ a unary predicate. Let $\inst$ be an instance over $\signature$ such that $\fcn{w}(\inst)$ is finite, and let $A \subseteq \adom{\inst}$ be an arbitrary set of domain elements. Then $\fcn{w}(\inst^{\pred{U}:=A})$ is finite.
\end{lemma}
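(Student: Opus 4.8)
The plan is to handle the three width measures separately, exploiting throughout that unary replacement barely touches the domain: since $A \subseteq \adom{\inst}$ and we only delete and re-insert unary $\pred{U}$-facts, we have $\adom{\inst^{\pred{U}:=A}} \subseteq \adom{\inst}$. The recurring difficulty is that $A$ is an \emph{arbitrary} set, hence not picked out by any formula; the heart of each case is to show that this arbitrariness causes no harm. For \textbf{treewidth}, I would argue that no new decomposition is needed. Take any tree decomposition $\tree$ of $\inst$ and delete from every bag the elements of $\adom{\inst}\setminus\adom{\inst^{\pred{U}:=A}}$. Each new fact $\pred{U}(t)$ has $t\in A$, so $t\in\adom{\inst^{\pred{U}:=A}}$ and $t$ is never trimmed; as $\pred{U}$ is unary, $\pred{U}(t)$ is covered by any surviving bag containing $t$, which exists. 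Deleting facts or elements never violates the coverage or connectedness conditions, and connectedness of a surviving element depends only on the bags it occupies, which are unchanged. Hence the trimmed $\tree$ is a valid tree decomposition of $\inst^{\pred{U}:=A}$ of no larger width, so $\treewidth{\inst^{\pred{U}:=A}}\le\treewidth{\inst}<\infty$.

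For \textbf{cliquewidth}, I would use a color-doubling construction. Fix a cliquewidth expression $\xi$ for $\inst$ over a finite color set $C$ and work over $C' = C\times\{0,1\}$, using the second component as a flag recording membership in $A$. Transform $\xi$ as follows: at each creation operation $\pred{c}_\kappa$ or $\pred{*}_\kappa$ (each of which introduces one specific element of the final instance, whose membership in $A$ is thus determined) use color $(\kappa,1)$ if that element lies in $A$ and $(\kappa,0)$ otherwise; replace every $\fcn{Recolor}_{\kappa\shortto\mu}$ by the two flag-preserving recolorings $\fcn{Recolor}_{(\kappa,0)\shortto(\mu,0)}$ and $\fcn{Recolor}_{(\kappa,1)\shortto(\mu,1)}$; replace each $\fcn{Add}_{\pred{R},\kappa_1,\dots,\kappa_n}$ with $\pred{R}\neq\pred{U}$ by the sequence of all $\fcn{Add}_{\pred{R},(\kappa_1,b_1),\dots,(\kappa_n,b_n)}$ over $(b_1,\dots,b_n)\in\{0,1\}^n$; and delete every $\fcn{Add}_{\pred{U},\kappa}$. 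Since the flag is set correctly at creation and preserved by all operations, at the root exactly the $A$-elements carry flag $1$; appending $\fcn{Add}_{\pred{U},(\kappa,1)}$ for every $\kappa\in C$ then inserts precisely the facts $\{\pred{U}(t)\mid t\in A\}$. The resulting expression uses $2|C|$ colors, witnessing finite cliquewidth.

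For \textbf{\ercwidth{}}, I would sidestep the arity bookkeeping a direct color-doubling would require, and instead invoke the tree-interpretation characterization (\Cref{thm:pw-coincides-mso-interpretability-trees}). From $\pw{\inst}<\infty$ we obtain an adorned tree $\mathcal{T}\in\mathbb{T}$ and an MSO-interpretation $\mathfrak{I}$ with $\inst\cong\mathfrak{I}(\mathcal{T})$; let $\fcn{f}\colon D\to\adom{\inst}$ be the associated map assigning to each node of its domain $D$ the element it represents. The arbitrary set $A$ now reappears as the set $B=\fcn{f}^{-1}(A)$ of tree nodes; since $\fcn{f}$ is constant on $\varphi_=$-classes, $B$ is a union of such classes. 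Crucially, an adorned tree may be adorned by an \emph{arbitrary} unary predicate, so I add a fresh unary predicate $\pred{P}_A$ interpreted exactly as $B$, obtaining another adorned tree $\mathcal{T}'\in\mathbb{T}$. Modifying $\mathfrak{I}$ only in its $\pred{U}$-component — replacing $\varphi_{\pred{U}}(x_1)$ by the atom $\pred{P}_A(x_1)$ and leaving all other defining formulas untouched — yields $\mathfrak{I}'$ with $\mathfrak{I}'(\mathcal{T}')\cong\inst^{\pred{U}:=A}$: the non-$\pred{U}$ atoms are unchanged (the remaining formulas do not mention $\pred{P}_A$), while $\pred{U}(\fcn{f}(t))$ holds iff $t\in B$ iff $\fcn{f}(t)\in A$. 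Thus $\inst^{\pred{U}:=A}\leq_{\text{MSO}}\mathcal{T}'$, and \Cref{thm:pw-coincides-mso-interpretability-trees} gives $\pw{\inst^{\pred{U}:=A}}<\infty$.

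The step I expect to be the main obstacle is the cliquewidth case, since there the non-definability of $A$ must be circumvented \emph{constructively}: one cannot read $A$-membership off the structure, so it has to be baked into the colors at creation time and then shown to survive every recoloring and addition. For partitionwidth the same obstacle dissolves into a single arbitrary tree-adornment, and for treewidth it does not arise at all. I would verify carefully that the flag is invariant under the transformed recoloring and $\fcn{Add}$ operations, and that $\adom{\inst^{\pred{U}:=A}}\subseteq\adom{\inst}$ so the treewidth trimming is sound.
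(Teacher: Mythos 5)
Your proof is correct in all three cases, but only the treewidth case follows the paper's own route; the other two are genuinely different. For treewidth the paper simply reuses the given decomposition verbatim (asserting the active domain is unchanged), so your trimming step is, if anything, slightly more careful about elements that occur only in discarded $\pred{U}$-facts. For cliquewidth the paper needs no color-doubling: working on the well-decorated tree, it splices out every $\fcn{Add}_{\pred{U},\kappa}$ decorator and re-inserts one directly above each creation leaf ($\pred{*}_\kappa$ or $\pred{c}_\kappa$) of an element of $A$ --- at that moment the element is the only one present, so the color-targeted addition hits exactly it. This keeps the number of colors unchanged, whereas your flag construction doubles it; in exchange, yours is uniform and would handle, say, simultaneous replacement of several unary predicates without new ideas. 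For \ercwidth{} the paper performs the analogous leaf-level surgery, but this requires first bringing the well-decorated tree into normal form (\Cref{cor:tonf}) so that all $\pred{U}$-atoms are introduced by predicate-materializing instructions $\unradd{\origcol_1}{\pred{U}}^{i}$ sitting at creation points, which can then be removed or added according to $A$. Your detour through \Cref{thm:pw-coincides-mso-interpretability-trees} avoids that machinery entirely, and its central observation is exactly right: one cannot directly invoke closure under MSO-interpretations, since the replacement is not an interpretation of $\inst$ itself ($A$ need not be definable), but the arbitrariness can be pushed into a fresh adornment $\pred{P}_A$ of the tree, because adorned trees admit arbitrary --- not necessarily MSO-definable --- unary adornments; and since $B=\fcn{f}^{-1}(A)$ is a union of $\varphi_{=}$-classes, substituting $\pred{P}_A(x_1)$ for $\varphi_{\pred{U}}$ yields a well-defined interpretation of $\inst^{\pred{U}:=A}$. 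The trade-off: the paper's direct surgery preserves the color set and hence the width itself, while your argument certifies only finiteness --- which is all the lemma claims.
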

\begin{proof}
We consider the case where $\fcn{w}$ is treewidth. Let $T = (V,E)$ be a tree decomposition of $\inst$ and $k$ be the width of $T$. Note that, by virtue of being a tree decomposition of $\inst$, $T$ satisfies the following:
\begin{enumerate}
	\item\label{item:urp-proof-item 1} each node $X \in V$ is a set of terms of $\inst$ with $\bigcup_{X \in V} X = \adom{\inst}$,
	\item\label{item:urp-proof-item 2} for each $\rpred(t_{1}, ...\,,t_{n}) \in \inst$, there is an $X \in V$ with $\{t_{1}, ...\,,t_{n}\} \subseteq X$,
	\item\label{item:urp-proof-item 3} for each term $t$ in $\inst$, the subgraph of $T$ consisting of the nodes $X$ with $t \in X$ is connected.
\end{enumerate}
We now argue that $T$ also is a tree de\-com\-po\-si\-tion of $\inst^{\pred{U}:=A}$, without requiring any modification.
Observe $\adom{\inst} = \adom{\inst^{\pred{U}:=A}}$, thus Item~(\ref{item:urp-proof-item 1}) is satisfied. Note that Item~(\ref{item:urp-proof-item 2}) is also satisfied, as the only atoms that were changed are unary. Finally, Item~(\ref{item:urp-proof-item 3}) is trivially satisfied as well.

\bigskip
For the case of $\fcn{w}$ being cliquewidth, let $\wdt$ be a $\colconsig$-well-decorated tree witnessing that $\inst$ has finite cliquewidth. We define $\wdt^{\pred{U}}$ as a result of the following transformation of $\wdt$.
\begin{itemize}
	\item Let $\fcn{f}$ be the isomorphism from a colored instance represented by $\wdt$ to $\inst$.
	
	\item For every $\kappa \in \cols$ and any node $s$ decorated with $\pred{Add}_{\pred{U}, \kappa}$ substitute the subtree of $\mathcal T$ that is rooted in $s$ with the subtree rooted in $s0$.
	
	\item For every node $s$ of $\wdt$ decorated with $\pred{*}_\kappa$ (for some $\kappa \in \cols$) for which $\pred{U}(\fcn{f}(s)) \in \inst^{\pred{U}:=A}$ holds, replace the decorator of $s$ with $\pred{Add}_{\pred{U}, \kappa}$ and replace the $\pred{Void}$ decorator of $s0$ with $\pred{*}_\kappa$.
	
	\item For every node $s$ of $\wdt$ decorated with $\pred{c}_\kappa$ (for some $\pred{c} \in \mathrm{Cnst}$ and $\kappa \in \cols$) for which $\pred{U}(\pred{c}) \in \inst^{\pred{U}:=A}$ holds, replace the decorator of $s$ with $\pred{Add}_{\pred{U}, \kappa}$ and replace the $\pred{Void}$ decorator of $s0$ with $\pred{c}_\kappa$.
\end{itemize}

Then $\wdt^{\pred{U}}$ represents a colored instance isomorphic to $\inst^{\pred{U}:=A}$.

\bigskip
In case $\fcn{w}$ is \hypercliquewidth{}, let $\wdt$ be a $\colconsig$-well-decorated tree witnessing that $\inst$ has finite \hypercliquewidth{}. By \cref{cor:tonf} we assume w.l.o.g. that $\mathcal{T}$ is in normal form. The well-decorated tree $\wdt^{\pred{U}}$ obtained from the following transformation satisfies $\inst^{\wdt^{\pred{U}}} \cong \inst^{\pred{U}:=A}$. Let $\fcn{iso}\colon \inst^{\mathcal{T}}\to \inst$ be an isomorphism. We define the following set:
\begin{align*}
	N_{\mathrm{add}} &=  \{ (s', i )\in (\{ 0,1 \}^*\times \{ 0,1 \}) \mid e\in A, s = \fcn{orig}(\fcn{iso}^{-1}(e)), i\in\{ 0,1 \}, s = s'i  \}.
\end{align*}
We then let
$$
\mathcal{T}^{\pred{U}} = \mathcal{T}  
\setminus \{\unradd{\kappa}{\pred{U}}^{i}(s) \mid s \in \{ 0,1 \}^*, i\in\{ 0,1 \}, \kappa \in \cols\}
\cup \{ \unradd{\origcol_{1}}{\pred{U}}^{i}(s') \mid (s',i)\in N_{\mathrm{add}} \} .
$$%
By construction and $\mathcal{T}$ being in normal form, we get $\inst^{\wdt^{\pred{U}}} \cong \inst^{\pred{U}:=A}$ witnessed by $\fcn{iso}$.
\end{proof}

\subsection{Proofs for Section 6.4}

\begin{customthm}{\ref{thm:tree-width-implies-clique-width}}
	Let $\inst$ be a countable instance over a finite signature of arbitrary arity. Then the following hold:
	\begin{itemize}
		\item If $\inst$ has finite treewidth, then $\inst$ has finite \ercwidth{}.
		\item If $\inst$ has finite cliquewidth, then $\inst$ has finite \ercwidth{}.
	\end{itemize}
\end{customthm}

We split the proof of the theorem into two propositions.

\begin{proposition}\label{prop:tw-implies-pw}
	Let $\inst$ be a countable instance over a signature $\Sigma$. If $\inst$ has finite treewidth, then $\inst$ has finite \ercwidth{}.
\end{proposition}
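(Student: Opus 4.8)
The plan is to reduce the statement to tree-interpretability and then invoke \Cref{thm:pw-coincides-mso-interpretability-trees}, which tells us that $\pw{\inst} \in \mathbb{N}$ holds precisely when $\inst \leq_{\text{MSO}} \mathcal{T}$ for some adorned tree $\mathcal{T} \in \mathbb{T}$. Thus it suffices to show that every countable instance of finite treewidth is MSO-interpretable in some adorned tree. This is classical in spirit, but the point is to recast a tree decomposition as an adornment of the infinite binary tree $\ibtree$ and to exhibit the recovering interpretation explicitly.

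First I would fix a tree decomposition of $\inst$ of some finite width $k$ and bring it into a \emph{nice, binary} form: every node introduces or forgets at most one element, join nodes have two children with identical bags, and every other node has a single child (padded to a second, empty child where convenient). This transformation keeps the width bounded (increasing it by at most a constant) and, crucially, makes the underlying tree binary, so that it embeds into $\ibtree$ along $\szero, \sone$. By the connectedness requirement of tree decompositions, each element $t \in \adom{\inst}$ is \emph{introduced} at a unique topmost node, giving a bijection between $\adom{\inst}$ and the set of introduce-nodes.

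Next I would adorn the tree. Using a bounded pool of \emph{registers} $\{1, \ldots, k{+}1\}$, I record at each node which register each currently-live element occupies; an introduce-node is adorned with the single register it fills, a forget-node with the register it frees, and---adopting the convention that registers are never permuted, only filled and emptied---I add, for each atom $\rpred(t_1, \ldots, t_n)$ of $\inst$, a unary adornment $\rpred^{i_1,\ldots,i_n}$ at the topmost node whose bag contains all of $t_1,\ldots,t_n$ (such a node exists and is unique by the decomposition and connectedness properties), where $i_j$ is the register holding $t_j$ there. Since $k$, $\Sigma$, and $n \le \arity{\Sigma}$ are all fixed and finite, this uses only finitely many unary predicates, so the result is a genuine adorned tree $\mathcal{T}$.

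Finally I would spell out the MSO-interpretation $\mathfrak{I}$ recovering $\inst$ from $\mathcal{T}$: $\varphi_\top(x)$ selects the introduce-nodes, $\varphi_=$ is identity, and $\varphi_\rpred(x_1,\ldots,x_n)$ asserts the existence of a node $z$ carrying some adornment $\rpred^{i_1,\ldots,i_n}$ such that, for each $j$, the element introduced at $x_j$ is live in register $i_j$ at $z$. The auxiliary predicate ``the element introduced at $x$ is live in register $i$ at $z$'' is MSO-definable: it says that $x$ is an ancestor of $z$, that $x$ fills register $i$, and that no forget-$i$ node occurs strictly on the path from $x$ to $z$ (ancestry and such path conditions are MSO-expressible over $\ibtree$, e.g.\ via the ancestor formula $\varphi_{\mathrm{ancestor}}$ used elsewhere in this appendix). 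One then checks $\mathfrak{I}(\mathcal{T}) \cong \inst$, yielding $\inst \leq_{\text{MSO}} \mathcal{T}$ and hence $\pw{\inst} \in \mathbb{N}$ by \Cref{thm:pw-coincides-mso-interpretability-trees}. I expect the main obstacle to be the bookkeeping in this last step: ensuring every atom is stored at exactly one well-defined node, that the register-liveness predicate faithfully traces each element back to its introduce-node, and that the nice-binary-form transformation simultaneously preserves finiteness of width and fits inside $\ibtree$.
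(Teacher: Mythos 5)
Your proposal is correct, and it rests on the same pivotal result as the paper's proof -- \Cref{thm:pw-coincides-mso-interpretability-trees}, which reduces finiteness of \partitionwidth{} to MSO-interpretability in an adorned tree -- but it establishes tree-interpretability by a genuinely different route. The paper outsources that step entirely to Courcelle \cite[Theorem 8.1]{Cou89}: the instance is recast as a hypergraph, which (having finite treewidth) is the value of a possibly infinite term over hypergraph-assembly operations, and the evaluation map is MSO-definable inside the term tree; tree-interpretability follows immediately. You instead construct the adorned tree by hand from a nice binary tree decomposition with a register discipline and exhibit the recovering MSO-interpretation explicitly. What the paper's route buys is brevity, since all the combinatorics is delegated to an existing theorem; what your route buys is a self-contained and checkable argument in the paper's own vocabulary, with an adornment signature explicitly bounded in terms of the width $k$ and $\arity{\Sigma}$, and no detour through hypergraph encodings. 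To make your version airtight, three bookkeeping points should be added: (i) a node of an infinite tree decomposition may have \emph{infinitely many} children, so binarization requires the standard trick of chaining the children of such a node along an infinite descending path of copies of its bag (this preserves width and connectedness); (ii) the paper's notion of MSO-interpretation also demands formulas $\phi_{\const{c}}$ for constants, so you should adorn the introduce-node of each constant with a dedicated unary predicate -- legitimate because the \partitionwidth{} setting fixes a finite constant set $\mathrm{Cnst}$; (iii) the paper's definition of tree decomposition identifies nodes with their bags ($V \subseteq 2^{\adom{\inst}}$), so join nodes with duplicated bags are formally disallowed -- you should pass to an abstract tree whose nodes are merely labelled by bags, which is harmless but worth saying. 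None of these is a gap in the idea; with them spelled out, your argument is a valid alternative proof of \Cref{prop:tw-implies-pw}.
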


\begin{proof}
	Courcelle showed~\cite[Theorem 8.1]{Cou89} that, given an MSO-sentence $\Psi$ talking about hypergraphs of treewidth $n$, the set of terms representing hypergraphs that satsify $\Psi$, is MSO-definable. In the course of this proof, he exhibits an MSO-definition that can be used in the follwing way for our proof: Given an instance $\inst$ of finite treewidth, first rewrite it as a hypergraph $\mathcal{G}_{\inst}$. Then there is a (term-)tree $\mathcal{T}$ representing $\mathcal{G}_{\inst}$, meaning that the (possibly infinite) term evaluates to $\mathcal{G}_{\inst}$, $\fcn{val}(\mathcal{T}) = \mathcal{G}_{\inst}$. We can then produce an MSO-definition of $\fcn{val}(\mathcal{T}) $ in $\mathcal{T}$. Since $\mathcal{T}$ is a tree, we obtain that $\mathcal{G}_{\inst}$ is MSO-tree-definable, hence MSO-tree-interpretable. By invoking \cref{thm:pw-coincides-mso-interpretability-trees} and translating our previous result to $\inst$, we obtain that $\inst$ has finite \partitionwidth{}, concluding the proof.
\end{proof}

In order to disambiguate between well-decorated trees in the \partitionwidth{} sense (as in \cref{def:well-dec}) and those in the cliquewidth sense~\cite[Definition 6]{ICDT2023}, let for a set of unary colors $\cols$, $\mathrm{Dec}_{\mathrm{cw}}\colconsig$ denote the set of decorators in the latter sense.

\begin{proposition}\label{prop:cw-implies-pw}
	Let $\inst$ be a countable instance over a signature $\Sigma$.
	If $\inst$ has finite cliquewidth, then $\inst$ has finite \ercwidth{}.
\end{proposition}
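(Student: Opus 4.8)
The plan is to mirror the proof of \Cref{prop:tw-implies-pw} exactly: I will show that every finite-cliquewidth instance is MSO-tree-interpretable and then invoke \Cref{thm:pw-coincides-mso-interpretability-trees}. By assumption there is a finite set of (unary) colors $\cols$ and a $\colconsig$-cliquewidth-well-decorated tree $\mathcal{T}$ (in the sense of \cite[Definition 6]{ICDT2023}) representing $\inst$. Such a tree is itself an adorned binary tree whose unary adornments come from $\mathrm{Dec}_{\mathrm{cw}}\colconsig$, so $\mathcal{T} \in \mathbb{T}$. It therefore suffices to exhibit an MSO-interpretation $\mathfrak{I}$ with $\inst \cong \mathfrak{I}(\mathcal{T})$: this gives $\inst \leq_{\text{MSO}} \mathcal{T}$ with $\mathcal{T}$ an adorned tree, and \Cref{thm:pw-coincides-mso-interpretability-trees} then yields $\ercw{\inst} \in \mathbb{N}$, as required.

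Constructing $\mathfrak{I}$ is the cliquewidth counterpart of \Cref{lem:computable-mso-tree-interpretations}, and I would build it as follows. The domain is carved out by a formula $\varphi_\top(x)$ selecting exactly the element-creating leaves, i.e.\ the nodes decorated by some $\pred{c}_\kappa$ or $\pred{*}_\kappa$; the constant formulas are $\varphi_\const{c}(x) = \bigvee_{\kappa \in \cols} \pred{c}_\kappa(x)$, and---using that in a well-decorated tree each constant occurs at most once---$\varphi_=$ is plain tree equality. The heart of the interpretation is an auxiliary MSO-formula $\chi_\kappa(x,y)$ expressing that $x$ is a descendant of $y$ and that the element created at leaf $x$ carries color $\kappa$ just before the operation decorating $y$ is applied (i.e.\ in the subinstance assembled from $y$'s children). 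Since the colors are finite and unary, the color of $x$ evolves along the path from $x$ towards $y$ as a finite-state process driven solely by the $\fcn{Recolor}_{\kappa\shortto\mu}$-decorations encountered: $\chi_\kappa(x,y)$ holds iff either no recoloring node lies on the path from $x$ up to but excluding $y$ and $x$'s initial color is $\kappa$, or the recoloring node closest to $y$ among those on that path recolors to $\kappa$. Both the path, the ``closest recoloring'' condition, and the ``initial color'' are standard MSO-expressible properties on trees.

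With $\chi_\kappa$ in hand, the relational formulas are immediate: for each $\rpred \in \Sigma$ of arity $n$, I set
\[
\varphi_\rpred(x_1,\ldots,x_n)\ =\ \exists y.\, \bigvee_{\kappa_1,\ldots,\kappa_n \in \cols} \Big( \fcn{Add}_{\rpred,\kappa_1,\ldots,\kappa_n}(y) \wedge \bigwedge_{i=1}^{n} \chi_{\kappa_i}(x_i,y) \Big),
\]
which captures exactly that $\rpred(t_1,\ldots,t_n)$ is produced at some $\fcn{Add}$-node whose argument colors the $t_i$ appropriately. A routine induction over $\mathcal{T}$, paralleling \Cref{def:entities-tree-to-instance}, then shows $\mathfrak{I}(\mathcal{T}) \cong \inst$.

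The only genuinely technical point is the MSO-definability of the color-tracking formula $\chi_\kappa$ together with the inductive verification that $\mathfrak{I}$ reproduces $\inst$; both are routine given the finiteness of $\cols$ and the standard toolkit for MSO over trees. Indeed, this is essentially the classical fact that the value of a cliquewidth term is MSO-interpretable in its syntax tree \cite{CourEngBook}, so one could also simply cite that result in place of the explicit construction. Everything else then follows formally from \Cref{thm:pw-coincides-mso-interpretability-trees}.
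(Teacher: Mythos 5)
Your overall architecture is exactly the paper's: exhibit an MSO-interpretation of $\inst$ in its cliquewidth-sense well-decorated tree $\mathcal{T}$, conclude $\inst \leq_{\text{MSO}} \mathcal{T}$ with $\mathcal{T}$ an adorned tree, and invoke \Cref{thm:pw-coincides-mso-interpretability-trees}. Your domain, constant, equality, and relational formulas also coincide with the paper's construction. However, your explicit characterization of the color-tracking formula $\chi_\kappa(x,y)$ --- the one genuinely technical step --- is incorrect. You claim $\chi_\kappa(x,y)$ holds iff either no recoloring node lies on the path from $x$ up to $y$ and $x$'s initial color is $\kappa$, or the recoloring node closest to $y$ on that path recolors \emph{to} $\kappa$. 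But a decorator $\fcn{Recolor}_{\lambda\shortto\mu}$ changes an element's color only if that element \emph{currently} carries the source color $\lambda$; otherwise it leaves it untouched. So if an element starts with color $\kappa$ and the path contains a single node decorated $\fcn{Recolor}_{\lambda\shortto\mu}$ with $\lambda\neq\kappa$, its true color at $y$ is still $\kappa$, yet your criterion makes $\chi_\kappa$ false (it sees a recoloring node that does not recolor to $\kappa$) and $\chi_\mu$ true (a false positive). The criterion is therefore wrong in both directions, and the ``routine induction'' verifying $\mathfrak{I}(\mathcal{T})\cong\inst$ would break at every recoloring node whose source color does not match.

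The repair is what the paper actually does: one cannot read the color off the last recoloring alone; one must simulate the entire deterministic finite-state evolution of the color along the path. The paper's auxiliary formula $\varphi^{\mathrm{colIn}}_{k}(x,y)$ quantifies over set variables $X_\ell$, one per color $\ell\in\cols$, constrained by closure conditions that propagate ``the element has color $\ell$ at node $z$'' from child to parent according to each decorator's semantics (a recoloring with matching source switches the index; recolorings with non-matching source, $\pred{Add}$-nodes, and $\oplus$-nodes preserve it), and then asserts $X_k(y)$. With this corrected formula substituted for your $\chi_\kappa$, the rest of your argument goes through unchanged and matches the paper's proof. Your fallback --- citing the classical fact that the value of a cliquewidth expression is MSO-interpretable in its syntax tree --- would also close the gap, provided you check that it applies to the infinite expressions underlying the countable-instance version of cliquewidth used here rather than only to finite terms; but as written, the explicit construction you give does not work.
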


\begin{proof}
	Let $\inst$ be an instance of finite cliquewidth over signature $\Sigma$ and set of constants $\mathrm{Const}$. Then there is a finite set of (unary) colors $\cols$ and a $\colconsig$-well-decorated tree $\mathcal{T}$ in the cliquewidth sense~\cite[Definition 6]{ICDT2023} such that $\inst = \inst^{\mathcal{T}}$ in the cliquewidth sense~\cite[Definition 7]{ICDT2023}.
	We recall that for the notion of cliquewidth~\cite[Definition 8]{ICDT2023} an MSO-interpretation of instances in well-decorated trees (in the cliquewidth sense~\cite[Definition 6]{ICDT2023}) has been exhibited.
	Using the results obtained there, let $\varphi_{\top}(x_1) = \bigvee \big(\const{d}_k(x_1) \mid \const{d} \in \mathrm{Const} \cup\{\const{*}\}, k \in \cols\big)$; moreover, let  $ \varphi_{\const{c}}(x_1) = \const{c}_{k}(x_1)$ for all $\const{c}\in\mathrm{Cnst}, k\in\cols$; and let $\varphi_{=}(x_1, x_2) = (x_1 = x_2)$.
	For every $\rpred\in \Sigma$ we define a formula $\varphi_{\rpred}(x_1,\ldots, x_{\arity{\rpred}})$ as follows: Let
    	$$
    	\thephi_\fcn{up}(x,y) = \szero(y,x) \vee \sone(y,x)
    	$$
    	and
    	$$
    	\varphi^\mathrm{colIn}_{k}(x,y) = \varphi_{\top}(x) \wedge 
    	\forall_{\ell \in \cols} X_\ell. 
    	\Big(\big(\hspace{-3ex} \bigwedge_{\hspace{3ex}\psi \in M_{xy}}\hspace{-3ex}\psi\, \big) \to X_k(y)\Big),
    	$$%
    	where $X_\ell$ are MSO set variables and $M_{xy}$ contains the following formulae (for all respective elements of $\mathrm{Dec}_{\mathrm{cw}}\colconsig$):
    	\begin{align*}
    		\const{c}_k(x) & \to X_{k}(x)\\ 
    		\forall x'\!y'\!. X_k(x') \wedge \thephi_\fcn{up}(x'\!,y') \wedge \pred{Add}_{\rpred,k',k''}(y') & \to X_{k}(y')\\ 
    		\forall x'\!y'\!. X_k(x') \wedge \thephi_\fcn{up}(x'\!,y') \wedge \pred{Add}_{\rpred,k'}(y') & \to X_{k}(y')\\ 
    		\forall x'\!y'\!. X_k(x') \wedge \thephi_\fcn{up}(x'\!,y') \wedge \pred{Recolor}_{k\to k'}(y') & \to X_{k'}(y')\\ 
    		\forall x'\!y'\!. X_k(x') \wedge \thephi_\fcn{up}(x'\!,y') \wedge \pred{Recolor}_{k'\!\to k''}(y') & \to X_{k}(y')\hspace{1.5ex}\text{for }k{\neq} k'\\ 
    		\forall x'\!y'\!. X_k(x') \wedge \thephi_\fcn{up}(x'\!,y') \wedge \oplus(y') & \to X_{k}(y')
    	\end{align*}
    	For each predicate $\rpred\in \Sigma$ of arity $n$, we define
    	\begin{align*}
    		\varphi_{\rpred}(x_1,\ldots,x_n) & = 
    		\exists z.\hspace{-4ex}
    		\bigvee_{\hspace{2ex}
    			\scaleobj{0.9}{(k_1,\ldots,k_n) \in \cols^n}
    		}\hspace{-4ex} 
    		\scaleobj{0.8}{\Big(}
    		\,\pred{Add}_{\rpred,k_1,\ldots,k_n}(z) \wedge\! \scaleobj{0.8}{\bigwedge_{\scaleobj{1.25}{1 \leq i \leq n}}} \varphi^\mathrm{colIn}_{k_i}(x_i,z) 
    		\scaleobj{0.8}{\Big)}
    	\end{align*}
    	
    	Then $\mathfrak{I}$ is the sequence of the MSO-formulae: $\varphi_{\rpred}(x_1, \ldots, x_{\arity{\rpred}})$ for every $\rpred\in\Sigma{\cup}\{ \top \}$; $\varphi_{\const{c}}(x_1)$ for every $\const{c}\in\mathrm{Const}$; and $\varphi_{=}(x_1, x_2)$ as described above.
    	Hence $\inst \leq_{\text{MSO}} \mathcal{T}$. Invoking \Cref{thm:pw-coincides-mso-interpretability-trees} we obtain that $\inst$ has finite \partitionwidth{}.
\end{proof}

\begin{customprop}{\ref{th:hcw=cw-bin}}
	Let $\inst$ be a countable instance over a binary signature $\Sigma$.
	$\inst$ has finite cliquewidth \iffi $\inst$ has finite \ercwidth{}.
\end{customprop}

Before proving \cref{th:hcw=cw-bin}, we introduce some technical notions.

\begin{definition}\label{def:recolcycl}
	Let $\mathcal{T}$ be a $\colconsig$-well-decorated tree and $i\in\{ 0,1 \}$. Let $s$ be a node of $\mathcal{T}$ that is an ancestor of a pseudoleaf. We call $s$ \emph{recolor-cyclical for $i$}, if there is a $n\ge 2$ and a sequence of colors $\kappa_1, \ldots, \kappa_n\in\cols$ of the same arity, such that $\recol{\kappa_1}{\kappa_2}^{i}(s), \ldots, \recol{\kappa_{n-1}}{\kappa_n}^{i}(s), \recol{\kappa_n}{\kappa_1}^{i}(s)\in\mathcal{T}$. In this case we call $(i, \kappa_1, \ldots, \kappa_n)$ a \emph{recolor cycle at $s$ for $i$}. We call $s$ \emph{recolor-cyclical} if it is recolor-cyclical for some $i\in\{ 0,1 \}$. If $s$ is not recolor-cyclical, we also call $s$ \emph{non-recolor-cyclical}.
	We say an instruction $\recol{\kappa}{\lambda}^{i}$ \emph{participates in a recolor-cycle at $s$}, if there is some recolor-cycle $(i, \kappa_1, \ldots, \kappa_m)$ at $s$ and either some $1\le j \le m-1$ such that $\kappa = \kappa_j$ and $\lambda = \kappa_{j+1}$ or $\kappa = \kappa_m$ and $\lambda = \kappa_1$.\defend
\end{definition}
\allowdisplaybreaks
\begin{definition}\label{def:hcwtocw-bin}
	Let $\Sigma$ be a binary signature. Given a $\colconsig$-well-decorated tree $\mathcal{T}$ (in the \partitionwidth{}-sense, \cref{def:well-dec}) in normal form, we associate to it a $(\cols_{\mathrm{cw}}, \mathrm{Cnst}, \Sigma)$-well-decorated tree (in the cliquewidth sense~\cite[Definition 6]{ICDT2023}) inductively as detailed in the following.
	
	First define the set of (unary) colors $\cols_{\mathrm{cw}} = \cols_1 \uplus (\cols_1 \times \{ 0,1 \}) \uplus \{ \mu_{\mathrm{cb}}\}$, where $\cols_1\subseteq\cols$ is the subset of all unary colors of $\cols$ and $\mu_{\mathrm{cb}}$ is a still unused color.
	The following auxiliary maps will be defined in an interleaving top-down way along the construction (where $P$ denotes the pseudoleafs of $\mathcal{T}$ and $A_P$ all proper ancestors of pseudoleafs):  $\fcn{r}^{\mathcal{T}}\colon P\disun A_{P} \to \{ 0,1 \}^*$, $\fcn{suc}^{\mathcal{T}}_0\colon A_{P} \to \{ 0,1 \}^*$, $\fcn{suc}^{\mathcal{T}}_1\colon A_{P} \to \{ 0,1 \}^*$, and a map associating to every $s\in P\disun A_P$ an instance $\mathcal{T}_{\mathrm{cw}}^{s}$.
	Let $s\in P\disun A_{P}$ and $\fcn{r}^{\mathcal{T}}, \fcn{suc}^{\mathcal{T}}_0, \fcn{suc}^{\mathcal{T}}_1$ be defined for every proper ancestor $s'$ of $s$ but not for $s$. 
	If $s = \varepsilon$, initialize $\fcn{r}^{\mathcal{T}}\!(\varepsilon) = \varepsilon$. Otherwise define $\fcn{r}^{\mathcal{T}}\!(s) = \fcn{suc}_{i}^{\mathcal{T}}(s')$ where $s'i = s$, i.e. $s'$ is the unique predecessor of $s$.
	We distinguish between two cases:
	\begin{itemize}
		\item \emph{$s$ is a pseudoleaf}. Then set $\mathcal{T}_{\mathrm{cw}}^{s} = \{ \const{d}_{\origcol_1}(\fcn{r}^{\mathcal{T}}\!(s)) \mid \newconst{d}(s)\in\mathcal{T}, \const{d} \in \mathrm{Cnst} \cup \{\ast\} \}$.
		\item \emph{$s$ is not a pseudoleaf}. Then define with $i\in\{ 0,1 \}$ the sets
		\begin{align*}
			&\fcn{PRec}^{i}_{s} = \{ \recol{\kappa}{\lambda}^{i} \mid \kappa\in\cols_1,  \recol{\kappa}{\lambda}^{i}(s)\in\mathcal{T} \},  \hspace{5ex}\fcn{URel}^{i}_{s} = \{ \unradd{\lambda}{\rpred}^{i} \mid \rpred\in\Sigma \}, \text{ and} \\
			&\fcn{Rel}_{s} = \{ \radd{\kappa_1}{\kappa_2}{I}{\rpred} \mid \rpred\in\Sigma, \arity{\rpred} = 2,  \radd{\kappa_1}{\kappa_2}{I}{\rpred}(s)\in\mathcal{T}  \}.
		\end{align*}
		Define $\fcn{C}_{s}^{i}$ to be the set of all recolor cycles at $s$ for $i\in\{ 0, 1\}$. Set $\fcn{Rec}^{i}_{s} = \fcn{PRec}^{i}_s \cup  \fcn{C}_{s}^{i}$.
		Let $k_s = |\fcn{Rel}_s|$, $k_s^{i} = |\fcn{URel}^{i}_{s}|$, $\ell_s^{i} = |\fcn{Rec}_s^{i}|$, and let $\fcn{f}_s\colon \fcn{Rel}_{s} \to \{ 0,\ldots, k_s-1 \}$ and $\fcn{f}_s^{i}\colon \fcn{URel}_{s}^{i} \to \{ 0,\ldots, k_s^{i}-1 \}$ be bijections. Additionally let $\fcn{g}_s^{i}\colon \fcn{Rec}_{s}^{i} \to \{ 0,\ldots, \ell_s^{i}-1 \}$ be bijections with the following properties: For every $\mathrm{cyc}\in\fcn{C}^{i}_s$ where $\mathrm{cyc} = (i, \kappa_1, \ldots, \kappa_m)$:
		\begin{itemize}
			\item $\fcn{g}^{i}_{s}(\mathrm{cyc}) = \fcn{g}^{i}_{s}(\recol{\kappa_{m}}{\kappa_{1}}) + 1$ and $\fcn{g}^{i}_{s}(\recol{\kappa_{m}}{\kappa_{1}}) =  \fcn{g}^{i}_{s}(\recol{\kappa_{m-1}}{\kappa_{m}}) + 1$, 
			\item for every $1 \le k\le m-2$ we have $\fcn{g}^{i}_{s}(\recol{\kappa_{k+1}}{\kappa_{k+2}}) = \fcn{g}^{i}_{s}(\recol{\kappa_{k}}{\kappa_{k+1}}) + 1$.
		\end{itemize}
		Then we define $\fcn{suc}^{\mathcal{T}}_0(s)$ and $\fcn{suc}^{\mathcal{T}}_1(s)$ by letting $\fcn{suc}^{\mathcal{T}}_0(s) = \fcn{r}^{\mathcal{T}}\!(s)0^{2|\cols_1| +k_s+k_s^0+k_s^1+\ell_s^0+\ell_s^1+1 + |\cols_1|}$ and $\fcn{suc}^{\mathcal{T}}_1(s) = \fcn{r}^{\mathcal{T}}\!(s)0^{2|\cols_1| +k_s+k_s^0+k_s^1+\ell_s^0+\ell_s^1}10^{|\cols_1|}$. Then $\mathcal{T}_{\mathrm{cw}}^{s}$ is the union of the following instances over $\mathrm{Dec}_{\mathrm{cw}}(\cols^*, \mathrm{Cnst}, \Sigma)$:\small
		\begin{align*}
			\mathcal{T}_{\mathrm{pre}, 0}^{s} &=  \{ \pred{Recolor}_{\kappa \shortto (\kappa, 0)}(\fcn{r}^{\mathcal{T}}\!(s)0^{2|\cols_1| +k_s+k_s^0+k_s^1+\ell_s^0+\ell_s^1+1 + j}) \hspace{-3ex}&& \mid \kappa\in\cols_1, \fcn{coen}(\kappa) = j \} \\
			\mathcal{T}_{\mathrm{pre}, 1}^{s} &=  \{ \pred{Recolor}_{\kappa \shortto (\kappa, 1)}(\fcn{r}^{\mathcal{T}}\!(s)0^{2|\cols_1| + k_s+k_s^0+k_s^1+\ell_s^0+\ell_s^1}10^{j}) \!\! && \mid \kappa\in\cols_1, \fcn{coen}(\kappa) = j \} \\
			\mathcal{T}_{\mathrm{un}}^{s} & = \{ \oplus(\fcn{r}^{\mathcal{T}}\!(s)0^{2|\cols_1| +k_s+k_s^0+k_s^1+\ell_s^0+\ell_s^1}) \} &&\\
			\mathcal{T}_{\pred{Add}}^{s} &=  \{ \pred{Add}_{\rpred, (\kappa_1, i_1),(\kappa_2, i_2)}(\fcn{r}^{\mathcal{T}}\!(s)0^{2|\cols_1| + k_s^0+k_s^1+\ell_s^0+\ell_s^1+j}) &&\mid \radd{\kappa_1}{\kappa_2}{I}{\rpred}\in\fcn{Rel}_{s},I = \{i_1 +1 \}, \\[-1ex]
			&&& \hspace{3ex} i_1+i_2 = 1,\fcn{f}_s(\radd{\kappa_1}{\kappa_2}{I}{\rpred}) = j   \} \\
			\mathcal{T}_{\pred{UUAdd},0}^{s} &=  \{ \pred{Add}_{\rpred, (\kappa, 0)}(\fcn{r}^{\mathcal{T}}\!(s)0^{2|\cols_1| +k_s^1+\ell_s^0+\ell_s^1+j}) &&\hspace{-7ex}\mid \arity{\rpred} = 1,  \unradd{\kappa}{\rpred}^0\in\fcn{URel}_{s}^0,\fcn{f}_s^{0}(\unradd{\kappa}{\rpred}^{0}) = j   \} \\
\mathcal{T}_{\pred{UBAdd},0}^{s} &=  \{ \pred{Add}_{\rpred, (\kappa, 0), (\kappa, 0)}(\fcn{r}^{\mathcal{T}}\!(s)0^{2|\cols_1| +k_s^1+\ell_s^0+\ell_s^1+j}) \!\!\! &&\hspace{-7ex}\mid \arity{\rpred} = 2,  \unradd{\kappa}{\rpred}^0\in\fcn{URel}_{s}^0,\fcn{f}_s^{0}(\unradd{\kappa}{\rpred}^{0}) = j   \} \\
\mathcal{T}_{\pred{UUAdd},1}^{s} &=  \{ \pred{Add}_{\rpred, (\kappa, 1)}(\fcn{r}^{\mathcal{T}}\!(s)0^{2|\cols_1| +\ell_s^0+\ell_s^1+j}) &&\hspace{-7ex}\mid \arity{\rpred} = 1,  \unradd{\kappa}{\rpred}^1\in\fcn{URel}_{s}^1,\fcn{f}_s^{1}(\unradd{\kappa}{\rpred}^{1}) = j   \} \\
\mathcal{T}_{\pred{UBAdd},1}^{s} &=  \{ \pred{Add}_{\rpred, (\kappa, 1), (\kappa, 1)}(\fcn{r}^{\mathcal{T}}\!(s)0^{2|\cols_1| +\ell_s^0+\ell_s^1+j}) &&\hspace{-7ex}\mid \arity{\rpred} = 2,  \unradd{\kappa}{\rpred}^1\in\fcn{URel}_{s}^1,\fcn{f}_s^{1}(\unradd{\kappa}{\rpred}^{1}) = j   \} \\
		\end{align*}
\pagebreak
		\begin{align*}
			\mathcal{T}_{\pred{Rec}, 0}^{s} &=  \{ \pred{Recolor}_{(\kappa, 0)\shortto(\lambda, 0)}(\fcn{r}^{\mathcal{T}}\!(s)0^{2|\cols_1| + \ell_s^1+ j}) &&\mid \recol{\kappa}{\lambda}^{0}\in\fcn{Rec}_s^{0}, \fcn{g}_s^{0}(\recol{\kappa}{\lambda}^{0}) = j   \\[-1ex] &&& \hspace{3ex}\forall\mathrm{cyc}\in\fcn{Rec}_s^{0}. \mathrm{cyc} = (0, \kappa_1, \ldots, \kappa_m) \rightarrow \kappa \neq \kappa_1 \} \\
			\mathcal{T}_{\pred{Rec}, 1}^{s} &=  \{ \pred{Recolor}_{(\kappa, 1)\shortto(\lambda, 1)}(\fcn{r}^{\mathcal{T}}\!(s)0^{2|\cols_1|  + j}) &&\mid \recol{\kappa}{\lambda}^{1}\in\fcn{Rec}_s^{1},\fcn{g}_s^1(\recol{\kappa}{\lambda}^{1}) = j , \\[-1ex] &&& \hspace{3ex}\forall\mathrm{cyc}\in\fcn{Rec}_s^{1}. \mathrm{cyc} = (1, \kappa_1, \ldots, \kappa_m) \rightarrow \kappa \neq \kappa_1 \} \\
			\mathcal{T}_{\pred{CRecOp}, 0}^{s} &=  \{ \pred{Recolor}_{(\kappa_1, 0)\shortto\mu_{\mathrm{cb}}}(\fcn{r}^{\mathcal{T}}\!(s)0^{2|\cols_1| + \ell_s^1+ j}) &&\mid \mathrm{cyc}\in\fcn{Rec}_s^{0}, \fcn{g}_s^{0}(\mathrm{cyc}) = j, \mathrm{cyc} = (0, \kappa_1, \ldots, \kappa_m)  \} \\
			\mathcal{T}_{\pred{CRecClo}, 0}^{s} &=  \{ \pred{Recolor}_{\mu_{\mathrm{cb}} \shortto (\kappa_2, 0)}(\fcn{r}^{\mathcal{T}}\!(s)0^{2|\cols_1| + \ell_s^1+ j}) &&\mid (0, \kappa_1, \ldots, \kappa_m)\in\fcn{Rec}_s^{0} , \fcn{g}_s^{0}(\recol{\kappa_1}{\kappa_2}^0) = j \} \\
			\mathcal{T}_{\pred{CRecOp}, 1}^{s} &=  \{ \pred{Recolor}_{(\kappa_1, 1)\shortto\mu_{\mathrm{cb}}}(\fcn{r}^{\mathcal{T}}\!(s)0^{2|\cols_1|  + j}) &&\mid  \mathrm{cyc}\in\fcn{Rec}_s^{1}, \fcn{g}_s^{1}( \mathrm{cyc}) = j, \mathrm{cyc} = (1, \kappa_1, \ldots, \kappa_m) \} \\
			\mathcal{T}_{\pred{CRecClo}, 1}^{s} &=  \{ \pred{Recolor}_{\mu_{\mathrm{cb}} \shortto (\kappa_2, 1)}(\fcn{r}^{\mathcal{T}}\!(s)0^{2|\cols_1| + j}) &&\mid (1, \kappa_1, \ldots, \kappa_m)\in\fcn{Rec}_s^{1} , \fcn{g}_s^{1}(\recol{\kappa_1}{\kappa_2}^1) = j \} \\
			\mathcal{T}_{\mathrm{pos}, 0}^{s} &=  \{ \pred{Recolor}_{(\kappa, 0)\shortto \kappa}(\fcn{r}^{\mathcal{T}}\!(s)0^{|\cols_1|+j}) &&\mid \kappa\in\cols_1, \fcn{coen}(\kappa) = j \} \\
			\mathcal{T}_{\mathrm{pos}, 1}^{s} &=  \{ \pred{Recolor}_{(\kappa, 1)\shortto \kappa}(\fcn{r}^{\mathcal{T}}\!(s)0^{j}) &&\mid \kappa\in\cols_1, \fcn{coen}(\kappa) = j \}.
		\end{align*}
	\end{itemize}
	Thus we obtain a map that maps every $s\in P\oplus A_P$ to an instance $\mathcal{T}_{\mathrm{cw}}^s$.
	We define $\mathcal{T}_{\mathrm{cw}}^{\mathrm{aux}} = \bigcup_{s\in P\disun A_P}\mathcal{T}_{\mathrm{cw}}^{s}$. Finally,
	$$\mathcal{T}_{\mathrm{cw}} = \mathcal{T}_{\mathrm{bin}} \cup \mathcal{T}_{\mathrm{cw}}^{\mathrm{aux}} \cup \{ \pred{Void}(s) \mid \forall \pred{Dec} \in\mathrm{Dec}_{\mathrm{cw}}(\cols^*, \mathrm{Cnst}, \Sigma).  \pred{Dec}(s)\notin \mathcal{T}_{\mathrm{cw}}^{\mathrm{aux}} \}.$$%
	Additionally, we define a map $\fcn{h}^{\mathcal{T}}\colon \adom{\inst^{\mathcal{T}}} \to \adom{\inst^{\mathcal{T}_{\mathrm{cw}}}}$ as follows:
	$$\fcn{h}^{\mathcal{T}}\!(e) = \begin{cases} \const{c}, &\text{if } e = \const{c} \text{ for some } \const{c}\in\mathrm{Const}, \\ \fcn{r}^{\mathcal{T}}\!(e), &\text{otherwise} \end{cases}$$%
	where $\inst^{\mathcal{T}_{\mathrm{cw}}}$ is interpreted in the cliquewidth sense~\cite[Definition 7]{ICDT2023}.\defend
\end{definition}
\begin{observation}\label{obs:hcwtocw-bin}
	We observe the following:
	\begin{enumerate}
		\item \label{item:obshcwtocw-bin-1} $\mathcal{T}_{\mathrm{cw}}$ is a $(\cols_{\mathrm{cw}}, \mathrm{Const}, \Sigma)$-well-decorated tree (in the cliquewidth sense~\cite[Definition 6]{ICDT2023}).
		\item \label{item:obshcwtocw-bin-2} By construction, $\fcn{h}^{\mathcal{T}}$ is a bijection between the terms of $ \inst^{\mathcal{T}}$ and $\inst^{\mathcal{T}_{\mathrm{cw}}}$.
	\end{enumerate}
\end{observation}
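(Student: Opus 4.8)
The plan is to read both claims directly off the explicit decorator placement in \Cref{def:hcwtocw-bin}; the single fact that does the real work is a \emph{disjointness invariant} for the blocks $\mathcal{T}_{\mathrm{cw}}^{s}$, which I would prove first by induction on the distance of $s$ from the root and then reuse for both parts. Writing $N_s = 2|\cols_1|+k_s+k_s^0+k_s^1+\ell_s^0+\ell_s^1$, the unique $\oplus$-node of the block sits at $\fcn{r}^{\mathcal{T}}(s)0^{N_s}$, the left successor at $\fcn{suc}^{\mathcal{T}}_0(s)=\fcn{r}^{\mathcal{T}}(s)0^{N_s+1+|\cols_1|}$, and the right successor at $\fcn{suc}^{\mathcal{T}}_1(s)=\fcn{r}^{\mathcal{T}}(s)0^{N_s}10^{|\cols_1|}$. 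These two strings share the prefix $\fcn{r}^{\mathcal{T}}(s)0^{N_s}$ and then diverge on a $0$ versus a $1$, so they are incomparable and both lie strictly below every operation node of the block (whose positions are $\fcn{r}^{\mathcal{T}}(s)0^{j}$ with $j\le N_s$, together with the right-spine pre-recolorings $\fcn{r}^{\mathcal{T}}(s)0^{N_s}10^{j}$, $j<|\cols_1|$). The invariant I would establish is: the decorated positions of $\mathcal{T}_{\mathrm{cw}}^{s}$ all lie in the subtree below $\fcn{r}^{\mathcal{T}}(s)$, and the subtrees rooted at $\fcn{suc}^{\mathcal{T}}_0(s)=\fcn{r}^{\mathcal{T}}(s0)$ and $\fcn{suc}^{\mathcal{T}}_1(s)=\fcn{r}^{\mathcal{T}}(s1)$ are disjoint from each other and from those positions. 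By induction this yields that the blocks $\{\mathcal{T}_{\mathrm{cw}}^{s}\}_{s}$ occupy pairwise disjoint node sets and, in particular, that $s\mapsto\fcn{r}^{\mathcal{T}}(s)$ is injective.

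For the first claim, I would then check the defining conditions of a well-decorated tree in the cliquewidth sense~\cite[Definition 6]{ICDT2023} block by block. Disjointness of blocks plus the fact that within a single non-pseudoleaf block the exponents fall into the consecutive, non-overlapping ranges dictated by $|\cols_1|,k_s,k_s^i,\ell_s^i$ guarantees that no node receives two decorators; every node not decorated by some block receives $\pred{Void}$ by the final clause of the construction. Each unary decorator ($\pred{Recolor}$ or $\pred{Add}$) lies on a $0$-spine, so its $0$-child continues the spine and its $1$-child is undecorated (hence $\pred{Void}$), while the lone $\oplus$-node at $\fcn{r}^{\mathcal{T}}(s)0^{N_s}$ has both children used ($\mathcal{T}_{\mathrm{pre},0}^{s}/\fcn{suc}^{\mathcal{T}}_0(s)$ on the left, $\mathcal{T}_{\mathrm{pre},1}^{s}/\fcn{suc}^{\mathcal{T}}_1(s)$ on the right); this is exactly the arity/used-children matching required of cliquewidth decorations. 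A pseudoleaf block places a single nullary decorator ($\const{c}_{\origcol_1}$ or $\pred{*}_{\origcol_1}$) at $\fcn{r}^{\mathcal{T}}(s)$ and, since $\fcn{suc}^{\mathcal{T}}_i(s)$ is undefined for pseudoleaves, emits no descendants, so pseudoleaves carry no further decorator and have undecorated (Void) descendants. Finally, uniqueness of constants transfers from the normal-form well-decoration of $\mathcal{T}$ (\Cref{def:well-dec}, third bullet) together with the injectivity of $\fcn{r}^{\mathcal{T}}$.

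For the second claim, I would unfold the definition of $h^{\mathcal{T}}$. On constants it is the identity, and the constants of $\inst^{\mathcal{T}_{\mathrm{cw}}}$ are precisely those $\const{c}$ for which a $\const{c}_{\origcol_1}$-decorator is placed, i.e. the constants of $\inst^{\mathcal{T}}$, so $h^{\mathcal{T}}$ restricts to a bijection there. On nulls, $e\mapsto\fcn{r}^{\mathcal{T}}(e)$; the nulls of $\inst^{\mathcal{T}_{\mathrm{cw}}}$ are exactly the positions carrying a $\pred{*}_{\origcol_1}$-decorator, which by construction are exactly $\{\fcn{r}^{\mathcal{T}}(e)\mid e\text{ a null-pseudoleaf of }\mathcal{T}\}$. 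Injectivity on nulls is the injectivity of $\fcn{r}^{\mathcal{T}}$ from the invariant, surjectivity holds because every nullary decorator of $\mathcal{T}_{\mathrm{cw}}$ originates from a pseudoleaf block, and constant-images cannot collide with null-images since they are of different sort in $\inst^{\mathcal{T}_{\mathrm{cw}}}$. Combining the two restrictions gives a bijection on $\adom{\inst^{\mathcal{T}}}=\mathrm{Cnst}\text{-part}\disun\text{null-part}$.

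The main obstacle is that everything rests on the disjointness invariant, and making it airtight is pure (if routine) bookkeeping: one must read off, from the explicit exponents $0^{N_s}$, $0^{N_s+1+|\cols_1|}$, and $0^{N_s}10^{|\cols_1|}$, that both successor positions lie strictly below all operation nodes of the block and that the $0$- and $1$-successor subtrees never interleave. Once that induction is set up carefully, both the well-decoration conditions and the bijectivity of $h^{\mathcal{T}}$ follow immediately, which is why the statement is phrased as an \emph{observation} holding ``by construction''.
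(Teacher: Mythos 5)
Your proof is correct and takes the same route the paper does: the paper states this Observation without an explicit proof, treating it as immediate ``by construction,'' and your disjointness invariant on the blocks $\mathcal{T}_{\mathrm{cw}}^{s}$ (proved by induction, then reused for the well-decoration check and for injectivity of $\fcn{r}^{\mathcal{T}}$) is precisely the bookkeeping the paper leaves implicit. One parenthetical in your setup is inaccurate --- the two successors do \emph{not} lie strictly below every operation node of the block, since $\fcn{suc}^{\mathcal{T}}_1(s)$ is incomparable to the left pre-recoloring nodes $\fcn{r}^{\mathcal{T}}(s)0^{N_s+1+j}$ (which, note, have exponent exceeding $N_s$) and $\fcn{suc}^{\mathcal{T}}_0(s)$ is incomparable to the right-spine nodes $\fcn{r}^{\mathcal{T}}(s)0^{N_s}10^{j}$ --- but the invariant you actually induct on (block positions lie below $\fcn{r}^{\mathcal{T}}(s)$ and outside the two successor subtrees, which are mutually disjoint) is stated correctly and carries the whole argument unchanged.
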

\begin{lemma}\label{lem:hcwtocw-bin-cols}
	Let $\Sigma$ be a binary signature and $\mathcal{T}$ a $\colconsig$-well-decorated tree. Then for every node $s$ of $\mathcal{T}$ that is a pseudoleaf or an ancestor of a pseudoleaf, every $e\in\ent^{\mathcal{T}}\!(s)$, and every $\kappa\in\cols_1$ the following are equivalent:
	\begin{enumerate}
		\item $e\in\col_s^{\mathcal{T}}(\kappa)$
		\item $\col_{\fcn{r}^{\mathcal{T}}\!(s)}^{\mathcal{T}_{\mathrm{cw}}}(\fcn{h}^{\mathcal{T}}\!(e)) = \kappa$ (in the cliquewidth sense~\cite[Definition 7]{ICDT2023}).
	\end{enumerate}
\end{lemma}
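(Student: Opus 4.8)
The plan is to prove the biconditional by showing that the unique unary color which partitionwidth assigns to $e$ at $s$ coincides with the unique color which cliquewidth assigns to $h^{\mathcal{T}}(e)$ at $\fcn{r}^{\mathcal{T}}(s)$. Both semantics assign exactly one unary color to each entity (at a pseudoleaf it is $\origcol_1$, and by the compatibility requirement of \Cref{def:well-dec} every recoloring step is functional), so it suffices to match these two colors; the biconditional over all $\kappa\in\cols_1$ then follows. A preliminary simplification: for unary $\lambda$, the color-adding summand $\cadd{\kappa}{\mu}{I}{\lambda}$ of the defining equation in \Cref{def:entities-tree-to-instance} is vacuous, since $\arity{\kappa}+\arity{\mu}\ge 2 > 1 = \arity{\lambda}$. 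Hence the partitionwidth color of an entity at $s$ arises from its color at the relevant child by a \emph{single, simultaneous} application of the side-$i$ recoloring function determined by the instructions $\recol{\kappa}{\lambda}^i(s)\in\mathcal{T}$. We may assume $\mathcal{T}$ is in normal form, as required by \Cref{def:hcwtocw-bin} (legitimate by \Cref{cor:tonf}).

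I would run the induction not on the (possibly infinite) tree but on the length of the path from the pseudoleaf $\fcn{orig}(e)$ introducing $e$ up to $s$; for a fixed entity this distance is finite even when $\mathcal{T}$ is infinite. In the base case $s=\fcn{orig}(e)$ is a pseudoleaf: by the pseudoleaf clause of \Cref{def:entities-tree-to-instance}, $e$ lies in $\col_s^{\mathcal{T}}(\origcol_1)$ and in no other unary color, while the pseudoleaf clause of \Cref{def:hcwtocw-bin} introduces $h^{\mathcal{T}}(e)$ with color $\origcol_1$, so the assignments agree. For the inductive step, since $\ent^{\mathcal{T}}(s)=\ent^{\mathcal{T}}(s0)\sqcup\ent^{\mathcal{T}}(s1)$ for non-pseudoleaf $s$, the entity $e$ belongs to exactly one side, say $\ent^{\mathcal{T}}(si)$ with $si$ on the path to $\fcn{orig}(e)$, and the induction hypothesis gives the color of $h^{\mathcal{T}}(e)$ at $\fcn{r}^{\mathcal{T}}(si)=\fcn{suc}_i^{\mathcal{T}}(s)$.

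The core is to trace the cliquewidth operations placed by \Cref{def:hcwtocw-bin} along the chain from $\fcn{suc}_i^{\mathcal{T}}(s)$ up to $\fcn{r}^{\mathcal{T}}(s)$ and to show their net effect on the color of $h^{\mathcal{T}}(e)$ is exactly the single partitionwidth recoloring step. Reading upwards, the operations group into phases: the $\mathcal{T}_{\mathrm{pre},i}$ block tags the color $\kappa$ as $(\kappa,i)$; the $\oplus$ node merges the two subinstances; the predicate-adding blocks $\mathcal{T}_{\pred{Add}}$, $\mathcal{T}_{\pred{UUAdd},j}$, $\mathcal{T}_{\pred{UBAdd},j}$ are color-inert and may be ignored here (they serve only the companion statement about atoms); the recoloring blocks $\mathcal{T}_{\pred{Rec},i}$, $\mathcal{T}_{\pred{CRecOp},i}$, $\mathcal{T}_{\pred{CRecClo},i}$ act; and finally $\mathcal{T}_{\mathrm{pos},i}$ untags, sending $(\lambda,i)$ to $\lambda$. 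Because $e$ carries the tag $i$, only the side-$i$ recoloring block can touch it, which mirrors the fact that a side-$i$ entity is governed precisely by the $\recol{\cdot}{\cdot}^i(s)$-instructions in partitionwidth, while the side-$(1{-}i)$ blocks leave it fixed.

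The main obstacle, and the only genuinely delicate step, is verifying that the recoloring block realizes \emph{exactly one} application of the side-$i$ recoloring function, i.e.\ that no entity is recolored twice. Viewing the instructions $\recol{\kappa}{\lambda}^i(s)$ as a functional graph on $\cols_1$ (out-degree at most one, by well-decoration), its components are cycles with in-trees feeding into them. I would argue by a case distinction on $\kappa$, the tagged color of $h^{\mathcal{T}}(e)$. If $\kappa$ is the source of no instruction, no $\pred{Recolor}$ matches and the color is preserved, agreeing with the ``keep'' branch of the partitionwidth equation. If $\kappa$ lies on a tree part feeding into a cycle, the freedom in the bijection $\fcn{g}_s^i$ is used to schedule downstream recolorings before upstream ones, so the unique target is reached and never overwritten. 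If $\kappa$ participates in a recolor cycle $(i,\kappa_1,\dots,\kappa_m)$, the explicit ordering conditions imposed on $\fcn{g}_s^i$ apply the cycle in reverse cyclic order, with the clipboard color $\mu_{\mathrm{cb}}$ opening the cycle at $\kappa_1$ via $\mathcal{T}_{\pred{CRecOp},i}$ and closing it via $\mathcal{T}_{\pred{CRecClo},i}$; a short trace of the three subcases $\kappa=\kappa_1$, $\kappa=\kappa_m$, and $\kappa=\kappa_j$ for $1<j<m$ shows each $\kappa_j$ is sent to $\kappa_{(j\bmod m)+1}$ exactly once, matching the simultaneous permutation dictated by partitionwidth. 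Assembling the phases yields that $h^{\mathcal{T}}(e)$ has color $\lambda$ at $\fcn{r}^{\mathcal{T}}(s)$ precisely when $e\in\col_s^{\mathcal{T}}(\lambda)$, which closes the induction.
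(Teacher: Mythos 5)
Your proposal is correct and rests on the same skeleton as the paper's own proof: an induction on the distance of $s$ from the pseudoleaf $\fcn{orig}(e)$, with the base case matching $\origcol_1$ on both sides and the inductive step passing through the unique child $si_s$ on the path down to $\fcn{orig}(e)$. The difference lies in how much of the inductive step is actually verified. The paper splits into just two cases---$e$ keeps its color at $s$, or it is recolored by some instruction $\recol{\lambda}{\kappa}^{i_s}(s)\in\mathcal{T}$---and in each case cites ``by construction'' the absence or presence of a matching $\pred{Recolor}$ operation on the segment between $\fcn{r}^{\mathcal{T}}(si_s)$ and $\fcn{r}^{\mathcal{T}}(s)$, then concludes immediately; it never mentions the tagging/untagging phases, the clipboard color $\mu_{\mathrm{cb}}$, or the ordering conditions on $\fcn{g}_s^{i}$. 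Your phase-by-phase trace and three-way case split on the functional graph of side-$i_s$ instructions supply exactly what that terse argument leaves unchecked: that $h^{\mathcal{T}}(e)$ is recolored \emph{at most once} along the segment, so that in the color-changed case no later $\pred{Recolor}$ with source $(\kappa,i_s)$ overwrites the result. This is not pedantry---the clipboard machinery in \Cref{def:hcwtocw-bin} exists precisely for this reason---and your analysis of chains feeding into cycles exposes a genuine subtlety: the definition constrains the bijections $\fcn{g}_s^{i}$ only on recolor cycles, so the downstream-before-upstream sequencing of chain recolorings that your argument (and, implicitly, the paper's conclusion) requires is a property of a \emph{good} choice of $\fcn{g}_s^{i}$, not of every admissible one. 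Your proof thus quietly strengthens the construction by fixing such a choice, which is legitimate for the enclosing \cref{th:hcw=cw-bin} (where a single valid $\mathcal{T}_{\mathrm{cw}}$ suffices), whereas the paper's proof passes over the issue entirely. In short: same route, but your version closes the one delicate step that the published proof merely asserts.
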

\begin{proof}
	Simple induction over the distance of $s$ from $s_e = \fcn{orig}(e)$.
	\begin{itemize}
		\item $s = s_e$. Then $e\in\col_s^{\mathcal{T}}(\kappa)$ \iffi $\kappa = \origcol_1$. This is equivalent to $\const{c}(s)\in\mathcal{T}$ holding for some $\const{c}\in\mathrm{Cnst}{\cup}\{ \ast \}$. This is equivalent by construction to $\const{c}_{\origcol_1}(\fcn{r}^{\mathcal{T}}\!(s))\in\mathcal{T}_{\mathrm{cw}}$. Which is equivalent to $\col_{\fcn{r}^{\mathcal{T}}\!(s)}^{\mathcal{T}_{\mathrm{cw}}}(\fcn{h}^{\mathcal{T}}\!(e)) = \kappa$. 
		\item $s$ is a proper ancestor of $s_e$ and the assertion holds for every node $s^{\ast}$ being an ancestor of $s_e$ such that $s'$ is a proper ancestor of $s^{\ast}$. Consequently also for the unique $s'$ such that $s'= si $ and $s'$ is an ancestor of $s_e$. 
		Assume $e\in\col_{s}^{\mathcal{T}}(\kappa)$.We have two cases to consider:
		\begin{itemize}
			\item $e\in\col_{si}^{\mathcal{T}}(\kappa)$. Then $\recol{\kappa}{\lambda}^{i}(s)\notin\mathcal{T}$ for all $\lambda\in\cols_1$, $\lambda\neq\kappa$. Hence for all nodes $v$ of $\mathcal{T}_{cw}$ that are proper ancestors of $\fcn{r}^{\mathcal{T}}\!(si)$ and for whom $\fcn{r}^{\mathcal{T}}\!(s)$ is an ancestor, $\pred{Recolor}_{(\kappa, i)\shortto(\lambda, i)}(v)\!\notin\mathcal{T}_{\mathrm{cw}}$ for all $\lambda\neq \kappa$. As $\col_{\fcn{r}^{\mathcal{T}}\!(si)}^{\mathcal{T}_{\mathrm{cw}}}(\fcn{h}^{\mathcal{T}}\!(e)) = \kappa$ by assumption, we conclude $\col_{\fcn{r}^{\mathcal{T}}\!(s)}^{\mathcal{T}_{\mathrm{cw}}}(\fcn{h}^{\mathcal{T}}\!(e)) = \kappa$. The other direction is completely analogous.
			\item $e\in\col_{si}^{\mathcal{T}}(\lambda)$ for $\lambda\in\cols_1$, $\lambda\neq\kappa$. Then $\recol{\lambda}{\kappa}^{i}(s)\in\mathcal{T}$. By construction  there is a node $v$ of $\mathcal{T}_{cw}$ that is a proper ancestors of $\fcn{r}^{\mathcal{T}}\!(si)$ and for whom $\fcn{r}^{\mathcal{T}}\!(s)$ is an ancestor, such that $\pred{Recolor}_{(\lambda, i)\shortto(\kappa, i)}(v)\in\mathcal{T}_{\mathrm{cw}}$. As $\col_{\fcn{r}^{\mathcal{T}}\!(si)}^{\mathcal{T}_{\mathrm{cw}}}(\fcn{h}^{\mathcal{T}}\!(e)) = \lambda$ by assumption, we conclude $\col_{\fcn{r}^{\mathcal{T}}\!(s)}^{\mathcal{T}_{\mathrm{cw}}}(\fcn{h}^{\mathcal{T}}\!(e)) = \kappa$. The other direction is completely analogous. \qedhere
		\end{itemize}
	\end{itemize}
\end{proof}

\begin{customprop}{\ref{th:hcw=cw-bin}}
	Let $\inst$ be a countable instance over a binary signature $\Sigma$.
	$\inst$ has finite cliquewidth \iffi $\inst$ has finite \ercwidth{}.
\end{customprop}
\begin{proof}
	By \cref{prop:cw-implies-pw} it suffices to prove that, given an instance $\inst$ over a binary signature $\Sigma$ with finite \partitionwidth{}, $\inst$ also has finite cliquewidth.
	So let $\Sigma$ be a finite binary signature, $\cols$ finite set of colors (of arity $\le 2$) and $\mathrm{Cnst}$ a finite set of constants. 
	Assume $\mathcal{T}$ to be a $\colconsig$-well-decorated tree (in the \partitionwidth{} sense, \cref{def:well-dec}) in normal form such that $\inst = \inst^{\mathcal{T}}$. This can be achieved by renaming (if necessary) of the nulls in $\inst$. 
	Let $\cols^{\ast} = \cols_1\disun (\cols_1\times \{ 0,1 \})\disun \{ \mu_{\mathrm{cb}}\}$ and $\mathcal{T}_{\mathrm{cw}}$ as in \cref{def:hcwtocw-bin}. If we show $\fcn{h}^{\mathcal{T}}\colon \inst^{\mathcal{T}} \to \inst^{\mathcal{T}_{\mathrm{cw}}}$ to be an isomorphism we are done. Since by \cref{obs:hcwtocw-bin}, \cref{item:obshcwtocw-bin-1} we know $\fcn{h}^{\mathcal{T}}$ to be bijective, we only need to prove it to be structure preserving in both ways. In the following, we use all the defined auxiliary maps, sets and instances as denoted in the construction in \cref{def:hcwtocw-bin}.

	Let $\rpred(\ve)\in\inst$. Note that this automatically implies, that $\varepsilon$ is not a pseudoleaf (in $\mathcal{T}$). We consider two cases:
	\begin{itemize}
		\item $\ve = (e, e)$ or $\ve = e$ for some entity $e$. Then let $s_e = \fcn{orig}(e)$, $s_e'\in\{ 0,1 \}^*$, and $i_e\in\{ 0,1 \}$ such that $s_e = s'_ei_e$. Since $\mathcal{T}$ is in normal form, we conclude that $\unradd{\origcol_{\arity{\rpred}}}{\rpred}^{i_e}(s_e')\in\mathcal{T}$. As $s_e$ is a pseudoleaf by definition, there is a unique $\const{c}\in\mathrm{Cnst}\disun \{ \ast \}$ such that $\const{c}(s)\in\mathcal{T}$. By construction, $\const{c}_{\origcol_1}(\fcn{r}^{\mathcal{T}}\!(s))\in\mathcal{T}_{\mathrm{cw}}$. Additionally, $\pred{Recolor}_{\origcol_{1} \shortto (\origcol_1, i_e)}(v)\in\mathcal{T}^{s_e'}_{\mathrm{cw}}$ for some node $v\in\{ 0,1 \}^{\ast}$ of $\mathcal{T}_{\mathrm{cw}}$ such that 
		\begin{itemize}
			\item $\fcn{r}^{\mathcal{T}}\!(s'_e)$ is a proper ancestor of $v$, and
			\item $v$ is a proper ancestor of $\fcn{r}^{\mathcal{T}}\!(s)$.
		\end{itemize}
		Finally, since $\unradd{\origcol_{\arity{\rpred}}}{\rpred}^{i_e}(s_e')\in\mathcal{T}$ there is another node $v'$ satisfying both 
		\begin{itemize}
			\item $\fcn{r}^{\mathcal{T}}\!(s'_e)$ is a proper ancestor of $v'$, and
			\item $v'$ is a proper ancestor of $v$,
		\end{itemize}
		such that either $\pred{Add}_{\rpred, (\origcol_1, i_e), (\origcol_1, i_e)}(v')\in\mathcal{T}_{\mathrm{cw}}$ or $\pred{Add}_{\rpred, (\origcol_1, i_e)}(v')\in\mathcal{T}_{\mathrm{cw}}$ depending on whether $\arity{\rpred}= 2$ or $\arity{\rpred}= 1$. By construction of $\mathcal{T}_{\mathrm{cw}}$ and $\mathrm{Atoms}_{v'}$~\cite[Definition 7]{ICDT2023}, we obtain $\rpred(\fcn{h}^{\mathcal{T}}\!(\ve))\in\mathrm{Atoms}_{v'}$ and hence $\rpred(\fcn{h}^{\mathcal{T}}\!(\ve))\in\inst^{\mathcal{T}_{\mathrm{cw}}}$.
		\item $\ve \neq (e, \ldots, e)$ for every entity $e$. This implies immediately that $\arity{\rpred} = 2$. Hence $\ve = (e_1, e_2)$ with $e_1\neq e_2$. Furthermore, as $\mathcal{T}$ is in normal form, for $s_{\ve} = \fcn{cca}(\fcn{orig}(\ve))$ there are some $\kappa_1, \kappa_2\in\cols$ with $\arity{\kappa_1} = \arity{\kappa_2} = 1$ and $I = {i^*}$ with $i^*\in\{ 1,2 \}$ such that 
		\begin{itemize}
			\item $\radd{\kappa_1}{\kappa_2}{\{ i^* \}}{\rpred}(s_{\ve})\in\mathcal{T}$,
			\item $e_{i^{*}} \in\col_{s_{\ve}0}^{\mathcal{T}}(\kappa_1)$, and
			\item $e_{j}\in\col_{s_{\ve}1}^{\mathcal{T}}(\kappa_2)$ such that $\{ i^*, j \} = \{ 1,2 \}$.
		\end{itemize}
		By construction, there are nodes $v, v_0, v_1\in\{ 0,1 \}^{\ast}$ from $\mathcal{T}_{\mathrm{cw}}$ such that
		\begin{itemize}
			\item $v$ is a proper ancestor to $v_0$ and $v_1$,
			\item $\fcn{r}^{\mathcal{T}}\!(s)$ is a proper ancestor of $v$,
			\item $v_0$ is a proper ancestor $\fcn{r}^{\mathcal{T}}\!(s0)$, $v_1$ is a proper ancestor of $\fcn{r}^{\mathcal{T}}\!(s1)$,
			\item if $i^* = 1$, $\pred{Add}_{\rpred, (\kappa_{1}, 0), (\kappa_{2}, 1)}(v)\in\mathcal{T}_{\mathrm{cw}}$, if $i^* = 2$, $\pred{Add}_{\rpred, (\kappa_{2}, 1), (\kappa_{1}, 0)}(v)\in\mathcal{T}_{\mathrm{cw}}$, and
			\item $\pred{Recolor}_{\kappa_1\shortto (\kappa_1, 0)}(v_0), \pred{Recolor}_{\kappa_2\shortto (\kappa_2, 1)}(v_1)$.
		\end{itemize}
		Applying \cref{lem:hcwtocw-bin-cols} we obtain $ \col_{\fcn{r}^{\mathcal{T}}\!(s_{\ve}0)}^{\mathcal{T}_{\mathrm{cw}}}(e_{i^{\ast}}) = \kappa_1$ and $\col_{\fcn{r}^{\mathcal{T}}\!(s_{\ve}1)}^{\mathcal{T}_{\mathrm{cw}}}(e_{j}) = \kappa_2$. Hence by construction $ \col_{v0}^{\mathcal{T}_{\mathrm{cw}}}(\fcn{h}^{\mathcal{T}}\!(e_{i^{\ast}})) = (\kappa_1, 0)$ and $\col_{v0}^{\mathcal{T}_{\mathrm{cw}}}(h(e_{j})) = (\kappa_2, 1)$. Consequently $\rpred(\fcn{h}^{\mathcal{T}}\!\ve)\in\mathrm{Atoms}_{v}$ and thus $\rpred(\fcn{h}^{\mathcal{T}}\!(\ve))\in\inst^{\mathcal{T}_{\mathrm{cw}}}$.
	\end{itemize}
	Now assume $\rpred(\fcn{h}^{\mathcal{T}}\!(\ve))\in\inst^{\mathcal{T}_{\mathrm{cw}}}$ for $\ve$ from $\adom{\inst^{\mathcal{T}}}$. Note that this captures all the atoms in $\inst^{\mathcal{T}_{\mathrm{cw}}}$ as $\fcn{h}^{\mathcal{T}}$ is a bijection on the sets of entities corresponding to the instances $\inst = \inst^{\mathcal{T}}$ and $\inst^{\mathcal{T}_{\mathrm{cw}}}$. We again consider two cases:
	\begin{itemize}
		\item $\ve = (e, e)$ or $\ve = e$ for some entity $e$ of $\inst$. Let $s = \fcn{orig}(e)$ be the node of $\mathcal{T}$ introducing entity $e$. By construction there is a node $v$ in $\mathcal{T}_{\mathrm{cw}}$ such that $\fcn{r}^{\mathcal{T}}\!(s) = v$ and there is a $\const{c}\in\mathrm{Cnst} {\disun} \{ \ast \}$ such that $\const{c}(s)\in\mathcal{T}$ and $\const{c}_{\origcol_1}(v)\in\mathcal{T}_{\mathrm{cw}}$. Let $s'$ be the parent of $s$ in $\mathcal{T}$ and $i\in\{ 0,1 \}$ such that $s = s'i$. Again by construction, there are nodes $v_{i}, v'$ of $\mathcal{T}_{\mathrm{cw}}$ such that
		\begin{itemize}
			\item $\fcn{r}^{\mathcal{T}}\!(s')$ is a proper ancestor of $v', v_{i}, v$,
			\item $v'$ is a proper ancestor of $v_{i}$ and $v$,
			\item $v_{i}$ is a proper ancestor of $v$,
			\item $\pred{Recolor}_{\origcol_1 \shortto (\origcol_1, i)}(v_{i})\in\mathcal{T}_{\mathrm{cw}}$,
			\item $\pred{Add}_{\rpred, (\origcol_1, i), (\origcol_1, i)}$ or $\pred{Add}_{\rpred, (\origcol_1, i)}$ depending on $\arity{\rpred} = 2$ or $\arity{\rpred} = 1$.
		\end{itemize}
		By construction we may obtain $\unradd{\origcol_2}{\rpred}^{i}(s')\in\mathcal{T}$ or  $\unradd{\origcol_1}{\rpred}^{i}(s')\in\mathcal{T}$ depending on $\arity{\rpred} = 2$ or $\arity{\rpred} = 1$. Consequently, $\rpred(\ve)\in\inst^{\mathcal{T}} = \inst$.
		\item $\ve \neq (e, \ldots, e)$ for every entity $e$. This implies immediately that $\arity{\rpred} = 2$. Hence $\ve = (e_1, e_2)$ with $e_1 \neq e_2$. Let $s_{\ve} = \fcn{cca}(\fcn{orig}(\ve))$. Since $\rpred(\fcn{h}^{\mathcal{T}}\!(\ve))\in\inst^{\mathcal{T}_{\mathrm{cw}}}$ there is by construction a node $v$ in $\mathcal{T}_{\mathrm{cw}}$ such that $\pred{Add}_{\rpred, (\kappa_1, i_1), (\kappa_2, i_2)}(v)\in\mathcal{T}_{\mathrm{cw}}$ with $\{ i_1, i_2 \} = \{ 1, 2 \}$, $ \col_{v}^{\mathcal{T}_{\mathrm{cw}}}(\fcn{h}^{\mathcal{T}}\!(e_{1})) = (\kappa_1, i_1)$ and $\col_{v}^{\mathcal{T}_{\mathrm{cw}}}(\fcn{h}^{\mathcal{T}}\!(e_{2})) = (\kappa_2, i_2)$. Furthermore, $v$ satisfies the following property: $v$ is a proper ancestor to both $\fcn{r}^{\mathcal{T}}\!(s0)$ and $\fcn{r}^{\mathcal{T}}\!(s1)$ and $\fcn{r}^{\mathcal{T}}\!(s)$ is a proper ancestor to $v$. Also, by construction we immediately obtain $ \col_{\fcn{r}^{\mathcal{T}}(s(i_1-1))}^{\mathcal{T}_{\mathrm{cw}}}(\fcn{h}^{\mathcal{T}}\!(e_{1})) = \kappa_1$ and $\col_{\fcn{r}^{\mathcal{T}}(s(i_2 - 1))}^{\mathcal{T}_{\mathrm{cw}}}(\fcn{h}^{\mathcal{T}}\!(e_{2})) = \kappa_2$. Again by construction we now that $\radd{\kappa_1}{\kappa_2}{\{i_1 \}}{\rpred}(s)\in\mathcal{T}$. Finally, applying \cref{lem:hcwtocw-bin-cols} we obtain $e_1\in\col_{s(i_1-1)}^{\mathcal{T}}(\kappa_1)$ and $e_2\in\col_{s(i_2-1)}^{\mathcal{T}}(\kappa_2)$. Together this implies $\rpred(e_1, e_2)\in\mathrm{Atoms}_{s}$ and hence $\rpred(\ve)\in\inst^{\mathcal{T}} = \inst$.
	\end{itemize}
	This proves that $\fcn{h}^{\mathcal{T}}$ is an isomorphism between $\inst = \inst^{\mathcal{T}}$ and $\inst^{\mathcal{T}_{\mathrm{cw}}}$. As $|\cols_{\mathrm{cw}}| = 3\cdot |\cols_1| + 1 < 3 \cdot |\cols| + 1$ we conclude that since $\inst$ has finite \partitionwidth{}, it has also finite cliquewidth.
\end{proof}

\section{Proof for Section 9}

\begin{customlem}{\ref{lem:n-cut-layered-chase}}
Let $\ruleset$ be a ruleset. If $\ruleset = \ruleset_{1} \cut \cdots \cut \ruleset_{n}$, then for any database $\db$, the following holds:
$$
\sa(\db,\ruleset) = \sa(\ldots\sa(\sa(\db,\ruleset_{1}),\ruleset_{2}) \ldots, \ruleset_{n}).
$$
\end{customlem}

\begin{proof} For $n=1$ the statement is trivial. For all other $n$, we prove the result by induction.

\textit{Base case.} Let $\ruleset = \ruleset_{1} \cut \ruleset_{2}$. We argue two cases: we first argue (i) that $\sa(\db,\ruleset) \subseteq \sa(\sa(\db,\ruleset_{1}),\ruleset_{2})$, and then show (ii) that $\sa(\sa(\db,\ruleset_{1}),\ruleset_{2}) \subseteq \sa(\db,\ruleset)$. 

For case (i), let us suppose that $\rpred(\vec{t}) \in \sa(\db,\ruleset)$. If $\rpred(\vec{t}) \in \db$, then $\rpred(\vec{t}) \in \sa(\sa(\db,\ruleset_{1}),\ruleset_{2})$, so let us assume that there exists an $i$ such that $\rpred(\vec{t}) \in \chase_{i}(\db,\ruleset)$, but $\rpred(\vec{t}) \not\in \chase_{j}(\db,\ruleset)$ for $j < i$. It follows that there exists a trigger $(\rho,h)$ of $\chase_{i-1}(\db,\ruleset)$ such that $\rpred(\vec{t}) \in \overline{h}(\head(\rho))$. Either $\rho \in \ruleset_{1}$ or $\rho \in \ruleset_{2}$. If $\rho \in \ruleset_{1}$, then since no rule of $\ruleset_{1}$ depends on a rule in $\ruleset_{2}$, it follows that $\rpred(\vec{t}) \in \sa(\db,\ruleset_{1})$, meaning $\rpred(\vec{t}) \in \sa(\sa(\db,\ruleset_{1}),\ruleset_{2})$. If $\rho \in \ruleset_{2}$, then $\rpred(\vec{t}) \in \sa(\sa(\db,\ruleset_{1}),\ruleset_{2})$, which completes the proof of case (i). 

For case (ii), let us suppose that $\rpred(\vec{t}) \in \sa(\sa(\db,\ruleset_{1}),\ruleset_{2})$. If $\rpred(\vec{t}) \in \db$, then $\rpred(\vec{t}) \in \sa(\db,\ruleset)$, so let us suppose that $\rpred(\vec{t}) \not\in \db$. Either $\rpred(\vec{t}) \in \sa(\db,\ruleset_{1})$ or $\rpred(\vec{t}) \in \sa(\sa(\db,\ruleset_{1}),\ruleset_{2})$. Regardless of the case, one can see that $\rpred(\vec{t}) \in \sa(\db,\ruleset)$.

\textit{Inductive step.} Suppose that $\ruleset = \ruleset_{1} \cut \cdots \cut \ruleset_{n}$, and let $\ruleset' = \bigcup_{i \in \{1, \ldots n-1\}} \ruleset_{i}$, from which it follows that $\ruleset' = \ruleset_{1} \cut \cdots \cut \ruleset_{n-1}$. By the base case, we know that $\sa(\db,\ruleset) = \sa(\sa(\db,\ruleset'),\ruleset_{n})$, and by IH, we know that $$\sa(\db,\ruleset') = \sa(\ldots\sa(\sa(\db,\ruleset_{1}),\ruleset_{2}) \ldots, \ruleset_{n-1}).$$Therefore, $\sa(\db,\ruleset) = \sa(\ldots\sa(\sa(\db,\ruleset_{1}),\ruleset_{2}) \ldots, \ruleset_{n}).$
\end{proof}




\section{Proofs for Section 10}\label{app:section-5-proofs}

\begin{customprop}{\ref{prop:frone-fusddl}}
Let $\ruleset = \ruleset_{\smash{\mathrm{fr1}}} \cup \ruleset_{\smash{\mathrm{ff}}}$ be a ruleset where $\ruleset_{\smash{\mathrm{fr1}}}$ is frontier-one and $\ruleset_{\smash{\mathrm{ff}}}$ is full and $\fus$. Then $\ruleset$ is $\fhcs$ and thus allows for decidable HUSOMSOQ entailment.  
\end{customprop}

\begin{proof} We define an atom $\rpred(\vec{t}) \in \chase_{\infty}(\db,\ruleset_{\smash{\mathrm{fr1}}} \cup \ruleset_{\smash{\mathrm{ff}}})$ to be an \emph{ff-atom} \iffi $\rpred(\vec{t})$ was introduced at some step of the chase by a rule in $\ruleset_{\smash{\mathrm{ff}}}$. We define 
$$
\chase_{\smash{\mathrm{fr1}}}(\db,\ruleset_{\smash{\mathrm{fr1}}} \cup \ruleset_{\smash{\mathrm{ff}}}) = \chase_{\infty}(\db,\ruleset_{\smash{\mathrm{fr1}}} \cup \ruleset_{\smash{\mathrm{ff}}}) \setminus \{\rpred(\vec{t}) \ | \ \rpred(\vec{t}) \text{ is an ff-atom} \}.
$$%
 It is straightforward to confirm that $\chase_{\smash{\mathrm{fr1}}}(\db,\ruleset_{\smash{\mathrm{fr1}}} \cup \ruleset_{\smash{\mathrm{ff}}})$ has finite treewidth, and thus, the instance has finite \hypercliquewidth{} by \cref{thm:tree-width-implies-clique-width}. 
 
 Let us now take the ruleset $\ruleset_{\smash{\mathrm{ff}}}$ and observe that every rule in $\ruleset_{\smash{\mathrm{ff}}}$ is a full rule of the form $\forall \vec{x} \vec{y}(\phi(\vec{x},\vec{y}) \rightarrow \rpred_{1}(\vec{z}_{1}) \land \cdots \land \rpred_{n}(\vec{z}_{n}))$ with $\vec{z}_{1}, \ldots, \vec{z}_{n}$ all variables in $\vec{y}$. We can therefore rewrite each such rule as the following, equivalent set of single-headed rules:
$$
\{\forall \vec{x} \vec{y}(\phi(\vec{x},\vec{y}) \rightarrow \rpred_{1}(\vec{z}_{1})), \ldots, \forall \vec{x} \vec{y}(\phi(\vec{x},\vec{y}) \rightarrow \rpred_{n}(\vec{z}_{n}))\}.
$$%
Based on this fact, we may assume w.l.o.g. that every rule in $\ruleset_{\smash{\mathrm{ff}}}$ is a single-headed rule.
 Let us now rewrite $\ruleset_{\smash{\mathrm{ff}}}$ relative to itself. To do this, we first define the rewriting of each rule in $\ruleset_{\smash{\mathrm{ff}}}$. For each rule $\rho = \forall \vec{x} \vec{y}(\phi(\vec{x},\vec{y}) \rightarrow \rpred(\vec{y})) \in \ruleset_{\smash{\mathrm{ff}}}$, we define the rewriting of $\rho$ accordingly:
$$
\fcn{rew}(\rho) := \{\psi(\vec{x}, \vec{y}) \rightarrow \rpred(\vec{y}) \ | \ \exists \psi(\vec{x},\vec{y}) \in \fcn{rew}_{\ruleset_{\smash{\mathrm{ff}}}}(\rpred(\vec{y})) \}.
$$%
We then define the rewriting of $\ruleset_{\smash{\mathrm{ff}}}$ as follows: $\ruleset_{\smash{\mathrm{ff}}}' := \bigcup_{\rho \in \ruleset_{\smash{\mathrm{ff}}}} \fcn{rew}(\rho)$. Since $\ruleset_{\smash{\mathrm{ff}}}$ is $\fus$, it follows that $\ruleset_{\smash{\mathrm{ff}}}'$ is finite. Additionally, $\ruleset_{\smash{\mathrm{ff}}}'$ is a full ruleset consisting solely of single-headed rules by definition, and so, we assume that the ruleset has the following form:
$$
\ruleset_{\smash{\mathrm{ff}}}' = \{\psi_{1}(\vec{x}, \vec{y}) \rightarrow \rpred_{1}(\vec{y}), \ldots, \psi_{n}(\vec{x}, \vec{y}) \rightarrow \rpred_{n}(\vec{y})\}.
$$

We now show the following: For any $\rpred(\vec{t}) \in \chase_{\infty}(\db,\ruleset_{\smash{\mathrm{fr1}}} \cup \ruleset_{\smash{\mathrm{ff}}}) \setminus \chase_{\smash{\mathrm{fr1}}}(\db,\ruleset_{\smash{\mathrm{fr1}}} \cup \ruleset_{\smash{\mathrm{ff}}})$, there exists a trigger $(\rho,h)$ in $\chase_{\smash{\mathrm{fr1}}}(\db,\ruleset_{\smash{\mathrm{fr1}}} \cup \ruleset_{\smash{\mathrm{ff}}})$ with $\rho \in \ruleset_{\smash{\mathrm{ff}}}'$ such that applying $(\rho,h)$ adds $\rpred(\vec{t})$ to $\chase_{\smash{\mathrm{fr1}}}(\db,\ruleset_{\smash{\mathrm{fr1}}} \cup \ruleset_{\smash{\mathrm{ff}}})$. To see this, observe that since $\ruleset_{\smash{\mathrm{ff}}}$ is $\fus$, we know by definition that for any predicate $\rpred$ occurring in any rule head of $\ruleset_{\smash{\mathrm{ff}}}$ and any tuple $\vec{t}$ of terms in $\chase_{\smash{\mathrm{fr1}}}(\db,\ruleset_{\smash{\mathrm{fr1}}} \cup \ruleset_{\smash{\mathrm{ff}}})$, it holds that   $\chase_{\infty}(\chase_{\smash{\mathrm{fr1}}}(\db,\ruleset_{\smash{\mathrm{fr1}}} \cup \ruleset_{\smash{\mathrm{ff}}}), \ruleset_{\smash{\mathrm{ff}}}) \models \rpred(\vec{t})$ \iffi there exists a CQ $\psi(\vec{x},\vec{y})$ in $\fcn{rew}_{\ruleset}(\rpred(\vec{y}))$ such that $\chase_{\smash{\mathrm{fr1}}}(\db,\ruleset_{\smash{\mathrm{fr1}}} \cup \ruleset_{\smash{\mathrm{ff}}}) \models \exists \vec{x} \psi(\vec{x},\vec{t})$. It therefore follows that a trigger exists for $\rho \in \ruleset_{\smash{\mathrm{ff}}}'$ in $\chase_{\smash{\mathrm{fr1}}}(\db,\ruleset_{\smash{\mathrm{fr1}}} \cup \ruleset_{\smash{\mathrm{ff}}})$.

 Based on the above fact, the following equation holds:
$$
\chase_{\infty}(\db,\ruleset_{\smash{\mathrm{fr1}}} \cup \ruleset_{\smash{\mathrm{ff}}}) = \chase_{1}(\chase_{\smash{\mathrm{fr1}}}(\db,\ruleset_{\smash{\mathrm{fr1}}} \cup \ruleset_{\smash{\mathrm{ff}}}), \ruleset_{\smash{\mathrm{ff}}}').
$$%
 Recall that $\ruleset_{\smash{\mathrm{ff}}}'$ is of the form $\{\psi_{i}(\vec{x}, \vec{y}) \rightarrow \rpred_{i}(\vec{y}) \ | \ 1 \leq i \leq n\}$. The right hand side of the above equation can be rewritten as a finite sequence of FO-extensions (see~\cref{sec:preservation}):
$$
\chase_{1}(\chase_{\smash{\mathrm{fr1}}}(\db,\ruleset_{\smash{\mathrm{fr1}}} \cup \ruleset_{\smash{\mathrm{ff}}}), \ruleset_{\smash{\mathrm{ff}}}') = (\ldots(\chase_{\smash{\mathrm{fr1}}}(\db,\ruleset_{\smash{\mathrm{fr1}}} \cup \ruleset_{\smash{\mathrm{ff}}}))^{\rpred_{1}:= \rpred_{1} \cup \psi_{1}})\ldots)^{\rpred_{n}:= \rpred_{n} \cup \psi_{n}}.
$$%
 By what was argued above, $\chase_{\smash{\mathrm{fr1}}}(\db,\ruleset_{\smash{\mathrm{fr1}}} \cup \ruleset_{\smash{\mathrm{ff}}})$ has finite \hypercliquewidth{}, and since this property is preserved under FO-extensions (by \cref{thm:closureproperties}), it follows that  $$\chase_{1}(\chase_{\smash{\mathrm{fr1}}}(\db,\ruleset_{\smash{\mathrm{fr1}}} \cup \ruleset_{\smash{\mathrm{ff}}}), \ruleset_{\smash{\mathrm{ff}}}')$$has finite \ercwidth{} and hence, $\chase_{\infty}(\db,\ruleset_{\smash{\mathrm{fr1}}} \cup \ruleset_{\smash{\mathrm{ff}}})$ has finite \ercwidth{}. Therefore, by \cref{thm:decidablequerying}, $\ruleset_{\smash{\mathrm{fr1}}} \cup \ruleset_{\smash{\mathrm{ff}}}$ allows for decidable HUSOMSOQ entailment.
\end{proof}

\begin{proposition}
Every full $\fus$ ruleset is $\fods$.
\end{proposition}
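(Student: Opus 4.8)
The plan is to specialise the datalog half of the argument already used for \cref{prop:frone-fusddl}, the simplification being that here there is no frontier-one layer, so one may work directly with an arbitrary instance $\inst$ rather than with a finite-treewidth pre-chase. Let $\ruleset$ be a bounded datalog ruleset, that is, a $\fus$ ruleset all of whose rules are datalog, and fix an arbitrary instance $\inst$; the goal is to exhibit a finite sequence of FO-extensions (in the sense of \cref{def:extensions}) transforming $\inst$ into $\sa(\inst,\ruleset)$.

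First I would normalise $\ruleset$. As every rule is datalog, each head is a conjunction of atoms whose variables all lie in the frontier, so a rule $\phi(\vx,\vy)\to\rpred_1(\vy_1)\wedge\cdots\wedge\rpred_m(\vy_m)$ may be replaced by the single-headed rules $\phi(\vx,\vy)\to\rpred_k(\vy_k)$; this changes neither the chase on any instance nor, therefore, $\fus$-ness (a semantic property of CQ entailment). Hence I assume all rules of $\ruleset$ single-headed.

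The crux is to rewrite $\ruleset$ against itself so as to collapse the entire chase into a single parallel round. For each predicate $\rpred$ appearing in a rule head, take the one-atom CQ $\rpred(\vy)$ (distinct frontier variables) and its UCQ rewriting $\rewrs{\ruleset}{\rpred(\vy)}$, which exists and is finite since $\ruleset$ is $\fus$; each disjunct $\psi(\vx,\vy)$ becomes a rule $\psi(\vx,\vy)\to\rpred(\vy)$, yielding a finite single-headed datalog ruleset $\ruleset'=\{\psi_i(\vx,\vy)\to\rpred_i(\vy)\mid 1\le i\le n\}$. Setting $\varphi_i(\vy)=\exists\vx.\psi_i(\vx,\vy)$, soundness and completeness of the rewriting give, for every head predicate $\rpred$ and tuple $\vt$ over $\inst$, that $\rpred(\vt)\in\sa(\inst,\ruleset)$ iff $\inst\models\rewrs{\ruleset}{\rpred(\vy)}(\vt)$ iff $\inst\models\varphi_i(\vt)$ for some $i$; as the chase only ever adds head-predicate atoms, this yields $\sa(\inst,\ruleset)=\ksat{1}{\inst,\ruleset'}$. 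I would then observe that $\ksat{1}{\inst,\ruleset'}$ is obtained from $\inst$ by the sequence of FO-extensions
$$
\big(\cdots(\inst^{\rpred_1:=\rpred_1\cup\varphi_1})\cdots\big)^{\rpred_n:=\rpred_n\cup\varphi_n}.
$$
Correctness of this last identity needs a little care, since sequential extensions let an earlier one enlarge a predicate occurring in a later body: monotonicity of CQ satisfaction shows the sequential result contains every atom of $F:=\sa(\inst,\ruleset)$, and an induction using that $F$ is a model of $\ruleset'$ (which follows from the rewriting being sound and complete on the instance $F$ itself) shows the result is contained in $F$; hence they coincide.

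The main obstacle is thus not the construction but the two-sided correctness of the self-rewriting: verifying that one parallel round of $\ruleset'$ captures the whole chase, and that this persists when $\inst$ contains nulls. The former combines completeness (no chase atom is missed) with soundness (no spurious atom is produced); the latter rests on the standard observation that nulls may be treated as fresh constants, so the database-level rewriting guarantee transfers verbatim to arbitrary instances. Finiteness of $\ruleset'$ then bounds the length of the FO-extension sequence, completing the argument.
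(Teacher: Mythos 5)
Your proposal is correct and follows essentially the same route as the paper's own proof: normalise to single-headed rules, self-rewrite each head predicate's atomic CQ under $\fus$ to obtain a finite datalog ruleset $\ruleset'$ whose single parallel application captures the entire chase, and realise that application as a finite sequence of FO-extensions. The two subtleties you flag explicitly (that sequential FO-extensions might overshoot the parallel one-step chase, and that the database-level rewriting guarantee must be transferred to instances containing nulls) are silently glossed over in the paper's proof, so your added care only strengthens the argument.
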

\begin{proof}
Consider a full $\fus$ ruleset $\ruleset$ and observe that every rule in $\ruleset$ is a full rule of the form $\forall \vec{x} \vec{y}(\phi(\vec{x},\vec{y}) \rightarrow \rpred_{1}(\vec{z}_{1}) \land \cdots \land \rpred_{n}(\vec{z}_{n}))$ with $\vec{z}_{1}, \ldots, \vec{z}_{n}$ all variables in $\vec{y}$. We can therefore rewrite each such rule as the following, equivalent set of single-headed rules:
$$
\{\forall \vec{x} \vec{y}(\phi(\vec{x},\vec{y}) \rightarrow \rpred_{1}(\vec{z}_{1})), \ldots, \forall \vec{x} \vec{y}(\phi(\vec{x},\vec{y}) \rightarrow \rpred_{n}(\vec{z}_{n}))\}.
$$%
Based on this fact, we may assume w.l.o.g. that every rule in $\ruleset$ is a single-headed rule.

We will next rewrite $\ruleset$ relative to itself. To do this, we first define the rewriting of each rule in $\ruleset$. For each rule $\rho = \forall \vec{x} \vec{y}(\phi(\vec{x},\vec{y}) \rightarrow \rpred(\vec{y})) \in \ruleset$, we define the rewriting of $\rho$ accordingly:
$$
\fcn{rew}(\rho) := \{\psi(\vec{x}, \vec{y}) \rightarrow \rpred(\vec{y}) \mid  \psi(\vec{x},\vec{y}) \in \fcn{rew}_{\ruleset}(\rpred(\vec{y})) \}.
$$%
We then define the rewriting of $\ruleset$ as follows: $\ruleset' := \bigcup_{\rho \in \ruleset} \fcn{rew}(\rho)$. Since $\ruleset$ is $\fus$ by assumption, it follows that $\ruleset'$ is finite. Additionally, $\ruleset'$ is a full ruleset consisting solely of single-headed rules by definition, and so, we assume that the ruleset has the following form:
$$
\ruleset' = \{\psi_{1}(\vec{x}, \vec{y}) \rightarrow \rpred_{1}(\vec{y}), \ldots, \psi_{n}(\vec{x}, \vec{y}) \rightarrow \rpred_{n}(\vec{y})\}.
$$

We now show the following: For any $\inst$ and any $\rpred(\vec{t}) \in \chase_{\infty}(\inst,\ruleset) \setminus \inst$, there exists a trigger $(\rho,h)$ in $\inst$ with $\rho \in \ruleset'$ such that applying $(\rho,h)$ adds $\rpred(\vec{t})$ to $\inst$. To see this, observe that since $\ruleset$ is $\fus$, we know by definition that for any predicate $\rpred$ occurring in any rule head of $\ruleset$ and any tuple $\vec{t}$ of terms in $\inst$, it holds that 
$\chase_{\infty}(\inst, \ruleset) \models \rpred(\vec{t})$ \iffi 
there exists a CQ $\exists\vec{x}.\psi(\vec{x},\vec{y})$ in $\fcn{rew}_{\ruleset}(\rpred(\vec{y}))$ such that  $\inst \models \exists \vec{x} \psi(\vec{x},\vec{t})$. It therefore follows that a trigger exists for the corresponding $\rho \in \ruleset_{\smash{\mathrm{ff}}}'$ in $\inst$.

Based on the above fact, we obtain
$
\chase_{\infty}(\inst,\ruleset) = \chase_{1}(\inst,\ruleset').
$
Recall that $\ruleset_{\smash{\mathrm{ff}}}'$ is of the form $\{\psi_{i}(\vec{x}, \vec{y}) \rightarrow \rpred_{i}(\vec{y}) \ | \ 1 \leq i \leq n\}$. Yet, the right hand side of the above equation can be rewritten as a finite sequence of FO-extensions as shown below:
$$
\chase_{1}(\inst, \ruleset') = (\ldots(\inst)^{\rpred_{1}:= \rpred_{1} \cup \psi_{1}})\ldots)^{\rpred_{n}:= \rpred_{n} \cup \psi_{n}},
$$%
which finishes our proof.
\end{proof}

\begin{customthm}{\ref{thm:fcs-fus-decidable}}
	Let $\ruleset = \ruleset_{1} \cut \ruleset_{2}$. If $\ruleset_{1}$ is $\fhcs$ and $\ruleset_{2}$ is $\fus$, then the query entailment problem $(\db,\ruleset) \models q$ for databases $\db$, and CQs $q$ is decidable.
\end{customthm}

\begin{proof} Let us suppose that $\ruleset = \ruleset_{1} \cut \ruleset_{2}$, $\ruleset_{1}$ is $\fhcs$, and $\ruleset_{2}$ is $\fus$. Moreover, let $\db$ be an arbitrary database. Since $\ruleset = \ruleset_{1} \cut \ruleset_{2}$, we have $\sa(\db,\ruleset) = \sa(\sa(\db,\ruleset_{1}),\ruleset_{2})$. As $\ruleset_{2}$ is $\fus$, we know that for any tuple $\vec{\const{a}}$ of constants from $\db$, $\sa(\sa(\db,\ruleset_{1}),\ruleset_{2}) \models q(\vec{\const{a}})$ \iffi $\sa(\db,\ruleset_{1}) \models \rewrs{\ruleset_{2}}{\query(\vec{\const{a}})}$, where $\rewrs{\ruleset_{2}}{\query}$ is a rewriting of $\query$. By \cref{thm:decidablequerying}, we can decide if $\sa(\db,\ruleset_{1}) \models \rewrs{\ruleset_{2}}{\query}$ as $\rewrs{\ruleset_{2}}{\query}$ is a UCQ, which is a type of HUSOMSOQ. Hence, the query entailment problem $(\db,\ruleset) \models q$ is decidable. 
\end{proof}

\end{document}